\documentclass[11pt]{article}
\usepackage[margin=1in]{geometry}
\usepackage{fullpage}

\usepackage{amsthm}
\newtheorem{theorem}{Theorem}[section]
\newtheorem{lemma}[theorem]{Lemma}
\newtheorem{corollary}[theorem]{Corollary}
\theoremstyle{definition}\newtheorem{definition}[theorem]{Definition}
\newtheorem{observation}[theorem]{Observation}

\newtheorem{claim}[theorem]{Claim}

\usepackage{amsfonts}
\usepackage{amssymb}
\usepackage{amsmath}
\usepackage{clrscode3e}
\usepackage[hidelinks]{hyperref}
\usepackage{lmodern}% http://ctan.org/pkg/lm

\newcommand{\whpshort}{\textit{w.h.p.}}
\newcommand{\whp}{with high probability}

\makeatletter
\newcommand*{\centerfloat}{%
	\parindent \z@
	\leftskip \z@ \@plus 1fil \@minus \textwidth
	\rightskip\leftskip
	\parfillskip \z@skip}
\makeatother

\title{Locally Consistent Parsing for Text Indexing in Small Space\footnote{
		This research is supported by ISF grant no.\ 1278/16, by a United States - Israel Binational Science Foundation (BSF) grant no.\ 2018364 and by an ERC grant MPM under the European Union's Horizon 2020 research and innovation programme (grant no.\ 683064).}}

\author{Or Birenzwige\\ Bar Ilan University\\\texttt{birenzo@cs.biu.ac.il}
	\and Shay Golan\\ Bar Ilan University\\\texttt{golansh1@cs.biu.ac.il}
	\and Ely Porat\\ Bar Ilan University\\\texttt{porately@cs.biu.ac.il}}

\date{}

\usepackage{makeidx}
\usepackage{tabularx, booktabs}
\usepackage{tabu}
\usepackage{colortbl}
\usepackage{enumitem}
\usepackage{multirow}
\usepackage{graphicx}
\usepackage{booktabs}

\renewcommand{\O}{\mathcal{O}}
\usepackage{listings}

\usepackage{marginnote}

\bibliographystyle{plain}% the recommended bibstyle

\def\polylog{\operatorname{polylog}}

\newcommand{\ceil}[1]{\left\lceil{#1}\right\rceil}
\newcommand{\floor}[1]{\left\lfloor{#1}\right\rfloor}

\newcommand{\modulo}{\operatorname{mod}}
\newcommand{\naive}{na\"{\i}ve }

\newcommand{\E}{\operatornamewithlimits{\mathbb{E}}}
\newcommand{\ID}{\mbox{ID}}

\relpenalty=\maxdimen
\binoppenalty=\maxdimen

\newcommand{\LCE}{\textsf{LCE}}
\newcommand{\SST}{\textsf{SST}}
\newcommand{\lab}{\normalfont{\textsf{label}}}
\newcommand{\suc}{\textsf{succ}}
\newcommand{\rv}{\textsf{right-violation}}

\begin{document}
	
	\maketitle
	
	\thispagestyle{empty}
	\setcounter{page}{0}

	\begin{abstract}	
		We consider two closely related problems of text indexing in a sub-linear working space. The first problem is the Sparse Suffix Tree (SST) construction, where a text $S$ is given in read-only memory, along with a set of suffixes $B$, and the goal is to construct the compressed trie of all these suffixes ordered lexicographically, using only $\O(|B|)$ words of space. The second problem is the Longest Common Extension (LCE) problem, where again a text $S$ of length $n$ is given in read-only memory with some parameter $1\le \tau\le n$, and the goal is to construct a data structure that uses $\O(\frac {n}{\tau})$ words of space and can compute for any pair of suffixes their longest common prefix length. 
		We show how to use ideas based on the Locally Consistent Parsing technique, that were introduced by Sahinalp and Vishkin~\cite{SV94}, in some non-trivial ways in order to improve the known results for the above problems. 
		We introduce new Las-Vegas and deterministic algorithms for both problems. 
		
		For the randomized algorithms, we introduce the first Las-Vegas SST construction algorithm that takes  $\O(n)$ time.  This is an improvement over the last result of Gawrychowski and Kociumaka~\cite{GK17} who obtained $\O(n)$ time for Monte Carlo algorithm, and $\O(n\sqrt{\log |B|})$ time with high probability for Las-Vegas algorithm. In addition, we introduce a randomized Las-Vegas construction for a data structure that uses $\O(\frac {n}{\tau})$ words of space, can be constructed in linear time with high probability and answers LCE queries in $\O(\tau)$ time.
		
		For the deterministic algorithms, we introduce an SST construction algorithm that takes $\O(n\log \frac {n}{|B|})$ time (for $|B|=\Omega(\log n)$).
		This is the first almost linear time, $\O(n \cdot \polylog{n})$, deterministic   SST construction algorithm, where all previous algorithms take at least $\Omega\left(\min\{n|B|,\frac {n^2}{|B|}\} \right)$ time.
		For the LCE problem, we introduce a data structure that uses $\O(\frac {n}{\tau})$ words of space and answers LCE queries in $\O(\tau\sqrt {\log^*n})$ time, with $\O(n\log \tau)$ construction time (for $\tau=\O(\frac {n}{\log n})$). This data structure improves both query time and construction time upon the results of Tanimura et al.~\cite{TBIPT16}. 
		
	\end{abstract}

	\newpage

	\section{Introduction}\label{sec:intro}
	
	Text indexing is one of the most fundamental problems in the area of string algorithms and information retrieval (e.g., see \cite{Weiner73,KU96,Ukkonen95,McCreight76,FG99,BFGKSV16,FH08,FIK16,BFO13,FMN09,AN08,KLAAP01,KS03,BK03,PT08}). 
	In this paper we consider two closely related problems of text indexing in sub-linear working space. 
	In both problems, we are given a string $S$ of length $n$ to process, in read-only memory.
	
	The first problem is the Sparse Suffix Tree ($\SST$) construction which is a generalization of the classical Suffix Tree data structure. The Suffix Tree index of $S$ is a compact trie of all $S$'s suffixes. 
	Since the '70s, several algorithms were introduced for the suffix tree construction which take $\O(n)$ time~\cite{Weiner73,McCreight76,Ukkonen95}. 
	In the Sparse Suffix Tree construction problem only a subset of the suffixes, $B\subseteq\{1,2,\dots,n\}$, need to be indexed.
	This generalization was introduced in 1996 by  K{\"{a}}rkk{\"{a}}inen and Ukkonen~\cite{KU96} who introduced a construction algorithm for the special case where all the suffixes are in the positions which are integer multiples of some $k\in\mathbb N$. 
	The algorithm of~\cite{KU96}  takes $\O(n)$ time and uses linear working space in the number of suffixes.
	For the general case of an arbitrary set of suffixes, one can construct an $\SST$ by pruning of the full Suffix Tree, which requires $\Theta(n)$ words of working space. However, when the space usage is limited to be linear in $|B|$, another approach is required. Recently,  Gawrychowski and Kociumaka~\cite{GK17} introduced an optimal randomized Monte Carlo algorithm which builds an $\SST$ in linear time, using only $\O(|B|)$ words of space.

	The second problem we consider in this paper is the Longest Common Extension ($\LCE$) problem. In this problem we are given a text $S$ of length $n$ in read-only memory, and the goal is to construct a data structure that can compute for any pair of suffixes their longest common prefix length, or formally $\LCE(i,j)=\min\{k\ge 0 \mid S[i+k]\ne S[j+k]\}$. 
	It is well known that one can solve this problem using $\O(n)$ words of space with $\O(n)$ construction time and $\O(1)$ query time~\cite{GG88,LV88}.
	We consider the trade-off version, where there is a parameter $1\le\tau\le n$ and the goal is to construct a data structure using only $\O(\frac n\tau)$ words of working space while achieving fast construction and query time.
	The trade-off $\LCE$ problem is closely tied to the $\SST$ problem, and in several papers the $\LCE$ was used as the main tool for constructing the $\SST$ efficiently \cite{IKK14,BFGKSV16,FIK16,GK17}. 
	
	For the $\LCE$ problem, Kosolobov~\cite{Kosolobov17} proved a lower bound in the cell-probe model, that any data structure for the $\LCE$ problem that uses $S(n)= \Omega(n)$ 
	\emph{bits} of working space and $T(n)$ query time, must have $S(n)\cdot T(n)=\Omega(n\log n)$ for large enough alphabet. Kosolobov also conjectured that for data structure which used $S(n)< o(n)$ bits of space a similar lower bound of $T(n)\cdot S(n)=\Omega(n\log S(n))$ holds. 
	A trade-off of $\O(\tau)$ query time and $\O(\frac n\tau)$ words of space is already known, both in randomized and deterministic construction algorithms \cite{BGKLV15}, and therefore the research efforts are invested in reducing the construction time. Tanimura et al.~\cite{TBIPT16} introduced a deterministic data structure with $\O(n\tau)$ construction time algorithm, and with a slightly increasing of the query time to $\O(\tau\min\{\log\tau,\log\frac{n}{\tau}\})$.
	Gawrychowski and Kociumaka~\cite{GK17} have shown a linear time Monte Carlo construction of an $\LCE$ data structure with $\O(\tau)$ query time. 
	In a close model of \emph{encoding} data structures, Tanimura et al.~\cite{TNBIT17} introduce a data structure that encodes the string, and then can delete the original string. The complexity of the data structure is dependent on the Lempel--Ziv 77 factorization of the string, and in some cases it can beat the lower bound of Kosolobov~\cite{Kosolobov17}.

	\subsection{Our Contribution}
	
	\paragraph{$\LCE$ data structures.}
	We provide new $\LCE$ data structures that answer queries deterministically and use only $\O(\frac n\tau)$ words of space. The construction algorithms are either deterministic or randomized Las-Vegas algorithms. Our results are stated in the following theorems, see a full comparison in Appendix~\ref{app:table_compare}.

	\begin{theorem}[Las-Vegas Algorithm]\label{thm:lasvegasLCE}
		For any $1\le \tau\le \O(\frac{n}{\log^2 n})$ there exists a data structure of size $\O(\frac n\tau)$ words that can be constructed in $\O(n)$ time with high probability\footnote{With probability $1-n^{-\Theta(1)}$ at least.} and answers $\LCE$ queries in $\O(\tau)$ time.  
		For $\tau=\omega(\frac n{\log^2 n})$ the same data structure can be constructed in \emph{expected} $\O(n)$ time.
	\end{theorem}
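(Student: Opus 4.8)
The plan is to reduce an $\LCE$ query on $S$ to an $\LCE$ query on a much shorter string obtained by a randomized locally consistent parsing of $S$. Concretely, I would build a hierarchy $S=T_0,T_1,\dots,T_h$ of $h=\Theta(\log\tau)$ levels: to pass from $T_\ell$ to $T_{\ell+1}$, first replace every maximal run $c^k$ with $k\ge 2$ by a single symbol (recording $k$), which makes the string run-free, and then cut the run-free string into \emph{blocks} of expected constant length by a purely local random rule --- for instance, draw a random priority for each symbol and cut between $x$ and the next symbol whenever $x$ is a local minimum of the priorities within an $\O(1)$-symbol window --- replacing each block by a new symbol whose \emph{name} is the Karp--Rabin fingerprint of the $S$-fragment it spells out. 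Run compression is exactly what keeps the local rule from stalling on long periodic stretches. Any block that spells more than $c\tau$ characters for a suitable constant $c$ is flagged as an \emph{exceptional} phrase (such a block can only arise from a long run or from an unlucky sparse cut). With $h$ chosen so that a top-level symbol spells an expected $\Theta(\tau)$ characters, $T_h$ has $\O(n/\tau)$ symbols and the parse is \emph{locally consistent}: the phrase boundaries of $S$ inside an interval depend only on $S$ restricted to that interval plus an $\O(\tau)$ margin, so for any $i,j$ with $\LCE(i,j)\ge c'\tau$ the phrase boundaries induced on the two occurrences agree except within $\O(\tau)$ of their ends and of exceptional phrases.

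The data structure then stores, in $\O(n/\tau)$ words: the string $T_h$ over the fingerprint alphabet, renamed to a linear-size integer alphabet; a classical $\O(1)$-query $\LCE$ data structure on $T_h$, of size $\O(|T_h|)$ and built deterministically in $\O(|T_h|)$ time (e.g.\ suffix array, longest-common-prefix array and a range-minimum structure); the prefix sums of the phrase lengths; a hash table mapping every phrase-start position of $S$ to the index of that phrase in $T_h$ and marking exceptional phrases together with their run lengths; and an auxiliary predecessor structure over the $\O(n/\tau)$ long runs and exceptional phrases of $S$, answering in $\O(\tau)$ time for the relevant ranges of $\tau$ (and degenerating to a plain array when $\tau=\O(1)$). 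The intermediate levels $T_1,\dots,T_{h-1}$ are not kept --- the hierarchy is needed only at construction time.

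To answer $\LCE(i,j)$ I would proceed in three steps. First, scan $S$ forward from $i$ and from $j$ in lockstep, comparing characters and, on the $i$-side, testing membership in the hash table at each position, until either a mismatch is found (return its offset), or the scan enters a long run or an exceptional phrase (conclude by arithmetic on the stored run lengths), or the first phrase start $p_i=i+\delta$ is reached; this costs $\O(\tau)$ since every non-exceptional phrase spells $\O(\tau)$ characters, and by local consistency, if the scanned characters all agreed then $p_j=j+\delta$ is a phrase start as well. Second, look up the phrase indices $a,b$ of $p_i,p_j$, obtain $m=\LCE_{T_h}(a,b)$ in $\O(1)$ time, and convert the $m$ matched phrases into a character count $L$ via the prefix sums. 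Third, compare the single mismatching pair of phrases, those with indices $a+m$ and $b+m$ --- each an $S$-fragment of length $\O(\tau)$ unless exceptional, in which case conclude again by run arithmetic --- character by character to find the residual offset $\delta'$, and return $\delta+L+\delta'$. The total time is $\O(\tau+\log\tau)=\O(\tau)$, and correctness is precisely local consistency together with the fingerprint names being collision-free on the $\O(n)$ relevant $S$-fragments.

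The construction runs in $\O(n)$ time and $\O(n/\tau)$ space via one streaming pass over $S$: level $0$ is produced with $\O(1)$ look-ahead on the characters, each completed level-$\ell$ block is handed to a level-$(\ell{+}1)$ buffer holding only $\O(1)$ symbols, so the working space is $\O(\log\tau)$ such buffers plus the $\O(n/\tau)$-word output, and since the events at each level are a constant fraction of those one level down, the total work telescopes to $\O(n)$; building the $\LCE$ structure on $T_h$, the hash table, and the predecessor structure then costs a further $\O(n/\tau)$. The Las-Vegas guarantee comes from verifying afterwards --- by sorting the phrases and checking that equal-fingerprint phrases spell equal strings, $\O(n)$ time overall --- that no two distinct $S$-fragments collide, restarting with fresh randomness otherwise. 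The main obstacle, and where the hypothesis $\tau\le\O(n/\log^2 n)$ is used, is the probabilistic analysis of the parse: one must show that with high probability the random rule produces only $\O(n/\tau)$ phrases in all and only $\O(n/\tau)$ exceptional ones --- concentration statements whose failure probability is of order $\exp(-\Omega(n/\tau))$ and hence polynomially small only when $n/\tau=\Omega(\log^2 n)$. When $\tau=\omega(n/\log^2 n)$ this concentration is too weak for a high-probability claim, so one instead detects an oversized parse after the pass and restarts, which yields the stated \emph{expected} $\O(n)$ construction time while leaving the $\O(n/\tau)$ space and $\O(\tau)$ query bounds untouched.
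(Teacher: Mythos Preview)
Your architecture---locally consistent parsing, periodicity handled as a special case, and an $\LCE$ structure on the parsed string---is in the right spirit, but the realization differs from the paper's.  The paper's randomized construction is \emph{single-level}: it assigns to each position $i$ the value $h(\phi(S[i..i+\tau-1]))$, where $\phi$ is a Karp--Rabin fingerprint and $h$ is drawn from a $(\tfrac12,\tau)$-min-wise family, and puts into $P$ every position that attains the minimum in some length-$\tau$ window; maximal substrings of period at most $\tau/6$ are found separately and become single blocks.  A suffix tree on the resulting block string then supports queries in $\O(\tau)$ time.  Your $\Theta(\log\tau)$-level hierarchy is closer in shape to the paper's \emph{deterministic} selection (Section~\ref{sec:determinsiticSelection}), with Cole--Vishkin alphabet reduction replaced by random priorities.

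The substantive gap is the high-probability bound, which is the technical heart of the theorem.  You assert that the parse has $\O(n/\tau)$ phrases with failure probability $\exp(-\Omega(n/\tau))$, but you do not prove it, and two things block a direct argument.  First, ``draw a random priority for each symbol'' over an alphabet of size $n^{\O(1)}$ needs either $\omega(n/\tau)$ storage or a hash family; the independence guarantees of whatever family you choose are precisely what drive any concentration bound, and you specify neither the family nor the bound.  Second, the paper does \emph{not} obtain the \whp{} guarantee by concentration at all: Section~\ref{sec:ens_rand_whp} develops two bespoke mechanisms---for $\tau<\log^2 n$, sample $\polylog(n)$ windows to estimate $|P|$ before committing to $h$; for $\tau\ge\log^2 n$, run $\log n$ candidate hash functions in parallel over a precomputed $(\log n,\log n)$-partitioning set and retain one yielding $|P|=\O(n/\tau)$.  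The $\log^2 n$ threshold arises from the running time of these auxiliary procedures (the parallel scheme's step-back cost is $\O(\tau\log^2 n)$ and must stay $\O(n)$), not from a concentration parameter, so your explanation of where the threshold comes from does not correspond to an actual proof.  A smaller related issue: with a plain local-minimum rule in an $\O(1)$ window, block lengths are not bounded deterministically (a unimodal run of priorities yields an arbitrarily long block), so your $\O(\log\tau)$-space streaming argument also rests on a probabilistic block-length bound you have not supplied.
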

	
	For the deterministic trade-off, we begin with a data structure with the following complexities.
	
	\begin{lemma}[Deterministic Algorithm]\label{lem:deterministicLCE}
		For any $1\le \tau\le \O(\frac n{\log n})$ there exists a data structure of size $\O(\frac n\tau)$ words which can be constructed deterministically  in $\O(n \log \tau )$ time and answers $\LCE$ queries in $\O(\tau\log^*n)$ time. 
	\end{lemma}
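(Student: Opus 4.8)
The plan is to build the data structure on top of a deterministic Locally Consistent Parsing of $S$. I would apply an LCP-style parsing scheme recursively for $\Theta(\log\tau)$ levels, so that at the top level the string is blocked into phrases of average length $\Theta(\tau)$; this gives a hierarchy of $\O(\log\tau)$ levels in which level $i$ has $\O(n/2^i)$ phrases (up to the point where phrase lengths reach $\tau$), so the total number of phrases stored at the coarsest level is $\O(n/\tau)$, matching the desired space bound. Each phrase at each level receives a \emph{name} (a small integer identifier) that is assigned consistently: two occurrences of the same substring that are "far enough" from the boundaries of their surrounding context get the same name. The key property inherited from locally consistent parsing is that the phrase boundaries around any position $i$ are determined by a window of only $\O(\log^*n)$ symbols of context, so two long matching substrings are parsed identically except within $\O(\log^*n)$ of their endpoints; this is exactly what lets an LCE query be answered by walking down the hierarchy.

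The steps, in order: (1) Fix a deterministic symbol-to-symbol recoloring / alphabet-reduction procedure (the classic deterministic coin tossing of Cole–Vishkin, or the deterministic variant used by earlier LCP papers) that reduces any string over $[\sigma]$ to one over $\O(\log^*n)$-bounded local structure in $\O(\log^*n)$ rounds, each round processing the string in $\O(n)$ time while reading $S$ from the read-only input. (2) Iterate the blocking step $\Theta(\log\tau)$ times, halving the length each time; careful bookkeeping shows the $i$-th iteration costs $\O(n/2^{i}\cdot\log^*n + n)$ — actually one must be a bit careful, since re-reading $S$ each level would cost $\O(n\log\tau)$, which is fine and matches the claimed $\O(n(\log\tau+\log^*n))$ construction time. (3) For each level store the sorted list of distinct phrases together with their names and the implicit trie / predecessor structure over phrase names, using $\O(n/\tau)$ words total. (4) To answer $\LCE(i,j)$: locate $i$ and $j$ inside the hierarchy and, level by level from coarse to fine, compare the sequences of phrase names; whenever two names agree the underlying substrings agree, and when they first disagree we descend one level and recurse within the mismatching phrase. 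Because only $\O(\log^*n)$ phrases per level near the "frontier" can differ spuriously, each of the $\O(\log\tau)$ levels contributes $\O(\log^*n)$ work — but the dominant cost is the very last descent, which scans $\O(\tau)$ raw symbols of $S$; in total the query costs $\O(\tau\log^*n)$, as claimed (the $\log^*n$ factor is the price of verifying, at the finest stored level, that the $\O(\log^*n)$-symbol contexts really match).

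The main obstacle I anticipate is making the parsing \emph{deterministic} while keeping the context window down to $\O(\log^*n)$ and the construction time down to $\O(n(\log\tau+\log^*n))$: a naive deterministic coloring either needs a larger context or costs an extra $\log$ factor per level. The right tool is the deterministic locally consistent parsing used for the SST result in this paper (Cole–Vishkin-style deterministic coin tossing driven by the ranks of adjacent phrase \emph{names}, not raw symbols, after the first level), which guarantees that at every level the boundary-deciding context is $\O(\log^*n)$ names wide and each level is computed by $\O(\log^*n)$ linear scans. A secondary subtlety is the alphabet: to compare and sort phrase names in linear time one needs the names at each level to lie in a polynomially bounded range, which is ensured by renaming via radix sort of the $\O(1)$-length (in names) phrases at the end of each level; this renaming is where the per-level $\O(n)$ term (independent of the $2^{-i}$ shrinkage) comes from. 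Handling boundary phrases — the $\O(\log^*n)$ phrases at each end of the queried ranges whose names are context-dependent and hence unreliable — is done exactly as in the SST construction: near the endpoints we fall back to direct symbol comparison against $S$, absorbing an $\O(\log^*n)$ additive cost per level into the stated bounds. I would also verify the boundary cases: the restriction $\tau=\O(n/(\log n\log^* n))$ guarantees enough space ($\O(n/\tau)=\Omega(\log n\log^*n)$ words) to store the per-level predecessor structures and the $\O(\log^*n)$-round parsing metadata without exceeding $\O(n/\tau)$.
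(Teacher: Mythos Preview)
Your high-level idea---use $\Theta(\log\tau)$ rounds of Cole--Vishkin style deterministic parsing to get top-level blocks of length $\Theta(\tau)$---matches the paper's construction, and your construction-time accounting $\O(n(\log\tau+\log^*n))$ is correct. But two aspects of the proposal do not work as written, and one essential case is missing.

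\textbf{Space and query structure.} You write ``For each level store the sorted list of distinct phrases \dots\ using $\O(n/\tau)$ words total'', and then answer a query by descending the hierarchy level by level. Storing all $\Theta(\log\tau)$ levels costs $\sum_{i} \O(n/2^i)=\O(n)$ words, not $\O(n/\tau)$; the level-by-level descent you describe genuinely needs those intermediate levels, so this is not just a bookkeeping slip. The paper avoids this entirely: the hierarchy is used only transiently to \emph{compute} the top-level set $P$ (with careful left-to-right scheduling so only $\O(1)$ blocks per level are alive at any time, giving $\O(\log\tau\cdot\log^*n)$ working space for the intermediate levels). What is actually stored is just $P$ (size $\O(n/\tau)$), a successor array, and a suffix tree of the length-$|P|$ \emph{partitioning string} $S_P$ obtained by naming each block. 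A query then does no hierarchy descent at all: compare $\O(\delta)=\O(\tau\log^*n)$ characters naively until the two positions reach a synchronised landmark in $P$ (Lemma~\ref{lem:center_blocks}), issue a single LCA in the suffix tree of $S_P$ to jump over the matching block sequence, then compare another $\O(\delta)$ characters. The $\log^*n$ in the query time comes from the locality radius $\delta=\O(\tau\log^*n)$ of the top-level landmarks, not from per-level context checks.

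\textbf{Periodic substrings.} Your proposal is silent on long substrings with small period, and this is a real gap, not a detail: on $S=a^n$ any locally consistent selection rule either picks every position or none, so you cannot have blocks that are simultaneously $\O(\tau)$ long and $\O(n/\tau)$ many. The paper's definition of a $(\tau,\delta)$-partitioning set explicitly allows a second block type---periodic blocks of arbitrary length but period $\le\tau$---and both the construction (merging runs of identical sub-blocks at each level) and the query (skipping through a periodic block once $2\tau$ characters have been matched, via Lemma~\ref{lem:twiceperiod_equal}) must handle this case separately. Without that, the scheme fails on any input containing a run longer than $\tau$.
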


	Then, we improve the results as stated in the following theorem. 
	
	\begin{theorem}[Deterministic Algorithm]\label{thm:deterministicLCE2}
		For any $1\le \tau\le \O(\frac n{\log n})$ there exists a data structure of size $\O(\frac n\tau)$ words which can be constructed deterministically  in $\O(n\log \tau )$ time and answers $\LCE$ queries in $\O(\tau\sqrt{\log^*n})$ time. 
	\end{theorem}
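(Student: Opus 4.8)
The plan is to keep the data structure of Lemma~\ref{lem:deterministicLCE} essentially unchanged — the deterministic locally consistent parse of $S$, in the style of Sahinalp and Vishkin~\cite{SV94}, laid out in $\O(\log\tau)$ block-grouping levels whose internal symmetry breaking uses $\O(\log^*n)$ deterministic-coin-tossing rounds, with only $\O(n/\tau)$ block names stored — and to improve \emph{only} the query algorithm, together with a negligible auxiliary index, so that the $\O(n/\tau)$ space bound and the $\O(n(\log\tau+\log^*n))$ construction time are preserved. The first step is to pin down where the $\log^*n$ factor in the query of Lemma~\ref{lem:deterministicLCE} comes from: to answer $\LCE(i,j)$ one locates the top-level blocks containing $i$ and $j$ and compares the two name-sequences, which — by local consistency — agree except inside an ``unsynchronized'' window around the seam; for the \emph{deterministic} coloring this window is forced up to $\Theta(\log^*n)$ blocks wide because the color of a position within $\log^*n$ of the seam can already differ between the two suffixes, and at the top level each such block has length $\Theta(\tau)$, so that window is verified naively in $\Theta(\tau\log^*n)$ time.

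The key idea is to avoid rescanning the full $\Theta(\log^*n)$-block window at the expensive (coarse) levels. I would add a Four-Russians-style auxiliary index, built in $\O(n)$ extra time, that groups the $\log^*n$ coloring rounds into $\Theta(\sqrt{\log^*n})$ consecutive \emph{phases} of $\Theta(\sqrt{\log^*n})$ rounds each, and, for each phase boundary, tabulates the map from the $\O(\sqrt{\log^*n})$-wide \emph{rank pattern} entering the phase to the colors and the grouping produced at the end of the phase; the number of such rank patterns is $2^{\O(\sqrt{\log^*n}\log\log^*n)}$, comfortably below $n/\tau$ throughout the admissible range $\tau=\O(n/(\log n\log^*n))$, so the index fits within the space budget. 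With it, the query can advance across a whole phase by touching the still-unsynchronized window only once rather than once per round, so that the portion of the ragged end it must actually scan at the top is $\Theta(\sqrt{\log^*n})$ blocks; as the query descends the grouping levels the window is allowed to relax back to its natural width, but the block lengths have by then shrunk geometrically, so the new bound on $\sum_\ell(\text{window at }\ell)\cdot(\text{block length at }\ell)$ telescopes to $\O(\tau\sqrt{\log^*n})$. Should this phased-table bound turn out to be insufficient on its own, the same budget affords a fallback: build the auxiliary LCE structure recursively on the much more repetitive ``ragged-end'' sub-instances, charging the deep part of each window to a recursive query. The remaining $\O(\log\tau)$ grouping levels are traversed with $\O(1)$ work each, exactly as in Lemma~\ref{lem:deterministicLCE}.

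The step I expect to be the main obstacle is the soundness-and-locality argument underpinning all of this: one must prove that the colors, blocks, and names obtained through the phased tables are \emph{identical} to those produced by running the coloring rounds contiguously, so that the query stays consistent with the stored $\O(n/\tau)$-word parse, and — more delicately — that ``compressing'' the rescan over a phase genuinely caps the unsynchronized window at $\Theta(\sqrt{\log^*n})$ blocks there while controlling its growth on the way down, with the two ragged ends of the common region (on the prefix side and on the suffix side of the matched block) handled simultaneously and shown not to interfere. This is a refinement of the local-consistency analysis already behind Lemma~\ref{lem:deterministicLCE}, and it is where the real care is needed; once it is in place, checking that the auxiliary index respects the proven influence radius, fits in $\O(n/\tau)$ words, and is built within the additive $\O(n(\log\tau+\log^*n))$ bound is routine accounting.
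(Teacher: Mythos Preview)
Your proposal rests on a misreading of the base data structure, and the Four-Russians idea does not attack the actual bottleneck. The query of Lemma~\ref{lem:deterministicLCE} does \emph{not} ``traverse the $\O(\log\tau)$ grouping levels with $\O(1)$ work each'': only the single top level $P_{\log_{3/2}\tau}$ is stored (storing all levels would already cost $\Theta(n)$ words), and the query is simply a naive character scan of length $\O(\delta)=\O(\tau\log^*n)$ followed by one LCA lookup in the block suffix tree. The $\log^*n$ factor lives in the \emph{locality radius} $\delta$ of the partitioning set itself: after the full $c\log^*n$ coin-tossing rounds, whether a position is selected depends on a $\Theta(\log^*n)$-block context. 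Tabulating phases of the coin-tossing lets you \emph{recompute} labels faster, but it cannot shrink that dependency---two suffixes whose $\Theta(\tau\log^*n)$-contexts differ may have genuinely different block boundaries, and no query-time table makes them agree. Consequently the ``unsynchronized window'' cannot be capped at $\sqrt{\log^*n}$ top-level blocks by tabulation alone; your telescoping sum never gets off the ground.

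The paper's route is entirely different and worth contrasting. It rebalances the \emph{parse}, not the query: run the deterministic selection with the smaller parameter $\tau'=\tau/\sqrt{\log^*n}$, so that the locality radius is already $\delta'=\O(\tau'\log^*n)=\O(\tau\sqrt{\log^*n})$, at the price of a too-large set $P$ of size $\O(n/\tau')=\O((n/\tau)\sqrt{\log^*n})$. It then thins $P$ to a subset $Q$ of size $\O(n/\tau)$ using a difference-cover pattern on the \emph{ranks} in $P$ (keep $p_i$ when $i\bmod\sqrt{\log^*n}=0$ or $i\bmod\log^*n<\sqrt{\log^*n}$), and builds the sparse suffix tree of $Q$ via the machinery of Section~\ref{sec:SST_construction}. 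At query time one scans characters; after $\O(\delta')$ of them Lemma~\ref{lem:center_blocks} guarantees the $P$-boundaries are synchronized between the two suffixes, and within the next $2\log^*n$ synchronized $P$-positions (another $\O(\tau'\log^*n)=\O(\tau\sqrt{\log^*n})$ characters) the difference-cover structure forces a common offset landing in $Q$ on both sides, whereupon the $\SST$ finishes the query in $\O(1)$. The key lemma (Lemma~\ref{lem:better_deterministic_non_periodic}) is exactly this synchronization-plus-difference-cover argument; the $\SST$ construction over $Q$ is what lets the final jump be constant-time rather than another $\O(\delta')$ scan. Your proposal contains neither of these two ingredients.
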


	\paragraph{Sparse Suffix Tree construction.}
	For the $\SST$ construction problem we introduce an optimal randomized Las-Vegas algorithm, which improves upon the previous algorithm of Gawrychowski and Kociumaka~\cite{GK17} who developed an optimal Monte Carlo algorithm and a Las-Vegas algorithm with additional $\sqrt{\log|B|}$ factor. 
	In addition, we introduce  the first almost-linear deterministic $\SST$ construction algorithm. Our results are stated in the following theorems.
	\begin{theorem}\label{thm:lasvegasSST}
		
		Given a set $B\subseteq[n]$   of size  $\Omega (\log^2n)\le |B|\le n$, there exists a randomized Las-Vegas algorithm which constructs the Sparse Suffix Tree of $B$ in  $\O(n)$ time \whp{} using only $\O(|B|)$ words of space. 
		For $|B|<o(\log^2 n)$ there exists a randomized Las-Vegas algorithm with the same complexities except that the running time is just \emph{expected} to be $\O(n)$.

	\end{theorem}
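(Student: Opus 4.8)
The plan is to reduce the $\SST$ construction to three tasks: (i) build the Las-Vegas $\LCE$ data structure of Theorem~\ref{thm:lasvegasLCE} with parameter $\tau=\Theta(n/|B|)$, which occupies $\O(n/\tau)=\O(|B|)$ words and is built in $\O(n)$ time \whp{} (and in \emph{expected} $\O(n)$ time when $|B|=o(\log^2 n)$, i.e.\ $\tau=\omega(n/\log^2 n)$ — this is exactly the dichotomy appearing in the statement, since $\tau\le\O(n/\log^2 n)\iff |B|=\Omega(\log^2 n)$); (ii) compute the lexicographic order of the suffixes $\{S[j..n]:j\in B\}$; and (iii) compute, for each pair of lexicographically adjacent suffixes, the length of their longest common prefix, and assemble the compressed trie. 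Task (iii) costs $\O(|B|)$ queries to the $\LCE$ structure, i.e.\ $\O(|B|\cdot\tau)=\O(n)$ time, followed by the standard $\O(|B|)$-time trie assembly from the sorted order and the adjacent $\LCE$ values. Hence the crux is performing task (ii) in $\O(n)$ time: a plain comparison sort would use $\O(|B|\log|B|)$ $\LCE$ queries, i.e.\ $\O(n\log|B|)$ time, which is precisely the kind of $\sqrt{\log|B|}$-type overhead incurred by the Las-Vegas bound of Gawrychowski and Kociumaka~\cite{GK17} that we must eliminate.

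For task (ii) I would use the locally consistent parsing directly rather than only through the $\LCE$ primitive. Build the hierarchical parse $S=S_0,S_1,\dots,S_h$ with $h=\O(\log n)$ levels, each obtained from the previous by a locally consistent parsing that shrinks the length by a constant factor, so that $\sum_i|S_i|=\O(n)$ and the whole hierarchy is produced in $\O(n)$ time. Fix the level $\ell$ at which a level-$\ell$ symbol spans $\Theta(\tau)$ original characters; then $|S_\ell|=\O(n/\tau)=\O(|B|)$, so $S_\ell$ — together with its \emph{full} suffix array, built in $\O(|S_\ell|)$ time over its integer alphabet — fits in $\O(|B|)$ words. The construction yields a set of synchronizing positions at level $\ell$ that is $\O(\tau)$-spaced, so for each $j\in B$ we scan forward at most $\O(\tau)$ characters of the read-only $S$ to locate the first synchronizing position $p_j\ge j$; over all of $B$ this is $\O(|B|\cdot\tau)=\O(n)$ time and $\O(|B|)$ space (we store only the pairs $(j,p_j)$, accessing the prefix $S[j..p_j-1]$ on demand). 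Local consistency guarantees that whenever $S[i..n]$ and $S[j..n]$ agree past their first synchronizing positions, those positions occur at the same offset ($p_i-i=p_j-j$) and the tails $S[p_i..n]$, $S[p_j..n]$ are parsed identically from level $\ell$ on; thus the lexicographic comparison of the two suffixes is decided either inside the short prefixes $S[i..p_i-1]$, $S[j..p_j-1]$, or by the lexicographic order of the level-$\ell$ suffixes of $S_\ell$ starting at the images of $p_i$, $p_j$. I would therefore MSD-radix-sort $B$ by (short prefix, rank of the corresponding level-$\ell$ suffix): $\O(\tau)$ rounds over the prefixes accessed from $S$, with a suffix whose prefix has ended handed off to its level-$\ell$ suffix rank read off the suffix array of $S_\ell$; this is $\O(|B|\cdot\tau)=\O(n)$ time and $\O(|B|)$ space. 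To make the suffix-array order of $S_\ell$ coincide with the original-alphabet order I would carry an order-preserving naming up the hierarchy (sort the blocks at each level and name them accordingly), with the usual care for the case where the expansion of one level-$\ell$ symbol is a prefix of another's. Resolving this last point cleanly — so that the reduced, linear-space instance is sorted in $\O(n)$ total time — is the main technical obstacle.

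Finally, the Las-Vegas guarantee: correctness never depends on the random choices. The $\LCE$ structure of task (i) is already Las-Vegas by Theorem~\ref{thm:lasvegasLCE}, and the locally consistent parsing used in task (ii) is \emph{deterministically} locally consistent — randomness only controls block lengths and the number of levels, which we verify in $\O(n)$ time (bounded block sizes, $h=\O(\log n)$, synchronizing positions $\O(\tau)$-spaced) and, on failure, restart with fresh randomness. Constant success probability per attempt yields $\O(n)$ running time \whp{} once $|B|=\Omega(\log^2 n)$ — so that $\tau\le\O(n/\log^2 n)$ and Theorem~\ref{thm:lasvegasLCE}'s high-probability regime applies — and \emph{expected} $\O(n)$ time otherwise; in every case the output is the exact Sparse Suffix Tree of $B$ and the working space stays $\O(|B|)$.
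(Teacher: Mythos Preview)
Your high-level plan matches the paper's: build a set $P$ of $\O(|B|)$ synchronizing positions, sort the $P$-suffixes, then sort $B$ by the pair (short representative prefix, rank of the synchronized tail), and finish with $\O(|B|)$ \LCE{} queries. But the step you yourself call ``the main technical obstacle'' --- making the block naming order-preserving when one block's expansion is a prefix of another's --- is the heart of the argument, and you leave it open. The paper's solution (Section~\ref{sec:SST_construction}) is not mere bookkeeping: for each $p_i\in P$ the representative string is not the block $S[p_i..p_{i+1}-1]$ but $s_i=S[p_i..p_i+3\delta-1]$, reaching \emph{past} the block end so that any tie between regular blocks is broken by the following characters; for periodic blocks --- which can be arbitrarily long and which you do not treat at all --- the representative additionally appends a sign bit (whether the first character violating the period is above or below the in-period character), then $\pm$(distance to the violation), then $2\delta$ further characters starting at the violation point. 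Lemma~\ref{lem:SSA_of_blocks} proves this yields exactly the original suffix order. The paper also isolates the property you are implicitly assuming --- that two $P$-suffixes with long \LCE{} begin with identical \emph{first} blocks --- as \emph{forward synchronization} (Definition~\ref{def:forward_sync}), which is strictly stronger than local consistency and is proved separately for the randomized selection (Lemma~\ref{lem:rand_is_forward_sync}).

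Two smaller issues. First, your $\O(\log n)$-level hierarchy is not what the paper does in the randomized case: Section~\ref{sec:randomSelection} selects a $(2\tau,2\tau)$-partitioning set directly in one pass using min-wise hashing over Karp--Rabin fingerprints (Section~\ref{sec:ens_rand_whp} adds at most one auxiliary level). As written, storing the early levels of your hierarchy costs $\Theta(n)$ words, not $\O(|B|)$; you would need the streaming per-level computation of Appendix~\ref{app:deterministic_in_space}, which you do not describe. Second, the first synchronizing position $\ge j$ can depend on characters \emph{before} $j$ (the locality window extends $\delta$ to the left), so equal suffixes need not have equal offsets to their first $P$-position; the paper handles this by synchronizing at $\suc_P(j+\delta)$ rather than at $\suc_P(j)$.
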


	\begin{theorem}\label{thm:deterministicSST}
		Given a set $B\subseteq[n]$   of size  $\Omega (\log n)\le |B|\le n$, there exists a deterministic algorithm which constructs the Sparse Suffix Tree of $B$ in  $\O(n\log \frac n{|B|})$ time using only $\O(|B|)$ words of space.

	\end{theorem}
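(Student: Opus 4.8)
The plan is to reduce, within the time and space budget, the sparse suffix tree of $B$ to an ordinary suffix tree of a string of length $\O(|B|)$ over an integer alphabet, and then to run a classical linear-time suffix tree algorithm on that small string.

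First I would build the deterministic locally consistent parsing hierarchy of $S$ \emph{anchored} at $B$ (so that every $b\in B$ is a block boundary on every level) --- the same machinery underlying Lemma~\ref{lem:deterministicLCE}, instantiated with $\tau=\Theta(\max\{1,\,n/|B|\})$; the hypothesis $|B|=\Omega(\log n\log^*n)$ is exactly what makes this value of $\tau$ admissible in Lemma~\ref{lem:deterministicLCE}. This produces, in $\O(n(\log\tau+\log^*n))=\O(n(\log \frac n{|B|}+\log^*n))$ time and $\O(|B|)$ words of space: (i)~the $\LCE$ data structure of Lemma~\ref{lem:deterministicLCE}, answering queries in $\O(\tau\log^*n)$ time; (ii)~the top level $T$ of the hierarchy, a string of length $\O(|B|)$ over an integer alphabet of size $\O(|B|)$ whose symbols are the names of the top-level parsing blocks; and (iii)~a map $\phi\colon B\to[|T|]$ sending each anchor $b$ to the position of $T$ at which the parsing of the suffix $S[b..]$ begins. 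The geometric shrinkage of the levels --- a locally consistent step turns a stretch of length $m$ into at most $cm$ blocks for a constant $c<1$, plus at most $|B|$ boundaries forced at the anchors --- bounds the number of nontrivial levels by $\O(\log \frac n{|B|})$ and the sum of the level lengths by $\O(n)$; the extra $\log^*n$ factor comes from the deterministic coin-tossing used to place boundaries, and the $\log \frac n{|B|}$ factor from recomputing the large lower levels, which cannot be stored in $\O(|B|)$ words.

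Next I would suffix-sort $T$: since $|T|=\O(|B|)$ over an $\O(|T|)$-bounded integer alphabet, its suffix tree --- equivalently its suffix array together with the LCP array --- can be built in $\O(|B|)$ time and space by a classical construction. Reading the leaves left to right yields an ordering of $B$; provided the block names were assigned consistently with the lexicographic order of the block expansions (which I would arrange by sorting the blocks level by level, reusing the names from the level below), the local-consistency property guarantees that this ordering is precisely the lexicographic order of $\{S[b..]:b\in B\}$. For each consecutive pair $b,b'$ in this order I then query the data structure of~(i) for their longest common extension in $S$, in $\O(\tau\log^*n)$ time each, for a total of $\O(|B|\cdot \frac n{|B|}\log^*n)=\O(n\log^*n)$. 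Finally, from the sorted list of $B$ and the resulting $|B|-1$ adjacent $\LCE$ values I assemble the compressed trie in $\O(|B|)$ time and space in the standard way, reading each edge label as a substring of $S$ from the read-only input. Summing the three phases gives the claimed running time within $\O(|B|)$ words.

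The step I expect to be the main obstacle is showing that the reduction is faithful: that suffix-sorting $T$ really yields the lexicographic order of the $S$-suffixes, and that a mismatch between two top-level parsings translates into the correct $S$-level $\LCE$ value. This needs a careful analysis of the boundary blocks --- the first block of each suffix and the block at which two parsings first diverge --- using the fact that in a locally consistent parsing the boundary status of a position depends only on a bounded window, so that equal substrings of $S$ are parsed identically away from their endpoints and the expansions of two distinct top-level symbols cannot be prefix-related in a way that reverses the order. A secondary technical point is performing the hierarchy construction in $\O(|B|)$ words: one streams the large lower levels while retaining only the $\O(|B|)$ anchor-relevant symbols (and short windows around them), recomputing a level from the levels below it whenever random access is needed --- this is the space-bounded bookkeeping already developed for Lemma~\ref{lem:deterministicLCE}, which I would reuse.
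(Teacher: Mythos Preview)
Your plan has a genuine gap at exactly the point you flag as the main obstacle. Forcing every $b\in B$ to be a block boundary on every level destroys the locality you later want to invoke: with anchors, whether position $p$ is a boundary no longer depends only on a window of $S$ around $p$, it also depends on whether some element of $B$ happens to sit there. Consequently, two suffixes $S[b..]$ and $S[b'..]$ that share a long common prefix can receive \emph{different} top-level parsings, because the pattern of $B$-anchors lying inside $[b..b+\ell]$ need not match the pattern inside $[b'..b'+\ell]$. Once the block sequences differ for content-independent reasons, giving block names in lexicographic order of their expansions is not enough: the first place the two block sequences diverge may compare two blocks one of which is a prefix of the other, and then the name order says nothing about the true suffix order. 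So the reduction ``suffix-sort $T$ $\Rightarrow$ suffix-sort $B$'' is not faithful as stated, and the argument you sketch for it (``boundary status depends only on a bounded window'') is precisely the property that anchoring breaks.

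The paper avoids this by \emph{not} anchoring at $B$. It computes a partitioning set $P$ that depends only on $S$ and is additionally \emph{forward synchronized} (Definition~\ref{def:forward_sync}): for $p_i,p_j\in P$ with long common extension, the very next block boundaries satisfy $p_{i+1}-p_i=p_{j+1}-p_j$. This, together with representative strings that extend $\O(\delta)$ characters past each block (and special handling of periodic blocks), makes the suffix order on $P$ recoverable from a length-$|P|$ string $S'_P$ (Lemma~\ref{lem:SSA_of_blocks}). The SSA of $B$ is then obtained in a second step: each $b\in B$ is represented by a short prefix of length $\O(\delta)$ together with the rank (in the already-computed SSA of $P$) of its successor $\suc_P(b+\delta)$, and these pairs are sorted. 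Your LCP/assembly phase is the same as the paper's; the difference is entirely in how the order is obtained --- via $P$ rather than via an anchored parse of $B$.
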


	\subsection{Algorithmic Overview}\label{sec:algorithmic_overview}
	
	We first give an overview of the $\LCE$ data structures and then describe how to use them (with additional techniques) to build an $\SST$ using only a small amount of space. 
	All the techniques mentioned in the overview are described later in detail.
	For both problems, we use a parameter $1\le\tau\le n$, such that the working space of the data structures and the construction algorithms are limited to $\O(\frac n\tau)$ words of space.
	
	\paragraph{$\LCE$ data structure.} 
	The \naive method to compute $\LCE(i,j)$ is to compare pairs of characters until a mismatch is found. Since this \naive methos takes $\O(1+\LCE(i,j))$ time, the case where $\LCE(i,j)=\O(\tau)$ is  solvable in $\O(\tau)$ time without any preprocessing. 
	Therefore, we invest our efforts on speeding up the computation of queries where $\LCE(i,j)=\ell>\omega(\tau)$.
	In this case, we use a special partitioning of $S$  into blocks.
	The partitioning has the property that if a string appears as a substring of $S$ multiple times, then all these substrings are composed of exactly the same blocks, except for possibly small margins of the substrings. 
	More detailed, if $S[i..i+\ell-1]=S[j..j+\ell-1]$ are long enough substrings, then both $S[i..i+\ell-1]$ and $S[j..j+\ell-1]$ can be considered as composed of three parts, a prefix, a suffix and the middle part  between them with the following properties. Both the prefix and the suffix are short strings (e.g., of length $\O(\tau)$ for the randomized algorithm), or they have some other special properties, and the middle part is composed of full blocks, which are exactly the same in  both substrings, regardless the length of this part.
	Thus, using this partitioning, the computation of $\LCE(i,j)$ is done in three phases: first, the algorithm verifies that $\LCE(i,j)$ is long enough and if so --- the algorithm finds the beginning of the common middle part. 
	Then, the algorithm computes the length of the middle part (which might be very long) by a designated data structure that performs this computation in constant time. At the end, the algorithm just computes the length of the short suffix, right after the end of the last block of the middle part.

	The key concepts of this technique are similar  to the Locally Consistent Parsing of Sahinalp and Vishikin~\cite{SV96,CM07} and also to the techniques of Kociumaka et al.~\cite{KRRW15}.  
	However, we design novel methods to implement these ideas in a small space. For the randomized algorithms, we combine Karp--Rabin fingerprints with approximately min-wise hash functions, and for the  deterministic algorithms we show how to compute the parsing in an online fashion in order to reduce the space usage.
	In previous results for the $\LCE$ problem~\cite{BFGKSV16,BGKLV15,GK17,TBIPT16}, a set of positions of size $\O(\frac n\tau)$ was also considered by the data structures. In contrast to these algorithms, where the selected positions were dependent only on the \emph{length} of the text, we introduce a novel method that exploits the \emph{actual text} using local properties in order to decide which positions to select.

	\paragraph{Partitioning set.}
	In order to achieve the desired special partitioning of the string into blocks, our algorithms pick positions in $S$, which will be the endpoints
	of the blocks. 
	Our goal is to pick $\O(\frac n\tau)$ positions, such that the distance between any two consecutive positions is $\O(\tau)$ and, for the locality, the decision whether a position $p\in\{1,2,\dots,n\}=[n]$ is picked, depends only on the \emph{neighbourhood} of $p$ up to $\O(\tau)$ 
	characters away, which is the substring $S[p-\delta..p+\delta]$  for some $\delta\in \O(\tau)$. 
	
	The above discussion captures the key concepts of the algorithms, however, in certain cases it is impossible to pick such a set of positions. For example, if the string is $S=aaa\dots a$, it is obvious that any local selection must pick all the positions (except possibly for the margins) or none of them. Thus, it is impossible to fulfill both the requirement of $\O(\frac n\tau)$ positions and that the distance between any two consecutive positions will be $\O(\tau)$. Generally, this problem occurs when the string $S$ contain long substrings whose \emph{period} (see Section~\ref{sec:preliminaries}) length is $\O(\tau)$. 
	Thus, our algorithms use period-based techniques to treat positions in such substrings. So, the partitioning of $S$ will contain two types of blocks: periodic blocks, which are substrings with period length at most $\tau$, and regular blocks, that contains no repetition.
	For the sake of clarity, all the details about periodic blocks are omitted in this overview.
	
	It remains to describe how the algorithm picks the special positions, and how the algorithm uses these positions to answer $\LCE$ queries on these positions fast.
	We introduce two methods, and therefore two algorithms, for the position selection. The former is a randomized Las-Vegas method and the later is a deterministic method, with a slightly worse construction and query time.
	
	We note that after discover the idea of partitioning set, we found that similar ideas were invented in the previous decade in the area of Bioinformatics, without theoretical analysis, by Schleimer et al.~\cite{SWA03} called \emph{winnowing} and by Roberts et al.~\cite{RHHMY04} called \emph{minimizers}. 
	In addition, Kempa and Kociumaka~\cite{KK19,Kociumaka18}  was influenced by our definition of partitioning set, while defining \emph{synchronizing set}.  s

	\paragraph{The randomized method.}

	For the randomized data structure, the positions are picked in the following way. 
	The algorithm uses a novel hashing technique, that maps each string of length $\tau$ into an integer value in the range $[1..n^c]$ for some constant $c$. The mapping function is \emph{almost min-wise} function, which means that given $k$ different strings the probability of each one of them to be mapped into the minimum value  among all the strings is at most $\O(\frac 1 k)$. For any position $p\in [n]$ the algorithm computes the fingerprint of $S[p..p+\tau-1]$ and considered it as the ID of position $p$. 
	Then, the algorithm scans the positions with a sliding window of length $\tau$, and for any consecutive $\tau$ positions, the algorithm picks the position with the minimum ID. Notice that it is possible, and even advisable, that a single position will be picked due to multiple overlapping shifts of the sliding window. 
	It is straightforward that in any sequence of $\tau$ positions this method will pick at least one position and that the selection of each position depends only on a substring of  $S$ of length $3\tau$. The only possible problem with fulfilling the desired properties is the possibility of choosing too many positions ($\omega(\frac n\tau)$).
	With a simple probabilistic analysis of this method we prove that under random selection of the hash function, the \emph{expected} number of picked positions is $\O(\frac n\tau)$, assuming that in each offset of the sliding window, each substring of length $\tau$ appears only once, i.e., the non-periodic case. 
	The details of this method, including the treatment of periodic substring, appears in Section~\ref{sec:randomSelection}.
	Later, in Section~\ref{sec:ens_rand_whp} we show how to improve this method to $\O(\frac n\tau)$ picked positions, with linear construction time \emph{with high probability}. 
	The main idea of this improvement is to evaluate whether the selected min-wise hash function provides a small enough set of positions by sampling a few substrings of $S$, and if the set of picked positions in these substrings is too large, the algorithm chooses another function.

	\paragraph{The deterministic method.}
	In the deterministic method, we mimic the randomized selection, but with  a technique of deterministic coin-tossing, going back to Cole and Vishkin~\cite{CV86} and appeared in several other papers~\cite{CM07,GPS87,MSU97,SV96,ABR00,I17,FIK16}. 
	In the high level, 
	this technique transforms a string over an alphabet $\Sigma$ into a same length string over a constant alphabet in $\O(n\log^* n)$ time. The transformed string has the special property that for any pair of consecutive positions, if the original characters before the transformations were distinct, then the characters on these positions after the transformation are distinct as well. In addition, the transformation of each position depends only on the \emph{neighbourhood} of the position of $\O(\log^* n)$ positions, assuming that $S$ does not contain any kind of periodic substring. 
	Using this transformation, the algorithm splits $S$ into blocks of size at least $2$ and at most $\O(1)$, by selecting all the positions which their transformed value is a local minimum.
	By a repeatedly applying this transformation, while after the first transformation we consider each block as a character over a larger alphabet, the algorithm creates $\O(\frac n\tau)$ blocks of size $\O(\tau)$ characters. 
	Each of these blocks depends only on a substring of $S$ of length $\O(\tau \log^* n)$, which adds only a factor of $\log^*n$ over the complexities of algorithms that use the randomized selection. 
	The total computation time is $\O(n\log\tau)$ and with a novel scheduling technique, which scans the text in one pass from right to left, the computation is implemented using only $\O(\frac n\tau + \log \tau)$ words of space.  
	During the transformations of the string into blocks, the algorithm recognizes all the long substrings of $S$ with a period length of at most $\tau$ characters for the special treatment. 
	
	The technique is similar to previous algorithms as~\cite{CM07}, however,  this is the first implementation where the recursive construction does not use any randomness and also does not require $\Omega(n)$ working space. 
	Another previous algorithm that uses similar techniques is by Fischer et al.~\cite{FIK16} which also consider the $\LCE$ and $\SST$ problems using the Locally Consistent Parsing. However, the model of computation considered in~\cite{FIK16} is weaker, where in addition to the $\O(\frac n\tau)$ working space, the algorithm may also edit the space of the input, as long as the algorithm recovers the original input at the end of the construction.
	While in~\cite{FIK16} the usage of the Locally Consistent Parsing technique is involved, and a lot of effort is invested in merging parsing of different parts of the text, in our algorithm we introduce a space- and time-efficient implementation of the parsing for the whole text.

	\paragraph{Compute $\LCE$ query using the selected positions.}
	Given the partitioning of $S$ induced by the selected positions, we define  for any two indices $i',j'$ which are beginning of some blocks, the longest common block-prefix as the largest $\ell'$ such that $S[i'..i'+\ell'-1]=S[j'..j'+\ell'-1]$ and all the blocks that compose the two substrings are exactly the same.
	In order to compute this length, we create a new string $S'$ by converting each block into a unique symbol. 
	To detect repeated occurrences of the same block, and give them the same symbol, we sort the blocks in linear time using a technique from~\cite{GK17}.
	The result string is of length $\O(\frac n\tau)$ and the alphabet size is also $\O(\frac n\tau)$, so the algorithm builds the suffix tree of this string in $\O(\frac n\tau)$ time and space. Using this suffix tree augmented with an LCA data structure, one can compute the  value of $\ell'$ in constant time.

	As described above, given the partitioning of $S$ induced by the selected positions, for any two indices $i,j\in[n]$ with $\LCE(i,j)=\ell>\Omega(\tau)$ the substrings $S[i..i+\ell-1]=S[j..j+\ell-1]$ are composed of a short prefix, a short suffix, and a middle part which is composed of exactly the same blocks in both substrings. 
	Thus, the computation of $\LCE(i,j)$ follows this decomposition and computes the $\LCE$ in three phases. In the first phase, the algorithm compares $\O(\tau)$ pairs of characters (or $\O(\tau\log^*n)$ for the deterministic selection). 
	If there is an exact match, the algorithm finds the beginning of the middle part and computes its end using the suffix tree of the blocks. Then, the algorithm just compares additional $\O(\tau)$  pairs of characters (or $\O(\tau\log^*n)$ for the deterministic selection) after the end of the middle part, and it is guaranteed that the mismatch will be found.

	\paragraph{Sparse Suffix Tree construction.}

	It is well-known that the $\SST$ of a set $B\subseteq[n]$ of suffixes can be built in $\O(|B|)$ time and space given the Sparse Suffix Array (SSA) and the LCP array~\cite{KLAAP01} of $B$, which are the lexicographic order of the suffixes, and the longest common prefix of each adjacent suffixes due to this order, respectively.
	Hence, our focus is on sorting the suffixes and on computing their LCP.
	Our algorithm works in two steps.
	At the beginning, the algorithm selects a partitioning set $P$ for $\tau=\frac n{|B|}$ and builds the SSA of all suffixes starting at positions of $P$.
	Then, the algorithm uses the SSA of $P$ to construct the SSA of $B$.

	Our novel technique for computing the SSA of $P$ is based on a special property that is guaranteed by our algorithms that selects partitioning sets.
	This property ensures that if two suffixes of positions from $P$ share a long prefix, then both prefixes are composed of the same blocks in the partitioning, except for possibly a small portion at the end of the common prefix.
	Intuitively, the algorithm sorts the blocks and creates a new string $S'_P$ of length $|P|$, such that the $i$th character in $S'_P$ is the rank of the $i$th block.
	Then, we prove that the lexicographic order of any two suffixes of $S$ starting at positions from $P$ is exactly the same as the lexicographic order of the corresponding suffixes of $S'_P$.
	The actual construction is a little bit more complicated since it is required to extend the blocks we sort in order to achieve the aforementioned result.

	On the second stage for creating the SSA of $B$, the algorithm uses methods which are similar to the methods that were used to compute the SSA of $P$.
	The algorithm creates for any suffix $i\in B$ a short representative string, which basically, ignoring some details, is a prefix of the $i$th suffix of a proper length. 
	Then the algorithm sorts all the representative strings of the suffixes. 
	Ranking of the representative strings is sufficient to determine the order of two suffixes with different corresponding representative strings.
	To determine the order of two suffixes which have the same representative string we use the SSA of $P$ from the first stage.
	The length of the representative string ensures that if two suffixes from $B$ share the same representative string then their order can be determined solely by the order of two corresponding suffixes from $P$. 
	Thus, by combining the representative strings ranking with the SSA of $P$, the algorithm can sort all suffixes of $B$ in $\O(n)$ time using only $\O(\frac n \tau)$ words of space.

	\paragraph{Even better Deterministic $\LCE$ data structure.}
	Using the $\SST$ construction algorithm, we show how to combine the deterministic selection with the  difference covers~\cite{Maekawa85,BK03} technique, to get almost optimal trade-off for the $\LCE$ data structure of $\O(\frac n\tau)$ space and $\O(\tau\sqrt{\log^*n})$ query time with $\O(n\log \tau)$ deterministic construction time. 
	We mention that although the technique of difference covers has already been used for the $\LCE$ problem (see~\cite{PT08,BGKLV15,GK17}), the usage in our algorithm is different. 
	While in all previous work the considered positions were all the positions in a given range, in our algorithm the positions are a subset of the locally selected positions.
	Thus, some properties such as the ability to compute in constant time the common difference for any two indices are impossible to exploit. 
	Even though we need to spend time on finding the synchronized selected positions, our selection gives an  $\O(\tau\sqrt{\log^*n})$ bound on the number of comparison steps needed to take until the synchronized positions are found.
	
	\paragraph{Organization.}
	In section~\ref{sec:preliminaries} we give the basic definitions and tools we use in the paper. Then in Section~\ref{sec:LCE_construction} we give the formal definition for the set of selected positions and show how to use it to create an $\LCE$ data structure. We complete the data structures with the randomized Las-Vegas position selection in Section~\ref{sec:randomSelection} and with the deterministic selection in Section~\ref{sec:determinsiticSelection}. In Section~\ref{sec:SST_construction} we show how to use these selections in order to achieve $\SST$ construction algorithms.
	Sections~\ref{sec:better_deterministic} and~\ref{sec:ens_rand_whp} show how to improve the basic results, where Section~\ref{sec:better_deterministic} discuss on the trade-off improvement for deterministic $\LCE$ data structure, and Section~\ref{sec:ens_rand_whp} discuss on how to ensure the Las-Vegas selection will run in linear time  with high probability. In these sections we omit the discussion about the case of periodic block, which we postpone to the appendices, so in Section~\ref{sec:periodic_overview} we overview the main technique that we use for this case.

	\paragraph{Acknowledgments.} We thank an anonymous reviewer for his usefull comments and suggestions.

	\section{Preliminaries}\label{sec:preliminaries}

	For $1\leq i<j\leq n$ denote the \emph{interval} $[i..j] = \{i,i+1,\dots,j\}$, and $[n]=\{1,2,\dots,n\}$.
	We consider in this paper strings over an integer alphabet $\Sigma=\{1,2,\dots, n^{\O(1)}\}$.
	A string $W$ of length $|W|=\ell$ is a sequence of characters $W[1]W[2]\dots W[\ell]$ over  $\Sigma$.
	A \emph{substring} of $W$ is denoted by $W[x..y]=W[x]W[x+1]\dots W[y]$ for $1\leq x \leq y \leq \ell$. If $x=1$ the substring is called a \emph{prefix} of $W$, and if $y=\ell$, the substring is called a \emph{suffix} of $W$.

	A prefix of $W$ of length  $y\geq 1$ is called a \emph{period} of $W$ if and only if $W[i]=W[i+y]$ for all $1\leq i \leq \ell-y$.
	The shortest period of $W$ is called \emph{the principal period} of $W$, and its length is denoted by  $\rho_W$.
	If $\rho_W\leq \frac{|W|}{2}$ we say that $W$ is \textit{periodic}.
	
	\paragraph{Sparse suffix trees.} Given a set of strings, the compact trie~\cite{Morrison68}
	of these strings is the tree induced by shrinking each path of nodes of degree one in the trie~\cite{DeLaBriandaisde59,Fredkin60} of the strings.
	Each edge in the compact trie has a label which is a substring of part of the given strings. Each node in the tree is associated with a string, which is the concatenation of all the labels of the edges  of the simple path from the root to the node. 
	The suffix tree of a  string $S$  is the compact trie of all the suffixes of $S$.
	The sparse suffix tree of a string $S$ of length $n$, with a set $B\subseteq[n]$ is the compact trie of all the suffixes $\{S[i..n]\mid i\in B\}$.

	\paragraph{Fingerprints.}
	Given $n$, for the following let $u,v\in \bigcup _{i=0} ^n {\Sigma^i}$  be two strings of size at most $n$.
	Porat and Porat~\cite{PP09} and Breslauer and Galil~\cite{BG14} proved that for every constant $c>1$ there exists a \textit{fingerprint function} $\phi : \bigcup _{i=0} ^n {\Sigma^i} \rightarrow[n^c]$, which is closely related to the hash function of Karp and Rabin~\cite{KR87} and Dietzfelbinger et al.~\cite{DGMP92}, such that:
	\begin{enumerate}
		\item If $|u|=|v|$ and $u\neq v$ then $\phi(u)\neq\phi(v)$ with high probability (at least $1-\frac{1}{n^{c-1}}$).
		\item \emph{The sliding property:} Let $w$=$uv$ be the concatenation of $u$ and $v$. If $|w|\leq n$ then given the length and the fingerprints of any two strings from $u$,$v$ and $w$, one can compute the fingerprint of the third string in constant time.
	\end{enumerate}

	\paragraph{Approximately Min-Wise Hash Function.}
	We say that  $\mathcal{F}$ is an $(\epsilon,k)$-min-wise independent hash family if for any $X\subset [n], |X| < k$ and $ x\in [n] \setminus X$ when picking a function $h\in\mathcal{F}$ uniformly we have:
	$ \frac{1-\varepsilon}{|X|+1} \le \Pr_h\left[h(x) < \min\{h(y)\mid {y\in X}\}\right] \le  \frac{1+\varepsilon}{|X|+1}$
	
	Indyk~\cite{Indyk01} have proved that any $\O(\log \frac{1}{\epsilon})$-independent hash family is also a $(\epsilon,\frac{\epsilon n}{c})$-min-wise independent hash family, for some constant $c>1$. Furthermore, since the family of polynomials of degree $\ell-1$ over $GF(n)$, assuming $n$ is a prime, is an $\ell$-independent hash family, then for $\ell=\O(\log \frac{1}{\epsilon})$ the polynomials' family is $(\epsilon,\frac{\epsilon n}{c})$-min-wise independent hash family. For any polynomial from the family, evaluating its value for a given input can be done in $\O(\log \frac{1}{\epsilon})$ time.

	\paragraph{Other tools.}
	Other useful lemmas and theorems are presented in Appendix~\ref{app:useful_tools}.

	\section{\LCE{} with Selected Positions}\label{sec:LCE_construction}
	Following the discussion above in Section~\ref{sec:algorithmic_overview}, we introduce here the formal definition of the selected positions that defines the partitioning of $S$ into blocks.

	The partitioning of $S$ is induced by a set $P$ of indices from $[n]$, such that each block is a substring beginning in some index of  $P\cup\{1\}$ and the concatenation of all the blocks is exactly $S$. 
	The following definition introduces the essential properties of the set $P$, required for our data structures.
	
	\begin{definition}\label{def:landmarks_set}

		A set of positions $P\subseteq [n]$ is called a \emph{{$(\tau,\delta)$-partitioning set}} of $S$, for some parameters $1\le \tau\le\delta\le n$, if and only if it has the following properties:
		\begin{enumerate}
			\item{\emph{\textbf{Local Consistency}} --- } 	\label{item:blocks_sync}
			For any two indices $i,j\in[1+\delta..n-\delta]$ such that $S[i-\delta..i+\delta]=\linebreak S[j-\delta..j+\delta]$ we have $i\in P \Leftrightarrow j\in P$.

			\item{\emph{\textbf{Compactness}} --- }  \label{item:blocks_size}
  			Let $p_i<p_{i+1}$ be two consecutive positions in $P\cup\{1,n+1\}$ then we have one of the following.
			\begin{enumerate}
				\item \label{item:regular_block}\emph{(Regular block)}  $p_{i+1}-p_i\le \tau$. 
				\item \label{item:long_blocks}\emph{(Periodic block)} $p_{i+1}-p_i> \tau$ such that $u=S[p_i..p_{i+1}-1]$ is a periodic string with period length $\rho_u\le\tau$. 
				
			\end{enumerate}

		\end{enumerate}
	\end{definition}
	Here we describe how to use a $(\tau,\delta)$-partitioning set of size $\O(\frac{n}{\tau})$ in order to construct an $\LCE$ data structure in $\O(n)$ time using $\O(\frac n\tau)$ words of space. 
	The data structure uses $\O(\frac n\tau)$ words of working space and answers queries in $\O(\delta)$ time.
	In Section~\ref{sec:randomSelection}, we introduce a randomized Las-Vegas algorithm that finds a $(2\tau,2\tau)$-partitioning set of size $\O(\frac{n}{\tau})$ in $\O(n)$ expected time (which we later improve to be with high probability in Section~\ref{sec:ens_rand_whp}), using only $\O(\frac n\tau)$ words of space. 
	In Section~\ref{sec:determinsiticSelection}, we introduce a deterministic algorithm that finds an $(\O(\tau),\O(\tau\log^*n))$-partitioning set of size $\O(\frac{n}{\tau})$ in $\O(n\log\tau)$ time using only $\O(\frac n\tau + \log \tau)$ words of space. 
	
	\paragraph{Remark about border cases.} When using a $(\tau,\delta)$-partitioning set we assume that all the queries to the algorithm are made with indices from the range $[1+\delta..n-\delta]$. 
	If this is not the case, the algorithm considers  string $S'=\#^\delta S \#^\delta$ where $\#$ is a special character. 
	Notice that $|S'|=n+2\delta=\O(n)$ and if $S$ is given in a read only memory, the algorithm is able to simulate $S'$ with additional constant space.

	\paragraph{The partitioning string.}
	Given a $(\tau,\delta)$-partitioning set $P$ of size $\O(\frac{n}{\tau})$ we consider the partitioning of $S$ induced by $P$. 
	Let $p_1<\cdots<p_h$ be all the indices in $P\cup\{1,n+1\}$, then  the $i$th block of $S$ (for $1\le i< h$) is $S_i=S[p_i..p_{i+1}-1]$.
	Since there are at most $\O(\frac n\tau)$ blocks, we associate with each block a unique symbol from $\Sigma'=\{1,\dots,\O(\frac n \tau)\}$, such that if two blocks are equal (correspond to the same string) they have the same symbol. 
	The unique symbols are determined by sorting the blocks in $\O(n)$ time due to Lemma~\ref{lem:string_sorting}, where the only exception are periodic blocks whose length is larger than $\tau$. For those blocks, we take only the first $2\tau$ characters into the sorting and add at the end a new character which is the length of the block (we assume that the alphabet of $S$ contains $\{1,2,\dots,n\}$, otherwise we extend the alphabet). The value of each block is its rank among all the (distinct) blocks.
	It is straightforward that the value of each block is a unique identifier which equals among all copies of the block.
	Using these values, we consider the \emph{partitioning string} $S_P$ of length $h-1$ over $\Sigma'$, defined such that $S_P[i]$ is the value of the $i$th block $S[p_{i}..p_{i+1}-1]$.  The string $S_P$ is constructed in $\O(n)$ time from the partitioning set $P$ and its length is $\O(\frac n\tau)$.

	\paragraph{The data structure.}
	The algorithm builds a suffix tree $T_P$ on the string $S_P$ using the algorithm of Farach-Colton~\cite{Farach97} in $\O(\frac n\tau)$ time. 
	Each node in $T_P$ maintains, as its \emph{actual depth}, the length of the substring of $S$  (over $\Sigma$) corresponding to the substring of $S_P$ written on the path from the root of $T_P$ to the node.
	Moreover, $T_P$ is preprocessed to support  LCA queries (see Theorem~\ref{lem:LCA}), thus for any two leaves  one can compute the length of the longest common prefix composed of complete blocks for the two corresponding positions in $P$ in constant time.
	
	All the positions in $P$ are maintained in an ordered linked list, and each $p_i\in P$ is maintained with a pointer to the leaf in $T_P$ corresponding to the block beginning at position $p_i$.
	In addition, let $\suc_P(\alpha)=\min\{p\in P\,|\,p\ge \alpha\}$ be the successor of position $\alpha$,  the data structure stores an array $\mathit{SUC}$ 
	of length $\floor{n/\tau}$ such that $\mathit{SUC}[j]$ stores a pointer to the node of $\suc_P(j\cdot\tau)$   in the linked list.
	Notice that for any $i\in[n]$, the successor $\suc_P(i)$ can be found in $\O(\tau)$  time by sequential search on the list from the position pointed to by $\mathit{SUC}\left[\floor{\frac i \tau}\right]$. 
	
	\paragraph{Query Processing.}
	Due to property~\ref{item:blocks_size} of Definition~\ref{def:landmarks_set} each block is either of length at most $\tau$ or it has period length of at most $\tau$.
	For the sake of intuition we first consider the case where the length of each block is at most $\tau$. 
	Then, in Appendix~\ref{app:LCE_with_periodic_blocks} we describe how to generalize the processing  for the case where there exist large blocks with a short period as well.
	
	We consider the computation of $\LCE(i,j)=\ell$ for some $i,j\in[n]$. The computation consist of (at most) three phases. At the first phase the algorithm compares  $S[i+k]$ and $S[j+k]$ for $k$ from $0$ to $3\delta\ge 2\delta+\tau$. If a mismatch is found, then the minimum $k$ such that $S[i+k]\ne S[j+k]$ is $\LCE(i,j)$. Otherwise, since  $\LCE(i,j)>3\delta$ we use the partitioning into blocks.  We use the following lemma that states that all the blocks of $S[i..i+\ell-1]$ and $S[j..j+\ell-1]$ are the same except for the margins of at most $\delta$ characters. The proof of the lemma appears in Appendix~\ref{app:LCE_missing_proof}.
	\begin{lemma}\label{lem:center_blocks}
		Let $i,j\in[n]$ be two indices and let $\ell=\LCE(i,j)$. If $\ell>2\delta$ then: 
		$$\{p-i\,|\,p\in(P\cap[i+\delta..i+\ell-\delta-1])\}=\{p-j\,|\,p\in(P\cap[j+\delta..j+\ell-\delta-1])\}.$$
	\end{lemma}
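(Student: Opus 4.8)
The plan is to prove the two set equalities simultaneously by exhibiting a bijection between them, namely the shift map $p \mapsto p - i + j$ (equivalently, $q \mapsto q - j + i$ in the other direction). Concretely, I would fix an index $p \in P \cap [i+\delta..i+\ell-\delta-1]$ and show that the corresponding index $q := p - i + j$ lies in $P \cap [j+\delta..j+\ell-\delta-1]$; by symmetry (swapping the roles of $i$ and $j$, which is legitimate since $\LCE(i,j) = \LCE(j,i)$) the reverse inclusion follows, and since the shift map is injective this establishes the claimed equality of the two difference-sets. So the whole argument reduces to a single membership check.

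First I would handle the range condition. Since $p \in [i+\delta..i+\ell-\delta-1]$, subtracting $i$ and adding $j$ gives immediately that $q = p-i+j \in [j+\delta..j+\ell-\delta-1]$; this is just arithmetic on the interval endpoints and needs no property of $S$ or $P$. The substantive part is to show $q \in P$, and for this I want to invoke the Local Consistency property (property~\ref{item:blocks_sync} of Definition~\ref{def:landmarks_set}). That property says: if $S[p-\delta..p+\delta] = S[q-\delta..q+\delta]$ then $p \in P \Leftrightarrow q \in P$. So I need to verify that the length-$(2\delta+1)$ windows of $S$ centered at $p$ and at $q$ coincide. This is where the hypothesis $\LCE(i,j) = \ell$ enters: we know $S[i..i+\ell-1] = S[j..j+\ell-1]$, i.e. $S[i+k] = S[j+k]$ for all $0 \le k \le \ell-1$. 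Writing $p = i + k$ for the appropriate offset $k$ (which satisfies $\delta \le k \le \ell - \delta - 1$ by the range bound on $p$), and noting $q = j+k$, the window $S[p-\delta..p+\delta] = S[i+k-\delta .. i+k+\delta]$ indexes only positions in $[i .. i+\ell-1]$ precisely because $k - \delta \ge 0$ and $k + \delta \le \ell - 1$; likewise $S[q-\delta..q+\delta]$ indexes only positions in $[j..j+\ell-1]$. Hence character-by-character these two windows agree, using $S[i+k'] = S[j+k']$ for each $k' \in [k-\delta..k+\delta] \subseteq [0..\ell-1]$.

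Having established $S[p-\delta..p+\delta] = S[q-\delta..q+\delta]$, Local Consistency gives $p \in P \Leftrightarrow q \in P$; since we assumed $p \in P$, we conclude $q \in P$. Combined with $q \in [j+\delta..j+\ell-\delta-1]$ shown above, we get $q \in P \cap [j+\delta..j+\ell-\delta-1]$, and $q - j = p - i$, so $p - i$ belongs to the right-hand difference-set. This proves the left-hand set is contained in the right-hand set; the reverse containment is identical with $i$ and $j$ interchanged. I expect no real obstacle here — the only thing to be careful about is the off-by-one bookkeeping on the interval endpoints (making sure the half-open/closed conventions in $[i+\delta..i+\ell-\delta-1]$ line up with the window $[p-\delta..p+\delta]$ staying inside $[i..i+\ell-1]$), and confirming that the hypothesis $\ell > 2\delta$ is exactly what guarantees the interval $[i+\delta..i+\ell-\delta-1]$ is nonempty-compatible and that $k-\delta \ge 0$, $k+\delta \le \ell-1$ are consistent. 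The hypothesis $\ell > 2\delta$ ensures the index ranges do not collapse or wrap, which is the only place a degenerate case could bite.
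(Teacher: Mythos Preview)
Your proposal is correct and follows essentially the same argument as the paper: both fix an offset $\alpha = p - i$ with $\delta \le \alpha \le \ell-\delta-1$, use $\LCE(i,j)=\ell$ to conclude that the length-$(2\delta+1)$ windows around $i+\alpha$ and $j+\alpha$ agree, invoke the Local Consistency property of Definition~\ref{def:landmarks_set} to transfer membership in $P$, and finish the reverse inclusion by symmetry. Your write-up is somewhat more explicit about the interval bookkeeping, but the underlying proof is the same.
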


	In the second phase,  the algorithm finds a correlated offset $\alpha=\min\{p-i\,|\,p\in(P\cap[i+\delta..i+\ell-\delta])\}$ which due to Lemma~\ref{lem:center_blocks} has the property that $i+\alpha\in P$ and $j+\alpha\in P$.
	The computation of $\alpha$ can be done in $\O(\tau)$ time by retrieving $\suc_P(i+\delta)$. 
	In addition, it must be that the substrings $S[i+\alpha..i+\ell-1]$ and $S[j+\alpha..j+\ell-1]$ are partitioned into blocks exactly the same, except for possibly the last $\tau+\delta$ characters.
	The algorithm finds the end of the last common block using an LCA query on $T_P$ in constant time.
	Since we assume that all the blocks are of length $\tau$ at most, it must be that the offset of this block is at least $\ell-\delta-\tau-1$.
	Hence, in the third phase the algorithm compares the characters from the end of the last common block until a mismatch is encountered.
	It is guaranteed that such a mismatch will be found after at most  $\tau+\delta$ comparisons.

	To summarize, we show in this section the following lemma.

	\begin{lemma}\label{lem:LCE_construction}
		Given a $(\tau,\delta)$-partitioning set of size $\O(\frac n\tau)$ there exists a deterministic construction algorithm of an $\LCE$ data structure that takes $\O(n)$ time using $\O(\frac n\tau)$ words of space. The space usage of the data structure is $\O(\frac n\tau)$ and its query time is $\O(\delta)$. 
	\end{lemma}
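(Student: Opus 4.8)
The plan is to assemble the three ingredients laid out above — the partitioning string $S_P$, a suffix tree on $S_P$ augmented with actual depths and LCA support, and a successor structure on $P$ — and then verify that the three‑phase query procedure is correct and runs in $\O(\delta)$ time.

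\emph{Construction.} First I would build $S_P$. Since $|P|=\O(\frac n\tau)$ there are $\O(\frac n\tau)$ blocks; I sort them with the linear‑time string sorting of Theorem~\ref{lem:string_sorting}, representing a periodic block longer than $\tau$ by its first $2\tau$ characters followed by an extra symbol encoding its length (so two such blocks collide only if they are genuinely equal strings), and assign each block the rank of its representative as its symbol in $\Sigma'=\{1,\dots,\O(\frac n\tau)\}$. Equal blocks are literally equal strings and are sorted together, so they receive the same symbol; distinct blocks receive distinct symbols. This costs $\O(n)$ time and yields $S_P$ of length $\O(\frac n\tau)$. I then build the suffix tree $T_P$ of $S_P$ with Farach‑Colton's algorithm~\cite{Farach97} in $\O(\frac n\tau)$ time, annotate every node with its \emph{actual depth} (the total $\Sigma$‑length of the corresponding sequence of blocks, computed by one traversal from the known block lengths), and preprocess $T_P$ for constant‑time LCA via Theorem~\ref{lem:LCA}. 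Finally I keep $P$ as a sorted linked list with each $p_i$ pointing to its leaf in $T_P$, plus the array $A[1..\floor{n/\tau}]$ with $A[j]$ pointing to the list node of $\suc_P(j\tau)$, all filled during one left‑to‑right scan. Everything here is deterministic, runs in $\O(n)$ time, and uses $\O(\frac n\tau)$ words.

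\emph{Query.} To answer $\LCE(i,j)=\ell$ (in the non‑periodic case; the general case is handled in Appendix~\ref{app:LCE_with_periodic_blocks}), I first compare $S[i+k]$ with $S[j+k]$ for $k=0,1,\dots,3\delta$. If a mismatch occurs, the smallest such $k$ is the answer and the cost is $\O(\delta)$. Otherwise $\ell>3\delta>2\delta+\tau$, so Lemma~\ref{lem:center_blocks} applies: I compute $\suc_P(i+\delta)$ using $A$ and a sequential walk of length $\O(\tau)=\O(\delta)$, obtaining an offset $\alpha$ with $i+\alpha\in P$ and $j+\alpha\in P$, such that the blocks starting at $i+\alpha$ and at $j+\alpha$ agree up to the last $\delta$ characters of the common extension. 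An LCA query on the leaves of $T_P$ for these two blocks returns in $O(1)$ the node whose actual depth $d$ is exactly the length of the longest prefix of $S[i+\alpha..]$ equal to $S[j+\alpha..]$ composed of whole blocks — this is precisely a common symbol‑prefix of the two suffixes of $S_P$, by the distinctness of block symbols. Since every (non‑periodic) block has length at most $\tau$, the end of this last common block lies at offset at least $\ell-\delta-\tau-1$, so a third phase comparing characters from $i+\alpha+d$ and $j+\alpha+d$ finds a mismatch within $\tau+\delta$ steps. Summing the three phases gives $\O(\delta)$ query time, and the stored structures occupy $\O(\frac n\tau)$ words.

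\emph{Main obstacle.} The one genuine subtlety is the claim in the third phase that a mismatch appears within $\tau+\delta$ characters after the last common block; this uses that blocks are short, which fails for periodic blocks of length $\omega(\tau)$. Dealing with those requires period‑based arguments and is deferred to Appendix~\ref{app:LCE_with_periodic_blocks}; the remaining bookkeeping (actual depths, successor walks bounded by the compactness of Definition~\ref{def:landmarks_set}, consistency of block symbols) is routine.
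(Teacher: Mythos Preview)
Your proposal is correct and follows essentially the same approach as the paper: build $S_P$ via block sorting (with the $2\tau$-prefix-plus-length trick for periodic blocks), build $T_P$ with actual depths and LCA, maintain the successor array $A$, and answer queries by the same three-phase procedure, deferring the periodic-block case to Appendix~\ref{app:LCE_with_periodic_blocks}. The only point worth noting is that the periodic case also affects phase~2 (finding $\alpha$), not just phase~3, since $\suc_P(i+\delta)$ may land inside a long periodic block; the paper handles this in the same appendix you cite.
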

	
	Thus, combining Lemma~\ref{lem:LCE_construction} with the randomized  selection of Sections~\ref{sec:randomSelection} and~\ref{sec:ens_rand_whp} (Lemma~\ref{lem:random_selection_whp}) we have proved Theorem~\ref{thm:lasvegasLCE}. 
	Similarly, combining Lemma~\ref{lem:LCE_construction} with the deterministic  selection of Section~\ref{sec:determinsiticSelection} (Corollary~\ref{cor:determinstic_set_complete}) we have proved Lemma~\ref{lem:deterministicLCE}.
	
	\section{Randomized Selection}\label{sec:randomSelection}

	Using fingerprint hash functions and approximately min-wise independent hash functions, we prove that a $(2\tau,2\tau)$-partitioning set consisting of $\O(\frac{n}{\tau})$ positions can be constructed in linear time with high probability. In this section we show a method for constructing such a partitioning set of size  $\O(\frac{n}{\tau})$ using linear time in expectation. Later, in Section~\ref{sec:ens_rand_whp} we improve the method to ensure linear construction time with high probability. All the algorithms use $\O(\frac n\tau)$ words of working space.
	
	We define a function $\ID:[1..n-\tau+1]\rightarrow \mathbb R$ 
	such that $\ID(j)$ depends only on $S[j..j+\tau-1]$, hence  if $S[j..j+\tau-1] = S[k..k+\tau-1]$ for some indices $j$ and $k$ then $\ID(j)=\ID(k)$. 
	Using \ID{}, we create $P$, a partitioning set of $S$. For each interval $[i..i+\tau-1]$ the positions within the interval with the smallest \ID{} value are picked for $P$. In other words,
	$$P = \left\{j\in[1..n-\tau] \mid \exists \ell\in[j-\tau+1..j] : \ID(j) = \min_{k\in[\ell..\ell+\tau-1]}\left\{\ID(k)\right\}\right\}$$
	
	From the locality properties of the $\ID$ function, the following lemma follows immediately.
	
	\begin{lemma}\label{lemma:id_is_partition_set}
		Given an $\ID$ function as described above, the set $P$ induced by $\ID$ is a  $(2\tau,2\tau)$-partitioning set.
	\end{lemma}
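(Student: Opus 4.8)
The plan is to verify the two defining properties of a $(2\tau,2\tau)$-partitioning set directly from the locality of $\ID$ and the min-over-window selection rule. Throughout, set $\delta=\tau$ (so $\delta=2\tau$ in the statement's convention is just $2\tau$; here we check the properties with $\tau'=\delta'=2\tau$, meaning windows of length $2\tau$ in the final object — but the selection above uses intervals of length $\tau$, and I will reconcile this below). The key observation is that whether $j\in P$ is entirely determined by the multiset of values $\{\ID(k) : k\in[j-\tau+1..j+\tau-1]\}$ together with which of these equals $\ID(j)$, and each such $\ID(k)$ depends only on $S[k..k+\tau-1]$; hence $j\in P$ depends only on $S[j-\tau+1..j+2\tau-2]$, a window of roughly $3\tau$ characters centered within distance $2\tau$ of $j$.

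First I would prove \textbf{Local Consistency}. Suppose $i,j\in[1+\delta..n-\delta]$ with $\delta=2\tau$ and $S[i-\delta..i+\delta]=S[j-\delta..j+\delta]$. For every offset $d$ with $|d|\le\tau$, the substrings $S[i+d..i+d+\tau-1]$ and $S[j+d..j+d+\tau-1]$ both lie inside the equal windows (since $i+d+\tau-1\le i+2\tau-1\le i+\delta$ and $i+d\ge i-\tau\ge i-\delta$), so they are equal, and therefore $\ID(i+d)=\ID(j+d)$. Thus the length-$\tau$ function $d\mapsto\ID(i+d)$ agrees with $d\mapsto\ID(j+d)$ on $[-\tau..\tau]$. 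Since the membership of $i$ in $P$ is a function only of these values for $d\in[-\tau+1..\tau-1]$ (the definition quantifies over $\ell\in[i-\tau+1..i]$ and windows $[\ell..\ell+\tau-1]$, all of which sit inside $[i-\tau+1..i+\tau-1]$), and likewise for $j$, we get $i\in P\iff j\in P$.

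Next I would prove \textbf{Compactness}, i.e.\ that consecutive picked positions are at most $\tau$ apart \emph{unless} the block between them is periodic with period $\le\tau$. Take consecutive $p_i<p_{i+1}$ in $P\cup\{1,n+1\}$ and suppose $p_{i+1}-p_i>\tau$; I must exhibit a period of $u=S[p_i..p_{i+1}-1]$ of length $\le\tau$. Consider the window $[\ell..\ell+\tau-1]$ with $\ell=p_i+1$ (or the appropriate adjusted starting point near the boundary): this window contains no picked position, yet by definition of $P$ the position in it attaining $\min_{k\in[\ell..\ell+\tau-1]}\ID(k)$ would be picked — contradiction — unless the argmin is not unique, i.e.\ there are at least two positions $k_1<k_2$ in the window with $\ID(k_1)=\ID(k_2)$ equal to the minimum. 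Then $S[k_1..k_1+\tau-1]=S[k_2..k_2+\tau-1]$ would follow \emph{if $\ID$ collisions implied substring equality}; but $\ID$ is only guaranteed to go the other way ($S$-equality $\Rightarrow$ $\ID$-equality). So the honest argument must instead slide the window across $[p_i..p_{i+1}-1]$: for each starting point $\ell$ in that range the unique-or-not-unique dichotomy forces, whenever the window is strictly inside the block and has a unique minimizer, that minimizer to be some $k$ with $\ID(k)=\ID(k+(\text{shift}))$, and chasing these equalities along the block yields that $\ID$ is periodic along $[p_i-\tau+1..p_{i+1}-1]$ with some period $q\le\tau$; the final step is to upgrade $\ID$-periodicity to $S$-periodicity, which again needs care since $\ID$ collisions are not injective. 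I expect this upgrade — turning "the ID sequence repeats with period $q$" into "$S[p_i..p_{i+1}-1]$ has period $q$" — to be the \textbf{main obstacle}, and I anticipate the real proof handles it either by using that with high probability the fingerprint component of $\ID$ has no collisions among the $O(n)$ relevant length-$\tau$ substrings (making $\ID$ effectively injective on equal-length substrings), or by a combinatorial argument showing that a long $\ID$-periodic stretch with no picked position, combined with the min-wise tie-breaking, can only arise from a genuinely periodic substring; for the Las-Vegas guarantee the algorithm would detect and discard a fingerprint function that collides. So my proposal is: prove Local Consistency cleanly as above, then for Compactness argue that if $p_{i+1}-p_i>\tau$ then every length-$\tau$ window interior to $[p_i..p_{i+1}-1]$ has a non-unique minimal $\ID$, deduce (using injectivity of $\ID$ on equal-length substrings, valid w.h.p.) that every such window contains two equal length-$\tau$ substrings of $S$, and conclude by a standard periodicity-lemma argument that $u$ has principal period $\rho_u\le\tau$.
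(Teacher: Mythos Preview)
Your Local Consistency argument is correct and essentially identical to the paper's.

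Your Compactness argument, however, rests on a misreading of the definition of $P$. You write that if the window $[\ell..\ell+\tau-1]$ contains no picked position, then the argmin of $\ID$ over it ``would be picked --- contradiction --- unless the argmin is not unique.'' But the definition
\[
P = \Bigl\{ j : \exists\,\ell\in[j-\tau+1..j]\text{ with } \ID(j)=\min_{k\in[\ell..\ell+\tau-1]}\ID(k)\Bigr\}
\]
picks \emph{every} position attaining the minimum in some window, not only unique minimizers. So for any length-$\tau$ window, \emph{all} of its argmins lie in $P$; in particular the window always contains at least one element of $P$. This immediately gives $p_{i+1}-p_i\le\tau$ for consecutive interior positions (the last block needs a separate one-line boundary check, yielding length $<2\tau$). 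That is the entire Compactness proof in the paper: there is no periodic case to handle here.

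The whole apparatus you build --- sliding windows, $\ID$-periodicity, upgrading to $S$-periodicity via fingerprint injectivity --- is chasing a non-issue. The periodic case (b) in Definition~\ref{def:landmarks_set} simply never occurs for this particular $P$; all blocks are regular. Periodic substrings are a genuine concern in this section, but for a different reason: they make $|P|$ blow up (many tied minima all get selected), not the gaps between consecutive elements. That is why the paper treats periodicity separately when bounding $|P|$ and in the later construction, not in the proof of this lemma.
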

	\begin{proof}We will prove that the two properties of Definition~\ref{def:landmarks_set} hold.
		
		\paragraph{Local Consistency.} 
		Let $i,j\in[1+2\tau..n-2\tau]$ such that $S[i-2\tau..i+2\tau]= S[j-2\tau..j+2\tau]$.
		For any $a\in [i-2\tau..i+\tau+1]$, since $\ID(a)$ depends only on $S[a..a+\tau-1]$, we have that $\ID(a)=\ID(a-i+j)$.
		
		Assume $i\in P$ and let $\ell\in [i-\tau+1..i]$ be an index such that $\ID(i)=\min_{k\in[\ell..\ell+\tau-1]}\{\ID(k)\}$. 
		We have that $[\ell..\ell+\tau-1]\subset [i-\tau+1..i+\tau-1]\subset[i-2\tau..i+\tau+1]$. 
		Hence, for any $k\in[\ell..\ell+\tau-1]$ we have $\ID(k)=\ID(k-i+j)$. In particular,
		$$\ID(j)=\ID(i)=\min_{k\in[\ell..\ell+\tau-1]}\{\ID(k)\}=\min_{k\in[\ell-i+j..\ell-i+j+\tau-1]}\{\ID(k)\}.$$
		Thus, due to the definition of $P$, $j\in P$.
		Hence, we prove that $i\in P \Rightarrow j\in P$. The opposite direction is symmetric.
		
		\paragraph{Compactness.}
		Let $1=p_0<p_1<\cdots<p_k<p_{k+1}=n+1$ be all the positions in $P\cup\{1,n+1\}$.
		Notice that if the first selected position is not $p_0=1$, then $p_1$ is selected from the interval $[1..\tau]$, and therefore $p_1 - p_0 < \tau$.
		Moreover, the rightmost position in $P$, $p_k$, is the position picked on the interval $[n-2\tau+1..n-\tau]$.
		Thus, $p_{k+1}-p_k = n+1 - p_k \le 2\tau$ 
		Finally, for the general case where both $p_i,p_{i+1}\in P$ assume by a  contradiction that  $p_{i+1} - p_i > \tau$.
		Therefore no position is selected in the interval $[p_i+1..p_i+\tau] \subseteq[p_i+1..p_{i+1}-1]$, and by definition of $P$ this cannot happen.
	\end{proof}
	
	\paragraph{Remark about the periodic case.}
	Notice that if $S$ contains a long periodic substring with a small period, then the same \ID{}s will appear in every period (except perhaps in the last $\tau$ positions within the periodic substring). 
	In particular, many positions will have the minimum \ID{} value, hence $P$ will be a very large partitioning set, no matter which \ID{} function we use.
	This means that long periodic substring with small periods should be taken with special care. Formally we define:
	\begin{definition}\label{def:drho-run}
		A substring $S'=S[i..j]$ is a \emph{$(d,\rho)$-run} of $S$  if and only if $|S'|\ge d$ and the principal period length of $S'$ is at most $\rho$.
	\end{definition}
	
	\paragraph{Overview of the solution.}
	We show here the solution for the non-periodic case: We prove that when $S$ contains no $(\tau,\frac{\tau}{6})$-runs, it is possible to construct a partitioning set of $\O(\frac{n}{\tau})$ positions in expected linear time.
	In Appendix~\ref{app:rand_selection_periodic_blocks}, we show that all $(\tau,\frac{\tau}{6})$-runs can be found in linear time and constant space using fingerprints.
	We then combine the two methods to partition the string into blocks \textemdash find all $(\tau,\frac{\tau}{6})$-runs and mark them as blocks, and then use the non-periodic algorithm to partition the substring between every two runs in a total of expected linear time and $\O(\frac{n}{\tau})$ space.

	\paragraph{The \ID{} function.}
	Our goal is to use an $\ID$ function that induces a partitioning set of size $\O(\frac n\tau)$ that we will be able to compute in $\O(n)$ time in expectation.
	To achieve this, we fix a fingerprint function $ \phi $ and an approximately min-wise hash function $ h\in \mathcal{F} $ where $\mathcal{F}$ is a $\left(\frac{1}{2},\tau\right)$-min-wise independent hash family (see Section~\ref{sec:preliminaries}). 
	We denote  $ \phi_i = \phi\left(S[i..i+\tau-1]\right)$ and the algorithm uses the function $\ID_{\phi,h}(i) = h(\phi_i)$.
	The fingerprint function is used for the locality of $\ID_{\phi,h}$, and the hash function is used for diversifying the fingerprints that will have minimal value.
	
	As mentioned before, we assume $S$ contains no  $(\tau,\frac{\tau}{6})$-runs.
	We show that when using $\ID_{\phi,h}$ where $\phi$ and $h$ are picked uniformly, the size of the partitioning set $P$ is $\O(\frac{n}{\tau})$ in expectation.
	
	\begin{lemma}\label{lem:space_exp}
		If $\phi$ and $h$ are picked uniformly and $S$ contains no $(\tau,\frac{\tau}{6})$-runs then $\E_{\phi,h}\left[|P|\right] = \O\left(\frac{n}{\tau}\right)$.
	\end{lemma}
	\begin{proof}[Proof Sketch; full proof appears in Appendix~\ref{app:rand_selection_missing_proofs}]
		We assume that there are no fingerprints collisions, since the contribution to the expectation of the other case is constant due to the small probability of fingerprints collision.
		We use the linearity of expectation, which means that  
		$\E_{\phi,h}[|P|] = \linebreak\sum_{i=1}^n{\Pr_{\phi,h}(A_i)}$ where $A_i$ is the event that $i\in P$.
		We bound $\Pr_{\phi,h}(A_i)$ by noticing that for any $i$, we have $i\in P$ only if $h(\phi_i)$ is the minimum ID among all the IDs of $[i-\tau/2..i]$ or the minimum among all the IDs of $[i..i+\tau/2]$.
		Due to the almost min-wise function $h$, the probability of each one of these events is $\O(\frac 1\tau)$ and therefore $\Pr_{\phi,h}(A_i)\le \O(\frac 1\tau)$. Hence, $\E_{\phi,h}[|P|] = \sum_{i=1}^n{\Pr_{\phi,h}(A_i)}=\O(\frac n\tau)$. 
	\end{proof}

	\begin{lemma}\label{lemma:id_alg}
		The partitioning set $P$ induced by the function $\ID_{\phi,h}$ on a substring containing no $(\tau,\frac \tau 6)$-runs  can be constructed in expected $\O(n)$ time using expected $\O(\frac{n}{\tau})$ words of space.
	\end{lemma}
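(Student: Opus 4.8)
The plan is to realize $P$ by a single left-to-right pass over $S$ that maintains a rolling fingerprint and streams the resulting IDs into a sliding-window-minimum routine. Fix the chosen $\phi$ and $h$ once. Compute $\phi_1=\phi\bigl(S[1..\tau]\bigr)$ directly in $\O(\tau)$ time, and then for $i=2,3,\dots$ obtain $\phi_i=\phi\bigl(S[i..i+\tau-1]\bigr)$ from $\phi_{i-1}$ together with the single characters $S[i-1]$ and $S[i+\tau-1]$, using the sliding property of the fingerprint function (two constant-time updates: strip the leading character, append the trailing one). Given $\phi_i$, evaluate $\ID_{\phi,h}(i)=h(\phi_i)$ in $\O(1)$ time, since $h$ is drawn from an $\O(\log\frac1\epsilon)=\O(1)$-independent polynomial family. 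Feed the stream of pairs $(i,\ID_{\phi,h}(i))$ into a monotonic deque that, for the current position $i$, stores exactly those positions of $[i-\tau+1..i]$ that are strict suffix-minima of the ID sequence (pop from the back every position whose ID is $\ge$ the incoming one, and pop from the front every position that has left the window). The front of the deque is then the minimizer of $\ID_{\phi,h}$ over $[i-\tau+1..i]$, and we append that position to $P$; since the front position is non-decreasing in $i$, repeated appends that coincide are deduplicated for free.

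For correctness, recall that by definition $j\in P$ iff $j$ minimizes $\ID_{\phi,h}$ over some length-$\tau$ window containing $j$; the scan examines every such window and records its minimizer, so it outputs precisely $P$. Ties are harmless: absent fingerprint collisions all IDs under comparison are distinct, and the collision event has probability $n^{-\Omega(1)}$ and contributes only $\O(1)$ to the expectations — exactly the convention already used in the proof of Lemma~\ref{lem:space_exp} — and if one prefers, ties can simply be broken by position. For the running time, the initial fingerprint costs $\O(\tau)=\O(n)$, and each subsequent position incurs two constant-time fingerprint updates, one constant-time hash evaluation, and $\O(1)$ amortized deque work (each position is pushed and popped at most once); emitting $P$ costs $\O(|P|)$. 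Hence the total time is $\O(n+|P|)$, which is $\O(n)$ in expectation by Lemma~\ref{lem:space_exp}.

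For the working space, at every moment we retain only a constant number of fingerprints and IDs, the deque, and the growing list $P$, and $\E_{\phi,h}[|P|]=\O(n/\tau)$ by Lemma~\ref{lem:space_exp}. The one place that needs care is the deque: it holds a strictly ID-increasing sequence of positions taken from a window of $\tau$ positions, so under a uniformly random $h$ its expected length is the expected number of right-to-left record minima among $\tau$ (nearly independent) values, namely $\O(\log\tau)$ — a lower-order additive term against $\O(n/\tau)$ in the parameter range of interest — so the expected working space is $\O(n/\tau)$ words. I expect this auxiliary-space accounting for the deque to be the only genuinely delicate step; the remainder is a routine amortized analysis of a rolling-hash-plus-sliding-window-minimum scan, and it is what makes the construction fit in $\O(n)$ expected time and $\O(n/\tau)$ expected space.
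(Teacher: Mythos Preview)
Your approach is different from the paper's, and the difference is exactly where your argument develops a gap.

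The paper keeps only $\O(1)$ words of auxiliary state beyond the growing output $P$: it remembers just the current rightmost selected position together with its ID. When that position slides out of the window, the algorithm ``steps back'' and recomputes the IDs of the entire current window in $\O(\tau)$ time (still $\O(1)$ extra space, keeping only the running minimum). Each such step-back is charged to the position that just left --- a position of $P$ that is never charged again --- so the total time is $\O(n+\tau\,|P|)$, which is $\O(n)$ in expectation by Lemma~\ref{lem:space_exp}. The space is $\O(|P|)$ deterministically, hence $\O(n/\tau)$ in expectation.

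Your monotonic-deque variant actually achieves \emph{deterministic} $\O(n)$ time, which is nicer than the paper's expected $\O(n)$; the time analysis is fine. The problem is the deque's space. You correctly argue that at any fixed instant $\E[|\text{deque}|]=\O(\log\tau)$ under min-wise hashing. But the working space of the algorithm is $\max_t|\text{deque}_t|$, and $\E[\max_t|\text{deque}_t|]$ is not controlled by $\max_t\E[|\text{deque}_t|]$. With only an $\O(1)$-wise independent family (which is all the $(1/2,\tau)$-min-wise guarantee comes from) there is no ready concentration bound on the number of suffix minima, and the deque can in principle grow to $\Theta(\tau)$ at some window; for $\tau\gg\sqrt{n}$ this exceeds $\O(n/\tau)$. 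Calling the IDs ``nearly independent'' is precisely the unjustified step. The paper avoids this issue altogether by paying in time (the step-backs) rather than in space; indeed, when the paper later does use a deque in Appendix~\ref{app:fast_random_parallel_count}, it explicitly caps the deque size and reverts to step-backs when the cap is hit --- for exactly this reason.
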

	\begin{proof}
		For every interval of size $\tau$, the algorithm will find the positions with minimum \ID{}s and add them to $P$, using only $\O(1)$ additional space.
		When sliding the interval by one step, using the sliding property of the fingerprints, the next \ID{} is calculated in $\O(1)$ time.
		Let $p_j\in P$ be the rightmost position picked for $P$ for some interval $[k..k+\tau-1]\subset [1..n-\tau]$.
		Since $p_j$ has the minimum \ID{} value in that interval, then if $p_j\in[k+1..k+\tau]$ the algorithm can find out in $\O(1)$ if $k+\tau\in P$ by calculating  $\ID(k+\tau)$ and comparing it with $\ID(p_j)$.

		Otherwise, if $p_j = k$, then the algorithm must recalculate all the \ID{}s on the interval $[k+1..k+\tau]$ to find the position with the minimum \ID{} value. 
		This recalculation takes $\O(\tau)$ time due to the sliding property of the fingerprint, and it happens only if the new interval does not contain the currently rightmost position in $P$ (i.e., $p_j=k$).
		Therefore we can charge the $\O(\tau)$ time on $p_j$. Notice that $p_j$ can be charged at most once since immediately after we charge $p_j$, a new rightmost position in $P$ is picked.
		
		Therefore, the algorithm runs in $ \O(n+\tau\cdot |P|)$ using $\O(|P|)$ space, which by Lemma~\ref{lem:space_exp} is expected to be $\O(n) $ time using expected $\O(\frac{n}{\tau})$ words of space. 
	\end{proof}

	Due to the Lemma~\ref{lem:space_exp} and Lemma~\ref{lemma:id_alg}, using Markov's inequality we conclude with the following corollary.
	\begin{corollary}\label{cor:randomized-partitioning-withot-period}
		A $(2\tau,2\tau)$-partitioning set of $S$ can be constructed using $\O(\frac{n}{\tau})$ words of space with expected $\O(n)$ construction time, assuming $S$ contains no $(\tau,\frac{\tau}{6})$-runs.
	\end{corollary}

	\section{Deterministic Selection}\label{sec:determinsiticSelection}
	In this section we introduce a deterministic algorithm that finds a  $(\O(\tau),\O(\tau\log^*n))$-partitioning set of size $\O(\frac n\tau)$.
	The key concepts we use are based on the  techniques of Cormode and Muthukrishnan~\cite{CM07}, which in turn are similar to other previous papers~\cite{GPS87,MSU97,CV86,SV94,FIK16}.
	Here we focus on the definition of the positions selected by the algorithm. 
	Later, in Appendix~\ref{app:deterministic_in_space} we describe how to select these positions using only $\O(\frac n\tau+\log \tau)$ words of space in $\O(n\log \tau)$ time. The complete correctness proof appears in Appendix~\ref{app:deterministic_correctness}.

	The algorithm creates a hierarchical decomposition of $S$, such that the top-level has $\O(n/\tau)$ blocks. 
	At the bottom level (level $0$) each character in $S$ is considered as a block, so the set of picked positions is all $[n]$. 
	We create levels in a bottom-up fashion, preserving the following properties. At each level $\mu$ there are at most $\O\left(\frac n {(3/2)^\mu}\right)$ blocks. Moreover, the length of each block is $\O((3/2)^\mu)$, except for blocks which are substrings of $S$ that have period length of at most $(3/2)^\mu$, which can be longer. 
	In addition, the locality of the partitioning is obtained by limiting the dependency of each block in level $\mu$ to at most $\O(\log^*n)$ blocks in level $\mu-1$. 
	These properties are a generalization of  Definition~\ref{def:landmarks_set} (partitioning set) of size $\O(\frac n\tau)$, so by repetitively applying the transformation from level $\mu=0$ up to level $\mu=\log_{3/2}\tau$ we obtain  a $(\O(\tau),\O(\tau\log^*n))$-partitioning set. 
	
	Now we explain how the decomposition of level $\mu$ is defined based on the decomposition of level $\mu-1$.
	For clarity of representation, when describing the block of level $\mu$, we call the blocks of level $\mu-1$ \emph{sub-blocks}.
	Given the  $(\mu-1)$th level of the hierarchy, we consider parsing of $S$ into four types of maximal non-overlapping substrings.
	
	\begin{enumerate}
		\item \label{item:type_large} Single sub-block of length at least $\left( 3 /2\right)^\mu$.
		\item \label{item:type_contiguous} Maximal contiguous sequence of eqaul sub-blocks, each of which is not of type~\ref{item:type_large}.
		
		\item \label{item:type_cointossing} Sequences of at least $c\log^* n$ sub-blocks, that are not of type~\ref{item:type_large} or~\ref{item:type_contiguous} (where $c$ is a constant to be determined).
		\item \label{item:type_edge}Sequences of less than $c\log^* n$ sub-blocks, that are not of type~\ref{item:type_large} or~\ref{item:type_contiguous}.
	\end{enumerate}
	We identify each block with the index of its leftmost endpoint, which we called the \emph{representative position} of the block. 
	As a consequence, position $1$ is a representative position at any level of the hierarchy.
	The algorithm treats each type of such maximal substrings separately. 
	For the sake of simplicity, we describe it now as an algorithm which has no space constraints, which we overcome in Appendix~\ref{app:deterministic_in_space}.

	The first type contains sub-blocks of length at least $(3/2)^\mu$ and the algorithm just maintains them as blocks in level $\mu$. 
	The substrings of the second type are sequences of (at least two) identical sub-blocks. In this case, the algorithm merges all the sub-blocks in each sequence into one large block.
	To identify such sub-blocks, the algorithm compares each sub-block with its successor, and if they are same, the algorithm merges them in level $\mu$.
	
	\paragraph{The alphabet reduction.} For the third type of substrings, we use the following method. For the sake of intuition, we begin with the special case where $\mu=1$, which is the transformation from level $0$ (where each character of $S$ has a unique block) to the next level. We follow the technique of~\cite{CM07} and reduce  each character into a number from the set $\{0,1,2,3,4,5\}$ using a \emph{local} method. 
	The alphabet reduction is done in $\O(\log^*n)$ iterations, which we call \emph{inner-iteration}s (to distinguish them from the \emph{outer}-iterations of $\mu$ from $0$ to $\log_{3/2}\tau$). 
	We assume that the alphabet of $S$ is $\Sigma=\{1,2,\dots,|\Sigma|\}$ of size $|\Sigma|=\O(n^{\O(1)})$. Hence, before the reduction begins each character can be considered as a bit string of length $\O(\log n)$. At each inner iteration, the algorithm reduces the length of the binary string logarithmically. 
	We denote by $\lab(i,0)=S[i]$ the character of $S$ in position $i$. 
	For any two consecutive characters of the $(j-1)$th inner iteration, $\lab(i,j-1)$ and $\lab(i+1,j-1)$, we define $\lab(i,j)$ as follows. Let $\psi$ be the index of the least significant bit in which $\lab(i,j-1)$ and $\lab(i+1,j-1)$ are different\footnote{Notice that the label of position $i$ depends on the label of position $i+1$, instead of $i-1$. This change from the original Locally Consistent Parsing, is very useful in the $\SST$ construction since it makes the partitioning set \emph{forward synchronized} (see Definition~\ref{def:forward_sync}).},
	and let $bit(\lab(i,j-1),\psi)$ be the value of the $\psi$th bit of $\lab(i,j-1)$. We define $\lab(i,j)=2\psi+bit(\lab(i,j-1),\psi)$, i.e., the concatenation of the index $\psi$ with the value of the bit in this index.

	\begin{lemma}[{\cite[Lemma 1]{CM07}}]
		For any $i$, if $S[i]\ne S[i+1]$ then $\lab(i,j)\ne \lab(i+1,j)$ for any $j\le 0$.
	\end{lemma}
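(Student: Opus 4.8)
The plan is to prove the inequality by induction on the inner‑iteration index $j\ge 0$ (the quoted ``$j\le 0$'' should read ``$j\ge 0$''), carrying the statement for all relevant positions $i$ simultaneously. Concretely, I would fix a maximal stretch of sub‑blocks of type~\ref{item:type_cointossing} — so that, the repeated runs of type~\ref{item:type_contiguous} having been extracted beforehand, no two consecutive sub‑blocks are equal — and prove the stronger invariant that at every level $j$ any two consecutive labels $\lab(i,j)$ and $\lab(i+1,j)$ on this stretch are distinct; the lemma as stated (only $S[i]\ne S[i+1]$ assumed) is the restriction of this to a single pair. The base case $j=0$ is immediate: $\lab(i,0)=S[i]\neq S[i+1]=\lab(i+1,0)$.

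For the inductive step, assume the invariant at level $j-1$, so $\lab(i,j-1)\neq\lab(i+1,j-1)$ and $\lab(i+1,j-1)\neq\lab(i+2,j-1)$; in particular the bit indices used below are well defined. Let $\psi$ be the index of the least significant bit in which $\lab(i,j-1)$ and $\lab(i+1,j-1)$ differ, and $\psi'$ the index of the least significant bit in which $\lab(i+1,j-1)$ and $\lab(i+2,j-1)$ differ, so that $\lab(i,j)=2\psi+bit(\lab(i,j-1),\psi)$ and $\lab(i+1,j)=2\psi'+bit(\lab(i+1,j-1),\psi')$. If $\psi\neq\psi'$, then since the two ``$bit$'' terms lie in $\{0,1\}$ while $|2\psi-2\psi'|\geq 2$, the two values cannot coincide — equivalently, the encoding $(\psi,b)\mapsto 2\psi+b$ is injective — so $\lab(i,j)\neq\lab(i+1,j)$. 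If $\psi=\psi'$, then by the very definition of $\psi$ the $\psi$‑th bit of $\lab(i,j-1)$ differs from the $\psi$‑th bit of $\lab(i+1,j-1)$, i.e. $bit(\lab(i,j-1),\psi)\neq bit(\lab(i+1,j-1),\psi)$, and hence $2\psi+bit(\lab(i,j-1),\psi)\neq 2\psi+bit(\lab(i+1,j-1),\psi)$. In both cases $\lab(i,j)\neq\lab(i+1,j)$, which closes the induction.

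The one point that needs care — the ``main obstacle'', such as it is — is guaranteeing that $\psi$ (the least‑significant differing bit) is never invoked on a pair of \emph{equal} labels. This is exactly what the strengthened invariant buys us: distinctness of consecutive labels holds at level $0$ on the stretches to which the reduction is applied, and it is propagated to every successive inner iteration by precisely the inequality being proved, so the construction of $\lab(\cdot,j)$ stays well defined all the way up. (A separate and routine check, not needed for the stated inequality, is that each inner iteration shrinks the bit‑length of the labels from $\Theta(\log^{(j)}n)$ to $\Theta(\log^{(j+1)}n)$, so that $\O(\log^*n)$ iterations suffice to bring the alphabet down to $\{0,1,2,3,4,5\}$ as claimed.)
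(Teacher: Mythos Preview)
Your proof is correct and is precisely the classical Cole--Vishkin argument; the paper itself does not supply a proof of this lemma but merely cites it from \cite{CM07}, so there is no in-paper proof to compare against. Your strengthening of the induction hypothesis to all consecutive pairs on the type-\ref{item:type_cointossing} stretch is exactly the right move (and matches how the reduction is actually applied in the paper, since type-\ref{item:type_contiguous} runs have already been removed), and your case split on $\psi=\psi'$ versus $\psi\neq\psi'$ is the standard way to finish.
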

	Each inner iteration of this reduction, reduces an alphabet of size $\sigma$  to an alphabet of size $2\ceil{\log\sigma}$. 
	We run such inner-iterations until the size of the alphabet is constant. 
	Hence,  after $\O(\log^*n)$ inner-iterations the size of the alphabet becomes constant, we denote by $c$ the appropriate constant for the $\O(\log^*n)$ inner-iterations.
	By the following lemma, the size of the final alphabet is exactly $6$. 
	
	\begin{lemma}[{\cite[Lemma 2]{CM07}}]
		After the final inner-iteration of alphabet reduction, the alphabet size is $6$.
	\end{lemma}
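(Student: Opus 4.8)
The plan is to treat the alphabet reduction as iterating the single numeric map $g(\sigma)=2\ceil{\log \sigma}$ on the alphabet size, starting from $|\Sigma|=n^{\O(1)}$, and to locate its stable value. The previous lemma already gives that one inner-iteration sends an alphabet of size $\sigma$ to one of size at most $g(\sigma)=2\ceil{\log\sigma}$ (the new label is $2\psi+bit(\cdot,\psi)$, where $\psi$ ranges over the $\ceil{\log\sigma}$ bit positions of the current representation). So after $k$ inner-iterations the alphabet size is at most $g^{(k)}(n^{\O(1)})$, and it suffices to analyze this orbit.

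First I would record the arithmetic facts about $g$: (i) $g(\sigma)<\sigma$ for every $\sigma\ge 7$ (since $\ceil{\log\sigma}<\sigma/2$ there), so the orbit strictly decreases as long as it stays $\ge 7$; (ii) $g(\sigma)=6$ for $\sigma\in\{5,6,7,8\}$, and $g(\sigma)\ge 8$ for every $\sigma\ge 9$ (because $g(9)=8$ and $g$ is nondecreasing). From (i) the orbit eventually drops below $7$; from (ii) the last value that is $\ge 9$ maps into $\{8\}$, which maps to $6$, so the orbit necessarily passes through $6$ and cannot "skip" down to $4$; and since $g(6)=6$, once the orbit reaches $6$ it stays there. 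Hence the stable value of the reduction is exactly $6$.

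Next I would bound the number of iterations needed to reach this fixed point, to confirm it matches the chosen constant $c$ in "$c\log^* n$ inner-iterations." Writing $L=\ceil{\log\sigma}$ for the bit-length, one step sends $L\mapsto \ceil{\log(2L)}=\ceil{1+\log L}$, which is the standard $\log^\ast$-type recursion; starting from $L_0=\Theta(\log n)$ it reaches the fixed point (bit-length $3$, i.e.\ alphabet size $\le 2\cdot3=6$) after $\O(\log^\ast \log n)=\O(\log^\ast n)$ steps. Choosing $c$ to be a large enough constant (the same $c$ fixed in the construction) guarantees that after $c\log^\ast n$ inner-iterations the size has reached $6$; a clean way to present this is to argue the size is monotone nonincreasing while $>6$ and strictly decreasing while $\ge 7$, so the number of strict-decrease steps before hitting $\{5,6,7,8\}$ is bounded by the $\log^\ast$ recursion on $L$, after which one further step gives $6$.

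Finally I would observe that the bound is tight, so "is $6$" is the right statement: when the input alphabet has already been brought down to size $\le 6$ (bit-length $3$), the index $\psi$ of the least significant differing bit lies in $\{0,1,2\}$ and $bit(\cdot,\psi)\in\{0,1\}$, so the produced labels lie in $\{0,1,2,3,4,5\}$ and in the worst case all six occur, hence the final alphabet is $\{0,1,\dots,5\}$ of size exactly $6$. The only mildly delicate point — and the one I would be most careful about — is the floor/ceiling bookkeeping in steps (i)–(ii): confirming that $g$ has $6$ (not $4$) as the attracting value reached from all large inputs, and that the prescribed iteration count $c\log^\ast n$ is genuinely sufficient; everything else is routine. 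This is precisely \cite[Lemma 2]{CM07}, whose argument I would cite for the constant-level details.
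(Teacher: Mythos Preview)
Your proposal is correct. The paper does not supply its own proof of this statement --- it is quoted verbatim as \cite[Lemma~2]{CM07} with no argument given --- so there is nothing in the paper to compare against; your fixed-point analysis of the map $\sigma\mapsto 2\ceil{\log\sigma}$ is the standard argument and matches what \cite{CM07} does.
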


	The main challenge with this technique is how to repeat this process with larger and larger blocks. In~\cite{CM07} there are two methods: a randomized algorithm that uses fingerprints to reduce the size of each block to a constant number of words ($\O(\log n)$ bits), or an algorithm with expensive $\O(n)$ additional working space that uses bucket sort. 
	Since we want a deterministic algorithm, and since we have just $\O(\frac n\tau)$ words of working space we use a different method, which costs us $\log \tau$ factor in the running time and uses the following observation. We observe that each block at any outer-iteration is a string of size at most $\O(n)$, thus we consider each block as a number between one and $\O(|\Sigma|^n)=\O(n^{O(n)})$. 
	Formally, we assume without loss of generality that $0\notin \Sigma$, the alphabet of $S$, and we define  $\Pi=\{0,1,\dots,2^{\ceil{\log_2 (|\Sigma|+1)}n}\}$ as the \emph{blocks' alphabet}. 
	Each block $u=S[x..y]$ at any iteration is a string of length at most $n$ over $\Sigma$, and is represented by one symbol from $\Pi$ by concatenating the binary representation of $u$'s characters, with padding of zeros to the left, i.e., the symbol of $u$ is $\sum_{k=0}^{|u|-1}u[|u|-k]\cdot 2^{\ceil{\log_2(|\Sigma|+1)}k}\in \Pi$. 
	The alphabet reduction for any iteration $\mu>1$ is done exactly as described for $\mu=1$ except that instead of converting characters from $\Sigma$ the algorithm considers the symbols of $\Pi$ induced by the blocks of level $\mu-1$.
	
	Notice that $\Pi=\O(|\Sigma|^n)=\O(n^{\O(n)})$ and therefore, the alphabet reduction for any outer-iteration takes $\O(\log^*(n^{\O(n)}))=\O(\log^*n)$ inner-iterations.
	For any outer-iteration the algorithm executes $\O(\log^*n)$ inner iterations. Each first inner-iteration takes $\O(n)$ time since it has to take into account all the blocks which their length summed to $\Theta(n)$. However, after one inner-iteration each block reduced from $\O(\log n^{\O(n)})=\O(n\log n)$ bits to $\O(\log(n\log n))=\O(\log n)$ bits, and therefore can be maintained in a constant number of words. Thus, after the first inner-iteration each block comparison takes constant time. Since the number of blocks reduces exponentially and it is $\O\left(\frac n{(3/2)^\mu}\right)$, the total time for all the inner iterations, except the first inner iteration of each outer-iteration is $\O(n\log^*n)$. Together with the first inner-iterations the total time required for all the transformations is $\O(n(\log\tau+\log^*n))$. However, using lookup-table, we reduce this time to $O(n\log\tau)$, see Appendix~\ref{app:deterministic_in_space}.

	Remark that at the end of the alphabet reduction, the blocks' symbols are all in $\{0,1,2,3,4,5\}$.
	At this point, the algorithm scans all blocks and compares their symbol's value with those of their right and left neighbors.
	The algorithm picks all positions that have a local minimum symbol value as the selected positions in level $\mu$.
	The only exceptions are regarding the margins of the sequence. 
	For the rightmost $c\log^*n$ characters, the alphabet reduction cannot reach the last level, and therefore they do not have a corresponding number from $\{0,\dots,5\}$, thus, the algorithm treats them as sequences of type~\ref{item:type_edge}.
	In addition, the leftmost position in the sequence is always selected by the algorithm. Moreover, if the second block in the sequence (from left) receives a symbol which is a local minimum, the algorithm ignores it in order to avoid the case that the first sub-block will become a block in level $\mu$.
	So, by the merging due to the minimal symbol, each block of level $\mu$ is composed of at least $2$ sub-blocks of level $\mu-1$ and at most $11$ sub-blocks ($10$ is the maximal distance between minimal positions over the symbols $\{0,\dots,5\}$ and additional one block for the case of the leftmost block).

	\paragraph{Margin sub-blocks (type~\ref{item:type_edge}).}
	Each sequence of type~\ref{item:type_edge} contains at most $c\log^*n$ sub-blocks. If the sequence contains exactly one sub-block, this sub-block is kept as a block in level $\mu$ of the hierarchy.
	If the sequence contains two or three sub-blocks, the algorithm creates one block in level $\mu$ which is the concatenation of all these three sub-blocks. 
	As for the last case, if the sequence contains at least four sub-blocks, the algorithm creates one block in the $\mu$th level which is the union of the two rightmost blocks, and repeats the process from right to left until there are no sub-blocks of level $i-1$ in the sequence which are not part of any block in level $\mu$. 
	Notice that each block created in such sequence contains at least two sub-blocks from the lower level and at most three of them, except for the case where the sequence contains exactly one sub-block.
	
	\paragraph{Conclusion.} The set of positions induced by  level $\mu=\log_{3/2}\tau$  of the hierarchical decomposition, is a $(\O(\tau),\O(\tau\log^*n))$-partitioning set of size $\O(\frac n\tau)$. The  correctness proof appears in Appendix~\ref{app:deterministic_correctness}. Moreover, the algorithm finds this set using only $\O(\frac n\tau+\log \tau)$ words of space in $\O(n\log\tau))$  time, as described in Appendix~\ref{app:deterministic_in_space}.

	\section{Sparse Suffix Tree Construction} \label{sec:SST_construction}
	In this section we introduce a method to build the Sparse Suffix Tree of a set $B\subseteq [n]$ of suffixes of $S$. Since the output size is $\Theta(|B|)$ words of space, we are able to use $\O(|B|)$ words of working space without affecting the space usage of the algorithm. Thus, we will use the $\LCE$ data structure with $\tau=\frac n{|B|}$ that uses $\O(\frac n\tau)=\O(|B|)$ words of space.
	We focus on computing the Sparse Suffix Array (SSA) of the suffixes, which is an array that stores $B$'s suffixes ordered by their lexicographic order. Then, one can compute the longest common prefix (LCP) of any pair of consecutive (by the lexicographic order) suffixes using the $\LCE$ data structure from Section~\ref{sec:LCE_construction}. Using the SSA and  LCP information one can build the desired $\SST$ in $\O(|B|)=\O(\frac n\tau)$ time, using the algorithm of Kasai et al.~\cite{KLAAP01}.
	Our construction algorithms will use the same $(\tau,\delta)$-partitioning sets that were introduced in Section~\ref{sec:randomSelection} and Section~\ref{sec:determinsiticSelection}. 
	We will prove that these sets have a special property that enables us to compute efficiently the SSA of the set $P$ that was selected by the algorithm. 
	Then, we will show how to use this SSA to compute the desired SSA of the set $B$.

	Recall that Lemma~\ref{lem:center_blocks}, which was induced by the definition of $(\tau,\delta)$-partitioning set, said that for any two indices with large common extension both common  substrings are composed of the same blocks except for possibly the right and left margins of length $\delta$. 
	For the $\SST$ construction, we require an additional property. We want that for every pair of selected positions, i.e., positions from $P$, with large $\LCE$, the two suffixes also begin with composition into exactly the same blocks, and the only difference in the composition of the common extensions is at the right margin. 
	Formally, we say that a partitioning set is \emph{forward synchronized} if it has the following property.
	
	\begin{definition}[Forward synchronized partitioning set]\label{def:forward_sync}
		A $(\tau,\delta)$-partitioning set $P$ is called \emph{forward synchronized} if and only if for any $p_i,p_j\in P$, which their consecutive positions in $P$ are $p_{i+1}$ and $p_{j+1}$, respectively, if $\LCE(p_i,p_j)>p_{i+1}-p_i+\delta$ then $p_{i+1}-p_i=p_{j+1}-p_j$.
	\end{definition}

	We will show how to build the sparse suffix tree of a forward synchronized $(\tau,\delta)$-partitioning set of size $\O(\frac n\tau)$ in $\O(n\cdot \frac \delta\tau)$ time, using $\O(\frac n\tau)$ words of space. Fortunately, our two constructions of partitioning sets, from Section~\ref{sec:randomSelection} and Section~\ref{sec:determinsiticSelection} are  forward synchronized, as claimed in the following lemma. The proof is given separately for the randomized algorithm (Lemma~\ref{lem:rand_is_forward_sync}) and for the deterministic algorithm (Lemma~\ref{lem:det_is_forward_sync}).

	\begin{lemma}
		Any $(\tau,\delta)$-partitioning set that is a result of the algorithms from Section~\ref{sec:randomSelection} or Section~\ref{sec:determinsiticSelection} is forward synchronized.
	\end{lemma}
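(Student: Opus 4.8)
The plan is to prove the two claimed sub-lemmas (Lemma~\ref{lem:rand_is_forward_sync} for the randomized construction and Lemma~\ref{lem:det_is_forward_sync} for the deterministic construction) separately, since both constructions are already known to produce $(\tau,\delta)$-partitioning sets; what remains is only the forward-synchronization property of Definition~\ref{def:forward_sync}. Fix $p_i,p_j\in P$ with consecutive successors $p_{i+1},p_{j+1}$ in $P$, and suppose $\LCE(p_i,p_j) > (p_{i+1}-p_i)+\delta$. I want to conclude $p_{i+1}-p_i = p_{j+1}-p_j$. The common-prefix assumption means $S[p_i-\delta'..p_i+(p_{i+1}-p_i)+\delta] = S[p_j-\delta'..p_j+(p_{i+1}-p_i)+\delta]$ for suitable margin sizes, so the whole block starting at $p_i$, plus a $\delta$-neighbourhood to its right, is copied verbatim starting at $p_j$; the only subtlety is the left margin, which is exactly why the property is called \emph{forward} synchronized rather than fully two-sided.

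For the randomized construction (Section~\ref{sec:randomSelection}), I would argue directly from the definition of $P$ via the $\ID_{\phi,h}$ function. Recall $\ID_{\phi,h}(a)$ depends only on $S[a..a+\tau-1]$, and membership $a\in P$ is decided by $a$ being an $\ID$-minimum over some window of length $\tau$ entirely contained in $[a-\tau+1..a+\tau-1]$. Set $d = p_{i+1}-p_i$. By the local-consistency argument (the same one used in the proof of Lemma~\ref{lemma:id_is_partition_set}), every position $a\in[p_i..p_i+d+\delta-\tau]$ satisfies $\ID(a) = \ID(a+(p_j-p_i))$, so the $\ID$ sequences of $S$ restricted to $[p_i..p_i+d]$ and $[p_j..p_j+d]$ agree. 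Hence $p_j+d$ is selected for $P$ by exactly the same window-minimum witness that selects $p_i+d=p_{i+1}$ (all relevant windows lie within the synchronized range, using $\delta = 2\tau$ and $\tau\le d+\delta$ to control the window endpoints), and conversely no position strictly between $p_j$ and $p_j+d$ is selected because none is between $p_i$ and $p_{i+1}=p_i+d$. Therefore $p_{j+1}=p_j+d$, i.e. $p_{j+1}-p_j=d$. A small case analysis is needed to handle periodic blocks — but since $\LCE(p_i,p_j)>d+\delta$ a $(\tau,\tau/6)$-run starting inside $[p_i..p_{i+1}-1]$ is mirrored starting inside $[p_j..p_{j+1}-1]$, so runs are detected identically on both copies.

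For the deterministic construction (Section~\ref{sec:determinsiticSelection}), the argument is the analogous one carried out level by level through the hierarchical decomposition. By induction on the level $\mu$ I would show: if a long-enough common extension exists, then the block of level $\mu$ starting at $p_i$ and its right $\delta$-neighbourhood (measured in level-$\mu$ units, which shrink the margin we can afford) decompose into exactly the same level-$\mu$ sub-blocks starting at $p_j$. The base case $\mu=0$ is trivial. For the inductive step I use that the alphabet-reduction labels $\lab(\cdot,j)$ at level $\mu$ depend only on an $\O(\log^*n)$-block window, together with the type classification (types~\ref{item:type_large}–\ref{item:type_edge}) being itself locally determined; so the local-minimum selection that carves level-$\mu$ blocks out of level-$(\mu-1)$ blocks produces the same cut immediately to the right of $p_i$ as immediately to the right of $p_j$, except possibly near the left edge of the synchronized region — which is harmless because we only claim agreement going forward. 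Summing the $\O(\log^*n)$ inner-iteration window sizes over $\mu=0,\dots,\log_{3/2}\tau$ shows that a margin of $\delta=\O(\tau\log^*n)$ suffices, which matches the $\delta$ in our deterministic partitioning set.

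The main obstacle I anticipate is bookkeeping the margins correctly: the forward-synchronization definition is asymmetric (it tolerates disagreement on the left but not the right), and one has to verify that every window or neighbourhood used by the selection rule to decide the status of positions in $[p_i..p_{i+1}]$ — respectively $[p_j..p_{j+1}]$ — lies inside the region that the hypothesis $\LCE(p_i,p_j)>(p_{i+1}-p_i)+\delta$ guarantees is identical in the two occurrences. For the deterministic case this requires tracking how the effective neighbourhood radius blows up by an additive $\O(\log^*n)$ factor at each of the $\log_{3/2}\tau$ levels, and confirming the geometric sum telescopes within the allotted $\delta=\O(\tau\log^*n)$; this is the part that deserves the careful write-up in the appendix, whereas the randomized case is essentially a one-level version of the same idea.
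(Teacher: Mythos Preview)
Your overall plan matches the paper's: split into the randomized and deterministic constructions, argue locality of the selection rule directly for the randomized case, and for the deterministic case induct on the hierarchy level with a type-by-type case analysis. The deterministic sketch is essentially what the paper does in Lemma~\ref{lem:det_is_forward_sync}.

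There is, however, a concrete gap in your randomized argument. You assert that ``all relevant windows lie within the synchronized range,'' but this is false in the typical case $d=p_{i+1}-p_i<\tau$: the window $[\ell..\ell+\tau-1]$ witnessing $p_{i+1}\in P$ can start as far left as $p_{i+1}-\tau+1<p_i$, and likewise a hypothetical selected position $p_j+k$ with $0<k<d$ could be witnessed by a window extending left of $p_j$. The forward $\LCE$ hypothesis tells you nothing about characters (hence $\ID$ values) to the left of $p_i$ and $p_j$, so you cannot simply transplant the witnessing window from one occurrence to the other. The paper (Lemma~\ref{lem:rand_is_forward_sync}, Case~1) sidesteps this by using a \emph{forward-only} characterization of the next selected position: given that $p_i\in P$, the successor $p_{i+1}$ is either the first index after $p_i$ with $\ID(p_{i+1})\le\ID(p_i)$, or else the position of minimum $\ID$ in $[p_i+1..p_i+\tau]$. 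Both descriptions depend only on $\ID(p_i),\ID(p_i+1),\dots,\ID(p_i+\tau)$, which \emph{are} determined by $S[p_i..p_i+2\tau-1]$ and hence lie inside the synchronized range; the same characterization at $p_j$ then forces $p_{j+1}=p_j+d$. You should replace the raw window-minimum argument by this forward characterization (and similarly handle the four combinations of ``$p_i$/$p_{i+1}$ chosen by $\ID$'' versus ``start of a run,'' as the paper does), rather than claiming the witnessing windows themselves stay inside the range.
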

	
	\subsection{SSA of Forward Synchronized  \texorpdfstring{$(\tau,\delta)$}{(t,d)}-Partitioning Set }
	
	Let $P$ be a forward synchronized  $(\tau,\delta)$-partitioning set,  we show here how to compute the SSA of $P$ efficiently.
	Let us consider the decomposition of $S$ into blocks, induced by $P$.
	According to this decomposition, all the suffixes of $S$ beginning in positions from $P$ are composed only of complete blocks. 
	Moreover, since $P$ is forward synchronized, it is guaranteed that whenever comparing two suffixes with long $\LCE$, the two suffixes must begin with identical blocks. 
	As with the definition of $S_P$ in Section~\ref{sec:LCE_construction}, we aim to create a partitioning string $S'_P$ of length $|P|$ such that each block of $S$ will be transformed into one symbol in this string, and any two suffixes begins in positions from $P$ will have the same order as the corresponding suffixes of the partitioning string.
	Recall that when we defined $S_P$ we sorted the substrings of the blocks and transformed each block into its rank. This method seemed to work also for our aim, since when comparing two suffixes, we hope that the first non-identical blocks will determine the order of the suffixes properly, according to the sorting. However, this method has one problem, which is the order of two blocks when one of them is a prefix of the other. In such a case, the order of the suffixes is not determined by these blocks, but by pair of characters which at least one of them is not in these two blocks.

	We use the following method to create a $S'_P$, which overcomes the difficulty described above. For each regular block (of type~\ref{item:regular_block} in Definition~\ref{def:landmarks_set}) beginning at position $p_i\in P$ we consider the \emph{representative string} of the block as $s_i=S[p_i..p_i+3\delta-1]$. Notice that this string may contain also characters outside the block of $p_i$. 
	More detailed, since $\delta\ge \tau$, and due to property~\ref{item:blocks_size} of Definition~\ref{def:landmarks_set}, the only case where $s_i$  do not contain characters outside the block is when the block $[p_i..p_{i+1}-1]$  is a large block with a period length of at most $\tau$ characters.
	Hence, when comparing two suffixes which their first non-identical block is regular (of type~\ref{item:regular_block}), the sorting of the representative strings guarantees that the order of these blocks is equal to the order of the suffixes. Thus, we just need to explain how to treat periodic blocks (of type~\ref{item:long_blocks}). 
	
	\paragraph{Periodic blocks.}
	When comparing two periodic blocks which their prefixes of length $3\delta>2\tau$ are equal, it must be the case that all the corresponding characters in the two blocks match (see Lemma~\ref{lem:twiceperiod_equal}), and therefore the shorter block is a prefix of the longer block. 
	Of course, if the two blocks have exactly the same length then the corresponding substrings of $S$ are identical, and comparison between the suffixes should be done according to the consecutive blocks.

	For the sake of intuition let us consider first the case that all the periodic blocks are maximal. In this case for a periodic block $[p_i..p_{i+1}-1]$ if  $S[p_i..p_{i+1}-1]$ has period of length  $\rho$, then $S[p_{i+1}]\ne S[p_{i+1}-\rho]$.
	In such a case, when comparing two suffixes that have the same period, the lexicographic order of the suffixes is defined by the character right after the end of the short block and the corresponding character in the longer block. 
	We want the representative string of each block to be composed of a long enough prefix of the block, say of length $3\delta$, with some additional characters to deal with the situation of one block that is a prefix of another block. 
	Let us consider two suffixes $p_i,p_j\in P$ that begin with periodic blocks and one is the prefix of the other.
	Assume without loss of generality that the shorter block appears at position $p_i$, and that its length is $\alpha$.
	In this case, the lexicographic order of the suffixes is determined by $S[p_i+\alpha]$ and the corresponding character in the second suffix, $S[p_j+\alpha]$.
	The suffix $S[p_i..n]$ should appear before $S[p_j..n]$ in the lexicographic order if and only if $S[p_i+\alpha]$ is larger than $S[p_j+\alpha]$. 
	Both the block of $p_i$ and the block of $p_j$ have the same period, hence, let $\rho$ be the length of this period.
	Thus, for any position $x$ in the block, we have $S[x]=S[x-\rho]$ except for possibly the first $\rho\le\tau$ positions. 
	In particular, $S[p_j+\alpha]=S[p_j+\alpha-\rho]$. 
	On the other hand, since the shorter block is a prefix of the longer block we have $S[p_j+\alpha-\rho]=S[p_i+\alpha-\rho]$ and so $S[p_j+\alpha]=S[p_i+\alpha-\rho]$. 
	Hence, instead of comparing $S[p_i+\alpha]$ with $S[p_j+\alpha]$ it is sufficient to compare $S[p_i+\alpha]$ with $S[p_i+\alpha-\rho]$. 
	Thus, for the periodic block $[p_i,p_{i+1}-1]$ we focus on comparing the character $S[p_i+\alpha]$, which we call \emph{the violation character} with the character  $S[p_i+\alpha-\rho]$, which we call \emph{the period character}.  
	We distinguish between two cases: either the violation character is larger than the period character, or vice versa. 
	If the violation character is larger than the period character, then this block should be after longer blocks 
	with the same period while sorting the suffixes. 
	Otherwise, if the violation character is smaller than the period character, then this block should appear before longer blocks while sorting the suffixes.
	When comparing two blocks that their violation characters are larger than the period characters, the longer blocks should be first, and when comparing two blocks which their violation characters are smaller than the period characters, the longer block should be last.
	Thus, for periodic blocks of length at least $3\delta$, instead of just defining the representative string $s_i=S[p_i..p_i+3\delta-1]$ we concatenate to this string two symbols.
	If the violation character is smaller than the period character, the first symbol added is $-1$ and the second is the block's length.
	Otherwise, the first symbol added is $+1$ and the second is the negative value of the block's length ($\cdot (-1)$).

	In the previous paragraph, we assumed that the periodic blocks are maximal, and the characters right after the blocks violate the periodicity. 
	However, this is not necessarily the case in a forward-synchronized partitioning set. 
	It is possible that a periodic block will be terminated before the end of the periodic substring. 
	Moreover, in the constructions that we introduce in this paper the periodic blocks are indeed terminated before the right end of the periodic substring. 
	The exact position of the right endpoint could depend on the $\delta$ characters after the end of the periodic substring. Now, the character right after the end of the block could be part of the periodic substring.
	Thus, we define for every periodic block the \emph{right-violation} of the block as the first position after the end of the periodic substring. 
	\begin{definition}
		Let $[p_i..p_{i+1}-1]$ be a periodic block such that the principal period of $S[p_i..p_{i+1}-1]$ is $\rho$, then $\rv(p_i)=\min\{j\mid j>p_i+\rho, S[j]\ne S[j-\rho]\}$.
	\end{definition}

	We remark that both in the randomized construction of the partitioning set and in the deterministic construction, during the construction the algorithm computes the period length of any periodic block. Therefore in $\O(n)$ time the algorithm computes the $\rv(p_i)$ for any $p_i\in P$ which is an endpoint of a periodic block, without affecting the asymptotic run-time.
	
	Following the discussion above, now the violation character of the periodic block $p_i$ is \linebreak $S[\rv(p_i)]$ instead of $S[p_{i+1}]$. Moreover, instead of inserting the length of the block $p_i$ to $s_i$, we insert the distance $\rv(p_i)-p_i$. 
	There is only one case that this construction does not cover.
	In the case when two blocks have the same period, and the same distance to the end of the period --- they will have exactly the same representative string.
	Notice that if the two blocks have the same length, then it is fine that they have the same representative string since the comparison between the suffixes will be done according to the following blocks.
	However, it is possible that the blocks have different lengths.
	Since the right endpoint of the block depends on the $\delta$ characters after the periodic substring, the blocks could be of different lengths only if the $\delta$ characters after the end of the periodic substrings are different. 
	Thus, to cover also this case, we add to the representative string of the periodic block $p_i$ after the sign ($+1$ or $-1$) and number also the substring $S[\rv(p_i)..\rv(p_i)+2\delta-1]$. 
	
	To conclude for a periodic block $[p_i..p_{i+1}-1]$ we define the representative string $s_i$ as to be the concatenation of the following. The substring (1) $S[p_i..p_i+3\delta-1]$, (2) if  the violation character is smaller than the period character, the first character is $-1$ and the second character is the length of the periodic substring beginning at $p_i$ (that is $\rv(p_i)-p_i$), 
	otherwise, the first character is $+1$ and the second character is the negative value of the length of the periodic substring beginning at $p_i$  (that is $(\rv(p_i)-p_i)\cdot (-1)$).
	At the end we add (3) $S[\rv(p_i)..\rv(p_i)+2\delta-1]$.

	\paragraph{Conclusion.}
	To conclude, for any position $p_i\in P$ we have a representative string of length at most $3\delta+2+2\delta=\O(\delta)$ over an alphabet of size $\O(n)$. 
	The algorithm sorts all these substrings, in $\O(n\cdot\frac\delta\tau)$ time due to Lemma~\ref{lem:string_sorting} or Lemma~\ref{lem:string_sorting_log_star} (recall that $\delta=\Theta(\tau)$ for the randomized selection and $\delta=\Theta(\tau\log^*n)$ for the deterministic selection). After the sorting, the algorithm passes over the sorted array, and computes for each $s_i$ its rank, which is the number of distinct $s_j$s smaller than $s_i$.
	We define the string $S'_P$ as the string of length $|P|$ where the $i$th character is the rank of $s_i$. 
	There is a natural bijective mapping between suffixes of $S'_P$ and the suffixes of $S$ beginning at positions of $P$.
	Moreover, in the following lemma we prove that for any $p_i,p_j\in P$ the suffix $S[p_i..n]< S[p_j..n]$ if and only if $S'_P[i..|P|]< S'_P[j..|P|]$ (the $<$ relation denotes here the lexicographic less than relation).
	\begin{lemma}\label{lem:SSA_of_blocks}
		Let $P$ be a forward synchronized  $(\tau,\delta)$-partitioning set and let  $p_i,p_j\in P$ be two distinct positions, then 
		$S[p_i..n]<S[p_j..n]$ if and only if $S'_P[i..|P|]<S'_P[j..|P|]$.
	\end{lemma}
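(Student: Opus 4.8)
\textbf{Proof plan for Lemma~\ref{lem:SSA_of_blocks}.}
The plan is to establish the equivalence by showing that the lexicographic comparison of the two suffixes $S[p_i..n]$ and $S[p_j..n]$ is faithfully simulated by the block-by-block comparison encoded in $S'_P$. First I would set up notation: let $p_i = q_0 < q_1 < q_2 < \cdots$ and $p_j = r_0 < r_1 < r_2 < \cdots$ be the block endpoints of $P\cup\{n+1\}$ starting at $p_i$ and $p_j$ respectively, so that the two suffixes decompose as concatenations of blocks $B_0 B_1 \cdots$ and $C_0 C_1 \cdots$, with $B_k = S[q_k..q_{k+1}-1]$, $C_k = S[r_k..r_{k+1}-1]$. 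The character $S'_P[i+k]$ is the rank of the representative string $s(B_k)$, and likewise for $C_k$. I would let $m$ be the first index where the two block sequences differ, i.e. the largest index such that $B_0 = C_0, \dots, B_{m-1} = C_{m-1}$ as strings (equivalently, as ranks in $S'_P$); if no such finite $m$ exists the two suffixes are literally equal and the claim is trivial, so assume $m$ is finite. Then $S'_P[i..|P|]$ and $S'_P[j..|P|]$ first differ at offset $m$, and $S'_P[i..|P|] < S'_P[j..|P|]$ iff $\operatorname{rank}(s(B_m)) < \operatorname{rank}(s(C_m))$ iff $s(B_m) < s(C_m)$ lexicographically (since equal ranks mean equal strings). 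So the whole lemma reduces to the single claim: $B_m \neq C_m$ (as substrings of $S$), the concatenated tails $B_m B_{m+1}\cdots$ and $C_m C_{m+1}\cdots$ compare the same way as $B_m$ versus $C_m$ on their own, and $B_m$ versus $C_m$ compares the same way as $s(B_m)$ versus $s(C_m)$.

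The core of the argument is a case analysis on the types of the blocks $B_m$ and $C_m$, using the forward-synchronization property to control the situation. The key structural fact I would invoke is that because $P$ is forward synchronized (Definition~\ref{def:forward_sync}), whenever two suffixes starting in $P$ have a long common extension their block decompositions agree; contrapositively, since $B_0\cdots B_{m-1} = C_0\cdots C_{m-1}$ but $B_m \ne C_m$, the common extension of $S[p_i..n]$ and $S[p_j..n]$ cannot be much longer than $\sum_{k<m}|B_k| + |B_m|$ — more precisely I would argue $\LCE(p_i,p_j) \le \big(\sum_{k<m}|B_k|\big) + |B_m| + \delta$ (otherwise forward-synchronization would force $|B_m| = |C_m|$ and, applied to the shifted suffixes starting at $q_m$ and $r_m$, force $B_m = C_m$ as strings, contradiction). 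Now the comparison of the two tails is governed entirely by characters within distance $\delta$ of the end of the first differing block. For a regular block ($p_{i}+k$ of type~\ref{item:regular_block}), the representative string $s(B_m) = S[q_m..q_m+3\delta-1]$ has length $3\delta > 2\delta + \tau \ge |B_m| + \delta$, so it already contains every character of $S$ that the suffix comparison can reach before a mismatch is forced; hence $s(B_m) < s(C_m)$ iff the suffixes compare that way — and this also handles the prefix-of-another-block pitfall, because the representative string extends past the block end by enough to see the disambiguating character. For periodic blocks I would appeal to Lemma~\ref{lem:twiceperiod_equal}: two periodic blocks whose $3\delta$-prefixes agree must have equal periods and one is a prefix of the other, so the comparison is decided by the violation character $S[\rv(p_i)]$ versus the period character $S[\rv(p_i)-\rho]$; the sign bit $\pm 1$ and the (signed) value $\rv(p_i)-p_i$ appended to $s_i$ are designed exactly so that lexicographic order of these augmented strings reproduces "shorter-first when violation $>$ period, longer-first when violation $<$ period", and the final appended block $S[\rv(p_i)..\rv(p_i)+2\delta-1]$ resolves the residual case of equal period and equal $\rv$-distance but blocks of different length (different because the right endpoint depends on the next $\delta$ characters). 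Mixed cases (one block regular, one periodic) are covered because a regular block has length $\le\tau$ while a periodic block involved in a tie has length $>3\delta$, so their $3\delta$-prefixes cannot agree, and the regular block's representative string already sees the mismatch.

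For both directions of the "if and only if" I would note the argument is symmetric: the same index $m$ is the first point of disagreement whether viewed in $S$ or in $S'_P$ (equal ranks $\Leftrightarrow$ equal representative strings $\Leftrightarrow$ — for the relevant block types and lengths — equal blocks), and at that index the order of representative strings equals the order of the suffixes by the case analysis above; hence $S[p_i..n] < S[p_j..n] \Leftrightarrow s(B_m) < s(C_m) \Leftrightarrow \operatorname{rank}(s(B_m)) < \operatorname{rank}(s(C_m)) \Leftrightarrow S'_P[i..|P|] < S'_P[j..|P|]$. The main obstacle I anticipate is the periodic-block bookkeeping: carefully verifying that the two-symbol augmentation (sign, $\pm(\rv(p_i)-p_i)$) plus the trailing $2\delta$-character block genuinely produces the correct lexicographic order in every sub-case — in particular checking that when the violation character is smaller than the period character, appending $(-1, \rv(p_i)-p_i)$ makes a shorter periodic block sort before a longer one (and conversely), and that this is consistent with how the suffix actually continues past the block boundary when $\rv(p_i) \ne p_{i+1}$. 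The regular-block and mixed cases, by contrast, are essentially immediate once the $3\delta > |B_m|+\delta$ length bound and Lemma~\ref{lem:center_blocks}/forward-synchronization are in hand.
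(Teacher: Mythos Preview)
Your overall strategy matches the paper's: locate the first point of disagreement and do a case analysis on block types using forward synchronization and the design of the representative strings. However, there is a genuine setup error in how you define that first point. You take $m$ to be the first index where the block \emph{strings} $B_m, C_m$ differ and assert this is ``equivalently'' the first index where the ranks in $S'_P$ differ. Only one direction of that equivalence holds: $s(B_k) = s(C_k) \Rightarrow B_k = C_k$ (this is precisely the content of the paper's inductive step), but the converse fails. For a regular block, $s(B_k) = S[q_k..q_k+3\delta-1]$ extends $3\delta \ge 3\tau$ characters, well past the block boundary; hence two equal blocks $B_k = C_k$ can have $s(B_k) \ne s(C_k)$ whenever the characters just beyond them differ --- and that is exactly what happens for $k$ close to your index $m$. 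So $S'_P$ may already differ at some $m' < m$, and your reduction ``the whole lemma reduces to comparing $s(B_m)$ vs $s(C_m)$'' is examining the wrong coordinate of $S'_P$.

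The fix is straightforward and brings you exactly in line with the paper: define $m$ instead as the first index where $S'_P[i+m] \ne S'_P[j+m]$ (equivalently $s(B_m) \ne s(C_m)$), and then \emph{prove} that $s(B_k) = s(C_k)$ for $k < m$ forces $B_k = C_k$. In the regular case, $s(B_k)=s(C_k)$ gives $\LCE(q_k,r_k) \ge 3\delta > |B_k| + \delta$, so forward synchronization yields $|B_k| = |C_k|$ and both blocks are equal prefixes of the common string $s(B_k)$. In the periodic case one uses Lemma~\ref{lem:twiceperiod_equal} together with the appended $\rv$-distance and trailing $2\delta$ characters to force equal block lengths and hence equal blocks. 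With $m$ defined this way, your subsequent case analysis (that the lexicographic order of $s(B_m)$ and $s(C_m)$ agrees with the suffix order) is exactly what is required, and the rest of your plan goes through.
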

	\begin{proof}
		
		We will prove $S[p_i..n]<S[p_j..n]\Rightarrow$ $ S'_P[i..|P|]<S'_P[j..|P|]$, the proof is symmetric in the second case.
		
		If the representative strings $s_i$ and $s_j$ are different, then by the discussion above, their relative rank determines the order of the corresponding suffixes, and we have $S'_P[i..|P|]<S'_P[j..|P|]$.
		
		Otherwise, if $s_i=s_j$ we will prove that $S[p_i..p_{i+1}-1]=S[p_j..p_{j+1}-1]$.
		When proven, it follows that $S[p_i..n]<S[p_j..n]$ if and only if $S[p_{i+1}..n]<S[p_{j+1}..n]$.
		Using an induction argument on the length of the sequence of equal pairs $s_{i+k},s_{j+k}$, we have that if $S[p_i..n]<S[p_j..n]$ then $S'_P[i..|P|]<S'_P[j..n]$.
		
		To prove that if $s_i=s_j$ then $S[p_i..p_{i+1}-1]=S[p_j..p_{j+1}-1]$, let us distinguish between the two cases of $s_i$ construction:
		
		\begin{itemize}
			\item
			If $[p_i..p_{i+1}-1]$ is a regular block (of type~\ref{item:regular_block}), then $p_{i+1}-p_i\le \tau\le \delta$ and $s_i=S[p_i..p_i+3\delta-1]$.
			Since $s_i=s_j$, we have $\LCE(p_i,p_j)\ge 3\delta\ge p_{i+1}-p_i+\delta$. 
			Thus, due to the fact that $P$ is forward synchronized, we have that $p_{i+1}-p_i=p_{j+1}-p_j$. 
			It follows that $S[p_i..p_{i+1}-1]$ and $S[p_j..p_{j+1}-1]$ are of the same length, and they are both prefixes of $s_i=s_j$, therefore they must be equal.
			
			\item 
			Otherwise, if $[p_i..p_{i+1}-1]$ is a periocic block (of type~\ref{item:long_blocks}), then it must be that $[p_j..p_{j+1}-1]$ is also a periodic block (of type~\ref{item:long_blocks}) since otherwise we can use the proof of the previous case to prove that $[p_i..p_{i+1}-1]$ is not a periodic block.
			Since $S[p_i..p_i+3\delta-1]$ and $S[p_j..p_j+3\delta-1]$ are prefixes of $s_i=s_j$ respectively, then $S[p_i..p_i+3\delta-1]=S[p_j..p_j+3\delta-1]$. 
			Due to the definition $s_i$ and $s_j$ we have that $s_i[3\delta+2]$ is $\rv(p_i)-p_i$ in absolute value, and the same goes for $p_j$.
			Thus, it follows that $\rv(p_i)-p_i=\rv(p_j)-p_j$.
			Remark that $S[p_i..\rv(p_i)-1]$ and $S[p_j..\rv(p_j)-1]$ are two strings of the same length, and they both share periodic substrings with period lengths at most $\tau\le\delta$. 	
			Hence, due to Lemma~\ref{lem:twiceperiod_equal}, we have that $S[p_i..\rv(p_i)-1]=$ $ S[p_j..\rv(p_j)-1]$.
			In addition, due to the definition of $s_i$ and $s_j$ we have that  $S[\rv(p_i)..\rv(p_i)+2\delta-1] = \linebreak S[\rv(p_j)..\rv(p_j)+2\delta-1]$. 
			Therefore, it must be that $\LCE(p_i,p_j)\ge  \linebreak 2\delta+\rv(p_i)-p_i > \delta+p_{i+1}-p_i$.
			Since $P$ is forward synchronized it follows that $p_{i+1}-p_i=p_{j+1}-p_j$ and thus $S[p_i..p_{i+1}-1]=S[p_j..p_{j+1}-1]$.
		\end{itemize}
	\end{proof}
	
	Hence, to compute the SSA of $S$ with the suffixes of $P$, the algorithm computes the suffix array of $S'_P$ using the algorithm of K{\"{a}}rkk{\"{a}}inen and Sanders~\cite{KS03} in $\O(|S'_P|)=\O(\frac n\tau)$ time. To retrieve the SSA of $P$'s suffixes on the string $S$ the algorithm just need to replace every index $i$ in the SA with the index $p_i$.
	So, in total, the construction of the SSA of $P$ takes $\O(n\cdot\frac\delta\tau)$ time.

	\subsection{Construct \texorpdfstring{$B$}{B}'s \texorpdfstring{$\SST$}{SST} Using The SSA of \texorpdfstring{$P$}{P}}
	Given the SSA of $P$'s suffixes the algorithm builds the SST of $B$'s suffixes, by first computing the SSA of $B$, then computing the LCP array of these suffixes and finally building the SST based on the SSA and LCP array. 
	To compute the SSA of $B$, the algorithm uses the same technique we already used for the construction of $P$'s SSA. 
	For any $i\in B$, let $v_i=S[i..i+3\delta-1]$ be the string of length $3\delta$ from position $i$.
	In addition, similar to the construction of the SSA of $P$, if the central part $S[i+\delta..i+2\delta-1]$ is part of a periodic block we add two characters due to the periodicity violation and the distance between the $\rv$ of the block and $i$, followed by $3\delta$ characters from the right violation. 
	The algorithm sorts all the $v_i$s for $i\in B$ in $\O(n\cdot\frac \delta\tau)$ time and computes the rank of each $v_i$, denoted as $r_i$. In addition, the algorithm computes for each $i\in B$ the index $\suc_P(i+\delta)$ which is the first selected position after index $i+\delta$. Let us denote by $x_i$ the rank of the suffix $S[\suc_P(i+\delta)..n]$ among all the suffixes of $P$ (which can be retrieved from the SSA of $P$). For each $i\in B$ we associate the pair $(r_i,x_i)$.
	The SSA of $B$ is obtained by sorting the list of pairs, due to the following lemma.
	The lemma's proof is omitted since it is almost the same as the proof of Lemma~\ref{lem:SSA_of_blocks}.
	\begin{lemma}
		Let $i,j\in B$ be two given suffixes, then $S[i..n]<S[j..n]$ if and only if $(r_i,x_i)<(r_j,x_j)$ (where $<$ is the lexicographic smaller than relation).
	\end{lemma}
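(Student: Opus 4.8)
The plan is to adapt, almost verbatim, the proof of Lemma~\ref{lem:SSA_of_blocks}, with the partitioning string $S'_P$ replaced by the pair $(r_i,x_i)$ and ``the next block'' replaced by ``the suffix of $S$ starting at $\suc_P(i+\delta)$''. As there, I will prove the implication $S[i..n]<S[j..n]\Rightarrow (r_i,x_i)<(r_j,x_j)$; the reverse implication is symmetric, and the two together give the ``if and only if'' since $S[i..n]$ and $S[j..n]$ are comparable (they are never equal for $i\ne j$). Throughout I split on whether $r_i\ne r_j$ or $r_i=r_j$, equivalently whether $v_i\ne v_j$ or $v_i=v_j$.

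Suppose first $r_i\ne r_j$, so $v_i\ne v_j$. By exactly the argument used for the representative strings in the previous subsection, when $v_i\ne v_j$ the lexicographic order of $v_i$ and $v_j$ coincides with that of $S[i..n]$ and $S[j..n]$: if a mismatch occurs inside the plain prefix $S[i..i+3\delta-1]$ it determines the suffix order directly, and the appended sign, right-violation offset, and trailing characters handle the case in which one (large periodic) block is a prefix of the other, just as for the symbols attached when $S'_P$ was built. Hence from $S[i..n]<S[j..n]$ we get $v_i<v_j$, so $r_i<r_j$, and since the first coordinate decides the pair comparison, $(r_i,x_i)<(r_j,x_j)$.

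Suppose now $r_i=r_j$, i.e.\ $v_i=v_j$; in particular $S[i..i+3\delta-1]=S[j..j+3\delta-1]$, so $\LCE(i,j)\ge 3\delta>2\delta$. Write $p:=\suc_P(i+\delta)$ and $q:=\suc_P(j+\delta)$, which are positions of $P$. The key claim is that $p-i=q-j=:\alpha$ and $S[i..i+\alpha-1]=S[j..j+\alpha-1]$; the $\Theta(\delta)$ length of the representative strings was chosen precisely so that $v_i=v_j$ forces this. When the block of $S$ straddled by position $i+\delta$ is a regular block (type~\ref{item:regular_block}), $\alpha=O(\delta)$, so $p$ lies in the synchronized middle region of $S[i..n]$ and $S[j..n]$, and applying Lemma~\ref{lem:center_blocks} with $\ell=\LCE(i,j)$ and taking the minimum of the two equal index sets it provides yields $p-i=q-j$; moreover $S[i..i+\alpha-1]$ and $S[j..j+\alpha-1]$ are equal-length prefixes of $v_i=v_j$, hence equal. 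When that block is a large periodic block (type~\ref{item:long_blocks}), the last $2\delta$ characters of $S[i..i+3\delta-1]$ lie inside it, so $v_i$ (and $v_j$) carries the appended periodicity data, and $v_i=v_j$ forces the two blocks to share the same period and the same right-violation offset and the same trailing $3\delta$ characters; combining this with Lemma~\ref{lem:twiceperiod_equal} and the forward-synchronization of $P$ again gives $p-i=q-j=\alpha$ and $S[i..i+\alpha-1]=S[j..j+\alpha-1]$ --- this branch repeats the periodic case of Lemma~\ref{lem:SSA_of_blocks}. Granting the claim, by correctness of the SSA of $P$ we have $x_i<x_j\iff S[p..n]<S[q..n]$; since $S[i..i+\alpha-1]=S[j..j+\alpha-1]$ this is equivalent to $S[i+\alpha..n]<S[j+\alpha..n]$, i.e.\ to $S[i..n]<S[j..n]$. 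Thus, with $r_i=r_j$, from $S[i..n]<S[j..n]$ we get $x_i<x_j$ and hence $(r_i,x_i)<(r_j,x_j)$.

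The main obstacle is the periodic branch of the case $r_i=r_j$: establishing that $v_i=v_j$ genuinely pins down $\suc_P(i+\delta)-i$ even when a long periodic block sits between $i$ (resp.\ $j$) and its successor in $P$. This is exactly where the somewhat delicate construction of the representative strings of the previous subsection (encoding the period, the right-violation position, and enough characters past the run) and the forward-synchronization property of $P$ are used; everything else is the same bookkeeping as in Lemma~\ref{lem:SSA_of_blocks}, which is why the paper omits the details.
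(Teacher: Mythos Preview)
Your proposal is correct and takes essentially the same approach as the paper, which explicitly omits the proof because it is ``almost the same as the proof of Lemma~\ref{lem:SSA_of_blocks}.'' Your case split on $v_i\ne v_j$ versus $v_i=v_j$, the use of Lemma~\ref{lem:center_blocks} (plus the appended periodic data and forward synchronization) to show $\suc_P(i+\delta)-i=\suc_P(j+\delta)-j$ in the latter case, and the subsequent reduction to the already-computed SSA of $P$ is exactly the intended adaptation.
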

	
	The sorting of all the pairs is done in $\O(n)$ time using radix sort if $\tau\le\sqrt n$ or by any standrad efficient sorting in  $\O(\frac n\tau\log\frac n\tau)=\O(n)$  time if $\tau>\sqrt n$.
	After computing the SSA of $B$, to compute the LCP array, the algorithm uses the \LCE{} data structure from Section~\ref{sec:LCE_construction} with the same partitioning set $P$.
	Since each $\LCE$ query takes $\O(\delta)$ time, and the algorithm computes $|P|-1=\O(\frac n\tau)$ $\LCE$ queries, the total time for this phase is $\O(n\cdot\frac\delta\tau)$.
	With the SSA and the corresponding LCP array, the algorithm computes the SST in $\O(|P|)=\O(\frac n\tau)$ time using the algorithm of Kasai et al.~\cite{KLAAP01}.
	
	Concluding, given a set of suffixes $B$, one can build the SST of $B$ using only $\O(|B|)$ working space, by setting $\tau=\frac n{|B|}$. Using the randomized selection, this method yields a randomized Las-Vegas algorithm with $\O(n)$ running time with high probability for $|B|\ge\Omega(\log^2 n)$, and $\O(n)$ expected running time for $|B|<o(\log^2 n)$ completing the proof of Theorem~\ref{thm:lasvegasSST}.
	Using the deterministic selection this method yields a deterministic algorithm with $\O(n\log\tau)$  running time, for any $|B|\ge\Omega (\log n)$, completing the proof of Theorem~\ref{thm:deterministicSST}.

	\section{Even Better Deterministic \texorpdfstring{$\LCE$}{LCE}}\label{sec:better_deterministic}
	In this section we show how to use the deterministic selection technique from Section~\ref{sec:determinsiticSelection}, combining with concepts from the so-called difference covers~\cite{Maekawa85} technique and the $\SST$ construction from Section~\ref{sec:SST_construction} (Theorem~\ref{thm:deterministicSST}), to improve the $\LCE$ query time from $\O(\tau\log^*n)$  to $\O(\tau\sqrt{\log^*n})$  without any asymptotic penalty in the space usage or  construction time.
	At the very high level, we consider a partitioning set for $\tau'=\frac \tau{\sqrt{\log^*n}}$. By applying the algorithm of Section~\ref{sec:determinsiticSelection}, one can find a $(\O(\tau'),\O(\tau'\cdot\log^*n))=$ $\left(\O\left(\frac \tau{\sqrt{\log^*n}}\right),\O(\tau\cdot\sqrt{\log^*n})\right)$-partitioning set $P$ of size $\O(\frac n{\tau'})=\O\left(\frac {n}\tau \cdot \sqrt{\log^*n}\right)$.
	Using this partitioning set, one can build a data structure that answers $\LCE$ queries in $\O(\tau\cdot \sqrt{\log^*n})$ time, using the construction of Section~\ref{sec:LCE_construction}. However, the size of this data structure will be $\Theta\left(\frac {n}\tau \cdot \sqrt{\log^*n}\right)$ which is too large in our settings. 
	Therefore, our goal is to use a special subset of $P$ that has only $\O( 1/{\sqrt{\log^*n}})$ fraction of $P$'s positions, and with these positions, the algorithm still will be able to answer $\LCE$ queries fast enough.

	Formally, we consider $P$ as the set of positions induced by the $\mu=\log_{3/2}\left(\frac \tau{\sqrt{\log^*n}} \right)$ level of the hierarchical decomposition.
	As in previous sections we postpone the discussion about periodic large blocks to Appendix~\ref{app:better_deterministic}, and focus here on the case where all the blocks in the partitioning set are regular (of type~\ref{item:regular_block} in Definition~\ref{def:landmarks_set}), so their length is at most $\tau'$. 
	In addition, we assume for now that $\tau\ge \sqrt{\log^* n}$ (and therefore $\tau'\ge 1$), and the discussion on the case where $\tau<\sqrt{\log^*n}$ also appears in Appendix~\ref{app:better_deterministic}.

	\paragraph{Combine $\SST$ to simplify the $\LCE$ computation.} Recall that the computation of $\LCE(i,j)=\ell$ query, in the data structure of Section~\ref{sec:LCE_construction} was composed of three phases. In the first phase the algorithm reads $\O(\delta)$ characters, then the algorithm finds a length $k$ such both $i+k$ and $j+k$ are positions in the partitioning set. The second phase treats a large common substring from $i+k$ and $j+k$ until at most $\delta$ characters prior to $i+\ell$ and $j+\ell$ which are the end of the common substring. Notice that, using the $\SST$ construction from Section~\ref{sec:SST_construction} on the positions of the partitioning set, with LCA support, after finding the common $k$ such that $i+k,j+k\in P$ and $S[i..i+k]=S[j..j+k]$, one can compute $\LCE(i,j)$ by adding $k$ to the common extension of $i+k$ and $j+k$. Therefore, one can merge the second and third phases to be done in constant time.
	In our construction we will use this idea, and focus on finding an offset $k$ such that $S[i..i+k]=S[j..j+k]$ and both $i+k$ and $j+k$ are in the set of positions.

	\paragraph{The filtered positions.} 
	To reduce the number of positions from the set $P$ the algorithm uses the difference covers technique. The following lemma states the properties that we will use.
	
	\begin{lemma}[{\cite{Maekawa85}}]\label{lem:diff-cover}
		Let $m$ and $t$ be integers such that $t<m$.
		There exists a set $DC\subseteq [m]$ of size $|DC|=\O\left(\frac m{\sqrt t}\right)$ such that, for every $i,j\in [m-t]$ there exists $k\le t$ such that both $i+k\in DC$ and $j+k\in DC$.
	\end{lemma}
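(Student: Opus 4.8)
The plan is to derive Lemma~\ref{lem:diff-cover} from the classical existence of small \emph{difference covers modulo $t$} and then to ``lift'' such a modular object to a subset of $[m]$. Recall that $D\subseteq\{0,1,\dots,t-1\}$ is a difference cover modulo $t$ if for every residue $r\in\{0,1,\dots,t-1\}$ there exist $d,d'\in D$ with $d-d'\equiv r\pmod t$. First I would establish that such a $D$ exists with $|D|=\O(\sqrt t)$, and then set $DC=\{p\in[m] : (p \bmod t)\in D\}$, after which the size bound and the covering property follow by elementary residue arithmetic.

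For the existence of a small modular difference cover I would use the standard two-block construction. Put $s=\ceil{\sqrt t}$, and let $D$ be the set of residues modulo $t$ of the integers in $\{0,1,\dots,s-1\}\cup\{0,s,2s,\dots,As\}$, where $A=\ceil{(t-1)/s}$. Since $t\le s^2$ we have $A\le s$, so $|D|\le s+(A+1)\le 2s+1=\O(\sqrt t)$. To see that $D$ is a difference cover modulo $t$, take any $r\in\{1,\dots,t-1\}$ (the case $r=0$ being $0-0$), set $a=\ceil{r/s}$ and $b=as-r$; then $1\le a\le A$ and $0\le b\le s-1$, so $as$ lies in the second block and $b$ lies in the first block, and $r=as-b$ is a genuine integer difference, hence $r\equiv (as \bmod t)-b\pmod t$ with both terms residues belonging to $D$.

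The last step is the lifting. For the size bound, each residue class modulo $t$ contains at most $\ceil{m/t}$ elements of $[m]$, so $|DC|\le |D|\cdot\ceil{m/t}=\O(\sqrt t)\cdot\O(m/t)=\O(m/\sqrt t)$, using $t<m$. For the covering property, fix $i,j\in[m-t]$ and let $r=(i-j) \bmod t$; pick $d,d'\in D$ with $d-d'\equiv r\equiv i-j\pmod t$, and let $k\in\{1,\dots,t\}$ be the representative of $(d-i) \bmod t$. Then $k\le t$, and since $i\le m-t$ and $j\le m-t$ we get $i+k,j+k\in[m]$; moreover $i+k\equiv d\pmod t$ and $j+k\equiv j+(d-i)\equiv d-(i-j)\equiv d-(d-d')\equiv d'\pmod t$, so $(i+k) \bmod t=d\in D$ and $(j+k) \bmod t=d'\in D$, i.e.\ both $i+k$ and $j+k$ lie in $DC$. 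The only points requiring care are keeping the indexing of $[m]$ consistent with reduction modulo $t$ (so that the elements of $D$ are treated as genuine residues in $\{0,\dots,t-1\}$) and the off-by-one bounds $a\le A\le s$; there is no substantial obstacle, and one could alternatively cite the $\O(\sqrt t)$-size difference cover of~\cite{Maekawa85} as a black box and present only the lifting argument.
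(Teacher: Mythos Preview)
Your proof is correct: the two-block construction $\{0,\dots,s-1\}\cup\{0,s,\dots,As\}$ with $s=\ceil{\sqrt t}$ is the standard way to obtain an $\O(\sqrt t)$-size difference cover modulo $t$, and your lifting to $DC=\{p\in[m]:p\bmod t\in D\}$ together with the residue computation for $k$ is clean and complete.

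There is, however, nothing to compare against: the paper does not prove Lemma~\ref{lem:diff-cover} at all. It is stated with a citation to~\cite{Maekawa85} and used as a black box (the paper only applies it with $m=|P|$ and $t=\log^*n$ to thin out the partitioning set). So your write-up is strictly more than what the paper provides; if anything, it makes the exposition more self-contained. The only cosmetic point is that you could streamline by citing~\cite{Maekawa85} for the modular difference cover and presenting only the lifting argument, as you yourself note at the end.
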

	We will use this technique to reduce the size of $P$, with $m$ as the size $|P|$ and $t=\log^*n$. 
	We begin with the set $P$,  which is a $(\tau',\O(\tau'\log^*n))$-partitioning set of size $\O(\frac n{\tau'})$. Let $p_1,p_2\dots,p_h$ be the positions in $P$ sorted in ascending order. We define the set $Q$ as the set that contains all the position $p_i$ such that $i\mod \sqrt{\log^*n}=0$ or $i \modulo \log^*n<\sqrt{\log^*n}$. It is straightforward  that $|Q|=\left(\frac {|P|}{\sqrt{\log^*n}}\right)$.
	Since $|P|=\O(\frac n\tau \sqrt{\log^*n})$ we have that $|Q|=\O(\frac n\tau)$. Our data structure will build the $\SST$ of the set $Q$, using the deterministic algorithm from Section~\ref{sec:SST_construction}. Notice that, one can decide whether a position from $P$ is also in $Q$ without actually store $P$, since the decision depends only on the \emph{rank} of the position in $P$. Hence, the algorithm  computes the set $Q$ with only $\O(\frac n\tau)$ words of working space (see Appendix~\ref{app:deterministic_in_space}).

	\paragraph{The query computation.}
	To compute $\LCE(i,j)$ query the algorithm simply reads the suffixes $S[i..n]$ and $S[j..n]$ simultaneously where for each offset $k=0,1,\dots$, when comparing $S[i+k]$ and $S[j+k]$ the algorithm checks whether both $i+k\in Q$ and $j+k\in Q$. This process is stopped when one of the following occurs. If a mismatch is reached in offset $k$ ($S[i+k]\ne S[j+k]$), then $k$ is reported as $\LCE(i,j)$. Otherwise, if the algorithm reaches a common offset in $Q$ ($i+k,j+k\in Q$), then $k+\LCE(i+k,j+k)$ is reported as $\LCE(i,j)$. 
	Notice that after spending $\O(\tau)$ time, the algorithm can check in additional constant time whether $i+k\in Q$, or $j+k\in Q$, using the linked list of positions and the auxiliary array, as described for the successor computation.
	In the following lemma we prove that this process will never take more than $\O(\tau\cdot\sqrt{\log^*n})$ time.
	
	\begin{lemma}\label{lem:better_deterministic_non_periodic}
		There exists a constant $c$ such that 
		for any $i,j\in [n]$ with $\LCE(i,j)=\ell>c\tau\sqrt{\log^*n}$ there exists $k<c\tau\sqrt{\log^*n}$ such that both $i+k\in Q$ and $j+k\in Q$.
	\end{lemma}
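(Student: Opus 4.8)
The plan is to exploit the difference-cover property of Lemma~\ref{lem:diff-cover} together with the compactness of the partitioning set $P$. The key observation is that inside the common extension $S[i..i+\ell-1]=S[j..j+\ell-1]$ of length $\ell>c\tau\sqrt{\log^*n}$, the partitioning set $P$ behaves consistently around $i$ and around $j$ (this is essentially Lemma~\ref{lem:center_blocks} applied with $\delta=\O(\tau'\log^*n)=\O(\tau\sqrt{\log^*n})$): for positions not too close to the margins, $p\in P\cap[i+\delta..i+\ell-\delta]$ if and only if $p-i+j\in P\cap[j+\delta..j+\ell-\delta]$. Consequently the ranks of the corresponding $P$-positions agree: if $p$ is the $r$-th smallest element of $P$ that lies $\ge i+\delta$, then $p-i+j$ is likewise the $r'$-th smallest element of $P$ that lies $\ge j+\delta$, and the index offsets $r-r'$ are the same throughout this common middle region. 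Since membership in $Q$ depends only on the rank modulo $\log^*n$ (and modulo $\sqrt{\log^*n}$), a position $p$ in the common middle lies in $Q$ exactly when its partner $p-i+j$ does, \emph{provided} the rank offsets line up — and the construction of $Q$ is designed so that this offset issue is handled.

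First I would set up the two synchronized enumerations: let $p_{a}<p_{a+1}<\cdots$ be the elements of $P$ that are $\ge i+\delta$, and let $p_{b}<p_{b+1}<\cdots$ be those $\ge j+\delta$; by the local-consistency argument above, for every $m\ge 0$ with $p_{a+m}\le i+\ell-\delta$ we have $p_{b+m}=p_{a+m}-i+j$. By compactness (property~\ref{item:blocks_size} of Definition~\ref{def:landmarks_set}, using that all blocks here are regular of length $\le\tau'$), consecutive $p$'s differ by at most $\tau'$, so the first $t:=\log^*n$ of them, $p_{a},\dots,p_{a+t-1}$, all lie within $i+\delta+t\tau'=i+\O(\tau\sqrt{\log^*n})$, comfortably inside the common extension once $c$ is chosen large enough. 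Now I apply Lemma~\ref{lem:diff-cover} with $m=|P|$ and the window parameter $t=\log^*n$ to the two ranks $a$ and $b$: there is some $r\le \log^*n$ with $a+r\in DC$ and $b+r\in DC$. The definition of $Q$ is exactly the difference-cover $DC$ of Lemma~\ref{lem:diff-cover} realized on the index set of $P$ — the conditions ``$i\bmod\sqrt{\log^*n}=0$ or $i\bmod\log^*n<\sqrt{\log^*n}$'' are a concrete difference cover of $[\,\cdot\,]$ with window $\log^*n$ and size $\O(1/\sqrt{\log^*n})$ times smaller — so having $a+r$ and $b+r$ both landing in this index set means $p_{a+r}\in Q$ and $p_{b+r}\in Q$. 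Setting $k=p_{a+r}-i$, we get $i+k=p_{a+r}\in Q$ and $j+k=p_{a+r}-i+j=p_{b+r}\in Q$, which is exactly what the lemma asserts.

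It remains to bound $k$. We have $k=p_{a+r}-i\le (i+\delta)-i + (a+r - a)\cdot\tau' \le \delta + (\log^*n)\cdot\tau' = \O(\tau\sqrt{\log^*n}) + \log^*n\cdot\frac{\tau}{\sqrt{\log^*n}} = \O(\tau\sqrt{\log^*n})$, so choosing $c$ to absorb the hidden constants gives $k<c\tau\sqrt{\log^*n}$ as required; and since $\ell>c\tau\sqrt{\log^*n}>k+\delta$ (after possibly enlarging $c$), the index $p_{a+r}$ is safely inside the region where the synchronization $p_{b+r}=p_{a+r}-i+j$ is valid, so the argument is not circular.

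The main obstacle I expect is getting the synchronization of ranks exactly right near the left margin: local consistency only guarantees $p\in P\Leftrightarrow p-i+j\in P$ for $p\in[i+\delta..i+\ell-\delta]$ (with $\delta$ here being the $\O(\tau\sqrt{\log^*n})$ parameter of the $\tau'$-partitioning set), so I must start the two enumerations at $i+\delta$ and $j+\delta$ rather than at $i$ and $j$, and argue that the \emph{first} $P$-position at or beyond $i+\delta$ corresponds to the first at or beyond $j+\delta$ — which again follows from local consistency applied to that first position and its predecessor, together with the fact that the window of $\delta$ characters to the left of $i+\delta$ is itself inside the common extension. Keeping the constant $c$ large enough to make all these margins fit, while still producing $k=\O(\tau\sqrt{\log^*n})$, is the one place where the bookkeeping must be done carefully; everything else is a direct combination of Lemma~\ref{lem:diff-cover}, Definition~\ref{def:landmarks_set}, and the reasoning behind Lemma~\ref{lem:center_blocks}.
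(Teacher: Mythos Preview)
Your proposal is correct and follows essentially the same approach as the paper's proof: use Lemma~\ref{lem:center_blocks} to synchronize the $P$-positions in the window $[i+\delta..i+3\delta-1]$ with those in $[j+\delta..j+3\delta-1]$, observe that (by the non-periodic assumption) there are $\Theta(\log^*n)$ such positions, and then use the difference-cover structure of the index set defining $Q$ to find a common offset landing in $Q$ on both sides, all within $O(\tau\sqrt{\log^*n})$ characters.

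The only minor presentational difference is that you invoke Lemma~\ref{lem:diff-cover} abstractly and assert that the explicit rule ``$i\bmod\sqrt{\log^*n}=0$ or $i\bmod\log^*n<\sqrt{\log^*n}$'' realizes such a difference cover, whereas the paper unrolls this argument by hand: it first locates an index $\alpha'$ with $\alpha'\equiv 0\pmod{\log^*n}$ (so that $p_{\alpha'},\dots,p_{\alpha'+\sqrt{\log^*n}-1}$ are all in $Q$ by the second condition), and then observes that among the $\sqrt{\log^*n}$ corresponding positions on the $j$-side one must have rank divisible by $\sqrt{\log^*n}$ (hence lie in $Q$ by the first condition). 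Your appeal to Lemma~\ref{lem:diff-cover} is slightly loose since that lemma is an existence statement rather than a verification that this particular set works, but the paper's explicit two-step argument is exactly the verification you would need, so there is no real gap. The synchronization concern you flag in your final paragraph is handled directly by Lemma~\ref{lem:center_blocks}, which already gives equality of the two offset sets and hence of their minima.
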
 
	\begin{proof}
		Let $c'>1$ be a constant such that $P$ is a $(\tau',c'\tau'\log^*n)$-partitioning set (of size $\O(\frac n{\tau'})$), 
		in the terms of Section~\ref{sec:LCE_construction} we have $\delta=c'\tau'\log^*n= c'\tau\sqrt{\log^*n}$.
		We  will prove the lemma holds for $c=4c'$. 
		Let $i,j\in[n]$ be positions such that $\LCE(i,j)>c\tau\sqrt{\log^*n}$.
		By Lemma~\ref{lem:center_blocks}, we have:
		$$\{p-i\,|\,p\in(P\cap[i+\delta..i+\ell-\delta-1])\}=\{p-j\,|\,p\in(P\cap[j+\delta..j+\ell-\delta-1])\}.$$ 
		In particular, since $\ell>4\delta$ we have that:
		$$\Delta=\{p-i\,|\,p\in(P\cap[i+\delta..i+3\delta-1])\}=\{p-j\,|\,p\in(P\cap[j+\delta..j+3\delta-1])\}.$$
		Notice that  $|[i+\delta..i+3\delta-1]|=|[j+\delta..j+3\delta-1]|=$ $2\delta=2c'\tau'\log^*n$. Hence, since we have assumed that the length of each block induced by $P$ is at most $\tau'$, we have that the number of elements (offsets) in $\Delta$ is at least $\frac {2\delta}{\tau'}=2c'\log^*n>2\log^*n$.
		For every element $k\in \Delta$, we have $i+k\in P$ and $j+k\in P$.
		Let $p_\alpha,p_{\alpha+1},\dots,p_{\alpha+2\log^*n-1}$ be the first $2\log^*n$ elements in $\{p-i\,|\,p\in(P\cap[i+\delta..i+\ell-\delta-1])\}$.
		There must be an index $\alpha\le\alpha'\le \alpha+\log^*n$  such that $\alpha'\modulo\log^*n=0$. Therefore, all the positions $p_{\alpha'},p_{\alpha'+1},\dots, p_{\alpha'+\sqrt{\log^*n}-1}$ are positions in $Q$. 
		Let $\beta'$ be the index corresponding to $\alpha'$ such that $p_{\alpha'}-i=p_{\beta'}-j$. We consider the positions $p_{\beta'},p_{\beta'+1},\dots, p_{\beta'+\sqrt{\log^*n}-1}$. Since these are $\sqrt{\log^*n}$ consecutive elements in $P$, it must be that at least one of them is taken into $Q$ (since its index in $P$ is an integer multiple of $\sqrt{\log^*n}$). Since for any $0\le x< \sqrt{\log^*n}$ we have $p_{\alpha'+x}-i=p_{\beta'+x}-j$ it must be that there exists at least one  $x$ such that both $p_{\alpha'+x}\in Q$ and $p_{\beta'+x}\in Q$ and therefore for $k=p_{\alpha'+x}-i$ we have $i+k\in Q$ and $j+k\in Q$, and $k=p_{\alpha'+x}-i<3\delta\le c\tau\sqrt{\log^*n}$.
	\end{proof}
	
	\paragraph{Complexities.} The size of the set $Q$ is  $\O(\frac n\tau)$, and using the $\SST$ --- the data structure answers  $\LCE$ queries in $\O(\tau\sqrt{\log^*n})$ time. The construction time of the data structure is $\O(n\log\tau)$ for finding $P$ and $Q$, and for the construction of $Q$'s  $\SST$ due to the construction of Section~\ref{sec:SST_construction}. Thus, we have proved Theorem~\ref{thm:deterministicLCE2}.

	\section{Ensuring Randomized Selection in Linear Time With High Probability}\label{sec:ens_rand_whp}
	In Section~\ref{sec:randomSelection}, we introduced a randomized algorithm that selects a small $(2\tau,2\tau)$-partitioning set  in $\O(n)$ expected time.
	The algorithm uses a fingerprint hash function $\phi$ and an approximately-min-wise hash function $h$ to compute an \ID{} for each position, and selects positions with minimum values in a range of size $\tau$ for the partitioning set.
	Remark that only the partitioning set's size is in question \textemdash the expected size is $\O(\frac n\tau)$ but the probability that the set size is indeed $\O(\frac n\tau)$ is not guaranteed to be high.
	If we ensure with high probability that a pair of hash functions $\phi$ and $h$ can be found fast such that $\ID_{\phi,h}$ will induce small partitioning set, then the running time of the algorithm is linear with high probability.
	
	In this section, like in Section~\ref{sec:randomSelection}, we assume that $S$ does not contain long periodic substrings with short periods. 
	With very small modifications, as described in Appendix~\ref{app:rand_selection_periodic_blocks}, the methods that appear next could work in the general case by ``skipping'' all long periodic substrings while running the algorithms.
	
	We argue that a good selection of $(\phi,h)$ pair can be determined in $\O(n)$ time \whp{} while not exceeding $\O(\frac{n}{\tau}$) words of working space. 
	More precisely, since $\phi$ may render ``bad" results only when it has  collisions on substrings of $S$, and this event occurs with inverse polynomial probability, we ignore this case (this event is absorbed in the high probability of the algorithm). 
	Thus, we focus on a fast method to pick $h\in\mathcal{F}$ that will make the average amount of partitioning positions per interval of size $\tau$ to be constant \whp{}.
	
	There are two distinct cases: either $\tau < \log^2 n$ or $\tau \ge \log^2 n$, and the algorithm treats each one of them separately.
	In the first case since $\tau$ is small we can use a sampling of not too many intervals and count the number of selected positions from $P$ in each interval, and use that to approximate the total size of $P$ \whp{}.
	In the second case, we run the algorithm on $\O(\log n)$ different hash functions, which will ensure that with high probability at least one of those hash functions induces a small enough partitioning set.
	We pick the partitioning set as a subset of a larger $(\log n,\log n)$-partitioning set, instead of picking candidates from every position, to allow the parallel process to run in a total of $\O(n)$ time.

	\subsection{Case 1: \texorpdfstring{$\tau \le \log^2 n$}{Small t } }
	
	Fix a fingerprint function $\phi$ with no collisions on $S$.
	We will use sampling methods in order to evaluate the amount of partitioning positions that a given approximate min-wise independent hash function $h$ renders when using $\ID_{\phi,h}$ as the \ID{} function.
	
	For any $0\leq i \leq \frac{n-2}{\tau}$ denote $P_i = P \cap [i\tau +1..(i+1)\tau]$ and denote $C_i = |P_i|$. By using the algorithm described in the proof of Lemma~\ref{lemma:id_alg} on the substring $S[i\tau+1..(i+1)\tau]$, $C_i$ can be calculated in $\O(\tau+|P_i|\tau)$ time using $\O(1)$ words of space (by maintaining only the rightmost partitioning position picked so far and only counting the amount of selected positions). For every $\xi>0$ we can pick $m=\frac{\xi}{2}\log^5 n$ intervals $[i_k\tau +1..(i_k+1)\tau]$ for $1\le k\le m$ uniformly and independently of the other chosen intervals. Denote $\overline{C} = \frac{1}{m}\sum_{k=1}^m C_{i_k}$.
	
	\begin{observation}\label{obs:sample_whp}
		With high probability, $\overline{C}\cdot\frac{n}{\tau} -\frac{n}{\tau} <|P| <\overline{C}\cdot\frac{n}{\tau} +\frac{n}{\tau} $.
	\end{observation}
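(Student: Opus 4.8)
The plan is to view $|P|$ as the exact sum $\sum_{i=0}^{(n-2)/\tau} C_i$ (up to a single boundary interval that contributes $O(\tau/\tau)=O(1)$ to the count, which is absorbed into the $\pm n/\tau$ slack), and to recognize $\overline{C}\cdot\frac{n}{\tau}$ as an unbiased estimator of this sum: since each index $i_k$ is uniform over the $\Theta(n/\tau)$ intervals, $\E[C_{i_k}] = \frac{\tau}{n}\sum_i C_i = \frac{\tau}{n}|P|$ (again up to $O(1)$), so $\E[\overline{C}\cdot\frac{n}{\tau}] = |P|$. The statement to prove is therefore a concentration bound: the sample mean $\overline{C}$ of $m=\log^5 n$ i.i.d.\ copies of a bounded random variable is within $\frac{\tau}{n}$ of its expectation with high probability. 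Rescaling, we want $\Pr\bigl[|\overline{C} - \E[\overline{C}]| \ge \frac{\tau}{n}\bigr]$ to be at most $n^{-\Omega(1)}$.

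First I would bound the range of each $C_{i_k}$. An interval of length $\tau$ contains at most $\tau$ selected positions, so $0 \le C_{i_k} \le \tau$; thus each summand lies in $[0,\tau]$ and has range $\tau$. Applying Hoeffding's inequality to the $m$ independent samples gives
$$\Pr\Bigl[\,\bigl|\overline{C} - \E[\overline{C}]\bigr| \ge t\,\Bigr] \le 2\exp\!\Bigl(-\frac{2 m t^2}{\tau^2}\Bigr).$$
Setting $t = \frac{\tau}{n}$ this bound is $2\exp\!\bigl(-\frac{2m}{n^2}\bigr)$, which is useless, so the naive estimate is too weak and this is the crux of the argument: we must use the restriction $\tau < \log^2 n$ together with the fact that we are in the non-periodic case, where by Lemma~\ref{lem:space_exp}'s proof the expected number of selected positions per interval is $O(1)$, i.e.\ $\E[C_{i_k}] = O(1)$. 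The right tool is therefore a Chernoff/Bernstein-type bound that uses the small mean rather than the full range. Writing $\mu = \E[\overline{C}\cdot\frac{n}{\tau}] = |P| = \Theta(n/\tau)$ (in the regime where $|P|$ is indeed $\Theta(n/\tau)$; if $|P|$ is much smaller the claim degenerates and is handled separately or is vacuous after the $\pm n/\tau$ slack), a multiplicative Chernoff bound on the sum of the $m$ samples, each bounded by $\tau$ and with total mean $\frac{m\tau}{n}|P| = \Theta(m)$, gives deviation probability $\exp(-\Omega(m/\tau))$ when we ask for a constant-factor relative error; since $\tau < \log^2 n$ and $m = \log^5 n$ this is $\exp(-\Omega(\log^3 n)) = n^{-\omega(1)}$, which is high probability. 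The additive error $n/\tau$ in the statement corresponds to relative error $\Theta(1)$ on $|P| = \Theta(n/\tau)$, so this is exactly the regime the Chernoff bound covers.

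The main obstacle is thus bridging the gap between the trivial range bound $\tau$ and the much smaller typical value $O(1)$ of each $C_{i_k}$: a plain Hoeffding bound does not suffice and one must invoke the mean bound from the analysis behind Lemma~\ref{lem:space_exp} to get a Bernstein- or Chernoff-style inequality whose exponent scales like $m/\tau$ rather than $m/\tau^2$. Once that is in place, plugging in $m = \log^5 n$ and $\tau < \log^2 n$ makes the failure probability inverse-superpolynomial, and since each $C_{i_k}$ is computed in $O(\tau + |P_i|\tau)$ time and $O(1)$ space by the algorithm of Lemma~\ref{lemma:id_alg}, the whole estimation uses $O\!\bigl(m\tau + \tau\sum_k |P_{i_k}|\bigr) = O(n)$ time w.h.p.\ and $O(1)$ extra space, consistent with the surrounding discussion.
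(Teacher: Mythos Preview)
Your rescaling step contains an algebra error that derails the whole argument. The statement
\[
\overline{C}\cdot\tfrac{n}{\tau} - \tfrac{n}{\tau} < |P| < \overline{C}\cdot\tfrac{n}{\tau} + \tfrac{n}{\tau}
\]
is equivalent to $\bigl|\,|P|\cdot\tfrac{\tau}{n} - \overline{C}\,\bigr| < 1$, and since $\E[\overline{C}] = |P|\cdot\tfrac{\tau}{n}$ this is exactly $|\overline{C} - \E[\overline{C}]| < 1$. So the deviation parameter you need is $t=1$, not $t=\tfrac{\tau}{n}$. With $t=1$ the plain Hoeffding bound for the sample mean of $m$ i.i.d.\ variables in $[0,\tau]$ gives
\[
\Pr\bigl[\,|\overline{C} - \E[\overline{C}]| \ge 1\,\bigr] \le 2\exp\!\Bigl(-\tfrac{2m}{\tau^2}\Bigr),
\]
and plugging in $m=\log^5 n$ and $\tau<\log^2 n$ yields $2\exp(-2\log n)=\O(n^{-2})$, which is high probability. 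This is precisely the paper's proof.

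Because of the wrong $t$, you concluded that Hoeffding is ``useless'' and then went looking for a Bernstein/Chernoff argument exploiting $\E[C_{i_k}]=\O(1)$. That detour is unnecessary and, as written, also problematic: the claim $\E[C_{i_k}]=\O(1)$ relies on $|P|=\O(n/\tau)$, but the whole point of the sampling is to \emph{test} whether the current hash function $h$ makes $|P|$ small, so you cannot assume it in the analysis of the estimator. The paper's Hoeffding argument works for any $h$, regardless of how large $|P|$ turns out to be.
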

	\begin{proof}
		Remark that $\E[C_{i_k}] = |P|\cdot\frac{\tau}{n}$ and therefore also $\E[\overline{C}]= |P|\cdot\frac{\tau}{n}$. By Hoeffding's inequality, since $0 \leq C_{i_k}\le \tau$ and all $C_{i_k}$ are independent then:
		$$ \Pr\left[|\overline{C} - \E[\overline{C}]| \ge 1\right] \leq 2\exp\left(-\frac{2m^2}{m\tau^2}\right)  \leq 2\exp\left(-\frac{\xi\log^5 n }{\log^4 n}\right) = \O\left(\frac{1}{n^\xi}\right)$$
		Thus, by sampling $m$ intervals we can calculate $\overline{C}$ and then evaluate \whp{} that $\overline{C}\cdot\frac{n}{\tau} -\frac{n}{\tau} <|P| <\overline{C}\cdot\frac{n}{\tau} +\frac{n}{\tau} $.	
	\end{proof}
	
	\begin{observation}\label{obs:sample_markov}
		There exists a constant $c>0$, such that $\Pr_{\phi,h}\left(\overline{C} \leq c\right) > \frac{2}{3}$.
	\end{observation}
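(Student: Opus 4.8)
*There exists a constant $c>0$, such that $\Pr_{\phi,h}\left(\overline{C} \leq c\right) > \frac{2}{3}$.*

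\medskip

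The plan is to combine the expected-size bound of Lemma~\ref{lem:space_exp} with Observation~\ref{obs:sample_whp} and a Markov-type argument. First I would recall that by Lemma~\ref{lem:space_exp}, when $\phi$ and $h$ are picked uniformly, $\E_{\phi,h}[|P|] \le \frac{c_0 n}{\tau}$ for some absolute constant $c_0>1$ (this already accounts for the negligible contribution of fingerprint collisions, so we may keep working under the no-collision assumption on $\phi$ and fold the collision case into the constant). Since $\overline{C}$ is a nonnegative random variable, Markov's inequality gives $\Pr_{\phi,h}\!\left(\,|P| > 6 c_0 \tfrac{n}{\tau}\,\right) < \tfrac16$.

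Next I would translate the event ``$|P|$ is small'' into an event about $\overline{C}$. By Observation~\ref{obs:sample_whp}, with high probability (say, probability at least $1 - \tfrac{1}{n}$, certainly at least $1-\tfrac16$ for $n$ large enough) we have $\overline{C} \le \tfrac{\tau}{n}|P| + 1$. Intersecting this event with the event $|P| \le 6 c_0 \tfrac{n}{\tau}$ from the previous step, a union bound shows that with probability strictly greater than $1 - \tfrac16 - \tfrac16 = \tfrac23$ both hold simultaneously, and on that intersection $\overline{C} \le 6 c_0 + 1$. Setting $c := 6 c_0 + 1$ yields $\Pr_{\phi,h}(\overline{C} \le c) > \tfrac23$, as desired.

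The only subtlety — and the part I would state carefully rather than wave at — is the quantification: Observation~\ref{obs:sample_whp} is a statement about the sampling randomness (the choice of the $m$ intervals $i_1,\dots,i_m$) conditioned on a fixed $\phi$ and $h$, whereas Markov's inequality is over the choice of $\phi$ and $h$. Since the interval sampling is independent of the hash-function choice, the joint distribution factorizes, so one can take the union bound over the product space: the event $\{|P| > 6c_0 n/\tau\}$ depends only on $(\phi,h)$ and has probability $<\tfrac16$ there, while the event $\{\overline C > \tfrac{\tau}{n}|P|+1\}$ has conditional probability $<\tfrac16$ for \emph{every} fixed $(\phi,h)$ and hence unconditional probability $<\tfrac16$ as well. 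This is the one place a reader could trip, so I would spell out the product-space argument explicitly; the rest is routine.
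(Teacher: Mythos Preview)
Your proof is correct and follows essentially the same route as the paper: apply Markov's inequality to $|P|$ via Lemma~\ref{lem:space_exp}, then use Observation~\ref{obs:sample_whp} to transfer the bound to $\overline{C}$. You are in fact more careful than the paper, which writes the slightly informal chain $\Pr(\overline{C}\le c'+1)>\Pr(|P|\le c'n/\tau)>\tfrac23$ without explicitly budgeting for the w.h.p.\ failure of Observation~\ref{obs:sample_whp}; your $\tfrac16+\tfrac16$ split and the product-space remark make this rigorous.
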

	\begin{proof}
		By Lemma~\ref{lem:space_exp}, we know that $\E_{\phi,h}[|P|] =\O(\frac{n}{\tau})$ and thus by Markov's inequality for some $c'>0$: $\Pr_{\phi,h}\left(|P| \leq \frac{c'n}{\tau}\right) > \frac{2}{3}$. But $|P| \leq \frac{c'n}{\tau} \Rightarrow \overline{C} \leq c'+1$ \whp{}. Therefore, $\Pr_{\phi,h}\left(\overline{C} \leq c'+1\right) > \Pr_{\phi,h}\left(|P| \leq \frac{c'n}{\tau}\right) > \frac{2}{3}$.
	\end{proof}
	
	Calculating $\overline{C}$ takes at most $\O(m\tau^2) = \O(\log^9 n)$ time since for each sampled interval there could be as much as $\tau$ selected partitioning positions, and it may cost up to $\O(\tau)$ time to select each of them in the worst case. 
	With high probability, after at most $\xi\log_{3} n$ repetitions of the algorithm for different approximately min-wise hash functions, $\overline{C}$ will be small enough, due to Observation~\ref{obs:sample_markov}. 
	The partitioning set induced by the selected functions must be calculated and verified as small enough. 
	This will happen \whp{} as well due to Observation~\ref{obs:sample_whp}. 
	Thus, \whp{} after $\O(\polylog(n)+n)$ time a small partitioning set will be found. 
	If the run will fail, the failure will be detected within $\O(n)$ time. Therefore, the total running time is $\O(n)$ time \whp{}.
	
	\subsection{Case 2: \texorpdfstring{$\tau > \log^2 n$}{Large t}}
	
	The algorithm for this case is composed of two steps: In the first step the algorithm finds $\phi,h_0$ such that the size of the $(\log n,\log n)$-partitioning set induced by $\ID_{\phi,h_0}$, denoted $P_0$, is at most $\O\left(\frac{n}{\log n}\right)$. 
	In the second step the algorithm draws $\log n$ approximately min-wise independent hash functions $h_1,h_2\dots h_{\xi\log_3 n}\in\mathcal{F}$ in random, and calculates for each $1\le i \le\log n$ a $(2\tau,2\tau)$-partitioning set using $\ID_{\phi,h_i}$ that will be a subset of $P_0$.
	
	For each $h_i$ and in each interval $[j..j+\tau-1]$ the algorithm selects the positions with the smallest ID value from $P_0\cap [j..j+\tau-1]$.
	Hence, 
	$$P_i = \left\{ k\in P_0 \mid \exists j\in[k-\tau+1..k]:\ID_{\phi,h_i}(k) = \min_{\ell\in[j..j+\tau-1]\cap P_0}\left\{ \ID_{\phi,h_i}(\ell)\right\} \right\}$$.
	
	We prove that $P_i$ is $(2\tau,2\tau)$-partitioning set and that the expected size of $P_i$ is $\O(\frac{n}{\tau})$.
	\begin{lemma}
		$P_i$ is a $(2\tau,2\tau)$-partitioning set.
	\end{lemma}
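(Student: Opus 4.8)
The plan is to verify directly the two defining properties of a $(2\tau,2\tau)$-partitioning set from Definition~\ref{def:landmarks_set}, essentially transcribing the proof of Lemma~\ref{lemma:id_is_partition_set} but with the one twist that the candidate positions here are not all of $[n]$ but only the elements of $P_0$. So I would first record the facts about $P_0$ that the argument leans on: since $P_0$ is a $(\log n,\log n)$-partitioning set and, by the standing assumption of this section, $S$ has no relevant long periodic substrings, $P_0$ consists only of regular blocks; hence (i) whether $x\in P_0$ is determined by $S[x-\log n..x+\log n]$ (local consistency of $P_0$), and (ii) consecutive elements of $P_0$ differ by at most $\log n$. I would also note that $\ID_{\phi,h_i}(x)$ depends only on a substring of length at most $\tau$ starting at $x$, and that $\tau\ge\log^2 n\ge\log n$ for $n$ large, so that for any position $a$ the $P_0$-membership and the ID value of every position within distance $\tau$ of $a$ are all functions of $S[a-2\tau..a+2\tau]$ alone.

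For \textbf{Local Consistency} (property~\ref{item:blocks_sync}): given $a,b\in[1+2\tau..n-2\tau]$ with $S[a-2\tau..a+2\tau]=S[b-2\tau..b+2\tau]$, put $\Delta=b-a$. By the remarks above, the shift $x\mapsto x+\Delta$, restricted to positions with $|x-a|\le\tau$, is a bijection preserving $P_0$-membership and preserving $\ID_{\phi,h_i}$. If $a\in P_i$ then there is a length-$\tau$ window $I\ni a$ with $\ID_{\phi,h_i}(a)=\min\{\ID_{\phi,h_i}(k):k\in P_0\cap I\}$; since $I\subseteq[a-\tau+1..a+\tau-1]$, the shifted window $I+\Delta$ contains $b$, $P_0\cap(I+\Delta)$ is exactly the image of $P_0\cap I$ under the shift, and the ID values agree, so $b\in P_i$. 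The converse is symmetric, giving $a\in P_i\Leftrightarrow b\in P_i$.

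For \textbf{Compactness} (property~\ref{item:blocks_size}): because consecutive elements of $P_0\cup\{1,n+1\}$ are at distance at most $\log n<\tau$, every length-$\tau$ window contains a point of $P_0$; the minimum-ID point of that intersection is by definition placed into $P_i$ (taking the window itself as the witnessing interval), so $P_i$ meets every length-$\tau$ window. Exactly as in the second half of the proof of Lemma~\ref{lemma:id_is_partition_set} — treating the two ends via the sentinels $1$ and $n+1$ — this forces every gap between consecutive elements of $P_i\cup\{1,n+1\}$ to be at most $\tau\le 2\tau$, so every block is a regular block of type~\ref{item:regular_block} and no periodic block occurs, which is property~\ref{item:blocks_size} for the parameter $2\tau$.

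I expect the only non-routine point to be the ``reach'' bookkeeping inside the Local Consistency step: unlike in Lemma~\ref{lemma:id_is_partition_set}, the predicate ``$x\in P_i$'' is not read off the text directly but through the intermediate set $P_0$, so I must transfer $P_0$-membership of the neighbouring positions via $P_0$'s own local-consistency guarantee and then check that the accumulated reach — up to $\tau$ from the sliding window, plus $\log n$ to test $P_0$-membership, plus the ID reach — still fits inside $S[a-2\tau..a+2\tau]$. This is precisely where the case hypothesis $\tau\ge\log^2 n$ (hence $\tau\ge\log n$) is used, and it is the step I would write out most carefully.
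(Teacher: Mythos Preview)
Your proposal is correct and follows essentially the same approach as the paper: for local consistency you transfer $P_0$-membership via $P_0$'s own $(\log n,\log n)$-locality (using $\tau\ge\log^2 n\ge\log n$ so the combined reach fits inside $[a-2\tau..a+2\tau]$) and match the $\ID$ values on the shifted window, and for compactness you observe that every length-$\tau$ window contains a $P_0$ point and then invoke the argument from Lemma~\ref{lemma:id_is_partition_set}. The only cosmetic difference is that you make the ``reach'' bookkeeping more explicit than the paper does.
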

	\begin{proof}
		We prove that $P_i$ has the properties of Definition~\ref{def:landmarks_set}:
		
		\subsubsection*{Local Consistency} 
		Let $j,k\in[n-2\tau]$ be two indices such that $S[j-2\tau..j+2\tau]=S[k-2\tau..k+2\tau]$. 
		Since $\tau > \log^2 n \ge \log n $ then for all $\ell\in[-\tau..\tau]$ it follows that $S[j+\ell-\log n..j+\ell + \log n] = S[k+\ell-\log n..k	+\ell + \log n]$. 
		Remark that since $P_0$ is a $(\log n,\log n)$-partitioning set then $j+\ell\in P_0 \Leftrightarrow k+\ell\in P_0$. 
		Notice that because $S[j-2\tau..j+2\tau]=S[k-2\tau..k+2\tau]$ then it also holds that for all $\ell\in[-\tau..\tau]$ : $\ID_{\phi,h_i}(j+\ell) = \ID_{\phi,h_i}(k+\ell)$. 
		Since $j\in P_i$ if and only if $j\in P_0$ and it has the minimal ID in some interval, and since the same holds for $k$ then from all the above  it follows that $j\in P_i \Leftrightarrow k\in P_i$.
		
		\subsubsection*{Compactness} 
		Remark that since  $\tau > \log^2 n \ge \log n $ and since $P_0$ is a $(\log n,\log n)$-partitioning set then in any interval of size $\tau$ there is a position from $P_0$.
		Therefore, the minimal ID on positions from $P_0$ within an interval of size $\tau$ is well-defined.
		Hence, the proof of Property~\ref{item:blocks_size} from Lemma~\ref{lemma:id_is_partition_set} holds the same for $P_i$.
	\end{proof}
	
	\begin{lemma}\label{lemma:sampling_space_exp}
		$\E\left[\,|P_i|\,\right] = \O(\frac{n}{\tau})$
	\end{lemma}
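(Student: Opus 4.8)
My plan is to reuse the linearity-of-expectation argument from Lemma~\ref{lem:space_exp} (whose full version is in Appendix~\ref{app:rand_selection_missing_proofs}), but now restricted to the positions of the fixed set $P_0$ rather than all of $[n]$. Write $\E[|P_i|] = \sum_{p\in P_0}\Pr_{h_i}(p\in P_i)$, where the randomness is only over the choice of $h_i$ (the functions $\phi$ and $h_0$, hence $P_0$ itself, are fixed from the first step). The key observation is that $p\in P_i$ requires $\ID_{\phi,h_i}(p)$ to be minimal among the $P_0$-positions inside \emph{some} window of length $\tau$ containing $p$; as in Lemma~\ref{lem:space_exp} it suffices that $p$ has the minimum ID among the $P_0$-positions in $[p-\tau+1..p]$ \emph{or} among those in $[p..p+\tau-1]$ (actually one can use the half-windows $[p-\tau/2..p]$ and $[p..p+\tau/2]$ exactly as in the sketch of Lemma~\ref{lem:space_exp}).

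The next step is to lower-bound the number of distinct $P_0$-positions in such a window. Since $P_0$ is a $(\log n,\log n)$-partitioning set and $\tau\ge\log^2 n$, every window of length $\tau$ contains $\Omega(\tau/\log n)=\Omega(\log n)$ positions of $P_0$, so each of the two relevant half-windows around $p$ contains at least some number $k=\Omega(\tau/\log n)$ of $P_0$-positions, all with distinct fingerprints (assuming, as in Lemma~\ref{lem:space_exp}, no fingerprint collisions — the collision case contributes only $\O(1)$ to the expectation by the high-probability collision bound). Since $\mathcal{F}$ is a $(\tfrac12,\tau)$-min-wise independent family and $k<\tau$, the probability that $h_i(\phi_p)$ is the strict minimum among the $k$ distinct fingerprints in a fixed half-window is at most $(1+\tfrac12)\cdot\frac{1}{k}=\O(1/k)=\O(\log n/\tau)$. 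A union bound over the two half-windows gives $\Pr_{h_i}(p\in P_i)=\O(\log n/\tau)$.

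Finally I sum over $p\in P_0$: since $|P_0|=\O(n/\log n)$ (this was exactly the guarantee secured in step one of the Case~2 algorithm), we get $\E[|P_i|]=\O(n/\log n)\cdot\O(\log n/\tau)=\O(n/\tau)$, as claimed. The main obstacle — and the only place the argument differs substantively from Lemma~\ref{lem:space_exp} — is getting the window-cardinality bound right: one must be careful that the half-windows genuinely contain $\Omega(\tau/\log n)$ \emph{distinct}-fingerprint $P_0$-positions, which is where the no-$(\tau,\tau/6)$-runs hypothesis (inherited from the non-periodic setting of this section) and the compactness of $P_0$ are both used, and that $k$ stays below the min-wise threshold $\tau$ so that the $(\tfrac12,\tau)$-min-wise guarantee applies. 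Everything else is the same probabilistic skeleton as Lemma~\ref{lem:space_exp}.
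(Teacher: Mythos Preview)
Your proposal is correct and follows essentially the same approach as the paper's own proof: linearity of expectation over $p\in P_0$, the half-window reduction from Lemma~\ref{lem:space_exp}, the lower bound of $\Omega(\tau/\log n)$ on the number of $P_0$-positions in a half-window via compactness of the $(\log n,\log n)$-partitioning set, the $(\tfrac12,\tau)$-min-wise bound giving $\Pr(p\in P_i)=\O(\log n/\tau)$, and the final summation over $|P_0|=\O(n/\log n)$. Your added remarks about verifying distinctness of fingerprints via the no-$(\tau,\tau/6)$-run hypothesis and checking that the candidate count stays below the min-wise threshold~$\tau$ are in fact a bit more careful than the paper's terse sketch, but the skeleton is identical.
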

	\begin{proof}
		
		The proof is based on the proof of Lemma~\ref{lem:space_exp}, with one change: positions in $P_i$ are picked from $P_0$ which is a ${(\log n,\log n)}$-partitioning set, and not all the positions in the interval are candidates. 
		Each interval of size $\frac \tau 2$ will contain at least $\frac{\tau}{2\log n}$ candidates to $P_i$ from $P_0$. 
		As in the proof of Lemma~\ref{lem:space_exp}, this implies that $\Pr\left(j\in P_i\right) = \O(\frac{\log n}{\tau})$.
		Hence, by linearity of expectation over all $\O(\frac{n}{\log n})$ positions in $P_0$: $\E\left[\,|P_i|\,\right] = \O(\frac{n}{\tau})$.	
	\end{proof}
	
	From the above lemma and by Markov's inequality it follows that there exists some constant $c'>0$ such that $\Pr(|P_i| < \frac{c'n}{\tau}) > \frac{2}{3}$. 
	Therefore, \whp{} at least one of the approximate min-wise independent hash functions in $\{h_i\}_{i=1}^{\xi\log_3 n}$ renders a partitioning set $P_i$ such that $|P_i| = \O(\frac{n}{\tau})$.
	
	In order to pick a hash function such that $|P_i| < \frac{c'n}{\tau}$ we calculate $|P_i|$ for all $h_i$, and \whp{} the minimal value will be below $\frac{c'n}{\tau}$. A naive approach would be to use a modification of the algorithm from the proof of Lemma~\ref{lemma:id_alg}, but on all the functions $h_i$ simultaneously. When moving forward simultaneously, this will cost:
	\begin{enumerate}
		\item $\O(n)$ time in total for calculating positions of $P_0$ throughout the run.
		\item $\O(n)$ time for the fingerprint calculation.
		\item $\O(\frac{n}{\log n})$ hash function evaluations on the fingerprints of positions in $P_0$, for each hash function summing up to $\O(n)$ time, and an extra $\O(\log n)$ words of space for the positions counter and the rightmost picked position of all $P_i$.
		\item $\O(\tau)$ for each step-back of the rolling interval algorithm.
	\end{enumerate}
	
	All those operations sum up to $\O(n)$ total time and $\O(\frac n \tau + \log n)$ words of space, plus the amount of time consumed on step-backs for all $h_i$. In the algorithm described in the proof of Lemma~\ref{lemma:id_alg}, after each selected position a step-back may occur. This sums up to $\O(n\log n)$ time consumed for step-backs in total, and thus bounds the total work time asymptotically.
	
	In Appendix~\ref{app:fast_random_parallel_count}, we present a method to calculate $|P_i|$ or determining that it is bigger than $\frac{c'n}{\tau}$ for all $h_i$ using $\O(n)$ time and $\O(\frac{n}{\tau}+\log n)$ words of space. The main change is the way we handle step-backs. This improvement uses $\O(\frac{n}{\tau\log n})$ words of space for each hash function $h_i$ to maintain ``future" $P_i$ candidates. By doing so, we reduce the total amount of cases where we have to step back and recalculate all the \ID{}s to at most $\O(\log n)$ step-backs per hash function in the worst case. Thus, the amount of time spent on step-backs will sum up to $\O(\tau\log^2 n)$ time, which is $\O(n)$ for $\tau \leq \frac{n}{\log^2 n}$.
	
	Therefore, the following lemmas will apply immediately:
	\begin{lemma}
		There exists an algorithm that for any $i=1,2,\dots,\xi\log_3 n$, computes the sizes $|P_i|$ or determines that  $|P_i| > \frac{c'n}{\tau}$. The algorithm takes  $\O(n+\tau\log^2 n)$ time using $\O(\frac{n}{\tau}+\log n)$ words of space.
		
	\end{lemma}
	\begin{lemma}\label{lem:random_selection_whp}
		For $\log^2 n < \tau  < \O(\frac{n}{\log^2 n})$, there exists an algorithm that computes a  $(2\tau,2\tau)$-partitioning set of size $\O(\frac{n}{\tau})$ in $\O(n)$ time \whp{} using $\O(\frac{n}{\tau})$ words of space.
	\end{lemma}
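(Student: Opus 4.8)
The plan is to follow the two-step structure already sketched for Case~2. \textbf{Step~1}: build, in $\O(n)$ time \whp{}, a pair $(\phi,h_0)$ such that $\phi$ has no collisions on $S$ and the $(\log n,\log n)$-partitioning set $P_0$ induced by $\ID_{\phi,h_0}$ has size $\O(n/\log n)$. Since the relevant window length here is $\log n<\log^2 n$, this is precisely an instance handled by Case~1 above: sample $\polylog(n)$ intervals, estimate $|P_0|$ via Observations~\ref{obs:sample_whp} and~\ref{obs:sample_markov}, and redraw fresh hash functions until the estimate is small; after $\O(\log n)$ retries a good pair is found \whp{}, and the final verification costs $\O(n)$. \textbf{Step~2}: draw $\log n$ approximately min-wise hash functions $h_1,\dots,h_{\log n}\in\mathcal F$ independently, and for each $h_i$ consider the $(2\tau,2\tau)$-partitioning set $P_i\subseteq P_0$ obtained by keeping, in every length-$\tau$ window, the positions of $P_0$ that minimize $\ID_{\phi,h_i}$.

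Correctness follows from the preceding lemmas with essentially no extra work: each $P_i$ is a $(2\tau,2\tau)$-partitioning set, and by Lemma~\ref{lemma:sampling_space_exp} we have $\E[|P_i|]=\O(n/\tau)$, so Markov's inequality gives a constant $c'$ with $\Pr(|P_i|\le c'n/\tau)\ge 2/3$. Because the $h_i$ are drawn independently, the probability that \emph{every} $P_i$ exceeds $c'n/\tau$ is at most $(1/3)^{\log n}=n^{-\log_2 3}$, which is $1/\poly(n)$; hence \whp{} at least one $h_i$ yields a partitioning set of size $\O(n/\tau)$, and that set is the output. The space bookkeeping is consistent because $\tau=\O(n/\log^2 n)$ forces $n/\tau=\Omega(\log^2 n)=\Omega(\log n)$, so $\O(n/\tau+\log n)=\O(n/\tau)$.

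For efficiency, I would invoke the parallel-counting lemma stated immediately above: a single streaming sweep over $P_0$ that maintains, per function $h_i$, the current minimum-$\ID$ candidate together with $\O(n/(\tau\log n))$ buffered ``future'' candidates, computes $|P_i|$ — or certifies $|P_i|>c'n/\tau$ — for all $i$ in $\O(n+\tau\log^2 n)$ time and $\O(n/\tau+\log n)$ words. Since $\tau=\O(n/\log^2 n)$, the term $\tau\log^2 n$ is $\O(n)$, so this phase runs in $\O(n)$ time. If every function is certified too large (probability $1/\poly(n)$), restart; otherwise take a winning index $i$ and run a second identical sweep that actually emits the positions of $P_i$, again in $\O(n)$ time and, since $|P_i|=\O(n/\tau)$, in $\O(n/\tau)$ words. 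Together with Step~1 this gives $\O(n)$ total time \whp{} in $\O(n/\tau)$ words. The periodic case reduces to this one by skipping every long periodic substring with short period while sweeping, exactly as in Section~\ref{sec:randomSelection} and Appendix~\ref{app:rand_selection_periodic_blocks}.

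The main obstacle is meeting the time and space budgets of Step~2 simultaneously. Running the interval algorithm of Lemma~\ref{lemma:id_alg} naively in parallel on $\log n$ functions pays an $\O(\tau)$ step-back each time \emph{any} $P_i$ acquires a new rightmost position, which sums to $\Theta(n\log n)$ — too slow. The resolution, deferred to Appendix~\ref{app:fast_random_parallel_count}, is the buffered-candidate trick: keeping $\O(n/(\tau\log n))$ pending candidates per function caps the number of full $\ID$-recomputations at $\O(\log n)$ per function, so total step-back cost drops to $\O(\tau\log^2 n)=\O(n)$, while all the buffers together still occupy only $\O(n/\tau)$ words. The remaining care point is making $P_0$ available in left-to-right order during the same sweep without ever materializing all $\Theta(n/\log n)$ of its positions, i.e., generating $P_0$ on the fly.
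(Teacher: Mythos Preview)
Your proposal is correct and follows essentially the same approach as the paper: build $P_0$ via the Case~1 sampling method, draw $\log n$ min-wise hash functions and run the buffered parallel-counting algorithm of Appendix~\ref{app:fast_random_parallel_count} to find some $h_k$ with $|P_k|=\O(n/\tau)$ \whp{}, then compute that $P_k$ explicitly. The paper's own proof is a two-sentence summary that simply invokes the surrounding machinery of Section~\ref{sec:ens_rand_whp}; you have unpacked that machinery accurately, including the $\tau\log^2 n=\O(n)$ bound on step-back cost and the on-the-fly generation of $P_0$.
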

	\begin{proof}
		After computing the size $|P_i|$ or determining that it is too big for all $h_i$ in linear time, \whp{} there exists at least one hash function $h_k$ such that $|P_k| < \frac{c'n}{\tau}$. Then, the algorithm calculates $P_k$ explicitly in $\O(n)$ time.
	\end{proof}
	
	\subsection{Summary}
	We have proved that for all $1 < \tau < \O(\frac{n}{\log^2 n})$ there exists an algorithm that computes a $(2\tau,2\tau)$-partitioning set of size $\O(\frac n\tau)$ using $\O(\frac{n}{\tau})$ words of working space that runs in linear time \whp{}.

	\section{Overview on Methods for Long Periodic Blocks}\label{sec:periodic_overview}
	In this section we survey the main idea that is used to treat periodic blocks, which are blocks of type~\ref{item:long_blocks} due to Definition~\ref{def:landmarks_set}. The complete details regarding each section appear in the corresponding appendix. 
	
	We clarify the difficulty caused by long periodic blocks in the partitioning of $S$.
	We will focus now on the $\LCE$ computation in such a case; however, the difficulty is similar on the deterministic construction of Section~\ref{sec:better_deterministic}.
	When comparing two suffixes, $S[i..n]$ and $S[j..n]$ that share a large common prefix, of length larger than $2\delta$,  we are guaranteed that their partitioning into blocks is identical except for at most $\delta$ positions at the margins of the common substrings. 
	This guarantee is based on the locality property of the partitioning set (Property~\ref{item:blocks_sync} in Definition~\ref{def:landmarks_set}), and is  formulated in Lemma~\ref{lem:center_blocks}. 
	When we consider the case of only short blocks, we exploit this property to find selected positions with the same distance from $i$ and $j$ efficiently. Since the length of each block is at most $\tau$ in this case, such positions must exist in an offset of at most  $\delta+\tau$ positions following $i$ and $j$.
	These corresponding positions are very useful in order to skip a large common substring, which is composed of full blocks and should end at most $\delta+\tau$ characters prior to the first mismatch.
	
	However, in the general case with long periodic blocks, it is possible that the first selected positions with the same offset from $i$ and $j$ will be in $\omega(\delta)$ positions after $i$ and $j$. In such a case we cannot compare all the characters until the synchronized positions since it will take too much time.
	Thus, we utilize the fact that long blocks are corresponding to periodic substrings of $S$.  
	Given two blocks of type~\ref{item:long_blocks} that share a common substring of length at least $2\tau$, we do not need to compare the rest of the characters in order to verify that the corresponding characters match each other. This is true because the common substrings guarantee that both blocks have the same period, and therefore all the corresponding characters match each other. 
	The claim is formalized in Lemma~\ref{lem:twiceperiod_equal}.

	Hence, after the algorithm verifies that two substrings of length $\delta+2\tau$ are the same, while these substrings are fully contained in long blocks  the algorithm skips all the pairs of corresponding characters in the two blocks. 
	If the two blocks end in exactly the same offset from $i$ and $j$ then the beginning of the following blocks are correlated positions which we use for skipping over the large middle part of the common substrings.
	Otherwise, if the two blocks end in different offsets from $i$ and $j$, then due to the locality of the partitioning set, it must be that the first mismatch between the suffixes is at most $\delta$ positions after the last synchronized offset inside the long blocks. Then, in such a case we can compare at most additional $\delta$ characters until we find the first mismatch.

	\appendix

	\part*{Appendix}
	\section{Comparison between \texorpdfstring{$\LCE$}{LCE} algorithms}\label{app:table_compare}
	
	The following table introduces the previous most efficient data structures for the $\LCE$ problem together with our new results. The integer $\ell$ denoted the $\LCE$ query result.

	\begin{table}[h!]
		
		\centerfloat
		\begin{minipage}{\textwidth}
			\centerfloat
			\begin{tabular}{b{18pt}|c|c|c|c|c|c|}
				\cline{2-7} 
				&\multicolumn{4}{c|}{Data Structure} &  \multirow{2}{*}{Trade-off range} & \multirow{2}{*}{Reference}\\
				\cline{2-5} 
				&Space & Query time & Preprocessing time& Correct & &\\
				\cline{2-7} 
				\cmidrule[0pt]{2-7}
				\cline{2-7} 

				\begin{minipage}{18pt}\tiny{Monte Carlo} \end{minipage} 
				&$\frac n \tau$&$\tau $&$n$&\whpshort{}& $1\le\tau\le n$&\cite[Theorem 3.3]{GK17}\\
				\cline{2-7} 
				
				\cmidrule[0pt]{2-7}
				\cline{2-7} 
				\multirow{5}{5pt}{\rotatebox{90}{\parbox{1.5cm}{\centering Las Vegas}}}
				&$\frac n \tau$&$\tau \log \frac \ell \tau$ & $n(\tau+\log n)$ \whpshort{} &always& $1\le\tau\le n$&\cite{BGSV14}\\
				&$\frac n \tau$& $\tau$ & $n^{3/2}$ \whpshort{} &always& $1\le\tau\le n$&\cite[Section 3.5]{BGKLV15}\\
				&$\frac n \tau$& $\tau$ & $n\sqrt{\log \frac{n}{\tau}}$ \whpshort{} &always&$1\le\tau\le n$ &\cite{GK17,BGKLV15}\footnote{The $\LCE$ data structure is not introduced explicitly, but it can be constructed with the techniques of~\cite{BGKLV15} and~\cite{GK17}.}\\
				&$\frac n \tau$& $\tau$ & $n$  \textit{expected} &always& $1\le \tau\le n$ &\textbf{new}\\
				&$\frac n \tau$& $\tau$ & $n$ \whpshort{} &always& $1\le \tau\le \frac{n}{\log^2 n}$ &\textbf{new}\\
				\cline{2-7} 
				\cmidrule[0pt]{2-7}
				\cline{2-7} 
				
				\multicolumn{1}{r|}{\multirow{5}{5pt}{\rotatebox{90}{\small{Deterministic}}}}&$1$&$\ell$&$1$&always& - &\naive algorithm\\
				&$n$&$1$&$n$&always& - &Suffix tree+LCA\\				
				&$\frac n \tau$&$\tau$&$n^{2+\varepsilon}$&always& $1 \le \tau \le n$ &\cite[Section 4]{BGKLV15}\\
				
				&$\frac n \tau$& $\tau\min\{\log\tau,\log\frac n\tau\}$ & $n\tau$ &always& $1 \le \tau \le n$ &\cite[Corollary 12]{TBIPT16}\\
				
				&$\frac n \tau$& $\tau\sqrt{\log^*n}$& $n\log \tau$ &always& $1\le\tau\le \frac n{\log n}$ &\textbf{new}\\

				\cline{2-7}

			\end{tabular}
		\end{minipage}
		
		\caption{LCE data structures}
		\label{tbl:comparison}
	\end{table}

	\section{Missing Details for the \texorpdfstring{$\LCE$}{LCE} construction}\label{app:LCE_construction_details}
	
	In this appendix we complete the missing details from Section~\ref{sec:LCE_construction} about the construction of the $\LCE$ data structure, based on a $(\tau,\delta)$-partitioning set of size $\O(\frac n\tau)$. Given such a set we construct an $\LCE$ data structure in $\O(n)$ time, using $\O(\frac n\tau)$ words of space, such that any $\LCE$ query can be answered in $\O(\delta)$ time.
	
	\subsection{Missing proof}\label{app:LCE_missing_proof}
	
	\begin{proof}[{Proof of Lemma~\ref{lem:center_blocks}}]
		Let $\alpha\in \{p-i\,|\,p\in(P\cap[i+\delta..i+\ell-\delta-1])\}$. By definition, $\alpha=p-i$ such that $i+\alpha=p\in P$ and $\delta\le\alpha\le\ell-\delta-1$.
		Since $\LCE(i,j)=\ell$, we have that $S[i..i+\ell-1]=S[j..j+\ell-1]$ and in particular $S[i+\alpha-\delta..i+\alpha+\delta]=S[j+\alpha-\delta..j+\alpha+\delta-1]$. Therefore, by property~\ref{item:blocks_sync} of Definition~\ref{def:landmarks_set}, we have that $p=(i+\alpha)\in P\Rightarrow (j+\alpha)\in P$. Thus, $(j+\alpha)-j=\alpha\in \{p-j\,|\,p\in(P\cap[j+\delta..j+\ell-\delta-1])\}$. 
		The opposite direction is symmetric.
	\end{proof}

	\subsection{LCE with periodic blocks}\label{app:LCE_with_periodic_blocks}

	We will follow the discussion above in Section~\ref{sec:periodic_overview}. 
	Recall that the computation of $\LCE(i,j)=\ell$ query in the non-periodic case is composed of three phases. The first phase compares $3\delta$ characters from $i$ and $j$, the last phase reads at most $\delta+\tau$ characters prior to positions $i+\ell$ and $j+\ell$ and the second phase is responsible to all the characters in between.
	In the general case, each one of these phases might be changed due to an occurrence of a periodic block at the end of the area treated by this phase. In such a case, we cannot read the whole block, since its length might be $\omega(\delta)$. 
	However, we can use properties of periodic strings to learn about the substrings without actually read them in the query time.
	We will use  Lemma~\ref{lem:twiceperiod_equal}, and deduce for our algorithm that after reading $2\tau$ characters of two periodic blocks, the content of the remaining suffix of the blocks can be deduced from the substring the algorithm already read.
	Hence,  after reading $2\tau$ characters without any mismatch, the algorithm can skip the rest of the block until the end of the first block.

	For the first phase, if the algorithm reads $3\delta$ characters from position $i$ and $j$ without any mismatch, the algorithm finds the first selected position following $i+\delta$ and the first position following $j+\delta$ which are $\suc_P(i+\delta)$ and $\suc_P(j+\delta)$. Let us denote $\alpha_i=\suc_P(i+\delta)-i$ as the offset of the block's endpoint from $i$ and similarly we denote $\alpha_j=\suc_P(j+\delta)-j$.
	If $\alpha_i=\alpha_j$ i.e., if the positions are $\suc_P(i+\delta)=i+k$ and $\suc_P(j+\delta)=j+k$ for some $k\in\mathbb N$, we can move to the second phase of the algorithm using the blocks beginning in positions $\suc_P(i+\delta)$ and $\suc_P(j+\delta)$. 
	Otherwise, let us also define $\alpha=\min\{\alpha_i,\alpha_j\}$, the minimal of these offsets. Then $\LCE(i,j)\le \alpha+\delta$ as stated in the following lemma.
	\begin{lemma}\label{lem:end_of_periodic_block_close_to_LCE}
		If $S[i..i+3\delta]=S[j..j+3\delta]$ and $\alpha_i\ne  \alpha_j$ then $\LCE(i,j)\le \alpha +\delta$.
	\end{lemma}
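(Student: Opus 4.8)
The plan is to argue by contradiction. Suppose $\ell := \LCE(i,j) > \alpha + \delta$. Without loss of generality I would assume $\alpha = \alpha_i$, so that $\alpha_i < \alpha_j$ (the case $\alpha = \alpha_j$ follows by exchanging the roles of $i$ and $j$ throughout). Set $p := i + \alpha_i = \suc_P(i+\delta)$, so $p \in P$, and note that by definition of $\suc_P$ we have $\alpha_i = \suc_P(i+\delta) - i \ge \delta$.

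Next I would check the index bounds that make Local Consistency applicable to $p$ and to its shifted copy $j + \alpha_i$. On the left, $\alpha_i \ge \delta$ gives $p - \delta = i + \alpha_i - \delta \ge i$ and likewise $j + \alpha_i - \delta \ge j$. On the right, since $\ell$ is an integer and $\ell > \alpha_i + \delta$ we have $\ell \ge \alpha_i + \delta + 1$, hence $p + \delta = i + \alpha_i + \delta \le i + \ell - 1$ and $j + \alpha_i + \delta \le j + \ell - 1$. Therefore the equality $S[i..i+\ell-1] = S[j..j+\ell-1]$, which holds because $\LCE(i,j) = \ell$, restricts to $S[p-\delta..p+\delta] = S[(j+\alpha_i)-\delta..(j+\alpha_i)+\delta]$; all indices involved lie in the padded range, as in the remark on edge cases, so these substrings are well defined.

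Now I would apply Property~\ref{item:blocks_sync} (Local Consistency) of Definition~\ref{def:landmarks_set}: since $p \in P$ and the two length-$(2\delta+1)$ windows centered at $p$ and at $j + \alpha_i$ agree, we conclude $j + \alpha_i \in P$. But $j + \alpha_i \ge j + \delta$ while $j + \alpha_i < j + \alpha_j = \suc_P(j+\delta)$, contradicting the fact that $\suc_P(j+\delta)$ is the \emph{smallest} element of $P$ that is at least $j + \delta$. This contradiction yields $\ell \le \alpha + \delta$, as claimed.

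The proof is essentially bookkeeping, so I do not expect a genuine obstacle; the only point that deserves care is verifying that $\alpha_i \ge \delta$ and that $\ell$ exceeds $\alpha_i + \delta$ by at least one full integer, since this is precisely what guarantees the Local Consistency window around $p$ fits inside the matched region $S[i..i+\ell-1] = S[j..j+\ell-1]$ on both the $i$-side and the $j$-side.
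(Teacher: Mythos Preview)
Your proof is correct and takes essentially the same approach as the paper: assume $\LCE(i,j) > \alpha + \delta$, apply Local Consistency (Property~\ref{item:blocks_sync}) at the position $i+\alpha$ (where without loss of generality $\alpha = \alpha_i$) to conclude $j+\alpha \in P$, and derive a contradiction with the minimality of $\suc_P(j+\delta)$. Your bookkeeping is in fact tidier than the paper's, which writes ``$3\delta \le \alpha$'' where only $\delta \le \alpha$ is actually available (and is all that is needed).
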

	\begin{proof}
		Assume by contradiction that $\LCE(i,j)> \alpha +\delta$.
		Since $\delta\le\alpha <\LCE(i,j)-\delta$, we have that $S[i+\alpha-\delta..i+\alpha+\delta]=S[j+\alpha-\delta..j+\alpha+\delta]$. Thus, due to property~\ref{item:blocks_sync} of Definition~\ref{def:landmarks_set} we have $i+\alpha\in P\Leftrightarrow j+\alpha\in P$, but in this case $\alpha_i=\alpha_j$ which contradicts lemma assumption.
	\end{proof}
	
	Hence, in this case, the algorithm proceeds directly to compare the characters from $S[i+\alpha]$ and $S[j+\alpha]$. In total, the computation time for this case is $\O(\delta)$.

	At the transition from the second phase to the third phase, we have assumed in the non-periodic case that the end of the common block is at most $\tau+\delta$  characters from the end of the common substring. However, this assumption is not guaranteed anymore, since it is possible that positions $i+\ell-\delta$ and $j+\ell-\delta$ are now in a large periodic block. Thus, if the last common block ended at positions $i+k$ and $j+k$, then after comparing $\delta+\tau$ characters if the algorithm does not find any mismatch, it must be that the blocks beginning at positions $i+k$ and $j+k$ are periodic blocks (of type~\ref{item:long_blocks} from Definition~\ref{def:landmarks_set}). Then, we use the same technique as in the first phase. Let us define $\beta_j=\suc_P(i+k+1)-i$ and $\beta_j=\suc_P(j+k+1)-j$, and let $\beta=\min\{\beta_i,\beta_j\}$. Then, due to Lemma~\ref{lem:twiceperiod_equal},  $S[i..i+\beta]=S[j..j+\beta]$. Hence, it must be the case that $\beta_i\ne\beta_j$ since otherwise the second phase of the algorithm should return $k\ge \beta$.
	Thus, similar to Lemma~\ref{lem:end_of_periodic_block_close_to_LCE}, it must be that $\LCE(i,j)\le\beta+\delta$. Hence, the algorithm compares at most $\delta$ pairs of characters, beginning with $S[i+\beta]$ and $S[j+\beta]$ until a mismatch is found.

	\section{Missing Details for the  Randomized Selection}\label{app:randomSelection}
	In this appendix we complete the algorithm for the randomized selection of $(2\tau,2\tau)$-partitioning set. We begin with the missing proofs from Section~\ref{sec:randomSelection} in Appendix~\ref{app:rand_selection_missing_proofs}. Then, in Appendix~\ref{app:rand_selection_periodic_blocks} we complete the missing details for the general case where there are large periodic substrings in $S$ with small period. Together with the techniques of Section~\ref{sec:ens_rand_whp}, these appendices complete the randomized algorithm for a small $(2\tau,2\tau)$-partitioning set selection in $\O(n)$ time \whp{}.

	\subsection{Missing proofs for Section~\ref{sec:randomSelection}}\label{app:rand_selection_missing_proofs}

	\begin{proof}[{Proof of Lemma~\ref{lem:space_exp}}]
		Since $|P|\leq n$ and the probability that any collision occurred when using a random $\phi$ can be bounded by $\frac{1}{n^2}$ then by denoting $B_\phi$ as the event that $S$ has a collision using $\phi$ we can bound:
		\begin{equation*}
		\E_{\phi,h}\left[|P|\right]
		= \E_{\phi,h}\left[|P|\mid \overline{B_\phi}\right]\cdot\Pr\left[\overline{B_\phi}\right] + \E_{\phi,h}\left[|P|\mid B_\phi\right]\cdot\Pr\left[B_\phi\right]
		\leq \E_{\phi,h}\left[\,|P| \mid \overline{B_\phi}\,\right] + 1
		\end{equation*}
		Therefore, in order to prove the lemma it is sufficient to prove that $\E_{\phi,h}\left[\,|P| \mid \overline{B_\phi}\,\right] =\O(\frac{n}{\tau})$, i.e., bound the expected size of $P$ when assuming that there are no collisions.

		Assume that $\phi$ has no collisions over substrings of $S$.
		For any $i\in[n]$, let $ A_i $ be the indicator random variable of the event that $ i\in P $. Thus, $ |P| = \sum_{i=1}^{n}A_i $. We will show that $ \Pr_{\phi,h}\left[A_i=1\right] = \O\left(\frac{1}{\tau}\right) $ for all $i \in[n]$, and by linearity of expectations the lemma follows.
		
		By definition, if $ i \in P $ then there exists an index $j$ such that $ i\in [j..j+\tau-1] $ and also $ h(\phi_i) = \min_{k\in[j..j+\tau-1]}\left\{h(\phi_k)\right\} $. Because $ i\in[j..j+\tau-1] $ then either $ |[j..i]| \ge \frac{\tau}{2} $ or $|[i..j+\tau-1]| \ge \frac{\tau}{2} $. This fact together with the fact that $ h(\phi_i) $ is minimal in some interval of size $ \tau $, implies that if $ i \in P $ then at least one of the following holds:
		\[ \left(h(\phi_i) = \min_{k\in[i+1-\tau/2..i]}\left\{h(\phi_k)\right\}\right) \text{or}  \left( h(\phi_i) = \min_{k\in[i..i-1+\tau/2]}\left\{h(\phi_k)\right\} \right). \]
		
		Since $S$ contains no $(\tau,\frac{\tau}{6})$-runs, then any two overlapping occurrences of a substring of length $\tau$ are at least $\frac{\tau}{6}$ apart. Together with the fact that $\phi$ has no collisions it follows that there are at least $\frac{\tau}{6}$ distinct fingerprints in any interval of size at least $\frac{\tau}{6}$ positions. Therefore, $ |\left\{\phi_j\right\}_{j\in[i+1-\tau/2..i]}| \ge \frac{\tau}{6}$ and the same holds for the interval $ [i..i-1+\tau/2] $. Since $ \mathcal{F} $ is  $ \left(\frac{1}{2},\tau\right)$-min-wise independent, then:
		\[ \Pr_{\phi,h}\left[h(\phi_i) = \min_{k\in[i+1-\tau/2..i]}\left\{h(\phi_k)\right\}\right] \leq \dfrac{6}{\tau}\dot(1+\frac{1}{2}) = \dfrac{9}{\tau}\]
		and the same holds for the interval $ [i..i-1+\tau/2] $.
		
		Therefore, $\Pr_{\phi,h}\left[A_i =1\right]\leq \dfrac{18}{\tau}$, and thus the lemma holds.
	\end{proof}

	\begin{proof}[{Proof of Corollary~\ref{cor:randomized-partitioning-withot-period}}]
		Due to Lemma~\ref{lem:space_exp}, when $\ID_{\phi,h}$ is picked randomly the expected size of the $(2\tau,2\tau)$-partitioning set induced by this $\ID$ function is $\frac{cn}{\tau}$ for some constant $c>1$. 
		By Markov's inequality this means that $\Pr_{\phi,h}(|P|>\frac{3cn}{\tau}) < \frac 1 3$. Running the algorithm described in the proof of Lemma~\ref{lemma:id_alg} until either $P$ is found or the algorithm fails by determining that $|P|>\frac{3cn}{\tau}$ takes $\O(n)$ time, and uses at most $\O(\frac n \tau)$. 
		Since the probability of the algorithm to fail is less than $\frac 1 3$, then the number of different $\ID$ function tries is a geometric random variable with a constant number of tries in expectation.
		Therefore, the expected overall running time is still $\O(n)$. 
	\end{proof}

	\subsection{Details for Periodic Blocks} \label{app:rand_selection_periodic_blocks}

	Recall that a $(d,\rho)$-run is a substring of $S$ of length at least $\delta$ with period length at most $\rho$ (see Definition~\ref{def:drho-run}).
	We have seen in Section~\ref{sec:randomSelection} and Section~\ref{sec:ens_rand_whp} that if $S$ contains no $(\tau,\frac{\tau}{6})$-runs one can build a small partitioning set in a linear time (in expectation or with high probability).
	Now, we will show how to partition $S$ efficiently into two types of substrings, those with short periods and those which do not contain any large enough substring with a short period. 
	Later, we show that together with the techniques of Section~\ref{sec:randomSelection} a small $(2\tau,2\tau)$-partitioning set could be built easily. 
	In summary, our technique works in the following manner:
	\begin{itemize}
		\item Scanning the text in intervals of size $\frac \tau 2$.
		\item Marking all intervals that could be a part of a $(\tau,\frac{\tau}{6})$-run.
		\item Verifying which intervals are actually contained in a $(\tau,\frac{\tau}{6})$-run.
		\item Inducing a partition of the text into $(\tau,\frac{\tau}{6})$-runs and substrings that do not contain any $(\tau,\frac{\tau}{6})$-run.
		\item Partitioning all substrings that do not contain any $(\tau,\frac{\tau}{6})$-run.
		\item Combining all the partitions to create a small $(2\tau,2\tau)$-partitioning set.
	\end{itemize}
	The same technique will also work for the combination with Section~\ref{sec:ens_rand_whp}, to get a small $(2\tau,2\tau)$-partitioning set in $\O(n)$ time with high probability.
	Moreover, the technique we use will produce a forward synchronized partitioning set to allow fast \SST{} construction. 
	Notice that the choice of positions for the partitioning set in the non-periodic case is forward synchronized.
	This is a sub-result of Lemma~\ref{lem:rand_is_forward_sync}.

	\paragraph{Defining candidate intervals.} 
	We prove that for every $(\tau,\frac \tau 6)$-run there is an interval of size $\frac \tau 2$ that is fully contained in the run, and that this interval contains no other run.
	We use this fact to efficiently find all $(\tau,\frac \tau 6)$-runs.
	
	Let $\alpha$ be an integer such that $0\leq \alpha \leq \frac{2n}{\tau}$. 
	Define the $\alpha$th interval as $I_\alpha = [\frac{\alpha\tau}{2}+1..\frac{(\alpha+1)\tau}{2}]$.
	We use a fingerprint function $\psi$ that has no collisions on substrings of length $\frac \tau 6$ from $S$. 
	For all $i\in[n-\frac{\tau}{6}+1]$ denote $\psi_i=\psi(S[i..i+\frac{\tau}{6}-1])$. 
	We show that if $S'=S[i..j]$ is a $(\tau,\frac \tau 6)$-run, then the algorithm could detect $S'$ period easily using $\psi$. 
	First notice that if $S'$ is a $(\tau,\frac \tau 6)$-run, then its length is at least $\tau$.
	Therefore, the next observation follows immediately.

	\begin{observation}\label{obs:candidate_int}
		If $S' = S[i..j]$ is a $(\tau,\frac \tau 6)$-run, then there is an integer $0\leq \alpha \leq \frac{2n}{\tau}$ such that $I_\alpha = [\frac{\alpha\tau}{2}+1..\frac{(\alpha+1)\tau}{2}]\subseteq [i..j]$.
	\end{observation}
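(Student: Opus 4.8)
The plan is to prove this by a pure counting argument that uses nothing about the string beyond the length bound inherent in the definition of a run: the claim is simply that every interval of $[n]$ of length at least $\tau$ fully contains one of the aligned windows $I_\alpha$, and the $I_\alpha$'s are consecutive disjoint blocks of length $\tau/2$. Throughout I assume, as is implicit in this section, that $\tau$ is divisible by $2$ and $6$ (otherwise replace $\tau/2$ and $\tau/6$ by the appropriate floors; the argument is unchanged).

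First I would recall that, by definition, a $(\tau,\frac\tau6)$-run is a substring of length at least $\tau$, so if $S'=S[i..j]$ is such a run then $j-i+1\ge\tau$, i.e.\ $j\ge i+\tau-1$. Then I would take $\alpha$ to be the smallest nonnegative integer with $\frac{\alpha\tau}{2}+1\ge i$, equivalently $\alpha=\lceil 2(i-1)/\tau\rceil$; this is the index of the first aligned window whose left endpoint is not strictly before $i$. By minimality of $\alpha$ we have $\frac{(\alpha-1)\tau}{2}+1<i$, hence $\frac{\alpha\tau}{2}+1<i+\frac\tau2$, and therefore the right endpoint of $I_\alpha$ satisfies
\[
\frac{(\alpha+1)\tau}{2}=\Big(\frac{\alpha\tau}{2}+1\Big)-1+\frac\tau2 \;<\; i-1+\tau \;\le\; j .
\]
Combined with $\frac{\alpha\tau}{2}+1\ge i$, this gives $I_\alpha\subseteq[i..j]$, which is the desired conclusion.

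Finally I would check the stated range of $\alpha$: nonnegativity is immediate from $i\ge 1$, and since $I_\alpha\subseteq[i..j]\subseteq[n]$ we get $\frac{(\alpha+1)\tau}{2}\le n$, so $\alpha\le\frac{2n}{\tau}-1\le\frac{2n}{\tau}$. There is essentially no obstacle in this proof; the only point requiring mild care is the rounding when $\tau$ is not a multiple of $6$, which is absorbed uniformly by the floor convention above and does not affect the way the observation is used afterward.
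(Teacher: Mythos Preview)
Your proof is correct and is essentially the same approach as the paper's, which simply notes that a $(\tau,\frac\tau6)$-run has length at least $\tau$ and declares the observation to follow immediately; you have just written out the straightforward pigeonhole/counting argument in full. One tiny nitpick: when $\alpha=0$ the inequality $\frac{(\alpha-1)\tau}{2}+1<i$ does not come from minimality over nonnegative integers but still holds trivially since $1-\frac\tau2<1\le i$, so the conclusion is unaffected.
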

	
	We next prove that using $\psi$ the algorithm can detect the principal period of any 
	$(\tau,\frac \tau 6)$-run.
	
	\begin{claim}\label{claim:contained_is_verified}
		Let $\alpha$ be an integer such that $I_\alpha$ is fully contained in $S'=S[i..j]$ which is a $(\tau,\frac \tau 6)$-run, and let $D$ to be the set of positions with minimum $\psi$ in the interval $\left[\frac{\alpha\tau}{2}+1..\frac{\alpha\tau}{2}+\frac{\tau}{3}\right]$:
		$$D= \left\{\ell\in\left[\frac{\alpha\tau}{2}+1..\frac{\alpha\tau}{2}+\frac{\tau}{3}\right] \mid \psi_\ell = \underset{k\in[\frac{\alpha\tau}{2}+1..\frac{\alpha\tau}{2}+\frac{\tau}{3}]}{\min} \psi_k\right\}.$$ 
		Then, $|D|\ge 2$ and $\rho_{S'}=\min \{|\ell-\ell'|\mid \ell,\ell'\in D \text{ and } \ell\ne \ell'\}$ i.e., the minimum distance between two indices in $D$ is exactly the principal period length of $S'$.
	\end{claim}
	\begin{proof}

		We first prove that there are $\ell_1,\ell_2\in D$ such that $\ell_2-\ell_1=\rho_{S'}$. 
		Then, we prove that there are no two elements $\ell_1,\ell_2\in D$ such that $|\ell_2 - \ell_1|<\rho_{S'}$.

		Let $\rho$ be the principal period length of $S'$, since $S'$ is a $(\tau,\frac\tau6)$-run, we have $\rho\le\frac\tau6$. 
		Thus, for all  $k\in[\alpha\frac{\tau}{2}+1..\alpha\frac{\tau}{2}+\frac{\tau}{3}-\rho]$ it follows that
		$S[k..k+\frac{\tau}{6}-1] = S[k+\rho..k+\rho+\frac{\tau}{6}-1] $
		and therefore $\psi_k = \psi_{k+\rho}$.
		In particular let $\ell_1 \in\left[\frac{\alpha\tau}{2}+1..\frac{\alpha\tau}{2}+\frac{\tau}{3}\right]$ be an index such that $\psi_{\ell_1}=\min_{k\in[\frac{\alpha\tau}{2}+1..\frac{\alpha\tau}{2}+\frac{\tau}{3}]} \psi_k$ then assuming $\ell_1\le \frac{\alpha\tau}{2}+\frac{\tau}{3}-\rho$ we have for $\ell_2=\ell_1+\rho$ that $\psi_{\ell_1}=\psi_{\ell_2}$ and $\ell_2-\ell_1=\rho$. If $\ell_1>\frac{\alpha\tau}{2}+\frac{\tau}{3}-\rho$ then let $\ell_2=\ell_1-\rho$ and notice that  $\psi_{\ell_1}=\psi_{\ell_2}$ and $\ell_1-\ell_2=\rho$.
		Hence, in any case, we have two indices in $D$ that their distance is exactly $\rho$.
		
		Assume by contradiction that there are two other positions $\ell<\ell'\in D$ such that $\ell'-\ell<\rho$.
		Then $\psi_{\ell'}=\psi_\ell$. Since we assume that $\psi$ has no collisions on substrings of $S$ , it must be that $S[\ell..\ell+\frac{\tau}{6}-1]=S[\ell'..\ell'+\frac{\tau}{6}-1]$. 
		Since $\ell'-\ell < \rho \le\frac{\tau}{6}$ then the two substrings overlap and thus the substring $S[\ell..\ell'+\frac{\tau}{6}-1]$ has a period of length $\ell'-\ell$. 
		But it is also a substring of $S'$ and thus it has a period of $\rho$.
		From Theorem~\ref{thm:FineAndWilf}, since the length of $S[\ell..\ell'+\frac{\tau}{6}-1]$ is  $\ell'-\ell+\frac{\tau}{6} \ge \ell'-\ell+\rho$ then it has a period of $\gcd(\ell'-\ell,\rho)$.
		It follows from Lemma~\ref{lem:periodSubstring} that since $S[\ell..\ell'+\frac{\tau}{6}-1]$ is a substring of $S'$ and its length is more than $\rho$ then $S'$ also has a period of length $\gcd(\ell'-\ell,\rho)$.
		But since $\gcd(\ell'-\ell,\rho) \le \min\{\ell'-\ell,\rho\}=\ell'-\ell < \rho$ then $S'$ has a period of length less than $\rho$, which contradicts the fact that $\rho$ is the principal period length of $S'$.
	\end{proof}
	
	\paragraph{Finding and Verifying Candidate Intervals.}
	For every interval $I_\alpha=[\frac{\alpha_0\tau}{2}+1..\frac{(\alpha_0+1)\tau}{2}]$ we say it is a candidate if the minimum value fingerprint in the first $\frac{2\tau}{6}$ positions of the interval repeats at least twice. 
	Each candidate is potentially a part of a $(\tau,\frac{\tau}{6})$-run. 
	Checking if an interval is a candidate is done by using the fingerprint function $\psi$ and calculating it by rolling window for the first $\frac{2\tau}{6}$ positions of the interval and maintaining the value and position of the minimum fingerprint value. 
	This process will stop if the minimum fingerprint value repeats at least twice, or if none was found.
	The verification takes in the worst case $\O(\tau)$ time per interval.
	For every candidate interval, the difference between the two positions with the minimum fingerprint value is maintained. This difference is the suspected principal period of a run this interval is contained in.
	
	We say a candidate interval is \emph{verified} if it is fully contained in a $(\tau,\frac{\tau}{6})$-run. 
	From Observation~\ref{obs:candidate_int}, it follows that each $(\tau,\frac{\tau}{6})$-run has at least one verified candidate.
	The fact whether a candidate interval is contained in a $(\tau,\frac{\tau}{6})$-run can be checked in $\O(\tau)$ time by first verifying that the occurrences of the minimum fingerprint value are at most $\frac{\tau}{6}$ apart, then verifying that there is no fingerprint collision.
	If there is no collision then we now know the principal period length and the period itself. 
	By extending up to at most $\tau$ positions to the left and to the right until the period breaks we can verify that the periodic substring is at least of length $\tau$, and that it contains the candidate interval.
	Therefore, the candidate interval can be verified to be contained in a $(\tau,\frac{\tau}{6})$-run in $\O(\tau)$ time.
	
	For each interval $I_\alpha$ we spend $\O(\tau)$ time and $\O(1)$ words of space to check if it is a candidate, and then another $\O(\tau)$ time to verify if it is a part of $(\tau,\frac{\tau}{6})$-run. 
	Since there are $\O(\frac n \tau)$ intervals, then all the candidates can be found and verified in $\O(n)$ time, using $\O(\frac n \tau)$ words of space to store the verified candidates' indices. 
	Notice that the algorithm does not assume the fingerprint has no collisions. 
	The algorithm rather uses the fingerprint value to hint on positions of interest, and then the algorithm verifies the equality. 
	Since the fingerprint function is collision free \whp{}, then the algorithm is Las-Vegas, and \whp{} its expected running time is $\O(n)$.
	
	\paragraph{Overlapping Runs.}
	Remark that two maximal runs with different periods may overlap if their periods share a common substring.
	We claim next that two maximal $(\tau,\frac \tau 6)$-runs may only overlap in less than $\frac \tau 3$ positions.
	
	\begin{lemma}\label{lem:maxoverlap}
		If $u=S[i..j]$ and $v=S[i'..j']$ are two maximal overlapping $(\tau,\frac \tau 6)$-runs, meaning $i \leq i' \leq j$, then either $j-i'+1 < \frac{\tau}{3}$ or $u=v$
	\end{lemma}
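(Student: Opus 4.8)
The plan is to prove that a long overlap forces the two runs to have equal principal periods, and then that two maximal periodic substrings with a common period and a long overlap must coincide. (Here, as in the rest of this section, a $(\tau,\tfrac\tau6)$-run means a \emph{maximal} periodic substring whose principal period is at most $\tfrac\tau6$; without maximality the statement is false, e.g.\ $u=a^{\tau}$ is a proper factor of $v=a^{\tau+1}$ with $i=i'=1\le j=\tau$ and $j-i'+1=\tau\ge\tfrac\tau3$.)

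First I would reduce to the non-trivial case $j-i'+1\ge\frac\tau3$ and isolate the common factor $W:=S[i'..\min\{j,j'\}]$. Since $i\le i'\le j$ and $|v|=j'-i'+1\ge\tau$, we get $|W|=\min\{j,j'\}-i'+1\ge\min\{\,j-i'+1,\ |v|\,\}\ge\frac\tau3$. The word $W$ is a factor of $u$, hence has period $\rho_u\le\frac\tau6$, and a factor of $v$, hence has period $\rho_v\le\frac\tau6$. Because $|W|\ge\frac\tau3\ge\rho_u+\rho_v$, Theorem~\ref{thm:FineAndWilf} (Fine and Wilf) gives that $W$ has period $g:=\gcd(\rho_u,\rho_v)$. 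Now $W$ is a factor of $u$ of length $>\rho_u$ with period $g\mid\rho_u$, so by Lemma~\ref{lem:periodSubstring} the whole word $u$ has period $g$; minimality of the principal period forces $g=\rho_u$, i.e.\ $\rho_u\mid\rho_v$. The symmetric argument for $v$ gives $\rho_v\mid\rho_u$, and therefore $\rho_u=\rho_v=:\rho\le\frac\tau6$.

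Finally I would glue the two runs. The intervals $[i..j]$ and $[i'..j']$ share $[i'..\min\{j,j'\}]$, which has at least $\frac\tau3\ge\rho$ positions, and both $S[i..j]$ and $S[i'..j']$ have period $\rho$; since $i'\le j-\rho+1$, the intervals $[i..j-\rho]$ and $[i'..j'-\rho]$ together cover $[i..\max\{j,j'\}-\rho]$, so $S[p]=S[p+\rho]$ holds there and the union $S[i..\max\{j,j'\}]$ has period $\rho$. As $u=S[i..j]$ and $v=S[i'..j']$ are maximal $\rho$-periodic substrings, neither can be a proper factor of a $\rho$-periodic substring, so $[i..j]=[i'..j']$ and hence $u=v$.

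I expect the main obstacle to be the middle step, specifically making sure the constants line up so that both Fine--Wilf and the substring-to-word period transfer (Lemma~\ref{lem:periodSubstring}) apply: this is exactly what pins the period bound at $\tfrac\tau6$, since we need $\rho_u+\rho_v\le|W|$ to invoke Fine--Wilf and $g\mid\rho_u$, $g\mid\rho_v$ so that the transfer can be run on both words. Once the two principal periods are shown equal, the gluing step is routine modulo invoking maximality of runs.
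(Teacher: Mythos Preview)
Your proof is correct and follows essentially the same approach as the paper: apply Fine--Wilf to the overlap to force $\rho_u=\rho_v$, then invoke maximality of runs to conclude $u=v$. Your version is in fact slightly more careful than the paper's, which dismisses the case $i=i'$ or $j'\le j$ as ``trivial'' and works only with $i<i'\le j<j'$, whereas your use of $W=S[i'..\min\{j,j'\}]$ and the gluing to $S[i..\max\{j,j'\}]$ handles all configurations uniformly; you also make explicit that maximality is essential, which the paper's definition of a $(d,\rho)$-run leaves implicit.
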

	\begin{proof}
		The second case is trivial. Therefore, assume $i < i' \leq j < j'$.
		Let $\rho_u,\rho_v$ be the lengths of the principal periods of $u,v$ respectively. 
		By way of contradiction, assume that $j-i'+1 \ge \frac{\tau}{3}$.
		Hence, $w=S[i'..j]$ is a string such that both $\rho_u$ and $\rho_v$ are periods of $w$.
		Since $\frac{\tau}{3} \ge \rho_u+\rho_v$ then $j-i'+1 \ge \rho_u+\rho_v - \gcd(\rho_u,\rho_v)$.
		Therefore, due to Theorem~\ref{thm:FineAndWilf}, $w$ has a period length of $\gcd(\rho_u,\rho_v)$.
		
		$w=S[i'..j]$ is a substring of both $u$ and $v$ of length greater than their period lengths, and also $w$ has a period length that divides both $\rho_u$ and $\rho_v$. 
		Hence, by Lemma~\ref{lem:periodSubstring} $\gcd(\rho_u,\rho_v)$ is a period length of both $u$ and $v$, but since $\rho_u$ and $\rho_v$ are the principal period length of $u$ and $v$ respectively, and hence the minimum period length, then $\rho_u=\rho_v=\gcd(\rho_u,\rho_v)$. 
		
		Denote $\rho = \rho_u=\rho_v$. From the assumption that $i < i'$ we know that $i\leq i'-1$. 
		We also know that $S[i'-1] = S[i'+\rho-1]$ since $u=S[i..j]$ is periodic and $j\ge i'+\rho-1$. 
		But $\rho$ is also a period of $v$, hence the run $v$ can be extended to the left without breaking its periodicity. 
		This contradicts the maximality of $v$, hence it must be that $j-i'+1 < \frac{\tau}{3}$.
	\end{proof}
	
	From Lemma~\ref{lem:maxoverlap}, the next corollary follows.
	
	\begin{corollary}\label{cor:fully_con_in_one}
		An interval $I_\alpha$ may be fully contained in at most one maximal $(\tau,\frac \tau 6)$-run.
	\end{corollary}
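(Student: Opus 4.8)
The plan is to obtain this as an immediate corollary of Lemma~\ref{lem:maxoverlap}, exploiting the single quantitative fact that the candidate interval $I_\alpha=\left[\frac{\alpha\tau}{2}+1..\frac{(\alpha+1)\tau}{2}\right]$ has length exactly $\frac{\tau}{2}$, which is strictly larger than the $\frac{\tau}{3}$ overlap threshold appearing in that lemma. So the whole argument is a one-line deduction once the endpoints are tracked carefully.

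Concretely, I would argue by contradiction. Suppose $I_\alpha$ is fully contained in two distinct $(\tau,\frac{\tau}{6})$-runs $u=S[i..j]$ and $v=S[i'..j']$, and assume without loss of generality that $i\le i'$. Since both runs contain $I_\alpha$, we have $i\le i'\le \frac{\alpha\tau}{2}+1$ and $j,j'\ge \frac{(\alpha+1)\tau}{2}$; in particular $i\le i'\le j$, so $u$ and $v$ overlap in exactly the sense required by Lemma~\ref{lem:maxoverlap}, and their overlap length satisfies
$$ j-i'+1 \;\ge\; \frac{(\alpha+1)\tau}{2}-\left(\frac{\alpha\tau}{2}+1\right)+1 \;=\; \frac{\tau}{2} \;>\; \frac{\tau}{3}. $$
Hence the first alternative of Lemma~\ref{lem:maxoverlap} ($j-i'+1<\frac{\tau}{3}$) is impossible, so we are forced into the second alternative, $u=v$, contradicting the assumption that the two runs are distinct. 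Therefore $I_\alpha$ is contained in at most one $(\tau,\frac{\tau}{6})$-run.

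I do not expect any genuine obstacle here, since Lemma~\ref{lem:maxoverlap} already does all the work; the only points needing care are getting the endpoints of $I_\alpha$ right so that the overlap truly exceeds $\frac{\tau}{3}$, and noting that $\frac{\tau}{2}>\frac{\tau}{3}$ holds strictly for every $\tau\ge 1$ (so the ``$<\frac{\tau}{3}$'' branch is really excluded). This corollary is then what lets the verification procedure conclude that every $(\tau,\frac{\tau}{6})$-run is associated with a well-defined set of verified candidate intervals and is reported without duplication, which is needed for the $\O(n)$-time, $\O(\frac{n}{\tau})$-space enumeration of all runs.
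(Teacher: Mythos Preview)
Your proposal is correct and follows essentially the same approach as the paper: assume two distinct runs contain $I_\alpha$, observe that their overlap is at least $|I_\alpha|=\frac{\tau}{2}>\frac{\tau}{3}$, and invoke Lemma~\ref{lem:maxoverlap} to obtain a contradiction. Your version is simply a more carefully written expansion of the paper's one-line argument.
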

	\begin{proof}
		By a way of contradiction, if $I_\alpha$ is fully contained in at least two maximal $(\tau,\frac \tau 6)$-runs, then those two runs overlap with at least $\frac \tau 2$ characters, which contradicts Lemma~\ref{lem:maxoverlap}.
	\end{proof}
	
	Putting all together, the next lemma follows.
	
	\begin{lemma}\label{lem:run_certificate}
		At the end of the candidate interval verification process, for every maximal $(\tau,\frac \tau 6)$-run there is a verified candidate interval that is fully contained only in the run, and not in any other maximal $(\tau,\frac \tau 6)$-run. 
	\end{lemma}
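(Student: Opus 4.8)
The plan is to fix an arbitrary $(\tau,\frac\tau6)$-run $S'=S[i..j]$ and to produce a single index $\alpha$ for which $I_\alpha=[\frac{\alpha\tau}{2}+1..\frac{(\alpha+1)\tau}{2}]$ ends the verification process as a verified candidate that is fully contained in $S'$ and in no other $(\tau,\frac\tau6)$-run. Observation~\ref{obs:candidate_int} immediately supplies such an $\alpha$ with $I_\alpha\subseteq[i..j]$, so the remaining work is to check two things: (a) the verification procedure actually marks this $I_\alpha$ as verified, and (b) $I_\alpha$ cannot be fully contained in any run other than $S'$.

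For (a) I would first argue that $I_\alpha$ is flagged as a candidate, which is insensitive to fingerprint collisions: since $I_\alpha\subseteq S'$ and $\rho_{S'}\le\frac\tau6$, periodicity forces $S[k..k+\frac\tau6-1]=S[k+\rho_{S'}..k+\rho_{S'}+\frac\tau6-1]$ and hence $\psi_k=\psi_{k+\rho_{S'}}$ for every $k$ in the first $\frac\tau3-\rho_{S'}$ positions of $I_\alpha$, so every fingerprint value occurring among the first $\frac\tau3$ positions of $I_\alpha$---in particular the minimum one---occurs at least twice. Working under the assumption that $\psi$ has no collisions on length-$\frac\tau6$ substrings of $S$ (correct with high probability, and algorithmically enforced, see below), Claim~\ref{claim:contained_is_verified} then gives that the smallest gap between two minimum-fingerprint positions equals $\rho_{S'}$, i.e.\ the suspected period recorded for $I_\alpha$ is the true principal period of $S'$. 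Now the verification routine succeeds on $I_\alpha$: the minimum-fingerprint occurrences are $\rho_{S'}\le\frac\tau6$ apart; there is no collision; and extending the period $\rho_{S'}$ at most $\tau$ positions to each side until it breaks recovers a periodic substring of length at least $\tau$ containing $I_\alpha$. The last point holds because the extension stays inside $S'$ as long as the period does not break, $S'$ itself is periodic with period $\rho_{S'}$, and $|S'|\ge\tau>\frac\tau2=|I_\alpha|$, so the recovered substring already has length $\ge\tau$ and still contains $I_\alpha$ no matter where inside $S'$ the interval $I_\alpha$ lies. Thus $I_\alpha$ is a verified candidate.

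For (b) I would invoke Corollary~\ref{cor:fully_con_in_one}: an interval $I_\alpha$ is fully contained in at most one $(\tau,\frac\tau6)$-run. Since $I_\alpha\subseteq S'$, that unique run is $S'$ itself, so $I_\alpha$ is not fully contained in any other $(\tau,\frac\tau6)$-run. Combining (a) and (b), for every $(\tau,\frac\tau6)$-run $S'$ the verification process ends with a verified candidate interval contained only in $S'$, which is exactly the statement of the lemma.

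The main obstacle here is conceptual rather than computational: because the algorithm uses $\psi$ only as a hint, I must be careful that (i) no run is silently overlooked at the candidate-detection stage and (ii) no interval that truly lies inside a run is silently rejected at the verification stage. Point (i) is clean, since equal substrings always receive equal fingerprints, so the repeated-minimum test cannot fail for an $I_\alpha$ lying inside a run. Point (ii) is the delicate one; I would handle it by making explicit that the algorithm re-samples $\psi$ whenever it detects a collision during verification, so that the analysis may assume collision-freeness of $\psi$ on length-$\frac\tau6$ substrings of $S$ without loss of generality while the overall procedure remains Las-Vegas. Everything else is a routine check of the constants $\frac\tau6,\frac\tau3,\frac\tau2,\tau$, which leave ample slack.
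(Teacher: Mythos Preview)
Your proposal is correct and follows essentially the same three-step skeleton as the paper: Observation~\ref{obs:candidate_int} to obtain an interval $I_\alpha$ inside the run, Claim~\ref{claim:contained_is_verified} to argue it is detected and verified, and Corollary~\ref{cor:fully_con_in_one} for uniqueness. The paper's proof is much terser---it simply cites those three results---whereas you spell out in detail why the verification routine actually succeeds on $I_\alpha$ and you are more explicit about the collision-freeness assumption on $\psi$; but the logical content is the same.

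One small discrepancy worth noting: your suggested fix for possible collisions (re-sample $\psi$) differs from what the paper actually does. The paper's verification step checks actual character equality rather than relying on fingerprints, so a collision cannot cause a false positive; a collision could in principle cause a false negative on one particular $I_\alpha$, and the paper simply works under the standing assumption (stated just before Claim~\ref{claim:contained_is_verified}) that $\psi$ is collision-free on length-$\frac\tau6$ substrings, which holds with high probability. Either way, both your argument and the paper's proof of this lemma operate under the no-collision hypothesis, so this is a side remark rather than a flaw in your reasoning.
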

	\begin{proof}
		From Observation~\ref{obs:candidate_int} there is an interval $I_\alpha$ that is fully contained in $u$. 
		It follows from Claim~\ref{claim:contained_is_verified} that this interval matches the definition of a verified candidate interval and that in the process, the principal period of $u$ is found.
		Finally, from Corollary~\ref{cor:fully_con_in_one} follows that $I_\alpha$ is contained only in $u$.
	\end{proof}
	
	From the above lemma we learn that for every maximal $(\tau,\frac \tau 6)$-run there is at least one verified candidate that is unique to it.
	Therefore, by finding all verified candidate intervals, we get at least one ``certificate" for every maximal $(\tau,\frac \tau 6)$-run.
	Thus, all that is left is to mark all those runs.
	
	\paragraph{From candidate intervals to runs-partitioning.}
	Our aim is to partition the string to long periodic substrings with small period lengths and to substrings that contain no such substrings.
	We will build a set $Q$ containing triplets partitioning the string accordingly.
	Each triplet will contain two endpoints of a certain substring, its principal period if it is part of $(\tau,\frac \tau 6)$-run or a marker indicating that it should be considered as non-periodic.
	The triplets will be ordered by their endpoints' positions from left to right, and will cover the whole string.
	
	After finding all candidate intervals, the algorithm stores all verified candidates, meaning all intervals that are contained in some $(\tau,\frac \tau 6)$-run. 
	The algorithm scans the verified candidates from right to left. 
	We divide the scan into iterations, where at the end of each iteration a whole $(\tau,\frac \tau 6)$-run is processed and its endpoints stored in $Q$, perhaps without parts of his right margin in a case of overlapping runs.
	
	Each iteration starts with the rightmost verified candidate interval that was not marked. 
	The algorithm extends the period to the right until the period breaks or until a previously marked run is found, and to the left as long as the period does not break.
	Denote $u_i = S[q_i..q'_i]$ be the run found in the $i$th iteration.
	First, the pair $(q_i,q'_i)$ is stored in the set $Q$ and marked as a run. 
	If the pair found in the previous iteration, $(q_{i-1},q'_{i-1})$, is not adjacent to the current iteration pair --- meaning $q_{i-1}\neq q'_i+1$ --- we add the pair $(q'_i+1,q_{i-1}-1)$ to $Q$ and mark it as the endpoints of a non-run substring.
	Finally, every verified candidate interval that is fully contained in $u_i$ is marked.
	
	Since all verified candidate intervals were processed, it follows from Lemma~\ref{lem:run_certificate} that all maximal $(\tau,\frac \tau 6)$-runs were found, and are stored in $Q$ together with their principal period.
	Since there are at most $\O(\frac{n}{\tau})$ verified candidate intervals, then there are at most $O(\frac{n}{\tau})$ triplets contained in $Q$.
	
	This process scans each character at most once as a part of a $(\tau,\frac \tau 6)$-run, and at most once as a character that breaks the run.
	The verified candidate intervals' marking takes an additional negligible $\O(\frac{n}{\tau})$ time. 
	Therefore, the total running time is $\O(n)$.
	Regarding the space usage --- each iteration used constant words of space, and maintaining $Q$ takes $\O(\frac{n}{\tau})$ words of space due to the bound on its size.
	
	To conclude, using $\O(n)$ time and $\O(\frac{n}{\tau})$ words of space we found all $(\tau,\frac \tau 6)$-runs, and thus could partition it efficiently.
	
	\paragraph{From runs-partitioning to partitioning set.}
	After finding all the runs, we can construct a partitioning set.
	Basically, the technique is to use the scanning algorithm from the proof of Lemma~\ref{lemma:id_alg}, with modifications regarding runs from $Q$.
	We start with an empty set $P$ and add positions to it, picked by the scanning algorithm, as follows.
	If the first pair in $Q$ is non-run substring of length at least $\tau$, the scanning algorithm runs on its positions and updates $P$ accordingly, and finishes at the last interval of length $\tau$ contained in the substring.
	If it is of a length shorter than $\tau$, we pick no position for $P$ from that substring.
	The next pair in $Q$ marks the endpoints of a $(\tau,\frac \tau 6)$-run, denoted $u=S[q..q']$.
	First, we add $q$ to $P$. If the next pair in $Q$ is marked as a run substring, we recursively run the process with the next $(\tau,\frac \tau 6)$-run in $Q$. 
	Otherwise, $q'+1$ is an endpoint of a non-run substring.
	We skip processing the characters of the string from position $q$ until position $q'-\tau+1$.
	Then we run the scanning algorithm from this position, picking positions for $P$ from each interval of size $\tau$ contained in the substring.
	Finally, we iterate using this algorithm on all pairs of $Q$.
	
	The only exception will be the last substring in $Q$ if it is marked as a non-periodic string.
	Let the second to last substring in $Q$ be $S[q..q']$.
	If $q'\in[n-\tau+1..n-1]$, we pick $q'+1$ as the last position in $P$.
	Otherwise, we run the scanning algorithm as described above from position $q'-\tau+1$, until picking the last position of $P$ from the interval $[n-2\tau+1..n-\tau]$, as was done in the non-periodic case.
	
	By using the scan algorithm from the last $\tau$ positions of a $(\tau,\frac \tau 6)$-run, we select the periodic block end, and thus the next non-periodic short block start position according to the \ID{} values.
	This will allow us to prove that the output $P$ is a forward synchronized partitioning set.
	Firstly we prove that at the end of this process, $P$ is a $(2\tau,2\tau)$-partitioning set. 
	We then claim that this partitioning set is also forward synchronized.
	Finally, we show that $P$ can be computed in $\O(n)$ expected time, and commit that $|P|=\O(\frac{n}{\tau})$.
	
	\begin{lemma}
		Using the modified scan algorithm described above, the output $P$ is a $(2\tau,2\tau)$-partitioning set.
	\end{lemma}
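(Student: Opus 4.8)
The plan is to verify the two defining properties of Definition~\ref{def:landmarks_set} for $P$ with parameters $(2\tau,2\tau)$; write $\delta=2\tau$ throughout, as in the non-periodic case. In each property I split the argument according to whether the relevant position lies inside one of the $(\tau,\frac{\tau}{6})$-runs stored in $Q$ or inside a non-run substring. The non-run case reduces to the min-$\ID$ scanning analysis already carried out in Lemma~\ref{lemma:id_is_partition_set} (and Lemma~\ref{lemma:id_alg}), while the run case is handled with the periodicity tools already available: Lemma~\ref{lem:maxoverlap} (two overlapping $(\tau,\frac{\tau}{6})$-runs overlap in fewer than $\frac{\tau}{3}$ positions), Corollary~\ref{cor:fully_con_in_one}, Lemma~\ref{lem:periodSubstring}, and Theorem~\ref{thm:FineAndWilf}.

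For Compactness (Property~\ref{item:blocks_size}), I classify each gap between two consecutive elements of $P\cup\{1,n+1\}$ as one of two kinds. The first kind is a prefix of a $(\tau,\frac{\tau}{6})$-run: it is either the whole (possibly overlap-trimmed) run $[q,q']$, when the pair following a run in $Q$ is again a run, or the segment $[q,m)$ from the run's left endpoint $q$ to the first min-$\ID$ position $m$ picked while re-scanning the last $\tau$ characters of the run. Such a gap is a substring of a run, hence periodic with period $\rho\le\frac{\tau}{6}\le 2\tau$; if its length exceeds $2\tau$ it is a legal periodic block (type~\ref{item:long_blocks}) and otherwise a legal regular block (type~\ref{item:regular_block}) of length $\le 2\tau$, with no periodicity requirement; by Lemma~\ref{lem:maxoverlap} the overlap-trimming removes fewer than $\frac{\tau}{3}$ characters and so does not affect this dichotomy. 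The second kind is a gap produced by the scanning algorithm inside a scanned region; since the scan selects the minimum-$\ID$ position of every window of $\tau$ consecutive positions it processes, such a gap has length at most $\tau<2\tau$ and is a regular block. It then remains to check the two boundary gaps: the gap before $p_1=\min P$ has length less than $\tau$ because the leading substring is either a run starting at position $1$, or a non-run of length at least $\tau$ whose first scanned window is $[1..\tau]$, or a short leading non-run followed by a run whose start lies in $[1..\tau]$; the gap after $p_h=\max P$ has length at most $2\tau$ by the explicit handling of the trailing substring, exactly as in the proof of Property~\ref{item:blocks_size} in Lemma~\ref{lemma:id_is_partition_set}.

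For Local Consistency (Property~\ref{item:blocks_sync}), assume $S[i-2\tau..i+2\tau]=S[j-2\tau..j+2\tau]$; by symmetry it suffices to prove $i\in P\Rightarrow j\in P$. The key point is that each route by which a position can enter $P$ is witnessed by $S$ on a window of radius at most $2\tau$ about the position: if $i=q_u$ is the left endpoint of a run, this is witnessed by the period break $S[q_u-1]\ne S[q_u-1+\rho]$ together with $\tau$ periodic characters to its right, all inside $[i-\frac{\tau}{6}-1..i+\tau-1]$; if $i$ is picked by the scanning algorithm, membership depends only on the $\ID$ values of the positions in $[i-\tau+1..i+\tau-1]$, hence only on $S[i-\tau+1..i+2\tau-2]$, together with the identity of the scanned region containing $i$ and the location of that region's endpoints. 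Because $S$ agrees around $i$ and $j$ out to radius $2\tau$, each such witness transfers verbatim to $j$, provided we first know that the run/non-run decomposition is itself locally consistent --- that $i$ lies inside, resp. at the left endpoint of, a $(\tau,\frac{\tau}{6})$-run if and only if $j$ does, with the same principal period and the same offset within that run. Granting this, the scanning-algorithm half of the argument is a verbatim repetition of the proof of Property~\ref{item:blocks_sync} in Lemma~\ref{lemma:id_is_partition_set} applied inside each non-run region, and the run-endpoint half is immediate from the period-break witness; hence $j\in P$, and the reverse implication follows by the same argument with $i$ and $j$ exchanged.

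The main obstacle is exactly the local consistency of the run/non-run decomposition, and inside it the overlap-trimmed right endpoints: a trimmed right endpoint $q'_u$ equals $q_v-1$, where $q_v$ is the left endpoint of the run $v$ that was processed (right-to-left) just before $u$ and overlaps it. I plan to show $q'_u$ is determined by $S$ within $\O(\tau)$ of $q'_u$ by combining Lemma~\ref{lem:maxoverlap} (so $q_v$ lies within $\frac{\tau}{3}$ of the natural right end of $u$), Corollary~\ref{cor:fully_con_in_one}, and Theorem~\ref{thm:FineAndWilf} with Lemma~\ref{lem:periodSubstring} (so that $q_v$ is pinned down by $v$'s own period break $\frac{\tau}{6}$ characters to its left plus $\tau$ periodic characters to its right, and cannot be mistaken for a continuation of $u$'s period). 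This is the single place where the periodic case genuinely departs from the non-periodic proof of Lemma~\ref{lemma:id_is_partition_set}, where all blocks are regular and the decomposition is trivially local.
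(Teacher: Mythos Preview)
Your proposal is correct and follows essentially the same two-case split (run versus non-run substring of the $Q$-decomposition) that the paper uses, reducing the non-run case to the $\ID$-based argument of Lemma~\ref{lemma:id_is_partition_set} and the run case to the locality of run endpoints.

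One remark on emphasis: you single out the overlap-trimmed right endpoint $q'_u=q_v-1$ as the main obstacle, but in fact that endpoint never needs to be pinned down to determine membership in $P$. When a run $u$ is right-trimmed by an overlapping run $v$, the pair in $Q$ following $u$ is a run, so the algorithm adds only $q_u$ to $P$ and recurses on $v$ without scanning the tail of $u$; the last-$\tau$-positions scan is invoked only when the subsequent pair is a non-run, in which case $q'_u$ is the \emph{natural} right endpoint (the period break), which is plainly witnessed inside $[i-2\tau..i+2\tau]$. So the data you actually need to transfer from $i$ to $j$ are: (i) left endpoints of runs, witnessed by a period break one step to the left together with $\tau$ periodic characters to the right; (ii) natural right endpoints of runs, witnessed symmetrically; and (iii) whether the substring immediately following the run is itself a run, which by Lemma~\ref{lem:maxoverlap} is decided within $\frac{\tau}{3}+\tau<2\tau$ characters. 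The paper's proof records exactly these three facts (somewhat tersely) rather than proving full local consistency of the $Q$-decomposition. Your stronger claim that the entire $Q$-decomposition is $\O(\tau)$-local is true and your plan for it is sound, but it is a bit more than what the lemma requires.
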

	\begin{proof}
		We show that $P$ has the properties of a $(2\tau,2\tau)$-partitioning set according to Definition~\ref{def:landmarks_set}.
		
		\paragraph{Property~\ref{item:blocks_sync}.} 
		Let $i,j\in[1+2\tau..n-2\tau]$ such that $S[i-2\tau..i+2\tau]=S[j-2\tau..j+2\tau]$.
		Denote $u_i,u_j$ the substrings induced from the set $Q$ such that $i,j$ are positions within the substrings $u_i,u_j$ respectively. 
		
		Notice that $u_i$ is either fully contained in $S[i-2\tau..i+2\tau]$ or it intersects with at least $2\tau$ characters of it, and the same holds for $u_j$ and  $S[j-2\tau..j+2\tau]$. 
		Therefore, if $u_i$ is a $(\tau,\frac{\tau}{6})$-run, then position $i$ is contained in such a run.
		Thus, it follows that $u_i$ is a $(\tau,\frac \tau6)$-run if and only if $u_j$ is a $(\tau,\frac \tau6)$-run.
		Hence, both $u_i$ and $u_j$ are marked the same in $Q$.
		We shall differentiate each possibility.

		\begin{enumerate}
			
			\item \emph{($u_i$ and $u_j$ are marked as non-run substrings)} 
			In this case, positions are picked for $P$ by their \ID{}s. 
			Since $S[i-2\tau..i+2\tau]=S[j-2\tau..j+2\tau]$ then from the \ID{} definition it follows immediately that $i\in P \Leftrightarrow j\in P$.
			
			\item \emph{($u_i$ and $u_j$ are both marked as run substrings)}
			Only two positions are picked for $P$ within a $(\tau,\frac \tau 6)$-run: The first position, and the position with the minimum \ID{} out of the last $\tau$ positions of the run if the adjacent substring from $Q$ is marked as a non-run substring.
			First, since $S[i-2\tau..i+2\tau]=S[j-2\tau..j+2\tau]$ then if $i$ is at the start of the run $u_i$ then $j$ is at the start of the run $u_j$, and vice versa.
			Therefore, in this case $i\in P \Leftrightarrow j\in P$.
			
			Notice that, again, since $S[i-2\tau..i+2\tau]=S[j-2\tau..j+2\tau]$, then if $i$ is in the last $\tau$ positions of the run $u_i$, so does $j$ (regarding the run $u_j$).
			Moreover, they are at the same distance from the end of the run, and since the substrings from $Q$ consequent to $u_i,u_j$ respectively share at least $\tau$ characters, then those subsequent substrings are of the same type.
			Therefore, either those subsequent substrings are both $(\tau,\frac \tau 6)$-run, and thus neither $i$ nor $j$ is picked for $P$.
			If those substrings are both non-runs, then positions are picked by their \ID{} values.
			As explained in the case of non-run substrings, since both positions will be selected by their \ID{} value it yields that $i\in P \Leftrightarrow j\in P$.
		\end{enumerate}
		
		\paragraph{Property~\ref{item:blocks_size}.}
		Let $1=p_0<p_1<\cdots<p_k<p_{k+1}=n+1$ be all the positions in the partition $P\cup\{1,n+1\}$.
		
		If $1=p_0\notin P$ then no $(\tau,\frac \tau 6)$-run start at $p_0$.
		Therefore, there is either a $(\tau,\frac \tau 6)$-run that starts at a position $k\in[1..\tau]$ or the algorithm will pick a position $k\in[1..\tau]$ for $P$ using the \ID{} function.
		Hence, in the case where $1=p_0\notin P$, it follows that $p_1-p_0 \le 2\tau$.
		
		As for the general case, let $p_j<p_{j+1}$ be two consecutive positions in $P\cup\{n+1\}$. 
		We would like to prove that either $p_{j+1}-p_j \le 2\tau$ or that $S[p_j..p_{j+1}-1]$ is a periodic substring with period length smaller than $2\tau$.
		There are 3 cases: $p_{j}$ is the beginning of a periodic block, $p_{j}$ is a position selected for $P$ due to its \ID{} value, or $p_j$ is the last position picked for $P$ --- selected after a $(\tau,\frac \tau 6)$-run that ended in the last $\tau$ characters of $S$.
		The last case is trivial since by definition $p_{j+1}-p_j = n+1 - p_j < \tau$.
		
		Assume that $p_{j}$ is the start of a $(\tau,\frac{\tau}{6})$-run. 
		Although it might be that $p_{j+1}-p_j > 2\tau$, since $S[p_j..p_{j+1}-1]$ is a periodic substring with period smaller than $\frac{\tau}{6}$, then this block complies with the requirement of Property~\ref{item:long_blocks} of the partitioning set.
		
		On the other hand, assume that $p_{j}$ was selected for $P$ due to its \ID{} value. 
		If $p_{j+1} = n+1$ then $p_j$ was selected for $P$ due to its minimal \ID{} value in the interval $[n-2\tau-1..n-\tau]$.
		If $p_{j+1}$ was also selected due to its \ID{} value, then from the \ID{} selection criteria it must hold that $p_{j+1} - p_j \le 2\tau$. 
		Otherwise, if $p_{j+1}$ is the first position in a $(\tau,\frac \tau 6)$-run, then $p_j$ must be within at most $2\tau$ characters, and Property~\ref{item:blocks_size} holds.
	\end{proof}
	
	We now prove that $P$ is forward synchronized partitioning set. 
	Recall that due to Definition~\ref{def:forward_sync}, we have to prove that for any pair $p_i,p_j\in P$ such that $\LCE(p_i,p_j)>p_{i+1}-p_i+\delta$ then $p_{i+1}-p_i=p_{j+1}-p_j$, where $p_{i+1}$ and $p_{j+1}$ are the consecutive positions of $p_i$ and $p_j$ in $P$ respectively, and $\delta$ is the locality factor of the partitioning set.
	
	\begin{lemma}\label{lem:rand_is_forward_sync}
		Using the modified scan algorithm described above, the output $P$ is a forward synchronized $(2\tau,2\tau)$-partitioning set.
	\end{lemma}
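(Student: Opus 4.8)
The plan is to prove forward synchronization by splitting on the type of the block $[p_i,p_{i+1}-1]$ in Definition~\ref{def:landmarks_set} and, in each case, arguing that $d:=p_{i+1}-p_i$ is determined by $S[p_i..p_i+\ell-1]$ alone, where $\ell:=\LCE(p_i,p_j)$. Throughout we use the hypothesis $\ell>d+2\tau$: then $S[p_i..p_i+\ell-1]=S[p_j..p_j+\ell-1]$, and since every $\ID$ value depends only on a length-$\tau$ window, $\ID(p_i+m)=\ID(p_j+m)$ for all $0\le m\le \ell-\tau$; as $\ell-\tau>d+\tau$, the two $\ID$-sequences agree on the offset range $[0,d+\tau]$, and this is the only range that will enter the proof. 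Note that Lemma~\ref{lem:center_blocks} already matches block boundaries strictly inside the common substring, with a margin $\delta=2\tau$ on each side; the substance of this lemma is the extra claim that the first boundary after $p_i$ also transfers, i.e. that the selection of $p_{i+1}$ has no hidden dependence on the characters to the left of $p_i$.

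\textbf{Regular block.} Here $p_i$ was chosen by the $\ID$-minimum rule and $d\le\tau$ (for the first or last block the bound is $2\tau$, but those cases are trivial after the $\#$-padding). I would prove $p_{j+1}-p_j=d$ by showing (a) $p_j+d\in P$ and (b) $p_j+m\notin P$ for every $0<m<d$, both via the same mechanism. The starting point is the characterization that a position $q$ (in a non-periodic region) lies in $P$ iff the distance between the nearest position of strictly smaller $\ID$ to its left and the nearest such position to its right exceeds $\tau$. Using this, the fact $p_i\in P$ forces the $\ID$-values along a length-$\Theta(\tau)$ stretch ending at $p_i$ to be at least $\ID(p_i)$, and likewise $p_j\in P$ forces the $\ID$-values just to the left of $p_j$ to be at least $\ID(p_j)=\ID(p_i)$; combined with the transferred forward $\ID$-values on $[0,d+\tau]$, this pins the nearest-smaller-$\ID$ left neighbours of all candidate positions in $(p_j,p_j+d]$ into positions at which the $P$-membership test is decided purely by that transferred data, hence decided exactly as for the corresponding positions in $(p_i,p_i+d]$. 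Since $p_i+d=p_{i+1}\in P$ and $p_i+m\notin P$ for $0<m<d$, (a) and (b) follow, so $p_{j+1}-p_j=d$.

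\textbf{Periodic block.} Now $p_i$ is the left endpoint of a $(\tau,\tfrac\tau6)$-run $u=S[p_i..\rv(p_i)-1]$ of principal period $\rho\le\tfrac\tau6$ and length $r:=\rv(p_i)-p_i$; by construction $d\in[r-\tau,r-1]$, and $\ell>d+2\tau$ gives $\ell>r$, so the whole run lies inside the common substring. First I would show that $p_j$ also begins a $(\tau,\tfrac\tau6)$-run of period $\rho$ and length $r$: $S[p_j..p_j+r-1]$ is periodic of period $\rho$; it cannot extend to the right because $S[p_i+r]\ne S[p_i+r-\rho]$ transfers; it cannot be contained in a run of smaller period, by the Fine--Wilf argument of Lemma~\ref{lem:maxoverlap} together with Lemma~\ref{lem:periodSubstring}; and $p_j$ must itself be a run-start, since otherwise $p_j$ would open a regular block and $p_{j+1}-p_j=d>\tau$ would be impossible. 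Finally, the algorithm places $p_{i+1}$ by the $\ID$-rule applied to the window(s) at the right end of $u$, all contained in $[p_i,p_i+r+\tau-1]$ and hence in the common substring, so the deciding $\ID$-values transfer and $p_j$'s run is split at the same offset; thus $p_{j+1}-p_j=d$. Combining the two cases gives the lemma, and in particular the non-periodic scan's choices being forward-synchronized is the sub-result mentioned just before the lemma statement.

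\textbf{Main obstacle.} The delicate point is the regular-block case: the sliding-window algorithm of Lemma~\ref{lemma:id_alg} reaches $p_i$ in a state (which length-$\tau$ interval $p_i$ was selected from) that depends on the left context, and a candidate position $q\in(p_i,p_{i+1}]$ can have its nearest-smaller-$\ID$ left neighbour lying to the left of $p_i$, so one cannot simply declare $p_{i+1}$ ``forward-determined''. Converting the informal ``$p_i\in P$ and $p_j\in P$ push the relevant left neighbours far enough left'' remarks above into a clean extremal argument --- and carrying it out while respecting the tie-breaking conventions inherited from the fingerprint and approximately-min-wise functions and the leftmost-minimum convention of the scan --- is where the real work lies. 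The periodic case is comparatively routine once Lemma~\ref{lem:twiceperiod_equal}, Fine--Wilf, and the run-detection machinery of Appendix~\ref{app:rand_selection_periodic_blocks} are available.
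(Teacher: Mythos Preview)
Your decomposition by block type (regular vs.\ periodic) roughly parallels the paper's, but the paper splits on how \emph{each} of $p_i,p_{i+1}$ was selected (by the $\ID$ rule or as a run-start), yielding four cases; your two cases cover only two of them, and the regular case contains a false intermediate claim.

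\textbf{Regular block.} The assertion that $p_i\in P$ forces the $\ID$-values on a length-$\Theta(\tau)$ stretch \emph{ending at} $p_i$ to be at least $\ID(p_i)$ is wrong: membership in $P$ only says the nearest-smaller gap $b_i-a_i$ exceeds $\tau$, and that gap may lie almost entirely to the right of $p_i$ (take $a_i=p_i-1$, $b_i=p_i+\tau$). The paper instead asserts --- tersely and without proof --- that the successor of an $\ID$-selected $p_i$ is either the first $q>p_i$ with $\ID(q)\le\ID(p_i)$, or the argmin of $\ID$ on $[p_i+1,p_i+\tau]$. This \emph{can} be derived from your nearest-smaller characterization, but the key step your sketch lacks is: if $r$ is the first position after $p_i$ with $\ID(r)<\ID(p_i)$ then $r\in P$ regardless of left context, because $r=b_i$, hence $a_i<r-\tau$, and $r$'s own nearest-smaller-left lies at or left of $a_i$, giving a gap $>\tau$ at $r$ as well. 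With this, $p_{i+1}-p_i$ becomes a function of $\ID(p_i),\dots,\ID(p_i+\tau)$ alone, and the obstacle you flag disappears.

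\textbf{Missing sub-cases.} Your regular-block argument tacitly assumes $p_{i+1}$ is also $\ID$-selected, and your periodic-block argument explicitly says $p_{i+1}$ is placed ``by the $\ID$-rule at the right end of $u$''. Neither handles $p_{i+1}$ being itself a run-start: two adjacent $(\tau,\tfrac\tau6)$-runs (the paper's case~2) or an $\ID$-selected $p_i$ immediately preceding a run (the paper's case~4). Both need separate short arguments --- the common extension contains at least $2\tau$ characters past $p_{i+1}$, hence enough of the next run to force a $(\tau,\tfrac\tau6)$-run starting at $p_j+d$ as well, after which one checks that no earlier position in $(p_j,p_j+d)$ can have been selected.
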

	\begin{proof}
		Let $p_i,p_j\in P$ such that $\LCE(p_i,p_j) > p_{i+1}-p_i + 2\tau$. 
		Let $\Delta = p_{i+1}-p_i$.
		Hence, $S[p_i..p_{i+1} + 2\tau] = S[p_j..p_{j} + \Delta + 2\tau]$.
		There are four cases for $p_i$ and $p_{i+1}$:
		\begin{enumerate}
			\item Assume both $p_i$ and $p_{i+1}$ were picked to $P$ by their \ID{} value. 
			Therefore, $\ID(p_{i+1})$ was either the first \ID{} after $p_i$ such that $\ID(p_i) \geq \ID(p_{i+1})$ or it has the minimum value in the interval $[p_i +1 .. p_i + \tau]$.
			In either case since $S[p_i..p_{i+1} + 2\tau] = S[p_j..p_{j} + \Delta + 2\tau]$ then the \ID{}s in that interval are equal and thus it also holds that $p_j$ is not the start of a periodic block and that $p_{j+1} = p_j + \Delta$.
			
			\item Assume both $p_i$ and $p_{i+1}$ are the start of periodic blocks. 
			Therefore, $p_{i+1}$ is the start of a $(\tau,\frac{\tau}{6})$-run with period that is different then the period of the $(\tau,\frac{\tau}{6})$-run that starts at $p_i$. 
			But since $S[p_i..p_{i+1} + 2\tau] = S[p_j..p_{j} + \Delta + 2\tau]$ then $p_{j}$ is also a start of a  $(\tau,\frac{\tau}{6})$-run, and $p_{j} + \Delta$ is the start of  $(\tau,\frac{\tau}{6})$-run with different period then  the run that starts at $p_{j}$.
			Therefore, $p_{j+1} = p_{j} + \Delta$.
			
			\item  Assume that $p_i$ is the start of a periodic block and $p_{i+1}$ was selected for its \ID{} value.
			By the position selection algorithm, it means that $p_{i+1}$ is selected from the last $\tau$ positions of the $(\tau,\frac \tau 6)$-run that starts at $p_i$, and thus the run length is less than $\ell < \Delta + \tau$.
			By the selection algorithm selecting $p_{i+1}$ means that $\ID(p_{i+1}) = \min\{\ID(k)\mid k\in[p_i+\ell-\tau..p_i+\ell-1]\}$. 
			Those \ID{}s are based solely on the characters in $[p_i+\ell -\tau..p_i+\ell+\tau-2] \subseteq [p_i..p_i+\Delta+2\tau]$.
			Since $S[p_i..p_{i+1} + 2\tau] = S[p_j..p_{j} + \Delta + 2\tau]$, then $p_j$ is also a start of a $(\tau,\frac \tau 6)$-run, with the exact same length $\ell$, and also $\ID(p_{j}+\Delta) = \min\{\ID(k)\mid k\in[p_j+\ell-\tau..p_j+\ell-1]\}$.
			Therefore, $p_{j+1} = p_j + \Delta$.
			
			\item  Assume that $p_i$ is selected for its \ID{} value and that $p_{i+1}$ is the start of a $(\tau,\frac \tau 6)$-run.
			This means that $\Delta < 2\tau$ and that $\ID(p_i) = \min\{\ID(k)\mid k\in[p_i..p_{i+1}-\tau]\}$.
			Since $S[p_i..p_{i+1} + 2\tau] =\newline S[p_j..p_{j} + \Delta + 2\tau]$ then at $p_j +\Delta$ a periodic string of length at least $\tau$ and of period at most $\frac{\tau}{6}$ starts, right after a non-periodic sub-string.
			Hence, $p_j+\Delta$ will be chosen for $P$.
			In addition, it also follows that $\ID(p_j) = \min\{\ID(k)\mid k\in[p_j..p_{j} +\Delta-\tau]\}$.
			Therefore, no other selection is picked between $p_j$ and $p_j+\Delta$.
			Thus, $p_{j+1} = p_j + \Delta$.
		\end{enumerate} 
	\end{proof}
	
	Now that we ensured that $P$ is a forward synchronized $(2\tau,2\tau)$-partitioning set, finally we show that the process of selecting $P$ using the runs-partition will take expected linear time and ensures that $|P|=\O(\frac n \tau)$.
	From Lemma~\ref{lemma:id_alg}, we know that using the $\ID$ function it is possible to select a partitioning set of a non-periodic substring using time linear in the substring's length, with an additional $\O(\tau)$ time needed for every step back.
	When processing $S$ we skip all $(\tau,\frac \tau 6)$-run completely, other than the last $\tau$ characters, and every position is processed as part of only one non-periodic substring.
	Therefore, similarly to the results in Section~\ref{sec:randomSelection} a partitioning set of size $\O(\frac n \tau)$ is expected to be built in $\O(n)$ time.
	
	Remark that the techniques introduced in Section~\ref{sec:ens_rand_whp} work similarly in linear time.
	
	\paragraph{Conclusion.}
	In order to build a partitioning set of a general string $S$ containing $(\tau,\frac \tau 6)$-runs we broke the problem to three steps.
	We used $\O(n)$ expected time with high probability and $\O(\frac n \tau)$ space to find intervals that might contain a $(\tau,\frac \tau 6)$-run.
	Then another $\O(n)$ time and $\O(\frac n \tau)$ space to find all $(\tau,\frac \tau 6)$-runs and building $Q$ to partition $S$ to different $(\tau,\frac \tau 6)$-runs and to contiguous substring that contain non of those.
	Finally, another $\O(n)$ expected time and $\O(\frac n \tau)$ space were used to create a $(2\tau,2\tau)$-partition-set, or $\O(n)$ expected time with high probability and $\O(\frac n \tau + \log n)$ space using the techniques of Section~\ref{sec:ens_rand_whp}.
	Summing everything up we get the results of Theorem~\ref{thm:lasvegasLCE} for a general string $S$.
	
	\subsection{Faster Partitioning Set Counting For High Probability Method}\label{app:fast_random_parallel_count}
	
	As described in Section~\ref{sec:ens_rand_whp}, in order to reduce the time the algorithm spends on stepping back while scanning the string in order to calculate $|P_i|$ for all $h_i$, the algorithm maintains future candidates in advance --- up to $\O(\frac{n}{\tau \log n})$ per hash function.

	The scanning algorithm will have to step-back after the $j$th iteration if $\ID_{\phi,h_i}(j) < \ID_{\phi,h_i}(k)$ for all $k\in[j+1..j+\tau-1]$.
	In this case the position with the minimal value in $[j+1..j+\tau]$ need to be picked.
	By maintaining the candidates for several windows in advance, the algorithm avoids stepping back when the most recent candidate was out of the current window. 
	A position $\ell$ is a candidate in the $j$th iteration if $j \leq \ell < j+\tau-1$  and $\ID_{\phi,h_i}(\ell) \leq \ID_{\phi,h_i}(k)$ for all $k\in[\ell+1..j+\tau-1]$. 
	Remark that by definition, if the candidates at a certain iteration are $\{\ell_1,\dots,\ell_q\}$ such that $\ell_1 < \cdots < \ell_q$ then $\ID_{\phi,h_i}(\ell_1) \leq\cdots \leq \ID_{\phi,h_i}(\ell_q)$. 
	Hence, at the end of each iteration the leftmost candidate $\ell_1$ is a part of the partitioning sets.
	
	For a certain hash function $h_i$ the candidates and their IDs are maintained in a deque, denoted $D_i$.
	Candidates are pushed into $D_i$ from its right side in an ordered way in both position index and ID value.
	After calculating the ID of some position $\ell$ the algorithm peeks at the right end of $D_i$ and it pops all the obsolete candidates $(k,\ID_{\phi,h_i}(k))$ such that $\ID_{\phi,h_i}(\ell) < \ID_{\phi,h_i}(k)$, and then pushes $(\ell,\ID_{\phi,h_i}(\ell))$ to the right end of $D_i$.
	When $D_i$ reaches the point where it contains $\frac{c'n}{\tau\log n}$ candidates, a flag denoted $f_i$ is risen.
	While $f_i$ is raised, no new candidates are added to $D_i$ unless they remove other candidates (meaning only better candidates are inserted). 
	If such a better candidate is found, the flag $f_i$ is lowered back.
	
	The algorithm maintains throughout its run the rightmost position picked for each $P_i$ --- denoted $q_i$ and the current size $|P_i|$.
	It processes the string as follows for every $i$. 
	Insert all the candidates from $[1..\tau]$ into $D_i$ according to the insertion rules from above.
	Remember the leftmost position in $D_i$ for the partitioning set $P_i$, and increment $|P_i|$ value.
	Afterward the algorithm works in iterations.
	In the $j$th iteration check if the leftmost candidate in $D_i$ is the position $j-1$. 
	If so, pop it from the left end of $D_i$.
	If $D_i$ is not empty --- insert $j+\tau-1$ and its ID into $D_i$ according to the insertion rules from above.
	After inserting a new position to $D_i$, check if the leftmost position in $D_i$ has already been picked for $P_i$ by comparing it to $q_i$.
	If the leftmost position in $D_i$ was not picked for $P_i$ yet, update $q_i$ with the new position and increment the current size $|P_i|$.
	Otherwise, if $D_i$ is empty, it must be because the deque $D_i$ was full, and its flag $f_i$ was raised and never lowered.
	So first --- lower the flag $f_i$.
	Then remark that there is no knowledge on any candidates in the current interval, thus the algorithm needs to step-back and recalculate the fingerprints and \ID{}s of all the positions in $[j..j+\tau-1]$, exactly as it did in the first iteration.
	
	\begin{observation}
		A recalculation of $\tau$ fingerprints and the IDs of candidates may occur at most $\log n$ times.
	\end{observation}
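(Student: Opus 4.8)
The plan is to charge every recalculation (for a fixed hash function $h_i$) to a batch of $\tfrac{c'n}{\tau\log n}$ distinct positions that have just been counted into $P_i$, and then to invoke the fact that the parallel counting procedure abandons $h_i$ as soon as it detects $|P_i|>\tfrac{c'n}{\tau}=\tfrac{c'n}{\tau\log n}\cdot\log n$, so that at most $\log n$ such batches can occur.

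First I would pin down exactly when a recalculation of a whole window $[j..j+\tau-1]$ happens: by construction this is precisely the iteration in which the deque $D_i$ becomes empty. I would then argue that $D_i$ can become empty only through an uninterrupted run of left-pops that begins from a moment when $D_i$ held exactly $\tfrac{c'n}{\tau\log n}$ candidates. Indeed, whenever ordinary insertion takes place (flag $f_i$ down), the newly pushed position $j+\tau-1$ remains in $D_i$ unless evicted by a strictly smaller ID, so the deque stays non-empty and is continually fed; hence for $D_i$ to empty, the last several iterations must have had $f_i$ raised, and while $f_i$ is raised the only pushes allowed are evictions-by-a-better-candidate, each of which immediately lowers $f_i$ again and thus would restart the feeding. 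Therefore between the last instant $D_i$ was at capacity and the instant it empties, no better candidate is pushed, and the deque is drained by exactly $\tfrac{c'n}{\tau\log n}$ left-pops.

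Next I would show each such left-pop is chargeable to a fresh element of $P_i$. A candidate is removed from the left end of $D_i$ only after having been, for at least one iteration, the leftmost candidate; and the first time a position becomes leftmost the algorithm compares it with $q_i$ and counts it into $|P_i|$ (a position is leftmost only once, so it has not been counted before). Hence the $\tfrac{c'n}{\tau\log n}$ positions left-popped in one drain are distinct members of $P_i$; moreover, right after the triggered recalculation $D_i$ is refilled from the current window, all of whose positions lie strictly to the right of every position left-popped so far (the scan index never decreases), so the batches charged to different recalculations are pairwise disjoint.

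Putting it together: after $r$ recalculations for $h_i$ we have $|P_i|\ge r\cdot\tfrac{c'n}{\tau\log n}$, so a $(\log n+1)$-st recalculation would force $|P_i|\ge(\log n+1)\tfrac{c'n}{\tau\log n}>\tfrac{c'n}{\tau}$, at which point $h_i$ has already been abandoned; hence $r\le\log n$. The part I expect to require the most care in the full write-up is exactly the second step, the bookkeeping around $f_i$: one must verify that no interleaving of ordinary insertions, evictions-by-better-candidate (which lower $f_i$), and left-pops ever lets $D_i$ empty ``for free'' with fewer than $\tfrac{c'n}{\tau\log n}$ left-pops since it was last full — which is precisely what the raise/lower discipline on $f_i$ is designed to prevent.
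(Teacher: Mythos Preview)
Your proposal is correct and follows essentially the same approach as the paper's proof: both argue that a recalculation forces $f_i$ to have been raised and never lowered since the deque was last full, so each recalculation is preceded by $\tfrac{c'n}{\tau\log n}$ left-pops that each contribute a fresh position to $|P_i|$, and the abandonment threshold $\tfrac{c'n}{\tau}$ then caps the number of recalculations at $\log n$. Your write-up is in fact more careful than the paper's about the flag bookkeeping and the disjointness of the charged batches, both of which the paper leaves implicit.
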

	\begin{proof}
		A recalculation in the above algorithm happens only when the deque $D_i$ is empty and $f_i$ is still raised.
		This happens only when $D_i$ was full at the end of some iteration, the $j'$th, and afterward no new candidates were inserted and all of them were removed since they were not relevant for the new intervals.
		Before a candidate is removed from the left end of the deque $D_i$ it is picked for the partitioning set $P_i$, and since $D_i$ is empty it means that all the candidates that were in $D_i$ at the $j'$th iteration were counted as positions in $P_i$.
		There were $\frac{c'n}{\tau \log n}$ positions at the $j'$th iteration, and therefore every time the deque $D_i$ is empty, it means that $\frac{c'n}{\tau \log n}$ positions are counted as positions in $P_i$.
		Since the algorithm stops its run on $h_i$ once it is determined that $|P_i| > \frac{c'n}{\tau}$ then after $\log n$ step-backs of the algorithm on the hash function $h_i$, the run on $h_i$ will stop.
	\end{proof}
	
	\begin{lemma}
		Calculating $|P_i|$  or determining that $|P_i| > \frac{c'n}{\log n}$ for all $h_i$ takes $\O(n+\tau\log^2 n)$ time using $\O(\frac{n}{\tau} + \log n)$ words of space.
	\end{lemma}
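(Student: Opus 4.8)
The plan is to bound the running time by charging it to two disjoint phases — the \emph{forward} sweep, in which the algorithm moves left to right updating each deque $D_i$, and the \emph{step-backs}, each of which recomputes a whole window of $\tau$ fingerprints and the corresponding $\ID$ values — and then to separately account for the persistent data structures for the space bound.

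First I would argue that the forward sweep costs $O(n)$ in total. The rolling-window fingerprints $\phi_j=\phi(S[j..j+\tau-1])$ are produced once, in increasing $j$, each in $O(1)$ time by the sliding property of $\phi$ (Section~\ref{sec:preliminaries}); this is shared by all the hash functions and contributes $O(n)$. The positions of $P_0$ are generated on the fly by the algorithm of Lemma~\ref{lemma:id_alg} run with parameter $\log n$ (skipping long periodic runs exactly as in Appendix~\ref{app:rand_selection_periodic_blocks}); since $|P_0|=O(n/\log n)$ has already been verified, producing them costs $O(n)$ time and $O(1)$ extra space because each position is consumed immediately rather than stored. The key observation to make explicit is that, for a fixed $h_i$, an $\ID_{\phi,h_i}$ value is evaluated only at positions of $P_0$, so there are $O(|P_0|)=O(n/\log n)$ such evaluations, each $O(\log\frac1\epsilon)=O(1)$ time for $\epsilon=\tfrac12$; summed over the $\log n$ functions this is $O(n)$, not $O(n\log n)$. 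Likewise every candidate is pushed onto $D_i$ at most once and popped at most once, so deque maintenance is $O(1)$ amortized per processed position of $P_0$, again $O(|P_0|\cdot\log n)=O(n)$.

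Next I would bound the step-backs. A single step-back recomputes the fingerprints of the $\tau$ positions of the current window (each $O(1)$ by the sliding property) together with the values $\ID_{\phi,h_i}$ of the at most $\tau$ positions of $P_0$ inside that window, hence $O(\tau)$ time. Invoking the observation established just above, for each fixed $h_i$ a recalculation is triggered at most $\log n$ times: each time $D_i$ empties while $f_i$ is still raised, all $\tfrac{c'n}{\tau\log n}$ candidates that were held in $D_i$ have been counted into $|P_i|$, and the run on $h_i$ is aborted as soon as $|P_i|$ exceeds $\tfrac{c'n}{\tau}$. Thus each $h_i$ incurs at most $\log n$ step-backs, costing $O(\tau\log n)$, and over all $\log n$ functions this is $O(\tau\log^2 n)$. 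Adding the forward cost yields the claimed time $O(n+\tau\log^2 n)$. For the space bound I would note that the only data held throughout the scan are the $\log n$ deques, each storing at most $\tfrac{c'n}{\tau\log n}$ pairs for $O(n/\tau)$ words total; the per-function bookkeeping $q_i$, the running size $|P_i|$, and the flag $f_i$, contributing $O(\log n)$ words; and the $O(1)$-word state of the shared scan and of the on-the-fly generator of $P_0$ — altogether $O(n/\tau+\log n)$ words.

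The main obstacle is the forward-work analysis: one must see clearly that hash evaluations and deque operations are confined to the $O(n/\log n)$ positions of $P_0$ (so that running all $\log n$ hash functions in lockstep still costs only $O(n)$) and that the fingerprint stream is computed exactly once and reused; the deque-of-future-candidates device is precisely what keeps recalculations down to $O(\log n)$ per hash function, so that step-backs contribute only the lower-order $O(\tau\log^2 n)$ term, which is itself $O(n)$ in the regime $\tau\le n/\log^2 n$.
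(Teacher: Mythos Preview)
Your proposal is correct and follows essentially the same approach as the paper: decompose the cost into the synchronized forward sweep (fingerprints once, $\ID$-evaluations only at the $O(n/\log n)$ positions of $P_0$, and amortized $O(1)$ deque push/pop per such position, summed over $\log n$ functions for $O(n)$ total) and the at most $\log n$ step-backs of cost $O(\tau)$ each per hash function, yielding the $O(\tau\log^2 n)$ term; the space bound is obtained exactly as you describe. The paper's own proof is terser but invokes the same ingredients, relying on the preceding Observation for the $\log n$ step-back bound and on the earlier itemized forward-cost analysis for the $O(n)$ term.
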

	\begin{proof}
		When all functions move forward simultaneously, we analyzed earlier that it takes $\O(n)$ time.
		For each $h_i$, using at least one word of space for the rightmost picked position, and at most $\O(\frac{n}{\tau\log n})$ words of space for candidates, the run may have to step-back at most $\log n$ times, and stepping back will cost $\O(\tau)$ time.
		This fact holds for all the $\log n$ hash functions, thus in total the steps back may cost $\O(\tau\log^2 n)$ time in the worst case. Also, notice that each position in $P_0$ is inserted exactly once, and removed exactly once from the deque $D_i$. 
		Thus, there are exactly $2n$ insertion and removal operations in all the deques throughout the whole run, costing $\O(n)$ time in total.
		
		Summing all up, it costs $\O(n+\tau\log^2 n)$ time and $\O(\frac{n}{\tau}+\log n)$ words of space to run the above algorithm.
	\end{proof}

	\section{Missing Details for the  Deterministic Selection }\label{app:deterministic_selection}

	In this appendix we complete the algorithm for the deterministic selection of $(\O(\tau),\O(\tau\log^*n))$-partitioning set from Section~\ref{sec:determinsiticSelection}. 
	In Appendix~\ref{app:deterministic_correctness}, we prove that the selected positions are actually a $(\O(\tau),\O(\tau\log^* n))$-partitioning set. Then, in Appendix~\ref{app:deterministic_in_space} we describe how to find these positions while using only $\O(\frac n\tau+\log n)$ words of working space, which for $\tau\le \O(\frac n{\log n})$ is $\O(\frac n\tau)$ words of space. 
	
	\subsection{Correctness Proof}\label{app:deterministic_correctness}
	In this section we prove that the selection we have described in Section~\ref{sec:determinsiticSelection}  yields a $(\O(\tau),\O(\tau\log^* n))$-partitioning set of $S$, as defined in Definition~\ref{def:landmarks_set}.
	Recall that the partitioning of Section~\ref{sec:determinsiticSelection} is a hierarchical decomposition of the range $[1..n]$, where its top-level is level number $\log_{3/2}\tau$. For any $\mu$ from $0$ to $\log_{3/2}\tau$, let $P_\mu\subseteq[n]$ be the set of positions corresponding to level $\mu$ of the decomposition.
	Notice that we have $P_{\ceil{\log_{3/2}\tau}}\subseteq \cdots\subseteq  P_1\subseteq P_0=[n]$.
	So, our goal is to prove that $P_{\log_{3/2}\tau}$ is an $(\O(\tau),\O(\tau\log^*n))$-partitioning set of size $\O(\frac n\tau)$. We will prove it as a special case, for the general case which is that for any $0\le\mu\le\log_{3/2}\tau$ the set $P_\mu$ is an $\left(\O((3/2)^\mu),\O((3/2)^\mu\cdot\log^*n)\right)$-partitioning set
	of size $\O\left(\frac n{(3/2)^\mu}\right)$.

	For any $\mu$ we have to prove three properties of $P_\mu$: the two properties of Definition~\ref{def:landmarks_set}, and that the size of the set is $\O(\frac n{(3/2)^\mu})$.
	We prove each of these properties by induction.

	We first prove that the size of $P_\mu$ is $\O\left(\frac n{(3/2)^\mu}\right)$. This property is trivially deduced from the following lemma.

	\begin{lemma}\label{lem:consecutive_blocks_length}
		For any level $0\le\mu\le\log_{3/2}\tau$ of the hierarchic decomposition, the length of any pair of two consecutive blocks is at least  $(3/2)^\mu$.

	\end{lemma}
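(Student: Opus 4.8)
The plan is to prove Lemma~\ref{lem:consecutive_blocks_length} by induction on the level $\mu$ of the hierarchical decomposition. The base case $\mu=0$ is immediate: every position of $S$ is a block of length $1$, so any two consecutive blocks span $2 \ge (3/2)^0 = 1$ characters. For the inductive step, assume that at level $\mu-1$ every pair of consecutive sub-blocks has total length at least $(3/2)^{\mu-1}$. I need to show that every pair of consecutive level-$\mu$ blocks spans at least $(3/2)^\mu$ characters.

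The argument proceeds by cases, according to how each level-$\mu$ block was formed from the four types of maximal substrings described in Section~\ref{sec:determinsiticSelection}. The key structural fact, established in that section, is that every level-$\mu$ block which is \emph{not} of type~\ref{item:type_large} (a long sub-block kept verbatim) is the union of at least $2$ sub-blocks from level $\mu-1$ — this holds for type~\ref{item:type_contiguous} (identical sub-blocks merged), for type~\ref{item:type_cointossing} (merged by local minima of the reduced symbols, giving $2$ to $11$ sub-blocks), and for type~\ref{item:type_edge} (margin sequences, each new block containing $2$ or $3$ sub-blocks unless the sequence is a single sub-block). So I split into the following situations for two consecutive level-$\mu$ blocks $b$ and $b'$: (i) at least one of $b, b'$ is a type-\ref{item:type_large} block, in which case its own length is already at least $(3/2)^\mu$ and we are done; (ii) neither is of type~\ref{item:type_large}, so both are unions of at least $2$ level-$(\mu-1)$ sub-blocks. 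In case (ii), $b$ contains at least one sub-block $s$, and $b'$ contains at least one sub-block $s'$, with $s$ immediately preceding $s'$ or with one more sub-block between them — in any event, among the consecutive sub-blocks composing $b \cup b'$ we can find two consecutive sub-blocks whose combined length, by the induction hypothesis, is at least $(3/2)^{\mu-1}$; but since $b$ alone has at least $2$ sub-blocks and $b'$ alone has at least $2$ sub-blocks, $b \cup b'$ contains at least $4$ consecutive sub-blocks, which we can partition into two disjoint consecutive pairs, each of total length at least $(3/2)^{\mu-1}$, yielding $|b| + |b'| \ge 2 \cdot (3/2)^{\mu-1} = (4/3)\cdot(3/2)^\mu > (3/2)^\mu$.

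The main obstacle I anticipate is handling the edge and margin cases carefully so that the "at least $2$ sub-blocks per block" invariant is never violated. In particular, the leftmost block of a type-\ref{item:type_cointossing} sequence, and sequences of type~\ref{item:type_edge} that consist of a single sub-block, need special attention: a type-\ref{item:type_edge} block that is a lone sub-block of length possibly only $1$ (at level $0$) could in principle sit next to another short block. Here I would lean on the construction details — the leftmost position of each sequence is always selected, the second-block local minimum is deliberately ignored to prevent the first sub-block from becoming a standalone block, and margin sequences of $2$ or $3$ sub-blocks are merged into a single block — to argue that a lone-sub-block level-$\mu$ block can only occur when it is a type-\ref{item:type_large} sub-block (already long enough) or can be absorbed/paired appropriately, so case (i) or the $4$-sub-block count in case (ii) still applies. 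A secondary subtlety is the boundary between two different maximal substrings of the parsing (e.g. a type-\ref{item:type_large} block adjacent to a type-\ref{item:type_cointossing} block): here case (i) handles it directly since one side is already of length $\ge (3/2)^\mu$. Once this case analysis is complete, the size bound $|P_\mu| = \O(n/(3/2)^\mu)$ follows immediately, since $P_\mu \cup \{1, n+1\}$ partitions $[1,n+1]$ into intervals and grouping them into consecutive pairs shows there are at most $\O(n/(3/2)^\mu)$ of them.
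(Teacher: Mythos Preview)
Your overall inductive scheme and your cases (i) and (ii) match the paper's proof, and your arithmetic $2\cdot(3/2)^{\mu-1} > (3/2)^\mu$ in case (ii) is exactly what the paper uses when both blocks contain at least two sub-blocks.

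However, there is a genuine gap in how you dispose of the singleton case. You write that a lone-sub-block level-$\mu$ block ``can only occur when it is a type-\ref{item:type_large} sub-block (already long enough) or can be absorbed/paired appropriately, so case (i) or the $4$-sub-block count in case (ii) still applies.'' This is not correct. By the construction in Section~\ref{sec:determinsiticSelection}, a type-\ref{item:type_edge} sequence consisting of exactly one sub-block is kept verbatim as a level-$\mu$ block; it is \emph{not} of type~\ref{item:type_large} (its length is below $(3/2)^\mu$), and it is not absorbed into anything. When such a singleton block $b'$ sits next to a type-\ref{item:type_contiguous} block $b$, neither your case (i) nor your case (ii) applies: $b$ has at least two sub-blocks but $b'$ has only one, so together you have only three consecutive sub-blocks, which cannot be split into two disjoint consecutive pairs.

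The paper handles this case with a short but essential computation that exploits the structure of type-\ref{item:type_contiguous} blocks. Letting $x$ be the common length of the identical sub-blocks composing $b$ and $y$ the length of $b'$, the induction hypothesis gives both $x+y\ge(3/2)^{\mu-1}$ (a consecutive pair straddling the boundary) and $2x\ge(3/2)^{\mu-1}$ (a consecutive pair of identical sub-blocks inside $b$). Then $|b|+|b'|\ge 2x+y=(x+y)+x\ge(3/2)^{\mu-1}+\tfrac12(3/2)^{\mu-1}=(3/2)^\mu$. You should add this third case explicitly; the fact that the neighbor of a singleton type-\ref{item:type_edge} block must be of type~\ref{item:type_large} or~\ref{item:type_contiguous} (by maximality of the parsing into the four types) is what makes this exhaustive.
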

	\begin{proof}
		We prove the lemma by induction on the level of the hierarchy. 
		For level $\mu=0$, the lemma is trivial.
		We assume the lemma holds for level $\mu-1$, and  prove it holds for level $\mu$ as well. 
		Let $[p_i..p_{i+1}-1]$ and $[p_{i+1}..p_{i+2}-1]$ be two consecutive blocks at level $\mu$. 
		If at least one of the blocks is a sub-block of type~\ref{item:type_large}, then the claim is obtained from the definition of  type~\ref{item:type_large}.
		Otherwise, if each one of the blocks is composed of at least two sub-blocks of level $\mu-1$, the claim is obtained by the induction hypothesis as their total length must be at least $2\cdot(3/2)^{\mu-1}>(3/2)^\mu$.
		
		The only remaining case where a block is composed of only one sub-block is the case that this sub-block is the only sub-block in a sequence of type~\ref{item:type_edge}. 
		Since we already covered the case where one of the blocks is of type~\ref{item:type_large}, the only remaining case is that the other block is of type~\ref{item:type_contiguous} (periodic block which is composed of sub-blocks corresponding to the same substring). 
		Let us assume without loss of generality that $[p_i..p_{i+1}-1]$ is the block of type~\ref{item:type_contiguous} and that $[p_{i+1}..p_{i+2}-1]$  is also a block in level $\mu-1$.
		We denote the length of each sub-block which assemble $[p_i..p_{i+1}-1]$ as $x$ (recall that all the sub-blocks that assembles $[p_i..p_{i+1}-1]$ are identical, and in particular they have the same length), and the length of $[p_{i+1}..p_{i+2}-1]$ as $y$. By the induction hypothesis, it must be that $x+y\ge (3/2)^{\mu-1}$ and also that $2x\ge (3/2)^{\mu-1}$, since in level $\mu-1$ we have a pair of consecutive blocks of length $x$, and another (overlapping) pair of consecutive blocks of lengths $x$ and $y$. 
		The length of $[p_i..p_{i+1}-1]$ is at least $2x$, therefore the total length of the pair $[p_i..p_{i+1}-1]$ and $[p_{i+1}..p_{i+2}-1]$ is at least $2x+y=(x+y)+x\ge (3/2)^{\mu-1}+\frac{(3/2)^{\mu-1}}2=(3/2)^\mu$.
	\end{proof}

	Property~\ref{item:blocks_sync} of Definition~\ref{def:landmarks_set} states that in any partitioning set the selection of each position is \emph{local}, which means it depends on a substring around this position of length $\O(\delta)$.
	In the following lemma we prove that the positions of the decomposition in each level of the hierarchy have this locality property. 
	Let us denote $S_{a,\mu}=S[a-6\cdot(3/2)^\mu c\log^*n..a+6\cdot(3/2)^\mu c\log^*n]$ to be the substring of $S$ of length $12\cdot(3/2)^\mu c\log^*n+1$ around position $a$.
	In addition, for any $p_i\in P_\mu$, let $p_{i+1}$ be the position following $p_i$ in $P_\mu$ (or $n$ if $p_i$ is the last position in $P_\mu$), we say that $[p_i..p_{i+1}-1]$ is \emph{the block} of $p_i$  in level $\mu$.

	\begin{lemma}\label{lem:deterministic_local_property}
		
		For any level $0\le\mu\le\log_{3/2}\tau$ of the hierarchic decomposition, if $a,b\in [n]$ are two indices such that $S_{a,\mu}=S_{b,\mu}$ then $a\in P_\mu\Leftrightarrow b\in P_\mu$
		
	\end{lemma}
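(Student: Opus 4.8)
The plan is to prove the statement by induction on the level $\mu$. The base case $\mu = 0$ is trivial since $P_0 = [n]$, so every position is in $P_0$ regardless. For the inductive step, assume the locality property holds for level $\mu - 1$ with radius $6\cdot(3/2)^{\mu-1}c\log^*n$, and consider two indices $a, b$ with $S_{a,\mu} = S_{b,\mu}$. The key observation is that the radius $6\cdot(3/2)^\mu c\log^*n$ at level $\mu$ is $3/2$ times the radius at level $\mu-1$, which should give enough slack: for every position $a'$ lying within roughly $4\cdot(3/2)^\mu c\log^*n$ of $a$, the window $S_{a',\mu-1}$ is a substring of $S_{a,\mu}$, hence by the induction hypothesis $a' \in P_{\mu-1} \Leftrightarrow (a' - a + b) \in P_{\mu-1}$. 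So the entire sub-block structure of level $\mu-1$ in a wide neighbourhood of $a$ is identical (up to translation) to that around $b$.

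The core of the argument is then to show that the \emph{type classification} (types~\ref{item:type_large}--\ref{item:type_edge} of the parsing into maximal substrings) and the subsequent block-merging rules are themselves local functions of this sub-block structure and the underlying characters — both of which agree in the neighbourhoods of $a$ and $b$. Concretely I would argue: (i) whether $a$ starts a type-\ref{item:type_large} sub-block depends only on the length of that sub-block, which is captured by the local sub-block structure; (ii) membership in a type-\ref{item:type_contiguous} run (maximal contiguous identical sub-blocks) is determined by comparing $O(\log^*n)$ neighbouring sub-blocks — but since a run that is not of type~\ref{item:type_large} has sub-blocks of length $< (3/2)^\mu$, a run touching position $a$ extends at most $O((3/2)^\mu c\log^*n)$ in each direction and so is fully visible inside $S_{a,\mu}$ with room to spare for detecting its left and right endpoints; (iii) in a type-\ref{item:type_cointossing} sequence, the alphabet-reduction label $\lab(\cdot,j)$ at $a$ after $O(\log^*n)$ inner iterations depends on only $O(\log^*n)$ sub-blocks to the right of $a$, all of which lie inside the window (using the radius bound with the constant $c$ chosen for the $\log^*n$ inner-iterations); and the final "local minimum symbol" test compares $\lab$ at $a$ with its two neighbours, again local. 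The margin rules (type~\ref{item:type_edge} handling, leftmost-of-sequence special cases) likewise depend only on positions within $O((3/2)^\mu c\log^*n)$ of $a$. Since every one of these decisions coincides for $a$ and for $b$ (by $S_{a,\mu} = S_{b,\mu}$ and the induction hypothesis applied to all the relevant nearby positions), we conclude $a \in P_\mu \Leftrightarrow b \in P_\mu$.

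I expect the main obstacle to be the careful bookkeeping of constants: one must verify that the radius $6\cdot(3/2)^\mu c\log^*n$ is genuinely large enough to contain, with slack, (a) an entire type-\ref{item:type_contiguous} run passing through $a$ together with the sub-blocks immediately flanking it (needed to certify its maximality), (b) the $O(\log^*n)$ sub-blocks to the \emph{right} of $a$ that feed into the alphabet reduction, and (c) the level-$(\mu-1)$ windows $S_{a',\mu-1}$ for every position $a'$ in that range so the induction hypothesis applies. The factor-$3/2$ gap between consecutive radii together with the factor $6$ should absorb all of this, but the inequality $6\cdot(3/2)^{\mu-1}c\log^*n + (\text{run length} + 2\text{ sub-block lengths} + O(\log^*n)\cdot(3/2)^{\mu-1}) \le 6\cdot(3/2)^\mu c\log^*n$ must be checked, using that non-large sub-blocks have length $< (3/2)^{\mu-1}$ and runs of non-large type have length $< (3/2)^\mu$. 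A secondary subtlety is the treatment of the global left margin (position $1$ is always a representative position), but this only matters when $a$ is within $O((3/2)^\mu\log^*n)$ of position $1$, and then $S_{a,\mu} = S_{b,\mu}$ forces $b$ to be at the same distance from $1$, so the boundary behaviour is mirrored.
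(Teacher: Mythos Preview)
Your plan is correct and follows essentially the same route as the paper: induction on $\mu$, use the $3/2$ gap in radii to push the induction hypothesis onto all level-$(\mu-1)$ positions in a wide window, then case-split on the type of the maximal sequence containing $a$ and verify that each type's selection rule depends only on $O(\log^*n)$ nearby sub-blocks.

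One inaccuracy to fix when you write it out: in case~(ii) you claim that a type-\ref{item:type_contiguous} run ``extends at most $O((3/2)^\mu c\log^*n)$ in each direction''. This is false --- such a run can be arbitrarily long (it is precisely the mechanism that produces the long periodic blocks). What makes the case work is not that the run is short, but that the decision ``is $a$ the \emph{left endpoint} of this run?'' is local: you only need to see that the sub-block at $a$ equals the next sub-block (so a run starts) and that the sub-block preceding $a$ differs (so $a$ is its left end). That requires one sub-block to the left and two to the right, all of length $<(3/2)^\mu$, which easily fits in the window. The paper's proof argues exactly this.
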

	\begin{proof}
		We prove the lemma by induction on the decomposition's level. For level $\mu=0$, the lemma is trivial, since $P_0=[n]$.
		Let assume the lemma holds for $P_{\mu-1}$, and we prove that it holds for $P_\mu$ as well.
		Let $a$ and $b$ be two positions such that $S_{a,\mu}=S_{b,\mu}$ 
		and $a\in P_\mu$.
		Thus, in particular for any $-2(3/2)^\mu c\log^*n\le\delta\le2(3/2)^\mu c\log^*n$
		we have  $S_{a+\delta,\mu-1}=S_{b+\delta,\mu-1}$.
		Therefore, due to the induction hypothesis, it must be that $a+\delta\in P_{\mu-1}\Leftrightarrow b+\delta\in P_{\mu-1}$.
		Moreover, since $P_\mu\subseteq P_{\mu-1}$ and $a\in P_\mu$, it must be that $a\in P_{\mu-1}$ and therefore $b\in P_{\mu-1}$ as well.
		We prove the lemma by partitioning the cases according to the type of the sequence containing $a$ in the parsing of $S$ for level $\mu$.

		\paragraph{Type~\ref{item:type_large}.} 
		Since the length of the level $\mu-1$  sub-block that $a$ is its leftmost index is at least $(3/2)^\mu$,  it must be that $[a..a+(3/2)^\mu-1]\cap P_{\mu-1} = \{a\}$.
		Hence, it also must be that $[b..b+(3/2)^\mu-1]\cap P_{\mu-1} = \{b\}$. 
		Therefore, the sub-block of $b$  in level $\mu-1$ is of length at least $(3/2)^\mu$ and therefore it is also a block in level $\mu$. Hence, $b\in P_\mu$.

		\paragraph{Type~\ref{item:type_contiguous}.} 
		If the block of $a$ in level $\mu$ is not of type~\ref{item:type_large}, then it must be that its sub-block length is $\delta<(3/2)^\mu$. Hence, the sub-block of $a$ in level $\mu-1$  is $[a..a+\delta-1]$ and $(a+\delta)\in P_{\mu-1}$. 
		Moreover, since  $[a..a+\delta-1]$ is in a sequence of type ~\ref{item:type_contiguous} it must be that the sub-block which follows  $[a..a+\delta-1]$  in level $\mu-1$ is $[a+\delta..a+2\delta-1]$ and that $S[a..a+\delta-1]=S[a+\delta..a+2\delta-1]$. 
		Thus, we have that $[a..a+2\delta]\cap P_{\mu-1}=\{a,a+\delta,a+2\delta\}$.
		In addition, since $a\in P_\mu$, it must be that the preceding sub-block of $[a..a+\delta-1]$  is of different length or $S[a..a+\delta-1]\ne S[a-\delta..a-1]$.
		Recall that $\delta<(3/2)^\mu$ and therefore $2\delta<2\cdot(3/2)^\mu$, hence it must be that $[b..b+2\delta]\cap P_{\mu-1}=\{b,b+\delta,b+2\delta\}$.
		Therefore, in level $\mu-1$ we have two equal sub-blocks $[b..b+\delta-1]$ and $[b+\delta..b+2\delta-1]$ and the preceding sub-block must be different. Therefore, it must be that $b$ is the leftmost endpoint of the corresponding block in level $\mu$, which means  that $b\in P_\mu$.

		\paragraph{Type~\ref{item:type_cointossing}.}

		Let us assume that the block of $a$ in level $\mu-1$ is not part of the last $c\log^*n$ blocks in the sequence, otherwise, the proof is similar to the case of type~\ref{item:type_edge}.
		Let $p_0,p_1\dots,p_h$ be the positions of $P_{\mu-1}$ sorted in ascending order and let $i$ and $j$ be the indices such that $a=p_i$ and $b=p_j$.

		One of the two following cases must hold.
		Either $p_i$ is the leftmost position in the sequence or its label is a local minimum.
		For the first case, it depends on at most $2(3/2)^\mu$ positions to its left, which are part of block created from type~\ref{item:type_large} or type~\ref{item:type_contiguous}. In both cases, since $S_{p_i,\mu}=S_{p_j,\mu}$ it must be that this large block or periodic block appears also before $p_j$ in level $\mu-1$, and therefore $b=p_j\in P_\mu$ as required.
		
		Consider the second case, where $p_i\in P_\mu$  since its label is a local minimum.
		Intuitively, we claim that the partition into blocks around $p_i$ and $p_j$ is the same by the induction hypothesis. Thus, the labels given for each block in this neighborhood are also equal, and therefore the block of $p_j$ must be also the local minimum.
		Formally, we have that $\lab(i)$ is smaller than $\lab(i+1)$ and also smaller than $\lab(i-1)$, where $\lab(x)=\lab_{\mu}(x,c\log^*n)$ is the label of the block $[p_x..p_{x+1}-1]$ after the last inner-iteration on the transformation from $\mu-1$ to $\mu$.
		Thus, we focus only on the values of $\lab(p_{i-1}),\lab(p_i)$ and $\lab(p_{i+1})$. Each of these labels is dependent on the $c\log^*n$ sub-blocks to its right, and the length of each such a sub-block is at most $(3/2)^\mu$. Therefore, the first relevant sub-block begins at a position which is at least $p_i-(3/2)^\mu$ and the last relevant block ended at a position which is at most at $p_i+(c\log^*n+2)(3/2)^\mu$. 
		Since this range is fully contained in $[p_i-2(3/2)^\mu c\log^*n..p_i+2(3/2)^\mu c\log^*n]$ we are guaranteed by the induction hypothesis that all the positions in the same distances from $p_j$ also induced partitioning of $S$ in the same way. Therefore, the labels of $\lab(p_{j-1}), \lab(p_j)$ and $\lab(p_{j+1})$ are exactly the same as $\lab(p_{i-1}),\lab(p_i)$ and $\lab(p_{i+1})$.
		Thus, $\lab(p_j)$ must be also a local minimum, and so $b=p_j\in P_\mu$ as required.

		\paragraph{Type~\ref{item:type_edge}.}	
		As in the proof of the previous type, let $p_0,p_1\dots,p_h$ be the positions of $P_{\mu-1}$ sorted in ascending order and let $i$ and $j$ be the indices such that $a=p_i$ and $b=p_j$. 
		It is guaranteed that the closest type~\ref{item:type_contiguous} or~\ref{item:type_large} block in $P_{\mu}$ after $p_i$ starts at most $c\log^*n$ sub-blocks after $p_i$.
		Since the length of each sub-block in this area is at most $(3/2)^\mu$, it must be that this position is at most $p_i+(3/2)^\mu c\log^*n<p_i+2(3/2)^\mu c\log^*n$.
		Therefore, by the induction hypothesis, the same blocks appear exactly the same also after $p_j$.
		The only dependency on blocks to the left $p_i$ is just if there are at most $2$ sub-blocks in the sequence of sub-blocks to the left of $p_i$ which are not of type~\ref{item:type_large} or type~\ref{item:type_contiguous}. However, due to similar arguments to the above, the number of such sub-blocks also must be the same to the left of $p_i$ and to the left of $p_j$. 
		Therefore, $p_j=b$ must be also a position in $P_\mu$.

		\paragraph{Conclusion.}
		We proved that $a\in P_\mu\Rightarrow b\in P_\mu$, the proof of the  opposite direction is exactly the same. Hence, we have that $a\in P_\mu\Leftrightarrow b\in P_\mu$.
	\end{proof}

	Property~\ref{item:blocks_size} of Definition~\ref{def:landmarks_set} states that any long block in the partitioning is a periodic substring of $S$, which its period length is at most $\tau$. 
	In the following lemma we prove that the positions of the decomposition in each level of the hierarchy have this property. 
	In particular, the set $P_{\log_{3/2}\tau}$ has Property~\ref{item:blocks_size} of Definition~\ref{def:landmarks_set}.
	\begin{lemma}\label{lem:near_landmark}

		For any level $0\le\mu\le\log_{3/2}\tau$ of the hierarchic decomposition, if $p_i<p_{i+1}$ are two consecutive positions in $P_\mu$ such that $p_{i+1}-p_i>12\cdot (3/2)^\mu$ then
		$S[p_i..p_{i+1}-1]$ has period length of at most $(3/2)^\mu$
		
	\end{lemma}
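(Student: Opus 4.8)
The plan is to prove the statement by induction on the level $\mu$, in the same style as Lemma~\ref{lem:consecutive_blocks_length} and Lemma~\ref{lem:deterministic_local_property}. For the base case $\mu=0$ every block is a single letter of length $1\le 12\cdot(3/2)^0$, so the hypothesis $p_{i+1}-p_i>12\cdot(3/2)^0$ is never met and there is nothing to prove. For the inductive step I assume the claim at level $\mu-1$ and take a level-$\mu$ block $[p_i,p_{i+1}-1]$ with $p_{i+1}-p_i>12\cdot(3/2)^\mu$; I will argue according to which of the four parsing types of Section~\ref{sec:determinsiticSelection} produced this block.

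First I would rule out types~\ref{item:type_cointossing} and~\ref{item:type_edge}. A block coming from a type-\ref{item:type_cointossing} sequence is composed of at most $11$ sub-blocks of level $\mu-1$, and a block coming from a type-\ref{item:type_edge} sequence of at most $3$ sub-blocks (or is exactly one sub-block); in every such case each constituent sub-block is not of type~\ref{item:type_large}, hence has length strictly less than $(3/2)^\mu$, so the whole block has length strictly less than $11\cdot(3/2)^\mu<12\cdot(3/2)^\mu$, contradicting the hypothesis. Therefore $[p_i,p_{i+1}-1]$ was produced as a type-\ref{item:type_large} block or as a type-\ref{item:type_contiguous} block, and I handle these two cases separately.

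If it is a type-\ref{item:type_large} block, then by construction it is a single sub-block $u$ of level $\mu-1$; that is, $p_i$ and $p_{i+1}$ are consecutive positions of $P_{\mu-1}$ as well and $u=S[p_i..p_{i+1}-1]$. Since $p_{i+1}-p_i>12\cdot(3/2)^\mu=18\cdot(3/2)^{\mu-1}>12\cdot(3/2)^{\mu-1}$, the induction hypothesis at level $\mu-1$ applies to the pair $p_i<p_{i+1}$ and gives that $u$ has period length at most $(3/2)^{\mu-1}<(3/2)^\mu$. If it is a type-\ref{item:type_contiguous} block, then by construction it is a maximal concatenation $w^k$ with $k\ge 2$ of identical sub-blocks $w$ of level $\mu-1$, where $w$ is not of type~\ref{item:type_large}, so $|w|<(3/2)^\mu$; by the definition of a period, $|w|$ is a period of $w^k$, so the principal period of $S[p_i..p_{i+1}-1]=w^k$ has length at most $|w|<(3/2)^\mu$. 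In all cases the block has period length at most $(3/2)^\mu$, which closes the induction.

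The one place that needs care, and hence the main obstacle, is verifying against the precise description of the construction that the four parsing types are exhaustive and that the stated size bounds on type-\ref{item:type_cointossing} and type-\ref{item:type_edge} blocks hold uniformly \textemdash{} including the exceptional treatment of the leftmost block of a sequence and of the margin sub-blocks, both of which are absorbed into type-\ref{item:type_edge}-style blocks \textemdash{} so that every such block is guaranteed to be shorter than $12\cdot(3/2)^\mu$. Once that bookkeeping is in place, the two ``long-block'' cases are essentially immediate: one follows from the induction hypothesis and the other straight from the definition of periodicity.
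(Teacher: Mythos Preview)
Your proposal is correct and follows essentially the same approach as the paper: induction on $\mu$, with the inductive step handled by a case analysis on how the level-$\mu$ block was formed. The paper's proof is slightly more compressed---it argues that either the block coincides with a level-$(\mu-1)$ block (your type-\ref{item:type_large} case, handled by induction) or it is composed of more than $12$ short sub-blocks, which forces it to be of type~\ref{item:type_contiguous}---but the logical content is the same as your explicit elimination of types~\ref{item:type_cointossing} and~\ref{item:type_edge} followed by handling types~\ref{item:type_large} and~\ref{item:type_contiguous}.
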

	\begin{proof}
		We prove the lemma by induction on the level of the hierarchy.
		For level $0$, the lemma is trivial, since $P_0=[n]$.
		We assume by induction that the lemma holds for $P_{\mu-1}$, and we prove that it holds for $P_\mu$ as well.
		Let $[p_i..p_{i+1}-1]$ be a block in level $\mu$, i.e., $[p_i..p_{i+1}]\cap P_\mu=\{p_i,p_{i+1}\}$, such that $|[p_i..p_{i+1}-1]|=p_{i+1}-p_i>12\cdot(3/2)^\mu$. 
		If $[p_i..p_{i+1}-1]$ is a block in level $\mu-1$ as well, the lemma holds due to the induction hypothesis.
		Otherwise, it must be that $[p_i..p_{i+1}-1]$ is composed of several sub-blocks of level $\mu-1$ and each of them is of length at most $(3/2)^\mu$ (since otherwise they have corresponded to a block of type~\ref{item:type_large}). Hence, it must be that $[p_i..p_{i+1}-1]$ is composed of more than $12$ blocks of level $\mu-1$.
		The only type of blocks in level $\mu$ which can be composed of more than $12$ sub-blocks of level $\mu-1$ is type~\ref{item:type_contiguous}, and in this case all the substrings corresponding to the sub-blocks are equal. Therefore, the substring corresponding to $[p_i..p_{i+1}-1]$ is $uuuu\cdots u$ where $u$ is a string of length at most $(3/2)^{\mu-1}$.  Thus, the period length of $S[p_i..p_{i+1}-1]|$ is at most $(3/2)^\mu$.
	\end{proof}

	Hence, by combining Lemmas~\ref{lem:consecutive_blocks_length},~\ref{lem:deterministic_local_property} and~\ref{lem:near_landmark} we receive the following corollary.
	\begin{corollary}\label{cor:P_logtau_fine}
		The set  $P_{{\log_{3/2}\tau}}$  is  $(\O(\tau),\O(\tau \log^*n))$-partitioning set of size $\O(\frac n \tau)$.
	\end{corollary}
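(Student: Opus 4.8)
The plan is to assemble the corollary directly from the three lemmas just proved, each of which supplies one of the claimed properties at \emph{every} level $\mu$ of the hierarchic decomposition; we then instantiate at the top level $\mu=\ceil{\log_{3/2}\tau}$, where $(3/2)^\mu=\Theta(\tau)$.

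First I would handle the size bound. By Lemma~\ref{lem:consecutive_blocks_length}, at level $\mu$ every pair of two consecutive blocks spans at least $(3/2)^\mu$ characters of $S$; since $|S|=n$ this caps the number of blocks of level $\mu$ at $\O\!\left(\frac n{(3/2)^\mu}\right)$, hence $|P_\mu|=\O\!\left(\frac n{(3/2)^\mu}\right)$, which at $\mu=\ceil{\log_{3/2}\tau}$ is $\O(\frac n\tau)$.

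Next, for Property~\ref{item:blocks_sync} (Local Consistency) I would read the locality off Lemma~\ref{lem:deterministic_local_property}, applied with the window $S_{a,\mu}=S[a-6\cdot(3/2)^\mu c\log^*n..a+6\cdot(3/2)^\mu c\log^*n]$: put $\delta=6\cdot(3/2)^\mu c\log^*n=\O(\tau\log^*n)$, so that $S[i-\delta..i+\delta]=S[j-\delta..j+\delta]$ forces $S_{i,\mu}=S_{j,\mu}$ and therefore $i\in P_\mu\Leftrightarrow j\in P_\mu$. For Property~\ref{item:blocks_size} (Compactness), take two consecutive positions $p_i<p_{i+1}$ in $P_\mu\cup\{1,n+1\}$: if $p_{i+1}-p_i\le 12\cdot(3/2)^\mu=\O(\tau)$ this is a regular block (type~\ref{item:regular_block}); otherwise $p_{i+1}-p_i>12\cdot(3/2)^\mu$ and Lemma~\ref{lem:near_landmark} guarantees that $S[p_i..p_{i+1}-1]$ is periodic with period length at most $(3/2)^\mu=\O(\tau)$, i.e.\ a periodic block (type~\ref{item:long_blocks}). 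Combining the three facts at $\mu=\ceil{\log_{3/2}\tau}$ shows that $P_{\ceil{\log_{3/2}\tau}}$ satisfies Definition~\ref{def:landmarks_set} with parameters $(\O(\tau),\O(\tau\log^*n))$ and has size $\O(\frac n\tau)$.

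I do not expect a genuine obstacle here; the only care needed is bookkeeping: checking that $\mu=\ceil{\log_{3/2}\tau}$ is an admissible level and that rounding up changes only hidden constants (since $(3/2)^{\ceil{\log_{3/2}\tau}}<\tfrac32\tau$), matching the constant $6c$ from the locality window to the $\delta$ of the partitioning-set definition, and noting that the boundary blocks incident to $1$ and $n+1$ are already covered, both by the quantification over all consecutive pairs in Lemmas~\ref{lem:consecutive_blocks_length} and~\ref{lem:near_landmark} and by the edge-case padding of $S$ discussed in Section~\ref{sec:LCE_construction}.
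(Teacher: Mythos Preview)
Your proposal is correct and follows exactly the paper's approach: the paper simply states that the corollary follows by combining Lemmas~\ref{lem:consecutive_blocks_length}, \ref{lem:deterministic_local_property}, and \ref{lem:near_landmark}, and you have spelled out precisely how each lemma supplies, respectively, the size bound, Property~\ref{item:blocks_sync}, and Property~\ref{item:blocks_size} at level $\mu=\ceil{\log_{3/2}\tau}$. Your additional bookkeeping remarks (rounding of $\mu$, matching $\delta$ to the locality window, boundary blocks) are sound and merely make explicit what the paper leaves implicit.
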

	
	Or, by adjusting a constant, we get a truly $(\tau,\O(\tau\log^*n))$-partitioning set of size $\O(\frac n\tau)$.

	\begin{corollary}\label{cor:P_logtau_tweleve_fine}
		The set  $P_{{\log_{3/2}(\tau/12)}}$  is a $(\tau,\O(\tau \log^*n))$-partitioning set of size $\O(\frac n \tau)$.
	\end{corollary}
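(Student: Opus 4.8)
The plan is to instantiate, at a single carefully chosen level of the hierarchical decomposition, the three general lemmas just established --- Lemma~\ref{lem:consecutive_blocks_length}, Lemma~\ref{lem:deterministic_local_property} and Lemma~\ref{lem:near_landmark}, each of which holds at \emph{every} level $0\le\mu\le\log_{3/2}\tau$ --- and then to observe that the parameters these lemmas produce collapse to exactly $(\tau,\O(\tau\log^*n))$ while the size stays $\O(n/\tau)$. Concretely I would take $\mu_0=\floor{\log_{3/2}(\tau/12)}$, so that $(3/2)^{\mu_0}\le \tau/12$ and, since $(3/2)^{\mu_0+1}>\tau/12$, also $(3/2)^{\mu_0}>\tau/18$; in particular $(3/2)^{\mu_0}=\Theta(\tau)$ and $\mu_0\le\log_{3/2}\tau$, so all three lemmas apply at level $\mu_0$. (If $\tau$ is below the relevant constant the statement is vacuous or trivially witnessed by $P=[n]$.)

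Next I would check the three requirements of Definition~\ref{def:landmarks_set} together with the size bound, all at level $\mu_0$. For Compactness (Property~\ref{item:blocks_size}) the key point is that $12(3/2)^{\mu_0}\le\tau$, so no block of $P_{\mu_0}$ has length strictly between $\tau$ and $12(3/2)^{\mu_0}$: every block either has length at most $\tau$ (a regular block), or has length $>\tau\ge 12(3/2)^{\mu_0}$ and hence, by Lemma~\ref{lem:near_landmark}, is a periodic substring of period length at most $(3/2)^{\mu_0}\le\tau$; since such a block has length $>2(3/2)^{\mu_0}\ge 2\rho$ it is indeed periodic in the sense of Section~\ref{sec:preliminaries}, so Property~\ref{item:blocks_size} holds with threshold exactly $\tau$. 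For Local Consistency (Property~\ref{item:blocks_sync}), Lemma~\ref{lem:deterministic_local_property} shows that membership of a position $a$ in $P_{\mu_0}$ depends only on the window $S_{a,\mu_0}=S[a-6(3/2)^{\mu_0}c\log^*n..a+6(3/2)^{\mu_0}c\log^*n]$; thus $P_{\mu_0}$ is a partitioning set with locality parameter $\delta=6c(3/2)^{\mu_0}\log^*n=\O(\tau\log^*n)$, and $\delta\ge\tau$ as Definition~\ref{def:landmarks_set} requires. For the size, Lemma~\ref{lem:consecutive_blocks_length} guarantees that every pair of consecutive blocks spans at least $(3/2)^{\mu_0}>\tau/18$ characters, whence $|P_{\mu_0}\cup\{1,n+1\}|=\O\!\left(n/(3/2)^{\mu_0}\right)=\O(n/\tau)$.

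There is essentially no obstacle here beyond this bookkeeping; the only point worth isolating is why the constant $12$ appears in the statement rather than simply reusing Corollary~\ref{cor:P_logtau_fine} with $\mu=\log_{3/2}\tau$. Lemma~\ref{lem:near_landmark} certifies periodicity only for blocks \emph{strictly} longer than $12(3/2)^\mu$, so to obtain a clean block-length cut-off of exactly $\tau$ for the regular blocks one must descend to the level $\mu_0$ with $12(3/2)^{\mu_0}\le\tau$. This costs only constant factors --- in the number of blocks (still $\O(n/\tau)$, with a slightly larger hidden constant) and in the locality parameter $\delta=\O(\tau\log^*n)$ --- and the claimed corollary then follows by combining the three lemma instances at level $\mu_0$.
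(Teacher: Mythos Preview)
Your proof is correct and follows exactly the route the paper takes: the paper's own argument is the single sentence ``Or, by selecting a bit lower level, we get a truly $(\tau,\O(\tau\log^*n))$-partitioning set of size $\O(n/\tau)$,'' and you have simply unpacked that sentence by instantiating Lemmas~\ref{lem:consecutive_blocks_length}, \ref{lem:deterministic_local_property} and~\ref{lem:near_landmark} at level $\mu_0=\floor{\log_{3/2}(\tau/12)}$. Your explanation of why the constant $12$ is needed --- to make the threshold $12(3/2)^{\mu_0}$ of Lemma~\ref{lem:near_landmark} land at or below $\tau$ rather than merely at $\O(\tau)$ --- is exactly the content the paper leaves implicit.
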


	To complete the missing details for the deterministic construction we prove in the following lemma that  $P_{{\log_{3/2}(\tau/12)}}$ is forward synchronized (see Definition~\ref{def:forward_sync}). 
	This lemma is required for the deterministic $\SST$ construction  introduced in Section~\ref{sec:SST_construction}.

	\begin{lemma}\label{lem:det_is_forward_sync}
		The set $P_{{\log_{3/2}(\tau/12)}}$ is a forward synchronized partitioning set
	\end{lemma}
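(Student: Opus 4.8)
The plan is to prove, by induction on the level $\mu$, the stronger claim that for every $0\le\mu\le\log_{3/2}\tau$ the set $P_\mu$ of the hierarchical decomposition is forward synchronized as a $(\O((3/2)^\mu),\O((3/2)^\mu\log^*n))$-partitioning set; taking $\mu=\log_{3/2}(\tau/12)$ and invoking Corollary~\ref{cor:P_logtau_tweleve_fine} then yields the lemma. Write $\delta_\mu$ for the locality parameter of $P_\mu$ (a fixed constant times $(3/2)^\mu\log^*n$, with the constant pinned down at the end), and for $p\in P_\mu$ let $\suc_\mu(p)$ be the successor of $p$ in $P_\mu\cup\{n+1\}$, so the block of $p$ is $[p,\suc_\mu(p)-1]$. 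The base case $\mu=0$ is trivial, since $P_0=[n]$ and every block has length $1$, so $\suc_0(p_i)-p_i=1=\suc_0(p_j)-p_j$ always.

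For the inductive step fix $p_i,p_j\in P_\mu$ with $\LCE(p_i,p_j)>(\suc_\mu(p_i)-p_i)+\delta_\mu$; I must show $\suc_\mu(p_i)-p_i=\suc_\mu(p_j)-p_j$. First I transfer the sub-block (level $\mu-1$) structure forward: since $P_\mu\subseteq P_{\mu-1}$ we have $\suc_\mu(p_i)\ge\suc_{\mu-1}(p_i)$, and $\delta_{\mu-1}=\tfrac23\delta_\mu\le\delta_\mu$, so the hypothesis also gives $\LCE(p_i,p_j)>(\suc_{\mu-1}(p_i)-p_i)+\delta_{\mu-1}$. Applying the level-$(\mu-1)$ forward-synchronization hypothesis, and then iterating it along the common substring, I obtain that the first several sub-blocks starting at $p_i$ and at $p_j$ agree in both content and length — enough sub-blocks to cover the level-$\mu$ block of $p_i$ together with the $\Theta(\log^*n)$ sub-blocks that the alphabet reduction looks at beyond it, and also to reach the first long (type~\ref{item:type_large}) sub-block or contiguous repeat that ends the surrounding parsing region. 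This iteration stays inside the common substring once $\delta_\mu\ge 3c\log^*n\cdot(3/2)^\mu$, which is how the constant in $\delta_\mu$ will be chosen; the same choice also handles the degenerate case where the level-$\mu$ block is a single long sub-block, for then $\suc_\mu(p_i)-p_i$ is already that sub-block's length and one application of the level-$(\mu-1)$ hypothesis suffices.

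Once the sub-block decompositions agree over this prefix, I finish by a case analysis on the parsing type of the sub-block at $p_i$ (the four types in Section~\ref{sec:determinsiticSelection}). For type~\ref{item:type_large} the block is recognised by length alone. For type~\ref{item:type_contiguous} it is a maximal run of identical sub-blocks, recognised by comparing each sub-block to its successor, all visible in the agreed window. For type~\ref{item:type_cointossing} the crucial point is that each label produced by the deterministic coin-tossing at a sub-block depends only on the $\O(\log^*n)$ sub-blocks to its \emph{right}, so the labels in the agreed window coincide at $p_i$ and at $p_j$, hence the next selected position — the next local minimum, with the leftmost-position and skip-the-second-sub-block conventions — is at the same offset relative to $p_i$ and to $p_j$. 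For type~\ref{item:type_edge} margins, a margin region has only $\O((3/2)^\mu\log^*n)\le\delta_\mu$ characters and is right-anchored at a type~\ref{item:type_large}/type~\ref{item:type_contiguous} block (or the string's end), so the common substring reaches that anchor, and the right-to-left pairing forces $\suc_\mu(p_i)-p_i$ to be a function only of how many sub-blocks separate $p_i$ from that anchor — a single sub-block if the count is $1$, a pair if the count is even, a triple if the count is odd and $\ge 3$ — which is again forward information and therefore matches at $p_j$.

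I expect the delicate point to be the interaction of the skip-the-second-sub-block convention with the case where $p_i$ (or $p_j$) is the leftmost sub-block of a coin-tossing region, since the left boundary of that region may fall outside the common substring, so one might fear that whether the skip applies differs at $p_i$ and $p_j$. The resolution I would use is that the skip matters only when $p_i$ is such a leftmost position, and in that subcase either $p_j$ is leftmost of its own region too — and then the matching labels make the skip apply or fail identically — or $p_j$ lies in the interior of its region and is therefore a genuine local minimum, which forces the label of $p_j$'s first sub-block to be strictly below that of its second; since the labels match forward, the same strict inequality holds at $p_i$, so its second sub-block is not a local minimum, the skip is vacuous, and the two offsets agree anyway. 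Fixing the constant in $\delta_\mu$ at the very end so that all the finitely many look-ahead windows used above fit within $\LCE(p_i,p_j)-(\suc_\mu(p_i)-p_i)>\delta_\mu$ then closes the induction.
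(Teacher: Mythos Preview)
Your proposal is correct and follows essentially the same approach as the paper: induction on the level $\mu$ of the hierarchy, transferring the sub-block structure forward via the level-$(\mu-1)$ forward-synchronization hypothesis, then a case analysis over the four parsing types, with the key observation that the coin-tossing labels depend only on $\O(\log^*n)$ sub-blocks to the right. You are in fact more careful than the paper on the ``skip-the-second-sub-block'' edge case; the paper's Type~\ref{item:type_cointossing} argument implicitly treats $p_{i+1}$ as simply ``the first local minimum after $p_i$'' without explicitly reconciling the leftmost-versus-interior asymmetry, whereas your observation that an interior $p_j$ forces $\lab(\text{first sub-block})<\lab(\text{second sub-block})$ and hence renders the skip vacuous is the clean way to close that gap.
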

	\begin{proof}

		Let $q_{1}<\cdots <q_{h}$ be all the positions in $P_\mu$ and $p_{1}<\cdots <p_{h}$ be all the positions in $P_{\mu+1}$
		We prove the lemma by induction on the level of the hierarchy.
		For level $0$, the lemma is trivial, since $P_0=[n]$.
		We assume by induction that the lemma holds for $P_{\mu}$, and we prove that it holds for $P_{\mu+1}$ as well.
		Let $p_{i},p_{j}\in P_{\mu+1}$ such that $\LCE(p_{i},p_{j}) > p_{i+1}-p_{i} + 6(3/2)^{\mu+1} c\log^*n$.
		Since $P_{\mu+1} \subseteq P_{\mu}$, $p_{i},p_{j}\in P_{\mu}$.
		Let $q_{i'}=p_{i}$ and $q_{j'}=p_{j}$ be the corresponding positions in the ordered set of positions $P_{\mu}$. 
		Remark that $q_{i'+1}\leq p_{i+1}$ and $q_{j'+1}\leq p_{j+1}$, thus 
		$$\LCE(q_{i'},q_{j'})=\LCE(p_i,p_j) > p_{i+1}-p_{i} + 6(3/2)^{\mu+1} c\log^*n > q_{i'+1}-q_{i'} + 6(3/2)^{\mu} c\log^*n.$$
		Therefore, by the induction hypothesis $q_{i'+1}-q_{i'} = q_{j'+1}-q_{j'}$.
		Furthermore, for every $k\ge 0$ such that $q_{i'+k+1} \le p_{i+1}+2(3/2)^{\mu}c\log^*n$ it follows that
		\begin{align*}
		\LCE(q_{i'+k},q_{j'+k}) &= \LCE(p_{i},p_{j})-(q_{i'+k}-p_{i})\\
		&> (p_{i+1}-p_{i})-(q_{i'+k}-p_{i}) +  6(3/2)^{\mu+1} c\log^*n \\
		&\ge  q_{i'+k+1} - 2(3/2)^{\mu}c\log^*n - q_{i'+k} + 9(3/2)^{\mu} c\log^*n  \\
		&> q_{i'+k+1} - q_{i'+k} + 6(3/2)^{\mu} c\log^*n
		\end{align*}
		Thus by induction on k, we have that $q_{i'+k+1} - q_{i'+k} = q_{j'+k+1} - q_{j'+k}$ for all $k\ge 0$ such that $q_{i'+k+1} \le p_{i+1}+2(3/2)^{\mu}c\log^*n$.

		We continue the proof by partitioning the cases according to the type of the block $[p_{i}..p_{i+1}]$.
		
		\paragraph{Type~\ref{item:type_large}.}
		In this case that $[p_{i}..p_{i+1}-1]$ is longer than $(3/2)^{\mu+1}$, and it is also a sub-block at level $\mu$. 
		Thus, $q_{i'+1} = p_{i+1}$.
		From the above remark it follows immediately that since $q_{i'+1} - p_{i} = q_{j'+1} - p_{j}$ then $q_{j'+1} = p_{j+1}$ because it is a type~\ref{item:type_large} block as well. 
		Thus, the lemma holds.
		
		\paragraph{Type~\ref{item:type_contiguous}.} 
		In this case that for every $k\ge 0$ such that $q_{i'+k} < p_{i+1}$ all the blocks $[q_{i'+k}..q_{i'+k+1}-1]$ are equal, and the block at level $\mu$ that starts at $p_{i+1}$ is different.
		Let $\delta= q_{i'+1} - q_{i'}$ be the equal sub-blocks' length.
		Notice that $\delta < (3/2)^{\mu+1}$ since it is not a type~\ref{item:type_large} sub-block.
		From the above remark it follows that the same holds for all the blocks $[q_{j'+k}..q_{j'+k+1}-1]$ for every $k\ge 0$ such that $q_{j'+k} < p_{j} + p_{i+1}-p_{i}$, and that the position defined as $q_{\ell}=p_{j} + p_{i+1}-p_{i}$ is a position in $P_\mu$.
		
		The only case where $p_{j+1}\neq q_{\ell}$ is if $[q_{\ell}..q_{\ell+1}-1]$ is equal to the previous sub-blocks. 
		In that case since $\delta < (3/2)^{\mu+1}$, if $q_{\ell+1}=q_{\ell}+\delta$ then $[p_{i+1}..p_{i+1}+\delta-1]$ is also a sub-block at level $\mu$, and it is equal to the previous sub-blocks.
		This contradicts the fact that at $p_{i+1}$ starts a level $\mu$ block that is different from its previous block.
		Therefore, at $q_{\ell}$ starts a block that must be different from its previous equal sub-blocks.
		From all the above it follows that $p_{j+1}=q_{\ell}=p_{j} + p_{i+1}-p_{i}$.
		
		\paragraph{Type~\ref{item:type_cointossing}.}	
		As with the proof of Lemma~\ref{lem:deterministic_local_property}, we assume that the sub-block $q_{i'}$ is not of the rightmost $c\log^*n$ sub-blocks in the sequence of type~\ref{item:type_cointossing} sub-blocks.
		By definition, all sub-blocks in the sequence are shorter than $(3/2)^{\mu+1}$. 
		Also, there are at least $c\log^*n$ sub-blocks in the sequence after the sub-block that starts at $q_{i'}$.
		Therefore, for all $0\le k\le c\log^*n$ it follows that $q_{i'+k+1} \le p_{i+1}+2(3/2)^{\mu}c\log^*n$.
		Hence, from the above remark it follows that the blocks partition at level $\mu$ is forward synchronized for at least $c\log^*n+1$ blocks from $q_{i'}$ and $q_{j'}$.
		
		Notice that $p_{i+1}=q_{i'+\tilde{k}}$ for some $1\le \tilde{k}\le 11 < c\log^*n$ since the position of $p_{i+1}$ was selected for level $\mu+1$ either due to $q_{i'+\tilde{k}}$ having a local minimum label value, or if $q_{i'+\tilde{k}}$ is the right end-point of one of the last $c\log^*n$ sub-blocks in the sequence.
		For the latter case, the proof is similar to that of type~\ref{item:type_edge}.
		
		If $q_{i'+\tilde{k}}$ was chosen for level $\mu+1$ due to its label value, then there are at least $c\log^*n$ short sub-blocks in the sequence after $q_{i'+\tilde{k}}$.
		Thus, the range of $k$ such that $q_{i'+k+1} \le p_{i+1}+2(3/2)^{\mu}c\log^*n$ can be extended to reach up to $\tilde{k}+c\log^*n$, and with it the range of forward synchronization of blocks from $q_{i'}$ and $q_{j'}$.
		Since the $\lab$ of every position is dependent on the next $c\log^*n$ blocks to the right of the block, then for every $0\le k\le \tilde{k}+1$ it follows that $\lab(q_{i'+k})=\lab(p_{j'+k})$.
		Since $q_{i'+\tilde{k}}=p_{i+1}$ is the first block with local minimum label value after $p_{i}$, then the same holds for $q_{j'+\tilde{k}}$ and $p_{j}$, thus $p_{j+1} = p_{j} + (p_{i+1} - p_{i})$.
		
		\paragraph{Type~\ref{item:type_edge}.}
		Let $q_{i'+\tilde{k}}$ for $1\le\tilde{k}\le c\log^*n$ be the first position to the right of the type~\ref{item:type_edge} sub-blocks sequence.
		Since all blocks in the sequence are shorter than $(3/2)^{\mu+1}$, then as shown in the proof of the previous case (type~\ref{item:type_cointossing}) it follows that the blocks at level $\mu$ are forward synchronized from $q_{i'}$ and $q_{j'}$ up to $q_{i'+\tilde{k}}$ and $q_{j'+\tilde{k}}$ respectively.
		Remark that if $q_{j'+\tilde{k}+1}$ is the start of sub-block shorter than $(3/2)^{\mu+1}$, then an equal block must start at $q_{i'+\tilde{k}}$.
		Since $q_{i'+\tilde{k}}$ is the first position after a type~\ref{item:type_edge} sequence, if it is the beginning of a short sub-block, then it must be the beginning of a type~\ref{item:type_contiguous} sub-blocks sequence.
		In that case $q_{i'+\tilde{k}+2}< p_{i+1}+2(3/2)^{\mu}c\log^*n$ and thus from the forward synchronization assumption it follows that $q_{j'+\tilde{k}}$ is also the beginning of a type~\ref{item:type_contiguous} sub-blocks sequence.
		Therefore in any case $q_{j'+\tilde{k}}$ is also the end of a type~\ref{item:type_edge} sequence.
		
		Since in the case of a type~\ref{item:type_edge} sub-block, picking a position for the next level depends only on the distance from the end of the sequence and the sequence's length, and since the positions of $p_{i}$ and $p_{j}$ are synchronized up to the end of the sequence, it follows that $p_{j+1}=p_{j} + p_{i+1} - p_{i}$. 
		
	\end{proof}

	\subsection{Efficient Computation Using Only \texorpdfstring{$\O(\frac n\tau+\log \tau)$}{small amount of} Working Space}\label{app:deterministic_in_space}

	In this section we describe how to compute the set $P_{{\log_{3/2}\tau}}$ using only $\O(\frac n\tau+\log \tau)$ words of space.
	We focus on the perspective of the hierarchical decomposition of $S$ as a tree. Since each block in level $\mu$ of the decomposition is composed of complete sub-blocks of level $\mu-1$, the hierarchical decomposition induces a tree of blocks. Formally, for each block $[x..y-1]$ in level $\mu$, we consider all the blocks of level $\mu-1$ that compose $[x..y-1]$ as its children. 
	The important property for the algorithm is that the decision whether a sub-block from level $\mu-1$ is a leftmost endpoint of a block in level $\mu$ can be taken using information of an only constant number of sub-blocks to the left and to the right of the block. 
	We will denote by $\lab_\mu([x..y-1],j)$ the label of the sub-block $[x..y-1]$ after the $j$th inner-iteration in the outer-iteration that computes the decomposition of level $\mu$, based on the decomposition of level $\mu-1$.

	\paragraph{Labels list.}
	We slightly generalize the definition of labels that we gave in Section~\ref{sec:determinsiticSelection} in order to clarify how local is the transformation from $P_{\mu-1}$ to $P_{\mu}$.
	For each sub-block $B=[x..y-1]$ in level $\mu-1$, we define an ordered list of $\O(\log^*n)$ labels $\mathcal L_B$ as follows. If the length of $B$ is at least $(3/2)^{\mu}$ then $\mathcal L_B=\emptyset$.
	Otherwise, let $B'=[y..z-1]$ be the sub-block following $B$ in level $\mu-1$. 
	If $S[x..y-1]=S[y..z-1]$, we define $\mathcal L_B=\emptyset$ as well.
	On the other hand, if $S[x..y-1]\ne S[y..z-1]$, then the size of $\mathcal L_B$ is $\min\{|\mathcal L_{B'}|+1,c\log^*n\}$. The $j$th element in $\mathcal L_B$ is $\lab_\mu(B,j)$. 
	Notice that for $j>1$, given the $(j-1)$th elements of $\mathcal L_B$ and $\mathcal L_{B'}$ the computation of $\lab_\mu(B,j)$ can be done in constant time without any additional information. 
	The computation of $\lab_\mu(B,1)$ is done in time which is linear in the length of $B$ and $B'$.
	The algorithm maintains with each block $B$ all the elements of $\mathcal L_B$ explicitly, except for the label $\lab_{\mu}(B,0)$, which is stored implicitly by the range of $B$ (recall that $\lab_{\mu}(B,0)$ is the character of $\Pi=\{0,1,\dots,2^{\ceil{\log_2 (|\Sigma|+1)}n}\}$ corresponded to $S[x..y-1]$).
	Since the sizes of the labels in $\mathcal L_B$ decreases logarithmically, the total size in bits of the whole list is dominated by the first element $\lab_{\mu-1}(B,1)$. Therefore, the space usage of the whole list is $O(\log n)$ bits, which is $O(1)$ words of space.  
	Thus, each block in level $\mu-1$ is stored by the algorithm uses $\O(1)$ words of space.

	\paragraph{Efficient computation of labels list.}
	Notice that the $j$th element $\mathcal L_B$ is a function of the of the $(j-1)$th elements of $\mathcal L_B$ and $\mathcal L_{B'}$. 
	Moreover, given $\mathcal L_{B'}$ from position $(j-1)$ to $|\mathcal L_{B'}|$ and the $(j-1)$th element of $\mathcal L_B$, all the elements of $\mathcal L_B$ form position $j$ to $|\mathcal L_B|$ follows deterministically.
	Hence, instead of running all the inner-iterations described above, the algorithm will use a \emph{lookup-table}. 
	First, the algorithm runs three ($3$) inner-iterations, so the size of $\lab_{\mu-1}(B,3)$ is $O(\log\log\log n)$ bits. In addition the total size of the elements in $\mathcal L_{B'}$ from the third elements till the end is also $O(\log\log\log n)$ bits. 
	Therefore, the number of pairs such that the first one is the third element in the list of one block and the second is all the elements in the list of the next block is $2^{\O(\log\log\log n)}=\O(\log^a\log n)$ for some constant $a$. For each such pair, the algorithm computes in advance all the remaining elements after the third element in $\mathcal L_B$, which takes all together $O(\log\log\log n)$ bits. 
	Hence, the total size of the lookup table is $O(\log^a\log n \cdot \log\log\log n)=o(\log n)$ bits, and the query time is constant. 
	Therefore, for each block the algorithm computes just three inner iterations and one query to the lookup table.

	\paragraph{Using the labels list.}
	Now, using the labels lists, we describe how to decide whether the block $B$ in level $\mu-1$ is a leftmost sub-block of some block of level $\mu$, using only the information of  one blocks to the left and four blocks to the right of $B$. 
	First of all, we identify for each sub-block  the type of sequence it belongs. 
	Identifying of sub-blocks of  type~\ref{item:type_large} is done by checking their length, regardless any information of other blocks. 
	Let $B$ be a sub-block from level $\mu-1$, and let $B'$ be the sub-block to the right of $B$. 
	$B$ is in a sequence of type~\ref{item:type_contiguous} if and only if $|B|<(3/2)^{\mu+1}$ and either $\mathcal{L}_B=\emptyset$ (in the case $B$ is not the rightmost sub-block in the sequence) or $\mathcal{L}_{B'}=\emptyset$ and $|B|=|B'|$ (in the case where $B$ is the rightmost sub-block in the sequence). 
	Each block which is not of type~\ref{item:type_large} or~\ref{item:type_contiguous} is obviously of type~\ref{item:type_cointossing} or~\ref{item:type_edge}. Since the treat of the rightmost $c\log^*n$ sub-blocks in a sequence of type~\ref{item:type_cointossing} is exactly as the treat for sub-blocks in a sequence of type~\ref{item:type_edge} the algorithm is not required to distinguish between them.

	Let $B$ be a block in level $\mu-1$. The algorithm sets $B$ as a leftmost sub-block in level $\mu$ due to the type of its sequence. 
	If $B$ is in a sequence of type~\ref{item:type_large} it always moved to level $\mu+1$. 
	If $B$ is in a sequence of type~\ref{item:type_contiguous}, the algorithm sets a block $B$ only if its preceding block is not in the sequence, which can be detected given the two sub-blocks to the left of $B$.
	If $B$ is in a sequence of type~\ref{item:type_cointossing}, and $|\mathcal L_B|=c\log^*n$ then we use the number of the block which is the last element in $\mathcal L_B$ and choose $B$  if this number is smaller than the numbers of the sub-blocks to its left and right. The treat of margin cases is obvious.
	If $B$ is  treated as a block in a sequence of type~\ref{item:type_edge} (which includes the rightmost $c\log^*n$ blocks in a sequence of type~\ref{item:type_cointossing}), we take all the blocks that are in an even distance from the right end of this sequence, which can be identified due to the parity of $|\mathcal L_B|$. The only exceptions are blocks near the ends of the sequence, which can be treated as described above, given the information of $4$ blocks ahead.

	\paragraph{The scheduling.}
	Further to the discussion above, the algorithm maintains any time at most $6$ blocks in any level of the hierarchical decomposition, where each block is maintained with its label set. 
	The process is done in recursion from the bottom level to the top-level and from right to left.
	Whenever the algorithm detects a new block in level $\mu-1$, the algorithm checks whether the block which is $4$ blocks to the right of the current block is a leftmost sub-block in level $\mu$.
	If the answer is positive, the algorithm creates a new block in level $\mu$, from the end of the last block that was created in level $\mu$ until the leftmost end of the new block.
	After the recursive call ends, the algorithm deletes old blocks, so at any time each level maintains explicitly only constant number of blocks.
	The only exceptional level is, of course, the top-level, where the algorithm maintains all the blocks and in particular the corresponding positions which compose $P_{{\log_{3/2}\tau}}$.

	\paragraph{Complexities.}
	The algorithm maintains $\O(\log \tau)$ levels, where for each level the algorithm maintains $\O(1)$ elements with $\O(1)$ words of space per element. Hence, all the levels, except the top-level use $\O(\log\tau)$ words of space. 
	In the top-level the algorithm maintains all the blocks, which due to Corollary~\ref{cor:P_logtau_fine}  is a set of size $\O(\frac n \tau)$. 
	Notice that the blocks in the top-level are maintained without any labels set.
	Thus, the total space usage of the algorithm is $\O(\frac n \tau + \log\tau)$.
	Regarding the running time, we analyze the time required for each level of the hierarchical decomposition and summing among all the levels.
	So, for level $\mu$ of the algorithm, the time usage of the algorithm is dominated by the inner iterations. 
	The first inner-iteration in level $\mu$ takes $\O(n)$ time. 
	Any other inner iteration takes time which is linear in the number of blocks in level $\mu$ which due to Lemma~\ref{lem:consecutive_blocks_length}  is $\O\left(\frac n {(3/2)^\mu}\right)$. Since the algorithm computes $O(1)$ (three) inner iterations for level $\mu$, and then uses the lookup table, the time usage of the algorithm for level $\mu$ is $\O(n+\frac n {(3/2)^\mu})$.
	Summing over all the levels, 
	$$\sum_{\mu=0}^{\log_{3/2}\tau}{\O(n+\frac n {(3/2)^\mu})}= \O(n\log\tau)+\O(n)\sum_{\mu=0}^{\log_{3/2}\tau}\left(\frac 2 3\right)^\mu = \O(n\log\tau+ n) = \O(n\log\tau).$$

	Thus, combining  with Corollary~\ref{cor:P_logtau_tweleve_fine} and Lemma~\ref{lem:det_is_forward_sync} we  
	receive the following corollary
	\begin{corollary}\label{cor:determinstic_set_complete}
		For any $1\le \tau\le \O(\frac n{\log n})$  there exists a deterministic algorithm that computes a forward synchronized $(\tau,\O(\tau\log^*n))$-partitioning set of size $\O(\frac n\tau)$ in $\O(n\log\tau)$ time using only $\O(\frac n\tau)$ words of working space.
	\end{corollary}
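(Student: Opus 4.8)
The plan is to simply assemble the ingredients that have already been established in Section~\ref{sec:determinsiticSelection} and its appendix. First I would run the scheduling algorithm of Appendix~\ref{app:deterministic_in_space}, but targeting level $\mu=\log_{3/2}(\tau/12)$ rather than $\log_{3/2}\tau$: this changes the top level only by an additive constant, so the same bottom-up, right-to-left recursion applies verbatim, maintaining $\O(1)$ blocks per level, each stored together with its label list $\mathcal L_B$ of $\O(\log^*n)$ words, and computing $P_{\log_{3/2}(\tau/12)}$ explicitly at the top level. By the complexity analysis given there, this runs in $\O(n(\log\tau+\log^*n))$ time and $\O\!\left(\frac n\tau+\log\tau\cdot\log^*n\right)$ words of working space.

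Next I would verify that the output has the required structure. By Corollary~\ref{cor:P_logtau_tweleve_fine}, $P_{\log_{3/2}(\tau/12)}$ is a $(\tau,\O(\tau\log^*n))$-partitioning set of size $\O(\frac n\tau)$, and by Lemma~\ref{lem:det_is_forward_sync} it is forward synchronized. Hence the set returned by the algorithm already satisfies all the properties claimed in the corollary, and nothing further needs to be proved about its combinatorial structure.

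The last step is to collapse the space bound. Under the hypothesis $\tau\le\O\!\left(\frac n{\log n\cdot\log^*n}\right)$ we have $\log n\cdot\log^*n=\O(\frac n\tau)$, and since $\log\tau\le\log n$ this yields $\log\tau\cdot\log^*n=\O(\frac n\tau)$; therefore $\O\!\left(\frac n\tau+\log\tau\cdot\log^*n\right)=\O(\frac n\tau)$, which is the stated working-space bound. I expect the only mildly delicate point to be the bookkeeping: checking that replacing the top level $\log_{3/2}\tau$ by $\log_{3/2}(\tau/12)$ leaves the time and space analysis of Appendix~\ref{app:deterministic_in_space} valid without change (it does, since that analysis only uses that the number of levels is $\O(\log\tau)$ and that the top level carries $\O(\frac n\tau)$ blocks, both of which still hold by Lemma~\ref{lem:consecutive_blocks_length} and Corollary~\ref{cor:P_logtau_tweleve_fine}), and that no summand of the space budget other than $\log\tau\cdot\log^*n$ can exceed $\O(\frac n\tau)$.
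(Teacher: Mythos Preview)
Your proposal is correct and follows essentially the same approach as the paper: the corollary is stated immediately after the complexity analysis of Appendix~\ref{app:deterministic_in_space} as a direct combination of that analysis with Corollary~\ref{cor:P_logtau_tweleve_fine} and Lemma~\ref{lem:det_is_forward_sync}. Your explicit check that $\log\tau\cdot\log^*n=\O(\frac n\tau)$ under the hypothesis on $\tau$, and your remark that shifting the top level to $\log_{3/2}(\tau/12)$ only changes the number of levels by an additive constant, simply spell out details the paper leaves implicit.
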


	\section{Missing Details for the Improved Deterministic \texorpdfstring{$\LCE$}{LCE} Data Structure}\label{app:better_deterministic}
	
	In this appendix we complete the details for the improved $\LCE$ construction  from Section~\ref{sec:better_deterministic}.
	The algorithm for the case $\tau<\sqrt{\log^*n}$ appears at the end of the section, thus we assume here $\tau\ge \sqrt{\log^* n}$.
	Recall, that we have defined the set $P$ as a $(\O(\tau'),\O(\tau'\log^*n))$-partitioning set ($\tau'=\frac \tau{\sqrt{\log^*n}}$) and $Q\subset P$, as a special subset of $P$ that is used by the algorithm.
	The data structure maintains the $\SST$ of the set $Q$. 
	Then, the query computation for the non-periodic case is similar to the first phase of the $\LCE$ data structure of Section~\ref{sec:LCE_construction}. The algorithm just compares characters until it finds a mismatch or corresponding selected positions. 
	However, in the case that there is no mismatch, the common prefix  can be considered as composed of two parts. In the first part the algorithm reads $\O(\delta)=\O(\tau'\log^*n)$ characters, to reach synchronization due to positions of $P$. Then, in the second part the algorithm reads another $\O(\tau'\log^*n)$ until it finds two corresponding positions that are actually in $Q$ and not only in $P$ (recall that the set $P$ is not maintained by the algorithm).
	
	Let us focus on the second part, in this part the algorithm reads characters from at most $\log^*n$ blocks, and since we have assumed that each block is of length at most $\tau'$, the total length was $\tau'\log^*n$. However, when we have long periodic blocks, this assumption does not hold anymore.
	Hence, we generalize our method. Instead of picking positions from $P$ based only on their rank, we add some ``weights" to the task.
	For each block $[p_i..p_{i+1}-1]$ induced by $P$, let  $\ell_i=p_{i+1}-p_i$. We associate with the block of $p_i$, $\ceil{\frac {\ell_{i-1}}{\tau'}}$ \emph{tokens} (we emphasize that the number of tokens of one block depends on the length of the \emph{preceding} block), where the tokens are numbered ($1,2,3,\dots$). Then, we select all the positions that have a token which its  index $i$ is an integer multiple of $\sqrt{\log^*n}$  or $i\modulo \log^*n<\sqrt{\log^*n}$. Notice that this method is a generalization of the method that was introduced in Section~\ref{sec:better_deterministic}, since there the length of each block is at most $\tau'$ and therefore each block has exactly one token.
	
	First of all, we argue that the size of the set $Q$ produced by the algorithm is still $\O(\frac {n}{\tau})$.
	
	\begin{lemma}
		$|Q|=\O(\frac n\tau)$.
	\end{lemma}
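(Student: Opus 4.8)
The plan is to bound $|Q|$ by counting tokens rather than positions. Recall that $Q$ is obtained by assigning to each block starting at $p_i\in P$ a number of tokens equal to $\ceil{\ell_{i-1}/\tau'}$, where $\ell_{i-1}=p_i-p_{i-1}$ is the length of the \emph{preceding} block, numbering all these tokens globally $1,2,3,\dots$ along $P$, and then selecting a position whenever it owns a token whose global index is a multiple of $\sqrt{\log^*n}$ or lies in a residue class $\{0,1,\dots,\sqrt{\log^*n}-1\}$ modulo $\log^*n$.

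First I would compute the total number of tokens $T=\sum_i \ceil{\ell_{i-1}/\tau'}$. Since $\ceil{\ell_{i-1}/\tau'}\le \ell_{i-1}/\tau' + 1$ and the block lengths $\ell_{i-1}$ sum to at most $n$ (they partition $S$), we get $T\le n/\tau' + |P| = \O(n/\tau') = \O\!\left(\frac n\tau\sqrt{\log^*n}\right)$, using $\tau'=\tau/\sqrt{\log^*n}$ and $|P|=\O(n/\tau')$ from Corollary~\ref{cor:P_logtau_fine} (applied with parameter $\tau'$). Next, the number of token indices in $\{1,\dots,T\}$ that are multiples of $\sqrt{\log^*n}$ is at most $T/\sqrt{\log^*n}+1$, and the number lying in the chosen residue classes modulo $\log^*n$ is at most $\sqrt{\log^*n}\cdot(T/\log^*n)+\sqrt{\log^*n} = T/\sqrt{\log^*n}+\sqrt{\log^*n}$. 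Hence the total number of \emph{selected tokens} is $\O(T/\sqrt{\log^*n})=\O(n/\tau)$. Since each selected token causes at most one position to enter $Q$ (and a position may be selected by several of its tokens but is counted once), we conclude $|Q|=\O(n/\tau)$.

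The one point that needs a little care — and which I expect to be the main obstacle — is making sure the token bookkeeping is consistent: a single block owns a contiguous run of token indices, so "$i\modulo\log^*n<\sqrt{\log^*n}$ for some token $i$ of the block" and "$i$ a multiple of $\sqrt{\log^*n}$ for some token $i$ of the block" are exactly the triggering conditions, and the count of triggering token indices over $\{1,\dots,T\}$ is an upper bound for $|Q|$ because distinct selected positions own disjoint sets of token indices. One should also check the edge cases (the first block, which has no predecessor, and the sentinel positions $1$ and $n+1$), but these contribute only $\O(1)$ extra and do not affect the asymptotics. Putting these together gives $|Q|=\O\!\left(\frac{1}{\sqrt{\log^*n}}\cdot\frac n\tau\sqrt{\log^*n}\right)=\O(n/\tau)$, as claimed.
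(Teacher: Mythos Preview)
Your proposal is correct and follows essentially the same approach as the paper: bound the total number of tokens by $\sum_i\ceil{\ell_{i-1}/\tau'}\le |P|+n/\tau'=\O(n/\tau')=\O\!\left(\tfrac{n}{\tau}\sqrt{\log^*n}\right)$, then observe that only an $\O(1/\sqrt{\log^*n})$ fraction of token indices satisfy the selection criteria, so $|Q|\le\#\{\text{selected tokens}\}=\O(n/\tau)$. Your write-up is somewhat more explicit than the paper's about the arithmetic of the two selection rules and the fact that multiple selected tokens on one block only help, but these are exactly the points the paper covers in its last two sentences.
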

	\begin{proof}
		Recall that $|P|=\O(\frac n{\tau'})=\O(\frac {n}{\tau}\sqrt{\log^*n})$.  First we bound the number of tokens among all blocks induced by $P$.
		$$\sum_{i=1}^{|P|}\ceil{\frac{\ell_{i-1}}{\tau'}}\le \sum_{i=1}^{|P|}1+\frac{\ell_{i-1}}{\tau'}\le |P|+\frac 1{\tau'}\sum_{i=1}^{|P|}\ell_{i-1}\le |P|+\frac n{\tau'}=\O\left(\frac n{\tau'}\right)=\O\left(\frac n\tau\sqrt{\log^*n}\right)$$
		Hence, the number of tokens provided by the algorithm is $\O(\frac n\tau\sqrt{\log^*n})$.
		
		The algorithm adds to $P$ only positions which have token with a number which is an integer multiple of $\sqrt{\log^*n}$ or that its remainder modulo $\log^*n$  is less that $\sqrt{\log^*n}$. Hence, the size of $Q$ is at most $\O(\frac n\tau \sqrt{\log^*n}/\sqrt{\log^*n})=\O(\frac n\tau)$. Notice that, a case where one block has multiple tokens that should be selected just reduces the size of the set $Q$.
	\end{proof}
	
	As in Section~\ref{sec:better_deterministic}, the algorithm creates the $\SST$ of the set $Q$ in the construction phase. If all the blocks induced by $P$ are of length at most $\tau\sqrt{\log^*n}$ then with analysis which is similar to the proof of Lemma~\ref{lem:better_deterministic_non_periodic}, one can prove that the simple algorithm that compares characters until the algorithm finds a mismatch or corresponding positions will work and the query time will be $\O(\tau\sqrt{\log^*n})$. 
	However, as in the simple $\LCE$ data structure, we might have very long blocks, even due to $P$, and therefore we cannot hope to reach a pair of corresponding positions in $\O(\tau\sqrt{\log^*n})$ time. Therefore, for such cases we use similar techniques to the periodic case in the basic $\LCE$ data structure. 
	Let $c'>1$ be a constant such that $P$ is a $(\tau',c'\tau'\log^*n)$-partitioning set (of size $\O(\frac n{\tau'})$), 
	in the terms of Section~\ref{sec:LCE_construction} we have $\delta=c'\tau'\log^*n= c'\tau\sqrt{\log^*n}$.
	
	\begin{lemma}\label{lem:better_determinsitc_periodic}
		If $q_i<q_{i+1}$ are two consecutive positions in $Q$, such that $q_{i+1}-q_i>2\delta$ then it must be that $S[q_i+\delta..q_{i+1}-1]$ is a periodic string with period length of at most $\tau'=\frac \tau{\sqrt{\log^*n}}$.
	\end{lemma}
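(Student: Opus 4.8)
The plan is to combine the token-based definition of $Q$ with Property~\ref{item:blocks_size} of Definition~\ref{def:landmarks_set}, applied to the partitioning set $P$ (which is a $(\tau',c'\tau'\log^*n)$-partitioning set, so in the terminology of Section~\ref{sec:LCE_construction} we have $\delta=c'\tau'\log^*n$). First I would write $q_i=p_k$ and $q_{i+1}=p_m$ for the corresponding positions in the sorted order of $P$, with $k<m$. Since $Q\subseteq P$ and $q_i,q_{i+1}$ are consecutive in $Q$, none of the intermediate positions $p_{k+1},\dots,p_{m-1}$ belongs to $Q$. The remark about boundary positions of $S$ is handled by the $\#$-padding convention of Section~\ref{sec:LCE_construction}, so we may assume all these positions are interior.

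The combinatorial heart of the proof is a counting argument on tokens. The tokens carried by $p_{k+1},\dots,p_{m-1}$ occupy a contiguous range of token indices, because tokens are numbered $1,2,3,\dots$ in left-to-right order of $P$ and the block starting at $p_j$ carries exactly $\ceil{\ell_{j-1}/\tau'}$ tokens, where $\ell_{j-1}=p_j-p_{j-1}$. Since $p_j\notin Q$ for $k<j<m$, none of these token indices is a multiple of $\sqrt{\log^*n}$ nor congruent modulo $\log^*n$ to a value below $\sqrt{\log^*n}$; in particular this contiguous range contains no multiple of $\sqrt{\log^*n}$, so as in the proof of Lemma~\ref{lem:better_deterministic_non_periodic} it has fewer than $\sqrt{\log^*n}$ elements. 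Hence $\sum_{j=k}^{m-2}\ceil{\ell_j/\tau'}\le\sqrt{\log^*n}-1$, and therefore
\[
p_{m-1}-p_k \;=\; \sum_{j=k}^{m-2}\ell_j \;\le\; \tau'\sum_{j=k}^{m-2}\ceil{\tfrac{\ell_j}{\tau'}} \;\le\; \tau'(\sqrt{\log^*n}-1) \;<\; \tau'\log^*n \;\le\; \delta .
\]

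From this, two facts follow. First, $p_{m-1}\le p_k+\delta=q_i+\delta$, so the interval $[q_i+\delta..q_{i+1}-1]$ is contained in the single block $[p_{m-1}..p_m-1]$ of the partitioning induced by $P$. Second, $\ell_{m-1}=p_m-p_{m-1}=(q_{i+1}-q_i)-(p_{m-1}-p_k)>2\delta-\delta=\delta>\tau'$, so by Property~\ref{item:blocks_size} of Definition~\ref{def:landmarks_set} this block is periodic with principal period $\rho\le\tau'$. Finally, $S[q_i+\delta..q_{i+1}-1]$ is a substring of $S[p_{m-1}..p_m-1]$ of length $(q_{i+1}-q_i)-\delta>\delta\ge 2\tau'\ge 2\rho$, hence $\rho$ is a period of it and its principal period is at most $\tau'$; in the sense of Section~\ref{sec:preliminaries} it is periodic with period at most $\tau'$. (For the finitely many small values of $n$ where the constant-size inequalities $\tau'\sqrt{\log^*n}\le\delta$ and $\delta\ge 2\tau'$ fail, we have $\tau'=\Theta(\tau)$ and the claim is immediate.)

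The part requiring the most care is the token-counting step: one must be precise that it is exactly the tokens of the \emph{strictly} intermediate positions $p_{k+1},\dots,p_{m-1}$ that form the unselected contiguous run (the token witnessing $q_i\in Q$ and the token witnessing $q_{i+1}\in Q$ lie just outside this run), and one must verify that $\ceil{\ell_{j-1}/\tau'}$ telescopes the block lengths correctly and that the chosen constant $c'>1$ makes $\tau'\sqrt{\log^*n}\le\delta$, so that the conclusion holds with period \emph{exactly} $\le\tau'$ and not merely $\O(\tau')$.
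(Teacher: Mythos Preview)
Your proposal is correct and follows essentially the same token-counting argument as the paper. The paper phrases it by contradiction (assume the substring is not periodic, deduce a position $p\in P$ in $(q_i+\delta,q_{i+1})$, then count that the tokens up to $p$ exceed $\log^*n$ and hence some intermediate position must lie in $Q$), while you run the same count directly to bound $p_{m-1}-p_k<\delta$ and conclude that $[q_i+\delta..q_{i+1}-1]$ sits inside a single long (hence periodic) $P$-block; the two are contrapositives of each other, and your version is in fact a bit tighter and more carefully stated.
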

	\begin{proof}
		Assume by contradiction that $S[q_i+\delta..q_{i+1}-1]$ is not a periodic substring. Hence, it must be that in the partitioning of $S$ induced by $P$, there exists at least one position in the range $q_i+\delta<p<q_{i+1}$. 
		We consider the number of tokens given for blocks beginning from the block following $q_i$ until the block of $p$. 
		Let us denote the set of blocks in this range as $\hat P$, then the total number of tokens given for these blocks is at least
		$$\sum_{i=1}^{|\hat P|}\ceil{\frac{\ell_{i-1}}{\tau'}}\ge \sum_{i=1}^{|\hat P|}\frac{\ell_{i-1}}{\tau'}\ge \frac 1{\tau'}\sum_{i=1}^{|\hat P|}\ell_{i-1} = \frac \delta{\tau'}=c'\log^*n>\log^*n$$
		
		Since in every sequence of $\sqrt{\log^*n}$ tokens the algorithm inserts at least one position from $P$ to $Q$, it must be that there exists some position $q\in Q$ such that $q_i<q\le p<q_{i+1}$, which contradicts the fact that $q_i$ and $q_{i+1}$ are consecutive positions in $Q$. Hence, it must be that $S[q_i+\delta..q_{i+1}-1]$ is a periodic block with period length of at most $\tau'=\frac \tau{\sqrt{\log^*n}}$.
	\end{proof}

	Now, using Lemma~\ref{lem:better_determinsitc_periodic} the algorithm works as follows. First, the algorithm compares $4\delta$ characters. If the algorithm finds a mismatch or a pair of selected corresponding positions in $P$, the algorithm continues as in the non-periodic case.
	Otherwise, due to  Lemma~\ref{lem:center_blocks} it must be that 
	$$\{p-i\,|\,p\in(P\cap[i+\delta..i+\ell-\delta-1])\}=\{p-j\,|\,p\in(P\cap[j+\delta..j+\ell-\delta-1])\}.$$
	In particular, since $\ell>4\delta$ we have that:	$$\Delta=\{p-i\,|\,p\in(P\cap[i+\delta..i+3\delta-1])\}=\{p-j\,|\,p\in(P\cap[j+\delta..j+3\delta-1])\}.$$ 
	Hence, due to Lemma~\ref{lem:better_determinsitc_periodic} it must be that all the characters in $S[i+2\delta..i+3\delta-1]$ and $S[j+2\delta..j+3\delta-1]$ are in long periodic blocks due to partitioning induced by $P$.  
	Let $\alpha_i=\suc_Q(i+2\delta)-i$ and $\alpha_j=\suc_Q(j+2\delta)-j$. If $\alpha_i=\alpha_j$ the algorithm uses the $\SST$ on $Q$ to compute the $\LCE(\suc_Q(i+2\delta),\suc_Q(j+2\delta))$ to compute the $\LCE(i,j)$ query. 
	Otherwise, due to Lemma~\ref{lem:better_determinsitc_periodic} we have that $S[i+2\delta..i+\alpha_i]$ and $S[j+2\delta..j+\alpha_j]$ are periodic substrings with the same period and therefore if $\alpha=\min\{\alpha_i,\alpha_j\}$ then $S[i..i+\alpha]=S[j..j+\alpha]$. Moreover, as in Lemma~\ref{lem:end_of_periodic_block_close_to_LCE}, it must be the case that $\LCE(i,j)\le\alpha+\delta$ and therefore, by reading another at most $\delta$ characters from  $S[i+\alpha]$ and $S[j+\alpha]$ the algorithm is guaranteed to find the first mismatch between the suffixes.

	\subsection{Algorithm for \texorpdfstring{ $\tau<\sqrt {\log^*n}$}{very small tau}}
	If $\tau<\sqrt {\log^*n}$ then we use even simpler algorithm. We define $P=\{1,2,3,\dots,n\}$, and we use Lemma~\ref{lem:diff-cover} with $t=\tau^2$. Let $Q\subseteq P$ be a set of size $|Q|=\O(\frac n \tau)$ such that for every $i,j\in [n-\tau^2]$ there exists $k\le \tau^2$  such that both $i+k\in Q$ and $j+k\in Q$. Thus, the algorithm computes $Q$ and builds the $\SST$ on $Q$. In addition the algorithm computes the LCA information for the $\SST$.
	The query computation is straightforward. The algorithm compares at most $\tau^2$ pairs of corresponding characters. If a mismatch is found, the algorithm reports the $\LCE$. Otherwise, the algorithm finds the smallest $k\ge 0$ such that $i+k,j+k\in Q$ and returns the sum of $k$ and $\LCE(i+k,j+k)$.
	The space usage of this data structure is of course $\O(\frac n\tau)$ words of space, and the query time is $\O(\tau^2)$. Since $\tau<\sqrt{\log^*n}$ we have $\O(\tau^2)\subseteq \O(\tau\sqrt{\log^* n})$, as required.

	\section{Useful Tools and Lemmas}~\label{app:useful_tools}
	
	In this appendix we introduce several theorems and lemmas which we use as basic tools in our data structures and algorithms.
	
	\subsection*{String Periodicity}
	The following theorem is one of the most basic theorems regarding  periodicity of strings.
	\begin{theorem}[Fine and Wilf~\cite{FW65}]\label{thm:FineAndWilf}
		Let $u$ be a string. If $u$ has periods of length $p$ and $q$ and $|u|\ge p+q-\gcd(p,q)$, then $u$ also has a period  of length $\gcd(p,q)$.
	\end{theorem}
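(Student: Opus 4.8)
The plan is to prove the contrapositive-free direct statement by a descent argument on the two periods. I would first set up notation: let $u[1..m]$ be the string with $m = |u|$, and suppose $u$ has a period of length $p$ and a period of length $q$, with (without loss of generality) $p \ge q$, and $m \ge p + q - \gcd(p,q)$. The goal is to show $u$ has period $q - (p \bmod q)$ or, more cleanly, to reduce to a pair of smaller periods and induct. The base case is $p = q$, where the statement is trivial, or $q \mid p$, in which case one checks directly that $q = \gcd(p,q)$ is already a period. So I would induct on $p + q$ (or on $p$), with the inductive hypothesis applied to the pair $(q, r)$ where $r = p - q$ if $r \ge q$, or to the pair $(q - r, q)$ — i.e. essentially run one step of the Euclidean algorithm on $(p,q)$.

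The key step is: first show that if $p$ and $q$ are both periods of $u$ and $p \ge q$, and $m \ge p + q - \gcd(p,q) \ge p$, then $r := p - q$ is also a period of $u$. For this, take any index $i$ with $1 \le i \le m - r$. If $i + q \le m$, then $u[i] = u[i+p]$ (valid since $i + p \le m$ is guaranteed when... careful: we need $i+p \le m$; this holds when $i \le m - p = m - q - r$) gives $u[i+p] = u[i+p-q] = u[i+r]$ using the $q$-period on the index $i+p-q = i+r \le m - q$; and if instead $i$ is large, one uses $u[i] = u[i-q] = u[i-q+p] = u[i+r]$, where the index $i - q$ is $\ge 1$ because $i \ge q$ in that regime. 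The two regimes $i \le m - p$ and $i \ge q$ must cover all of $[1, m-r]$, and this is exactly where the length bound $m - r \le (m-p) + q$ i.e. $m \le p + q - r$... no — we need $i \le m-p$ OR $i \ge q$ to cover $i \in [1, m-r]$, which fails only if $m - p < q$, i.e. $m < p + q$; since $m \ge p + q - \gcd(p,q)$ this is possible. This is the subtle point, so I would handle the "gap" indices $m - p < i < q$ separately, using Fine–Wilf's own inductive structure (those indices still get covered once one observes $\gcd$ shrinks the effective gap, or by a more careful double-counting). Then I apply the inductive hypothesis to the pair $(q, r)$: they satisfy $q + r = p < m$... and I need $m \ge q + r - \gcd(q,r)$; since $\gcd(q,r) = \gcd(p,q)$ and $q + r = p \le p + q - \gcd(p,q) \le m$, this holds. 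The induction yields $\gcd(q,r) = \gcd(p,q)$ as a period, completing the step.

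I expect the main obstacle to be exactly the boundary/gap analysis in establishing that $r = p-q$ is a period — making sure every index in $[1, m-r]$ is reached by one of the two chains of equalities, which is where the precise constant $p + q - \gcd(p,q)$ (rather than the cruder $p+q$) is forced. Handling the "middle" indices cleanly is the crux: one standard device is to phrase the whole argument as a statement about the infinite arithmetic-progression structure of positions, or to strengthen the inductive claim so that the gap never actually obstructs. I would present the careful version but only after first checking the easy cases ($q \mid p$, $p = q$) to isolate where real work is needed.

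Since this theorem (Theorem~\ref{thm:FineAndWilf}) is cited from \cite{FW65} and used as a black-box tool in the appendix, I would in fact expect the paper to simply state it without proof; if a proof is given, the above descent argument via the Euclidean algorithm on $(p,q)$ is the shortest route.
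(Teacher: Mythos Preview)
Your expectation is correct: the paper states Theorem~\ref{thm:FineAndWilf} as a cited result from \cite{FW65} and gives no proof; it is used purely as a black box in the appendix (e.g.\ in Lemma~\ref{lem:maxoverlap} and Lemma~\ref{lem:twiceperiod_equal}). So there is nothing to compare against.

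On the proof sketch itself: the Euclidean descent is the standard route, but the step you flag as ``the crux'' is in fact a genuine gap as written. Showing that $r=p-q$ is a period of \emph{all} of $u$ requires covering every $i\in[1,m-r]$ by one of the two chains, and the uncovered band $m-p<i\le q$ is non-empty precisely when $m<p+q$, which is allowed by the hypothesis $m\ge p+q-\gcd(p,q)$. The clean fix is not to patch those indices directly but to restrict: the prefix $u'=u[1..m-q]$ has period $q$ (trivially) and period $r=p-q$ (here the chain $u[i]=u[i+p]=u[i+p-q]$ works for all $i\le m-p=|u'|-r$, with no gap), and $|u'|=m-q\ge p-d=q+r-\gcd(q,r)$, so induction on $\max(p,q)$ gives $u'$ period $d=\gcd(p,q)$. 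One then lifts period $d$ from $u'$ to $u$ using that $d\mid q$ and $u$ has period $q$ (for $i>|u'|-d$, write $u[i+d]=u[i+d-q]$ and use that $i+d-q$ lies in the range where $u'$'s period $d$ already applies; this uses $|u'|\ge q$, i.e.\ $m\ge 2q$, and the residual case $q\le m<2q$ forces $p=q$ or reduces immediately). So your outline is salvageable, but the ``handle the gap separately'' step needs to be replaced by this restrict-then-lift manoeuvre rather than a direct attack on the middle indices.
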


	The following lemmas are used in several parts of the paper.
	
	\begin{lemma}\label{lem:periodSubstring}
		Let $u$ be a string with period length $p$. If $v$ is a substring of $u$ of length at least $p$, and $v$ has a period length of $q|p$, then $u$ has a period length of $q$.
	\end{lemma}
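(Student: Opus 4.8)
## Proof Plan for Lemma \ref{lem:periodSubstring}

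The plan is to show that the period structure of $v$ propagates to all of $u$ by a position-by-position argument, using the divisibility $q \mid p$ together with the fact that $v$ is long enough to ``see'' a full period $p$ of $u$. Let $|u| = n$ and $|v| = m \ge p$, and say $v = u[a..a+m-1]$ for some starting position $a$. We want to prove that $u[i] = u[i+q]$ for every $i$ with $1 \le i \le n - q$.

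First I would record the two facts we may use: (i) since $u$ has period $p$, we have $u[i] = u[i+p]$ whenever both indices lie in $[1..n]$, and more generally $u[i] = u[i + kp]$ for any integer $k$ (positive or negative) as long as both endpoints are valid indices; (ii) since $v$ has period $q$ and $q \mid p$, inside the window $[a..a+m-1]$ we have $u[i] = u[i+q]$, and in fact $u[i] = u[i + p]$ already holds within the window because $p$ is a multiple of $q$ and $m \ge p$. The key observation is that for \emph{any} position $i \in [1..n-q]$, I can find a shift by a multiple of $p$ that lands \emph{both} $i$ and $i+q$ inside the window $v$; since $m \ge p$, the window has length at least $p$, so the residue class of $i$ modulo $p$ is represented in $[a..a+p-1] \subseteq [a..a+m-1]$. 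Concretely, pick $k$ so that $i' := i + kp \in [a..a+p-1]$; then also $i' + q = i + q + kp$ lies in $[a..a+p-1+q] \subseteq [a..a+m-1]$ because $m \ge p$ forces $m \ge p$, and since $q \le p$ we get $p - 1 + q \le 2p - 1$; here one must be slightly careful, so the cleaner route is to note $q \mid p$ hence the period-$q$ structure of $v$ gives $u[i'] = u[i'+q]$ directly once $i'$ and $i'+q$ are both in $[a..a+m-1]$, which holds as long as $i' \in [a..a+m-1-q]$, and an interval of length $m - q \ge p - q \ge 0$ still contains a full residue system mod $p$ when $m \ge p$...

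The main obstacle, and the one place to be careful, is the edge case $m - q < p$: if $q$ is close to $p$ the sub-window $[a..a+m-1-q]$ might be shorter than $p$ and fail to contain every residue class modulo $p$. To handle this robustly I would instead argue as follows. Take any $i \in [1..n-q]$. Using property (i), shift $i$ by a multiple of $p$ into the window: choose $k \in \mathbb{Z}$ with $i' = i + kp \in [a..a+p-1]$, which is possible since the window has length $m \ge p$. Then $u[i] = u[i']$. Now $i' \in [a..a+p-1] \subseteq [a..a+m-1]$ and $i' + q \in [a+q .. a+p-1+q]$. Since $v$ has period $q$ and $q \mid p$, iterating the period-$q$ relation inside $v$ gives $u[i'] = u[i' + q]$ \emph{provided} $i' + q \le a + m - 1$; if instead $i' + q > a+m-1$, then because $q \mid p$ we may use the period-$p$ relation of $v$ (valid since $m \ge p$): $u[i'] = u[i' - p + p] $ — more usefully, shift $i'$ down by $p$ first if needed so that $i' \in [a..a+p-1]$ and note $p + q \le 2p \le $ ... the robust fix is: since $q \mid p$, write $p = cq$; then $u[i'] = u[i'+q] = u[i'+2q] = \cdots = u[i'+cq] = u[i'+p]$ are all valid as long as the \emph{final} index $i' + p \le a + m - 1$, i.e. $i' \le a + m - 1 - p$; since $m \ge p$, the interval $[a .. a+m-1-p]$ is nonempty but may be a single point, yet combined with the freedom to also pick $i' \in [a..a+p-1]$ via $+p$-shifts of $i$, we can always arrange $i'$ to lie in $[a .. a+m-1-p]$ \emph{or} directly get the chain of equalities; in either arrangement we conclude $u[i'] = u[i'+q]$.

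Finally, having shown $u[i'] = u[i'+q]$ with $i' \equiv i \pmod p$, apply property (i) once more to translate back: $u[i] = u[i+kp] = u[i'] = u[i'+q] = u[i+q+kp] = u[i+q]$, where each use of (i) is justified because the relevant indices lie in $[1..n]$ (this uses $i \le n-q$ for the last one). Since $i \in [1..n-q]$ was arbitrary, $u$ has period $q$, completing the proof. I would present this cleanly by first stating the ``residue-class'' sublemma — that because $m \ge p$ and $q \mid p$, for every position $i$ of $u$ there is a position $i'$ of $v$ with $i \equiv i' \pmod{p}$ and the whole period-$q$ chain from $i'$ to $i'+p$ inside $v$ — and then doing the two translations. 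The only real subtlety to nail down carefully in the write-up is exactly this window-fitting argument; everything else is mechanical index-chasing.
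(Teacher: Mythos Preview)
Your overall strategy is exactly the paper's: translate $i$ by a multiple of $p$ into the window $v$, use the period-$q$ structure there, and translate back. The gap is in the edge case you yourself flag. Your proposed fix---the chain $u[i']=u[i'+q]=\cdots=u[i'+p]$ and the claim that one can ``always arrange $i'$ to lie in $[a..a+m-1-p]$''---does not go through. When $m=p$ (the minimal length allowed by the hypothesis) the interval $[a..a+m-1-p]=[a..a-1]$ is empty, not a single point; and even for $p\le m<p+q$ the sub-window $[a..a+m-1-q]$ has fewer than $p$ positions and therefore misses some residue classes modulo $p$. So for $i'$ near the right end of $[a..a+p-1]$ the step $u[i']=u[i'+q]$ is simply unavailable inside $v$, and no amount of $+p$-shifting of $i$ repairs this.

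The clean repair, which is precisely what the paper does, is to stop insisting on the representative $i'+q$ for the class of $i+q$. Put $i'\in[a..a+p-1]$ with $i'\equiv i\pmod p$ as you do. If $i'+q\le a+p-1$, argue as you intended. Otherwise $i'\ge a+p-q$, and you take $j':=i'+q-p$; then $a\le j'\le a+q-1\le a+p-1$, so $j'$ lies in the window. Now $i'-j'=p-q$ is a multiple of $q$ (here is where $q\mid p$ is used), so the period-$q$ of $v$ gives $u[i']=u[j']$. Since $j'\equiv i+q\pmod p$, period-$p$ of $u$ gives $u[i+q]=u[j']$. Combining, $u[i]=u[i']=u[j']=u[i+q]$. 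With this two-case split your write-up becomes short and correct; the chain argument and the window-length juggling can be discarded entirely.
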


	\begin{proof}%[Proof of Lemma~\ref{lem:periodSubstring}]
		If $p=q$ the lemma is trivial.
		Otherwise, $p\ne q$ and $q|p$, hence $q<p$.

		Let $a$ be an index such that $v[1..p]=u[1+a..p+a]$.
		Notice that for each $i\in \mathbb{N}$ there exists $1+a\le j\le p+a$ such that $i\modulo p= j\modulo p$.

		Given any index $1\le i\le |u|-q$ let  $1+a\le  j+a \le p+a$ be the index such that $i\modulo p= (j+a)\modulo p$.
		Since $u$ has period length of $p$, $u[i]=u[j+a]=v[j]$.
		We treat two cases:
		\begin{itemize}
			\item \textbf{Case 1 $j+a+q\le a+p$:} then we have similarly that $u[i+q]=u[j+a+q]=v[j+q]$.
			In this case, since $v$ has period length of $q$ we have $ [j]=v[j+q]$ and therefore $u[i]=v[j]=v[j+q]=u[i+q]$.
			
			\item \textbf{Case 2  $j+a+q>a+p$:} hence $a\le j+a+q-p<j+a\le a+p$.
			We have that $i\modulo p=j+a \modulo p$ and therefore, $i+q\modulo p = j+a+q \modulo p= j+a+q-p \modulo p$.
			Thus, since $u$ has period length of $p$, $u[i+q]=u[j+a+q-p]=v[j+q-p]$.
			Since $q|p$, we have also $q|(q-p)$, thus, since $v$ has period length of $q$ we have $v[j+q-p]=v[j]$.
			To conclude, $u[i]=v[j]=v[j+q-p]=u[j+a+q-p]=u[i+q]$.
		\end{itemize}
		
		Hence, $u[i]=u[i+q]$ for any $1\le i\le |u|-q$, and therefore $q$ is a period length of $u$.
	\end{proof}

	\begin{lemma}\label{lem:twiceperiod_equal}
		Let $u$ and $v$ be two strings of length $|u|=|v|\ge 2t$ for some $t\in\mathbb N$, such that both $\rho_u\le t$ and $\rho_v\le t$. If $u[1..2t]=v[1..2t]$ then $u=v$.
	\end{lemma}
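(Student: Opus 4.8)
The plan is to prove Lemma~\ref{lem:twiceperiod_equal} in two stages: first show that $u$ and $v$ share a common period length, and then use that common period together with the agreement on the prefix of length $2t$ to propagate equality along the whole strings.

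First I would set $p=\rho_u$ and $q=\rho_v$, so $p,q\le t$ and hence $p+q\le 2t\le|u|$, which gives $p+q-\gcd(p,q)\le 2t \le |u|=|v|$. Consider the prefix $w=u[1..2t]=v[1..2t]$. Since $p$ is a period of $u$ and $2t\le|u|$, $p$ is a period of $w$; likewise $q$ is a period of $w$. Because $|w|=2t\ge p+q-\gcd(p,q)$, Fine and Wilf (Theorem~\ref{thm:FineAndWilf}) yields that $w$ has a period of length $g=\gcd(p,q)$. Now $w$ is a substring of $u$ of length $2t\ge p$ whose period length $g$ divides $p$, so by Lemma~\ref{lem:periodSubstring} $g$ is a period length of $u$; symmetrically $g$ is a period length of $v$. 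Thus $u$ and $v$ both have period length $g$, and moreover $g\le p\le t$.

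Next I would propagate the equality. Fix any index $i$ with $1\le i\le |u|$, and let $r\in\{1,\dots,g\}$ be such that $i\equiv r \pmod g$; note $r\le g\le t\le 2t$, so $u[r]=w[r]=v[r]$. Since $g$ is a period of both $u$ and $v$, a straightforward induction (stepping by $g$ from $r$ up to $i$, all of which lie within $[1..|u|]$) gives $u[i]=u[r]$ and $v[i]=v[r]$. Hence $u[i]=u[r]=v[r]=v[i]$ for every $i$, so $u=v$.

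The only mild subtlety — and the step I would be most careful about — is making sure all the length hypotheses for Fine--Wilf and for Lemma~\ref{lem:periodSubstring} are actually met, i.e. that $2t$ is large enough relative to $p+q$ and to $\gcd(p,q)$; this is exactly where the hypothesis $|u|=|v|\ge 2t$ together with $\rho_u,\rho_v\le t$ is used, and it goes through because $p+q\le 2t$. Everything else is routine periodicity bookkeeping, so I would not belabor it.
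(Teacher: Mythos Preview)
Your proof is correct and follows essentially the same approach as the paper: apply Fine--Wilf to the shared prefix $w=u[1..2t]$, use Lemma~\ref{lem:periodSubstring} to push the resulting small period back to the full strings, and then propagate equality position by position via that common period. The only cosmetic difference is that the paper first argues by contradiction that $\rho_u=\rho_v$ and uses that common principal period, whereas you work directly with $g=\gcd(\rho_u,\rho_v)$ as the common period; your route is slightly cleaner but the substance is identical.
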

	\begin{proof}
		We first prove that $\rho_u=\rho_v$, and then we use this fact to prove the lemma.
		
		Assume by contradiction that $\rho_u\ne\rho_v$, and without loss of generality, we assume that  $\rho_u<\rho_v$.
		Let $w=u[1..2t]=v[1..2t]$. Due to Theorem~\ref{thm:FineAndWilf} (Fine and Wilf theorem), since $\rho_u$ and $\rho_v$ are both period lengths of $w$, we have that $\rho_w|\gcd(\rho_u,\rho_v)$ and therefore $\rho_w\le \rho_u<\rho_v$.
		Due to Lemma~\ref{lem:periodSubstring} since $w$ is a substring of $v$ length at least $2t\ge\rho_v$ and $\rho_w|\rho_v$, we have that $\rho_w$ is  a period length of $v$, and therefore $\rho_v\le\rho_w\le\rho_u$, which contradicts the assumption. 
		Hence  $\rho_v=\rho_u$, and we denote this length as $\rho$.

		Let $i\in[|u|]$ be an index in the strings, we have to prove that $u[i]=v[i]$. If $i\le 2\rho$, the claim holds from the fact that $u[1..2t]=v[1..2t]$. 
		Otherwise, since $\rho$ is a period length of $u$, we have that $u[i]=u[i+k\cdot\rho]$ for any integer $k\in\mathbb Z$. In particular $u[i]=u[\rho+(i \modulo \rho)]$.
		Similarly, we have that $v[i]=v[\rho+(i \modulo \rho)]$. Since $1\le\rho+(i \modulo \rho)\le2\rho-1\le2t$ we have that $u[\rho+(i \modulo \rho)]=v[\rho+(i \modulo \rho)]$, and therefore $u[i]=v[i]$ as required.
	\end{proof}

	\subsection*{Sorting Strings}

	The following lemmas are  direct results of two facts stated in~\cite{GK17}.
	
	\begin{lemma}[{\cite[Fact 3.1 and Fact 3.2]{GK17}}]\label{lem:string_sorting_log_star}
		For any $1\le \tau\le n$, given random access to $\O(n/\tau)$ strings of length $\O(\tau\log^*n)$, the strings can be sorted in $\O(n\log^* n)$ time and $\O(n/\tau)$ words of space.
	\end{lemma}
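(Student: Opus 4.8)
The plan is to instantiate the two sorting facts of~\cite{GK17} (Fact~3.1 and Fact~3.2) with the parameters at hand. Recall the setting: we are given, through a random-access oracle rather than stored explicitly, a collection $w_1,\dots,w_m$ of $m=\O(n/\tau)$ strings over the integer alphabet $\{1,\dots,n^{\O(1)}\}$, each of length $\O(\tau\log^*n)$; their total length is therefore $N=\O(n\log^*n)$. The target is to produce the permutation sorting $w_1,\dots,w_m$ lexicographically in $\O(N)=\O(n\log^*n)$ time and $\O(m)=\O(n/\tau)$ words of working space. The crucial feature of the two facts is that, taken together, they provide a string-sorting routine whose running time is linear in the total length of the input strings and whose working space is linear in their \emph{number} — in particular independent of the individual string lengths and of the alphabet size — provided the alphabet is polynomial in $n$; this is exactly the regime we are in.

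First I would normalise the instance. Pad every $w_i$ to the common length $L=\Theta(\tau\log^*n)$ by appending copies of a sentinel symbol smaller than every symbol of the alphabet; this preserves the lexicographic order among the $w_i$ and is carried out virtually (the oracle returns the sentinel for out-of-range positions), so it costs no space. In the actual uses of the lemma each $w_i$ is a fragment of $S$ with at most two extra appended symbols — a sign in $\{-1,+1\}$ and a length in $\{-n,\dots,n\}$ — and possibly a second fragment of $S$; all of these lie in a universe $\{-n,\dots,n^{\O(1)}\}$, so after a trivial affine shift the whole instance is again over an alphabet polynomial in $n$ and the oracle model is preserved. I would then apply Fact~3.1 of~\cite{GK17} to rename the symbols occurring in $w_1,\dots,w_m$ into ranks consistent with their order — this step reads the $N$ symbols once, runs in $\O(N)$ time, and materialises only $\O(m)$ ranks at a time by processing one string at a time and reusing the buffer — and finally apply Fact~3.2 of~\cite{GK17}, an LSD-radix-sort over the $L$ positions of the (now renamed, equal-length) strings: each of the $L$ rounds keeps only $\O(m)$ counters and the current permutation of $\{1,\dots,m\}$ and re-reads the relevant symbols from the oracle on demand, so the total running time telescopes to $\O(N)$ and the space stays $\O(m)$. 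The permutation returned by the last round is the desired sorted order.

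The only genuine obstacle is the working-space constraint: when $\tau$ is large, both $N=\O(n\log^*n)$ and the alphabet size exceed $m=\O(n/\tau)$, so the textbook approaches — building a generalised suffix array of the concatenation $w_1\#w_2\#\cdots\#w_m$, or a single bucket sort over the alphabet — would require $\Theta(N)$ or $\Theta(\poly(n))$ words and are ruled out. The two facts of~\cite{GK17} are designed precisely to evade this: the renaming of Fact~3.1 never holds more than $\O(m)$ ranks at once, and the radix sort of Fact~3.2 stores only $\O(m)$ counters together with the permutation array, recomputing symbols from the read-only input whenever they are needed. Verifying that these space invariants survive the virtual padding and the folding-in of the $\O(1)$ appended marker symbols is routine, after which the bounds $\O(n\log^*n)$ time and $\O(n/\tau)$ space follow immediately; the companion statement (sorting $\O(n/\tau)$ strings of length $\O(\tau)$ in $\O(n)$ time and $\O(n/\tau)$ space) is the same argument with $\log^*n$ replaced by a constant.
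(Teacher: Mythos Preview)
The paper does not actually prove this lemma; it simply records it as a direct consequence of Facts~3.1 and~3.2 of~\cite{GK17}. At that level your proposal matches the paper exactly: you identify the right black box and plug in the right parameters ($m=\O(n/\tau)$ strings of common length $L=\O(\tau\log^*n)$, total length $N=\O(n\log^*n)$, alphabet polynomial in $n$).

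Where your write-up goes beyond the paper it becomes imprecise. Your description of Fact~3.1 as a renaming pass that ``materialises only $\O(m)$ ranks at a time by processing one string at a time and reusing the buffer'' cannot yield ranks that are globally consistent across all $m$ strings --- ranks depend on the full multiset of symbols, not on any single string. And your description of Fact~3.2 as LSD radix sort with ``only $\O(m)$ counters'' per round presupposes that the key at each position has already been reduced to the range $[0,m)$; that reduction requires, at every one of the $L$ rounds, sorting $m$ integers drawn from $[0,n^{\O(1)})$ in $\O(m)$ time and $\O(m)$ space, which you assert but do not justify, and which is precisely the nontrivial content hidden inside the cited facts (and is not immediate when $m$ can be as small as $\polylog n$). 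This is not a gap in the lemma itself --- it holds as cited --- only in your attempted narration of the internals. The clean fix is to do what the paper does: invoke Facts~3.1 and~3.2 of~\cite{GK17} as a black box with the parameters above and refrain from re-deriving their mechanism.
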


	\begin{lemma}[{\cite[Fact 3.1 and Fact 3.2]{GK17}}]\label{lem:string_sorting}
		For any $1\le \tau\le n$, given random access to $\O(n/\tau)$ strings of length $\O(\tau)$, the strings can be sorted in $\O(n)$ time and $\O(n/\tau)$ words of space.
	\end{lemma}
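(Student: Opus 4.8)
The plan is to derive this directly from the two string-sorting facts of Gawrychowski and Kociumaka~\cite{GK17} that were already invoked for Lemma~\ref{lem:string_sorting_log_star}, by substituting the parameters appropriate to the present setting. The key observation is purely arithmetic: we are given $m=\O(n/\tau)$ strings, each of length $\O(\tau)$, so the \emph{total} length of the collection is $L=\O(m\cdot\tau)=\O(n)$, and moreover $m=\O(n/\tau)\le\O(n)=\O(L)$. Hence any sorting procedure running in time linear in the total length, and in working space linear in the number of strings, immediately yields the claimed $\O(n)$ time and $\O(n/\tau)$ words of space.

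First I would recall the unrestricted version: a collection of strings over an integer alphabet $[n^{\O(1)}]$ can be sorted lexicographically in time linear in its total length. This is the classical string-sorting routine — bucket the character occurrences by value to rank the alphabet in linear time, then run an LSD radix sort over the positions, handling strings of differing lengths by processing positions from the longest down to the first and moving only the strings that are long enough at each step. Applied to our collection this already gives $\O(L)=\O(n)$ time, but the naive implementation stores all $\Theta(n)$ character occurrences simultaneously, which overshoots the $\O(n/\tau)$-word budget.

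The substantive point is therefore the space reduction, which is the content of the second fact: the same sort can be carried out keeping at any moment only $\O(m)$ words of auxiliary data in addition to read-only random access to the strings (no copy of the input is needed, as the strings are substrings already available in memory). I would obtain this by replacing the flat radix sort with an MSD (most-significant-position-first) refinement: maintain the current partition of the $m$ strings into groups that agree on a common prefix (the groups together hold exactly $m$ elements), and refine a group of size $g$ by its next character by sorting those $g$ characters with a radix sort whose digit base scales with $g$, so that each refinement costs $\O(g)$ plus a charge against the characters it consumes; summing the charges over the recursion gives $\O(L)=\O(n)$ total time while never using more than $\O(m)=\O(n/\tau)$ extra words.

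The main obstacle I anticipate is precisely this space/time trade-off in the radix step: a flat base-$n$ radix sort of a small group of characters incurs an additive $\Theta(n)$ term per group for the digit buckets, so one must choose the radix base as a function of the group size and amortize the multi-digit passes against the characters actually read rather than against the alphabet size. Once that accounting is carried out — which is exactly what \cite[Fact 3.1 and Fact 3.2]{GK17} provide — the lemma follows with no further work by plugging in $m=\O(n/\tau)$ and $L=\O(n)$.
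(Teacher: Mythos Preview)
Your proposal is correct and matches the paper's treatment: the paper gives no proof at all beyond the citation, stating only that the lemma is a ``direct result'' of \cite[Fact~3.1 and Fact~3.2]{GK17}, and your argument---that with $m=\O(n/\tau)$ strings of length $\O(\tau)$ the total length is $\O(n)$, so a linear-time, $\O(m)$-space string sort from those facts immediately yields the claim---is exactly the intended derivation. The additional implementation discussion (LSD vs.\ MSD, variable-base radix) is more detail than the paper provides but is consistent with the cited source.
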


	\subsection*{Lowest Common Ancestor} 
	For any tree $T$, we denote by $LCA(u,v)$ the \emph{lowest common ancestor} of nodes $u,v\in T$. The $LCA$ of any two leaves in a trie is the longest common prefix (LCP) of the corresponding strings. Hence, an efficient data structure for $LCA$ queries implies an efficient data structure for LCP queries.
	\begin{lemma}[{\cite{HT84,BF00}}]\label{lem:LCA}
		Any tree $T$ can be preprocessed in $\O(|T|)$ time and space to support LCA queries in constant time.
	\end{lemma}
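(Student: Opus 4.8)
The plan is to reduce LCA queries to range-minimum queries (RMQ) on a specially structured array via the Euler-tour technique, and then to solve that restricted RMQ problem with a two-level macro/micro block decomposition.

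First I would run a depth-first traversal of $T$ from the root, producing the \emph{Euler tour} array $E[1..2|T|-1]$ listing every vertex at each moment it is entered or re-entered after a child's subtree, together with $D[i]=\mathrm{depth}(E[i])$ and, for each vertex $v$, a representative index $R[v]$ pointing to its first occurrence in $E$; all of this is built in $\O(|T|)$ time and space. The key structural fact is that for vertices $u,v$ with $R[u]\le R[v]$, if $k$ is the index of a minimum of $D$ over $[R[u]..R[v]]$ then $E[k]=\mathrm{LCA}(u,v)$: every vertex appearing on the tour between an occurrence of $u$ and an occurrence of $v$ is an ancestor of one of them or lies inside a subtree hanging off the $u$–$v$ tree path, and the one of minimum depth among them is exactly their lowest common ancestor; I would verify this by induction on the recursive structure of the DFS. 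Hence each LCA query becomes a single RMQ on $D$, and $D$ has the crucial $\pm1$ property $|D[i+1]-D[i]|=1$ for all $i$, since each Euler-tour step descends to a child or ascends to a parent.

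Next I would solve the $\pm1$-RMQ problem on an array of length $m=2|T|-1$ in $\O(m)$ preprocessing and $\O(1)$ query time. Partition $D$ into $\lceil m/b\rceil$ consecutive blocks of size $b=\tfrac12\log m$. Over the array of per-block minima build a sparse table answering ``minimum over a contiguous range of whole blocks'' in $\O(1)$ after $\O\!\left(\frac mb\log\frac mb\right)=\O(m)$ preprocessing. For in-block queries exploit the $\pm1$ property: a block is determined, up to an additive constant irrelevant for the argmin, by its length-$(b-1)$ sequence of $\pm1$ increments, so there are at most $2^{b-1}=\O(\sqrt m)$ distinct block types; for each type precompute the answer to all $\O(b^2)$ in-block query pairs, for a total of $\O(\sqrt m\cdot b^2)=o(m)$ time and space, and store with each actual block an integer encoding of its increment sequence pointing to its type. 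A general query $[\ell..r]$ is answered by taking the minimum of at most three $\O(1)$-computable candidates: the suffix of $\ell$'s block, the prefix of $r$'s block, and the range of whole blocks strictly between them.

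The main obstacle is the bookkeeping in the micro-block step: one must check, on the word-RAM model already assumed in the paper (word size $\Theta(\log n)$, so a block's $\pm1$-pattern fits in $\O(1)$ words and arithmetic on it costs $\O(1)$), both that the number of block types is $\O(\sqrt m)$ and that the combined size of all type-indexed lookup tables is $o(m)$, ensuring the whole structure uses $\O(m)=\O(|T|)$ words. Everything else—the Euler tour, the correctness of the LCA-to-RMQ reduction, and the sparse table over block minima—is routine. Combining the reduction with the $\pm1$-RMQ structure yields the claimed $\O(|T|)$-time, $\O(|T|)$-space preprocessing supporting constant-time LCA queries.
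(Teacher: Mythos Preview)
Your proposal is correct and is essentially the Bender--Farach argument from~\cite{BF00}, one of the two references cited for this lemma. The paper itself does not give a proof of Lemma~\ref{lem:LCA}; it simply states the result with citations to~\cite{HT84,BF00} and uses it as a black box, so there is nothing further to compare.
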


	\bibliography{RefsLCE}

\begin{thebibliography}{10}

\bibitem{ABR00}
Stephen Alstrup, Gerth~St{\o}lting Brodal, and Theis Rauhe.
\newblock Pattern matching in dynamic texts.
\newblock In {\em Proceedings of the Eleventh Annual {ACM-SIAM} Symposium on
  Discrete Algorithms, {SODA}}, pages 819--828, 2000.

\bibitem{AN08}
Amihood Amir and Igor Nor.
\newblock Real-time indexing over fixed finite alphabets.
\newblock In {\em Proceedings of the Nineteenth Annual {ACM-SIAM} Symposium on
  Discrete Algorithms, {SODA}}, pages 1086--1095, 2008.

\bibitem{BF00}
Michael~A. Bender and Martin Farach{-}Colton.
\newblock The {LCA} problem revisited.
\newblock In {\em {LATIN} 2000: Theoretical Informatics, 4th Latin American
  Symposium}, pages 88--94, 2000.

\bibitem{BFGKSV16}
Philip Bille, Johannes Fischer, Inge~Li G{\o}rtz, Tsvi Kopelowitz, Benjamin
  Sach, and Hjalte~Wedel Vildh{\o}j.
\newblock Sparse text indexing in small space.
\newblock {\em {ACM} Transactions on Algorithms}, 12(3):39:1--39:19, 2016.

\bibitem{BGKLV15}
Philip Bille, Inge~Li G{\o}rtz, Mathias B{\ae}k~Tejs Knudsen, Moshe Lewenstein,
  and Hjalte~Wedel Vildh{\o}j.
\newblock Longest common extensions in sublinear space.
\newblock In {\em Combinatorial Pattern Matching - 26th Annual Symposium,
  {CPM}}, pages 65--76, 2015.

\bibitem{BGSV14}
Philip Bille, Inge~Li G{\o}rtz, Benjamin Sach, and Hjalte~Wedel Vildh{\o}j.
\newblock Time-space trade-offs for longest common extensions.
\newblock {\em Journal of Discrete Algorithms}, 25:42--50, 2014.

\bibitem{BFO13}
Timo Bingmann, Johannes Fischer, and Vitaly Osipov.
\newblock Inducing suffix and lcp arrays in external memory.
\newblock In {\em Proceedings of the 15th Meeting on Algorithm Engineering and
  Experiments, {ALENEX}}, pages 88--102, 2013.

\bibitem{BG14}
Dany Breslauer and Zvi Galil.
\newblock Real-time streaming string-matching.
\newblock {\em {ACM} Transactions on Algorithms}, 10(4):22:1--22:12, 2014.

\bibitem{BK03}
Stefan Burkhardt and Juha K{\"{a}}rkk{\"{a}}inen.
\newblock Fast lightweight suffix array construction and checking.
\newblock In {\em Combinatorial Pattern Matching, 14th Annual Symposium,
  {CPM}}, pages 55--69, 2003.

\bibitem{CV86}
Richard Cole and Uzi Vishkin.
\newblock Deterministic coin tossing with applications to optimal parallel list
  ranking.
\newblock {\em Information and Control}, 70(1):32--53, 1986.

\bibitem{CM07}
Graham Cormode and S.~Muthukrishnan.
\newblock The string edit distance matching problem with moves.
\newblock {\em {ACM} Transactions on Algorithms}, 3(1):2:1--2:19, 2007.

\bibitem{DeLaBriandaisde59}
Rene De~La~Briandais.
\newblock File searching using variable length keys.
\newblock In {\em Papers presented at the the March 3-5, 1959, western joint
  computer conference}, pages 295--298. ACM, 1959.

\bibitem{DGMP92}
Martin Dietzfelbinger, Joseph Gil, Yossi Matias, and Nicholas Pippenger.
\newblock Polynomial hash functions are reliable (extended abstract).
\newblock In {\em Proceedings of Automata, Languages and Programming, 19th
  International Colloquium, ICALP92}, pages 235--246, 1992.

\bibitem{Farach97}
Martin Farach.
\newblock Optimal suffix tree construction with large alphabets.
\newblock In {\em 38th Annual Symposium on Foundations of Computer Science,
  {FOCS}}, pages 137--143, 1997.

\bibitem{FG99}
Paolo Ferragina and Roberto Grossi.
\newblock Improved dynamic text indexing.
\newblock {\em Journal of Algorithms}, 31(2):291--319, 1999.

\bibitem{FW65}
Nathan~J. Fine and Herbert~S. Wilf.
\newblock Uniqueness theorems for periodic functions.
\newblock {\em Proceedings of the American Mathematical Society},
  16(1):109--114, 1965.

\bibitem{FH08}
Johannes Fischer and Volker Heun.
\newblock Range median of minima queries, super-cartesian trees, and text
  indexing.
\newblock In {\em Proceedings of the 19th International Workshop on
  Combinatorial Algorithms, {IWOCA}}, pages 239--252, 2008.

\bibitem{FIK16}
Johannes Fischer, Tomohiro I, and Dominik K{\"{o}}ppl.
\newblock Deterministic sparse suffix sorting on rewritable texts.
\newblock In {\em {LATIN} 2016: Theoretical Informatics - 12th Latin American
  Symposium}, pages 483--496, 2016.

\bibitem{FMN09}
Johannes Fischer, Veli M{\"{a}}kinen, and Gonzalo Navarro.
\newblock Faster entropy-bounded compressed suffix trees.
\newblock {\em Theor. Comput. Sci.}, 410(51):5354--5364, 2009.

\bibitem{Fredkin60}
Edward Fredkin.
\newblock Trie memory.
\newblock {\em Communications of the ACM}, 3(9):490--499, 1960.

\bibitem{GG88}
Zvi Galil and Raffaele Giancarlo.
\newblock Data structures and algorithms for approximate string matching.
\newblock {\em Journal of Complexity}, 4(1):33--72, 1988.

\bibitem{GK17}
Pawel Gawrychowski and Tomasz Kociumaka.
\newblock Sparse suffix tree construction in optimal time and space.
\newblock In {\em Proceedings of the Twenty-Eighth Annual {ACM-SIAM} Symposium
  on Discrete Algorithms, {SODA}}, pages 425--439, 2017.

\bibitem{GPS87}
Andrew~V. Goldberg, Serge~A. Plotkin, and Gregory~E. Shannon.
\newblock Parallel symmetry-breaking in sparse graphs.
\newblock In {\em Proceedings of the 19th Annual {ACM} Symposium on Theory of
  Computing}, pages 315--324, 1987.

\bibitem{HT84}
Dov Harel and Robert~Endre Tarjan.
\newblock Fast algorithms for finding nearest common ancestors.
\newblock {\em SIAM Journal on Computing}, 13(2):338--355, 1984.

\bibitem{I17}
Tomohiro I.
\newblock Longest common extensions with recompression.
\newblock In {\em 28th Annual Symposium on Combinatorial Pattern Matching,
  {CPM}}, pages 18:1--18:15, 2017.

\bibitem{IKK14}
Tomohiro I, Juha K{\"{a}}rkk{\"{a}}inen, and Dominik Kempa.
\newblock Faster sparse suffix sorting.
\newblock In {\em 31st International Symposium on Theoretical Aspects of
  Computer Science {(STACS})}, pages 386--396, 2014.

\bibitem{Indyk01}
Piotr Indyk.
\newblock A small approximately min-wise independent family of hash functions.
\newblock {\em Journal of Algorithms}, 38(1):84--90, 2001.

\bibitem{KS03}
Juha K{\"{a}}rkk{\"{a}}inen and Peter Sanders.
\newblock Simple linear work suffix array construction.
\newblock In Jos C.~M. Baeten, Jan~Karel Lenstra, Joachim Parrow, and
  Gerhard~J. Woeginger, editors, {\em Automata, Languages and Programming, 30th
  International Colloquium, {ICALP} 2003, Eindhoven, The Netherlands, June 30 -
  July 4, 2003. Proceedings}, volume 2719 of {\em Lecture Notes in Computer
  Science}, pages 943--955. Springer, 2003.

\bibitem{KU96}
Juha K{\"{a}}rkk{\"{a}}inen and Esko Ukkonen.
\newblock Sparse suffix trees.
\newblock In {\em Computing and Combinatorics, Second Annual International
  Conference, {COCOON}}, pages 219--230, 1996.

\bibitem{KR87}
Richard~M. Karp and Michael~O. Rabin.
\newblock Efficient randomized pattern-matching algorithms.
\newblock {\em {IBM} Journal of Research and Development}, 31(2):249--260,
  1987.

\bibitem{KLAAP01}
Toru Kasai, Gunho Lee, Hiroki Arimura, Setsuo Arikawa, and Kunsoo Park.
\newblock Linear-time longest-common-prefix computation in suffix arrays and
  its applications.
\newblock In {\em Combinatorial Pattern Matching, 12th Annual Symposium,
  {CPM}}, pages 181--192, 2001.

\bibitem{KK19}
Dominik Kempa and Tomasz Kociumaka.
\newblock String synchronizing sets: sublinear-time {BWT} construction and
  optimal {LCE} data structure.
\newblock In {\em Proceedings of the 51st Annual {ACM} {SIGACT} Symposium on
  Theory of Computing, {STOC}.}, pages 756--767, 2019.

\bibitem{Kociumaka18}
Tomasz Kociumaka.
\newblock {\em Efficient Data Structuresfor Internal Queries in Texts}.
\newblock PhD thesis, University of Warsaw, 10 2018.

\bibitem{KRRW15}
Tomasz Kociumaka, Jakub Radoszewski, Wojciech Rytter, and Tomasz Walen.
\newblock Internal pattern matching queries in a text and applications.
\newblock In {\em Proceedings of the Twenty-Sixth Annual {ACM-SIAM} Symposium
  on Discrete Algorithms, {SODA}}, pages 532--551, 2015.

\bibitem{Kosolobov17}
Dmitry Kosolobov.
\newblock Tight lower bounds for the longest common extension problem.
\newblock {\em Inf. Process. Lett.}, 125:26--29, 2017.

\bibitem{LV88}
Gad~M. Landau and Uzi Vishkin.
\newblock Fast string matching with k differences.
\newblock {\em Journal of Computer and System Sciences}, 37(1):63--78, 1988.

\bibitem{Maekawa85}
Mamoru Maekawa.
\newblock A square root {N} algorithm for mutual exclusion in decentralized
  systems.
\newblock {\em {ACM} Transactions on Computer Systems}, 3(2):145--159, 1985.

\bibitem{McCreight76}
Edward~M. McCreight.
\newblock A space-economical suffix tree construction algorithm.
\newblock {\em Journal of the {ACM}}, 23(2):262--272, 1976.

\bibitem{MSU97}
Kurt Mehlhorn, Rajamani Sundar, and Christian Uhrig.
\newblock Maintaining dynamic sequences under equality tests in polylogarithmic
  time.
\newblock {\em Algorithmica}, 17(2):183--198, 1997.

\bibitem{Morrison68}
Donald~R. Morrison.
\newblock {PATRICIA} - practical algorithm to retrieve information coded in
  alphanumeric.
\newblock {\em Journal of the {ACM}}, 15(4):514--534, 1968.

\bibitem{PP09}
Benny Porat and Ely Porat.
\newblock Exact and approximate pattern matching in the streaming model.
\newblock In {\em Proceedings of 50th Annual {IEEE} Symposium on Foundations of
  Computer Science, {FOCS}}, pages 315--323, 2009.

\bibitem{PT08}
Simon~J. Puglisi and Andrew Turpin.
\newblock Space-time tradeoffs for longest-common-prefix array computation.
\newblock In {\em Algorithms and Computation, 19th International Symposium,
  {ISAAC} 2008, Gold Coast, Australia, December 15-17, 2008. Proceedings},
  pages 124--135, 2008.

\bibitem{RHHMY04}
Michael Roberts, Wayne Hayes, Brian~R. Hunt, Stephen~M. Mount, and James~A.
  Yorke.
\newblock Reducing storage requirements for biological sequence comparison.
\newblock {\em Bioinformatics}, 20(18):3363--3369, 2004.

\bibitem{SV94}
S{\"{u}}leyman~Cenk Sahinalp and Uzi Vishkin.
\newblock Symmetry breaking for suffix tree construction.
\newblock In {\em Proceedings of the Twenty-Sixth Annual {ACM} Symposium on
  Theory of Computing}, pages 300--309, 1994.

\bibitem{SV96}
S{\"{u}}leyman~Cenk Sahinalp and Uzi Vishkin.
\newblock Efficient approximate and dynamic matching of patterns using a
  labeling paradigm (extended abstract).
\newblock In {\em 37th Annual Symposium on Foundations of Computer Science,
  {FOCS}}, pages 320--328, 1996.

\bibitem{SWA03}
Saul Schleimer, Daniel~Shawcross Wilkerson, and Alexander Aiken.
\newblock Winnowing: Local algorithms for document fingerprinting.
\newblock In {\em Proceedings of the {ACM} {SIGMOD} International Conference on
  Management of Data}, pages 76--85, 2003.

\bibitem{TBIPT16}
Yuka Tanimura, Tomohiro I, Hideo Bannai, Shunsuke Inenaga, Simon~J. Puglisi,
  and Masayuki Takeda.
\newblock Deterministic sub-linear space {LCE} data structures with efficient
  construction.
\newblock In {\em 27th Annual Symposium on Combinatorial Pattern Matching,
  {CPM}}, pages 1:1--1:10, 2016.

\bibitem{TNBIT17}
Yuka Tanimura, Takaaki Nishimoto, Hideo Bannai, Shunsuke Inenaga, and Masayuki
  Takeda.
\newblock Small-space encoding {LCE} data structure with constant-time queries.
\newblock {\em CoRR}, abs/1702.07458, 2017.

\bibitem{Ukkonen95}
Esko Ukkonen.
\newblock On-line construction of suffix trees.
\newblock {\em Algorithmica}, 14(3):249--260, 1995.

\bibitem{Weiner73}
Peter Weiner.
\newblock Linear pattern matching algorithms.
\newblock In {\em 14th Annual Symposium on Switching and Automata Theory},
  pages 1--11. {IEEE} Computer Society, 1973.

\end{thebibliography}
	\newpage

\end{document}